\def\fxnote#1{}
\def\fxwarning#1{}
\renewcommand{\blue}{}
\def\cleardoublepage{\clearpage\if@twoside \ifodd\c@page\else%
    \hbox{}%
    \thispagestyle{empty}
    \newpage%
    \if@twocolumn\hbox{}\newpage\fi\fi\fi}
\def\>{\rangle}
\def\<{\langle}
\newcommand{\mcO}{\mathcal{O}}
\newcommand{\mcU}{\mathcal{U}}
\newcommand{\mcH}{\mathcal{H}}
\newcommand{\rmi}{\mathrm{i}} 
\newcommand{\tr}{\mathop{\mathrm{tr}}\nolimits}
\newcommand{\mcA}{\mathcal{A}}
\newcommand{\mcG}{\mathcal{A}}
\newcommand{\mcI}{\mathcal{I}}
\newcommand{\mcN}{\mathcal{N}}
\newcommand{\mcE}{\mathcal{E}}
\newcommand{\mcB}{\mathcal{B}}
\newcommand{\mcL}{\mathcal{L}}
\newcommand{\mcF}{\mathcal{F}}
\newcommand{\mcT}{\mathcal{T}}
\newcommand{\mcS}{\mathcal{S}}
\newcommand{\mcD}{\mathcal{D}}
\newcommand{\diag}{\text{diag}}
\newcommand{\ie}{i.e.}
\newcommand{\eref}[1]{eq.~(\ref{#1})} 
\newcommand{\sref}[1]{sec.~\ref{#1}}
\newcommand{\fref}[1]{fig.~\ref{#1}}
\newcommand{\Eref}[1]{Eq.~(\ref{#1})}
\newcommand{\Div}{\ensuremath{{\sf C}^\text{div}}}
\newcommand{\pDiv}{\ensuremath{{\sf C}^\text{P}}}
\newcommand{\cpDiv}{\ensuremath{{\sf C}^\text{CP}}}
\newcommand{\LDiv}{\ensuremath{{\sf C}^\text{L}}}
\newcommand{\Ind}{\ensuremath{\overline{{\sf C}^{\rm div}}}}
\newcommand{\InftyDiv}{\ensuremath{{\sf C}^\infty}}
\newcommand{\InfDiv}{\ensuremath{{\sf C}^\text{Inf}}}
\newcommand{\cptp}{\ensuremath{\sf C}}
\newcommand{\hp}{HP}
\newcommand{\DivFunc}{\delta}
\newcommand{\one}{\mathbbm{1}}
\newcommand{\gqc}{GQC}
\newcommand{\sgqc}{SGQC}
\newcommand{\dgqc}{$\delta\text{GQC}$}
\newcommand{\gf}{GF}
\newcommand{\tp}{TP}
\newcommand{\Imi}{\imath}
\newtheorem{definition}{Definition}
\newtheorem{corollary}{Corollary}
\newcommand{\gs}{GS}
\newtheorem{theorem}{Theorem}
\newtheorem{proposition}{Proposition}
\newtheorem{lemma}{Lemma}
\newtheorem{example}{Example}
\newcommand{\ket}[1]{{\vert #1 \rangle}}
\newcommand{\bra}[1]{{\langle #1 \vert}}
\newcommand{\proj}[2]{{\vert #1 \rangle \langle #2 \vert}}
\newcommand{\fig}[1]{fig.\ref{#1}}
\newcommand{\id}{\text{id}}
\newcommand{\Jami}{Choi-Jamio\l{}kowski}
\newcommand{\scho}{Schr\"odinger}
\providecommand{\openone}{\leavevmode\hbox{\small1\kern-3.8pt\normalsize1}}
\begin{document}
\begin{titlepage}
\begin{center}
\includegraphics[draft=false,width=.3\textwidth]{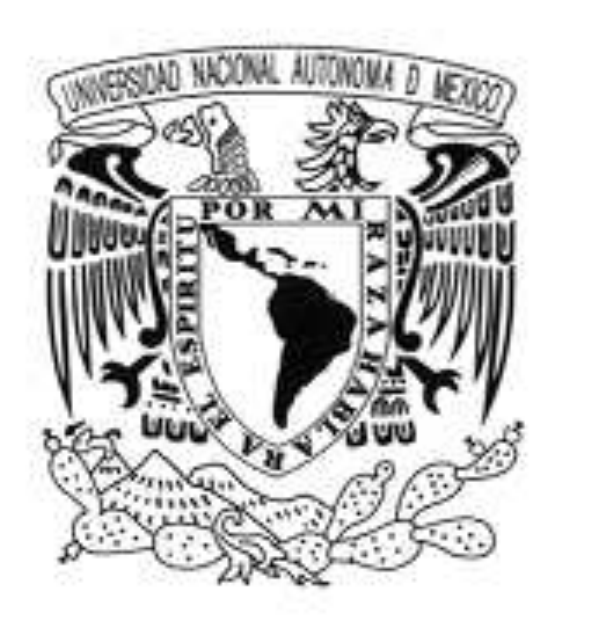}\hfill\includegraphics[draft=false,width=.3\textwidth]{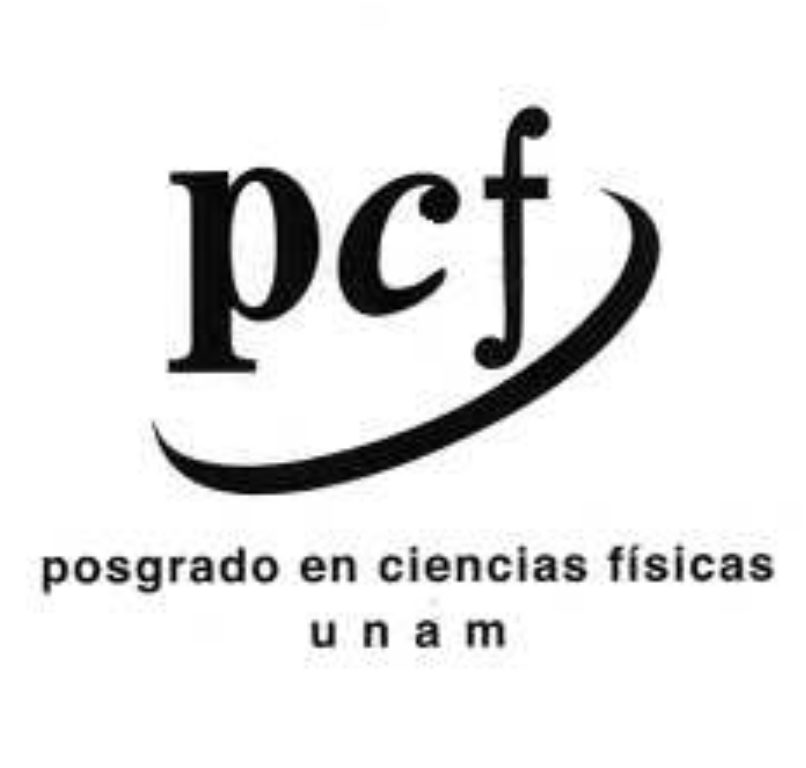}\\
\large UNIVERSIDAD NACIONAL AUT\'ONOMA DE M\'EXICO\\
POSGRADO EN CIENCIAS F\'ISICAS\\[1cm]
\huge Divisibility classes of qubit maps and singular Gaussian channels\\[.5cm]\large TESIS\\[1.2cm]

\begin{tabular}{rl}
 Que para obtener el grado de: & Doctor en Ciencias (F\'isica)\\
 Presenta: & David D\'avalos Gonz\'alez\\
 Director de tesis:& Dr. Carlos Francisco Pineda Zorrilla\\
 Codirector: & Dr. Mario Ziman \\
 \end{tabular}

\vspace*{\stretch{1}}
{\small Miembros del Comit\'e Tutoral: \\
Dr. Carlos Pineda, Dr. Luis Benet,  y Dr. Thomas H. Seligman}\\
\includegraphics[draft=false,scale=0.4]{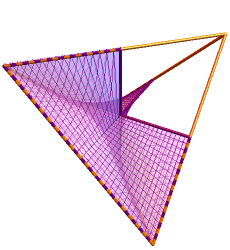}
\includegraphics[draft=false,scale=0.4]{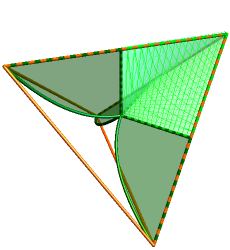}\\
M\'exico D. F., 2019
\end{center}
\cleardoublepage
\pagestyle{empty}
\begin{center}
\includegraphics[draft=false,width=.3\textwidth]{logo_unam}\hfill\includegraphics[draft=false,width=.3\textwidth]{logo_pcf}\\
\large UNIVERSIDAD NACIONAL AUT\'ONOMA DE M\'EXICO\\
POSGRADO EN CIENCIAS F\'ISICAS\\[1cm]
\huge Divisibility classes of qubit maps and singular Gaussian channels\\[.5cm]\large THESIS\\[1.2cm]

\begin{tabular}{rl}
 To obtain the degree: & Doctor en Ciencias (F\'isica)\\
 Presents: & David D\'avalos Gonz\'alez\\
 Director:& Dr. Carlos Francisco Pineda Zorrilla\\
 Co-director:& Dr. Mario Ziman \\
 \end{tabular}

\vspace*{\stretch{1}}
{\small Members of the Tutorial Committee: \\
Dr. Carlos Pineda, Dr. Luis Benet,  and Dr. Thomas H. Seligman}\\
\includegraphics[draft=false,scale=0.4]{CPportada.png}
\includegraphics[draft=false,scale=0.4]{Lportada.png}\\
M\'exico D. F., 2019
\end{center}
\end{titlepage}

\cleardoublepage

\setcounter{page}{1}\pagenumbering{roman} \setcounter{page}{1}
\vspace*{\stretch{1}}
\begin{flushright}
\textit{Truth is ever to be found in simplicity, and not in the multiplicity and confusion of things.}\\
Isaac Newton
\end{flushright}

\vspace*{\stretch{1}}
\thispagestyle{empty}
\cleardoublepage

{ \setlength\parindent{0cm} \setlength\parskip{0.1cm}

\begin{center} \Huge Gracias \end{center}
La idea que nació cuando cursaba la secundaría, la de convertirme algún día en científico, no habría sido posible de no haber nacido en el seno de una familia estable, funcional y de sólidos valores. Por eso les agradezco infinitamente a mis queridos Padres. A mi Madre, Sara, por dedicarme tanto de su tiempo y energía, por ser una Madre muy amorosa, llena de valores y por ser la persona mas paciente del mundo. Le agradezco a mi Padre, Juan Manuel, por siempre estar atento a que fuera una persona de principios y un buen ciudadano, por ser un padre amoroso y por darme su confianza. Le agradezco que siempre se haya preocupado por tener una computadora en casa y por ser un entusiasta de la tecnología, eso aportó fundamentalmente a quien soy hoy. A mis hermanas y hermanos por preocuparse por mi y por regalarme tantas veces su tiempo. 
A mi compañera de vida, a mi esposa Lorena, gracias por tenerme tanta paciencia, por creer en mi y por quererme tanto. 
A mi tutor y amigo, Carlos, le agradezco su paciencia, sus valiosas enseñanzas, su gran apoyo y su amistad. A Thomas Seligman y Luis Benet por siempre apoyarme.
Le agradezco a mis amigos Luis Juárez, Arturo Carranza, Thomas Gorin, Mario Ziman, Mauricio Torres, Roberto León, François Leyvraz, Pablo Barberis, Diego Wisniacki, Ignacio García, Juan Diego Urbina, Peter Rap\v can, Tomá\v s Rybár, Edgar Aguilar, David Amaro, Alvaro Díaz, Miguel Cardona, Chayo Camarena, Antonio Rosado, Sergio Sánchez, Nephtalí Garrido, Sergio Pallaleo, Samuel Rosalio, Alejandro Reyes, Afra Montero y Daniel Garibay.
A mis gatitos Lola y Dalí por hacerme feliz el poco tiempo que estuvieron en este mundo. A mi querida gatita Lulú por hacer de mi hogar siempre un lugar feliz. Les agradezco a todos los seres queridos que hicieron de esta parte de mi trayectoria algo memorable.

\par
Agradezco el apoyo brindado por los proyectos PAPIIT número IG100518 y CONACYT CB-285754.
}
\vfill
\vfill
\thispagestyle{empty}
\cleardoublepage
\begin{center} {\bf SYNOPSIS}\\ \end{center}
We present two projects concerning the main part of my PhD work. In the first
one we study quantum channels, which are the most general operations mapping
quantum states into quantum states, from the point of view of their
divisibility properties. We introduced tools to test if a given quantum channel
can be implemented by a process described by a Lindblad master equation. This
in turn defines channels that can be divided in such a way that they form a
one-parameter semigroup, thus introducing the most restricted studied
divisibility type of this work. Using our results, together with the study of other
types of divisibility that can be found in the literature, we characterized the
space of qubit quantum channels. We found interesting results connecting the
concept of entanglement-breaking channel and infinitesimal divisibility.
Additionally we proved that infinitely divisible channels are equivalent to the ones that
are implementable by one-parameter semigroups, opening this question for
more general channel spaces.
In the second project we study the functional forms of one-mode Gaussian
quantum channels in the position state representation, beyond Gaussian
functional forms. We perform a \textit{black-box} characterization using
complete positivity and trace preserving conditions, and report the existence
of two subsets that do not have a functional Gaussian form. The study covers as
particular limit the case of singular channels, thus connecting our results
with the known classification scheme based on canonical forms. Our full
characterization of Gaussian channels without Gaussian functional form is
completed by showing how
Gaussian states are transformed under these operations, and by deriving the conditions for the existence 
of master equations for the non-singular cases.
\vfill
{ \setlength\parindent{0cm} \setlength\parskip{0.1cm}
Keywords: divisibility, qubit channels, open quantum systems.
}
\thispagestyle{empty}
\cleardoublepage
\begin{center} {\bf RESUMEN}\\ \end{center} 
En esta tesis se presentan dos proyectos realizados durante mis estudios de doctorado. En el
primero se estudian los canales cuánticos, que son las 
operaciones más generales que
transforman estados cuánticos en estados cuánticos,
desde el punto de vista de sus
propiedades de divisibilidad. Introducimos herramientas
para probar si un canal cuántico dado puede ser implementado por un proceso
descrito por una ecuación maestra de Lindblad. Ésto a su vez define a los
canales que pueden ser divididos de tal manera que ellos forman semigrupos de
un parámetro, introduciendo entonces el tipo más restringido de divisibilidad
estudiado de este trabajo. Usando nuestros resultados, junto con el estudio de
otros tipos de divisibilidad que pueden ser encontrados en la literatura,
caracterizamos el espacio de canales cuánticos de un qubit. Encontramos
resultados interesantes que conectan el concepto de canales que rompen el
entrelazamiento (del sistema con cualquier sistema auxiliar) y el de
divisibilidad infinitesimal. Además probamos que el conjunto de canales
infinitamente divisibles es equivalente al de los canales implementables por
semigrupos de un parámetro. Ésto abre la pregunta sobre si esto sucede para
espacios de canales más generales.
En el segundo proyecto estudiamos las formas funcionales de canales Gaussianos
de un solo modo, más allá de la forma funcional Gaussiana. Se hace una
caracterización de \textit{caja negra} utilizando las condiciones de completa
positividad y preservación de la traza, y se reporta la existencia de dos
subconjuntos que no poseen forma funcional Gaussiana. El estudio cubre en
particular el límite de los canales singulares, conectando entonces nuestros
resultados con la la clasificación basada en formas canónicas. Nuestra
caracterización de canales Gaussianos sin forma funcional Gaussiana es
completada mostrando como los estados Gaussianos se transforman bajo esas
operaciones, así como al derivar las condiciones para la existencia de
ecuaciones maestras para los casos no singulares.
\vspace{.5in}

{ \setlength\parindent{0cm} \setlength\parskip{0.2cm}
\newcommand{\ignore}[1]{}
\newcommand{\ignorenot}[1]{#1}

} 
\thispagestyle{empty}
\cleardoublepage

\tableofcontents
\setlength\parindent{0cm} \setlength\parskip{0.3cm}
\mainmatter
\chapter{Introduction}
\begin{flushright}
\textit{In questions of science, the authority of a thousand is not worth the humble reasoning of a single individual.}\\
Galileo Galilei
\end{flushright}
The advent of quantum technologies opens questions aiming for deeper
understanding of the fundamental physics beyond the idealized case of isolated
quantum systems. Also the well established Born-Markov approximation used to
describe open quantum systems (e.g. relaxation process such as spontaneous
decay and decoherence) is of limited use and a more general framework of open
system dynamics is required. Recent efforts in this area have given rise to
relatively novel research subjects - non-markovianity and divisibility.

A central object of study in quantum information theory and open quantum
systems are quantum channels, also called quantum operations. They describe,
for instance, the noisy communication between Alice and Bob or the changes
that an open quantum system undergoes at some fixed time. They can also be seen
as the basic building blocks of time-dependent quantum processes (also called
quantum dynamical maps). Conversely, families of quantum channels arise
naturally given a quantum dynamical map.

Given a quantum channel, for instance a spin flip or the approximation of the
universal NOT gate, one can wonder about how it can be implemented.
The latter in the sense of, being quantum channels discrete operations, can we
find a continuous time-dependent process that at some time it implements the
given channel?; or is there a process such that
we ``just wait for a relaxation of the physical system'' to implement such
channel? 
It turns out that this question is related with the one of finding
simpler operations such that their concatenation equals the given quantum
channel~\cite{cirac}. Such operations are simpler in the sense
that they are closer to the subset of unitary operations, or even ``smaller''
in the sense that they are closer to the identity channel.


This thesis encompasses the results of two works developed during my PhD. 

The first and the most extended one was devoted to study the divisibility
properties of quantum channels (discrete evolutions of quantum systems), for
the particular case of qubits. We revise the divisibility types introduced in
the seminal paper by Wolf et al.~\cite{cirac} and derived several useful
relations to decide each type of divisibility. In particular, we characterize
channels that can be divided in such a way that they belong to one-parameter
semigroups (dynamics described by Lindblad master equations), and extended the
analysis of~\cite{Wolf2008} for channels with negative eigenvalues. We did this
using the results by Evans et al.~\cite{Evans1977} and
Culver~\cite{Culver1966}.

Beyond the mentioned characterization tools, the principal aim of the work was to understand the forms of non-markovianity standing behind the
observed quantum channels. The non-markovianity character describes
the back-action of the system's environment on the system's future time
evolution. Such phenomena is identified as emergence
of memory effects~\cite{rivasreview,breuerreview,ourmeasure}.
On the other side, divisibility questions the possibility of splitting
a given quantum channel into a concatenation of other quantum channels.
In this work we will investigate the relation between these two notions.
Thus, we related features of of continuous time evolutions of quantum
systems, and the concept of divisibility of quantum maps, which are discrete
evolutions. A very first example of this is the well known identification of
one-parameter semigroups with Lindbladian dynamics~\cite{lindblad}.

The second project is devoted to representation theory of continuous-variable
quantum systems, which is a central topic of study given its role in the
description of physical systems like the electromagnetic
field~\cite{doi:10.1142/p489}, solids and nano-mechanical
systems~\cite{RevModPhys.86.1391} and atomic
ensembles~\cite{RevModPhys.82.1041}. In this theory the simplest states, both
from a theoretical and experimental point of view, are the so-called Gaussian
states. An operation that transforms such family of states into itself is
called a Gaussian quantum channel (\gqc{}).  Even though Gaussian states and
channels form small subsets among general states and channels, they have proven
to be useful in a variate of tasks such as quantum
communication~\cite{Grosshans2003}, quantum
computation~\cite{PhysRevLett.82.1784} and the study of quantum entanglement in
simple~\cite{RevModPhys.77.513} and complicated
scenarios~\cite{PhysRevA.98.022335}.  In this project we study the possible
functional forms that one-mode Gaussian quantum channels can have in the
position state representation, and characterize the particular case of singular
channels. Although they are already characterized by their action on the first
and second moments of Gaussian states~\cite{Holevo2007,Reviewquantuminfo}, we
connect our framework to such known results. Additionally we give an insight of
the possible functional forms of, for instance, Gaussian unitaries.


The thesis is organized as follows: 
In chapter~\ref{chap:open_quantum_systems} we discuss the most widely adopted
scheme to study open quantum systems, introducing the formalism of bipartite
systems and useful tools for it. Later on we present the general setting for
system plus reservoir dynamics and its formal solution. As a paradigmatic
example of open system dynamics, we present briefly the microscopic derivation
of the Lindblad master equation using the well known Born-Markov approximation, and discuss the properties of the generator of the dynamics. Subsequently we
introduce the formalism of quantum channels, being the most general operations
over quantum systems (excluding post-selection), by introducing some useful
mathematical definitions and contrasting with its classical analog.
Additionally we discuss briefly the concept of \textit{local operations and
classical communications} (LOCC), also known as filtering operations.
Finally we give a very brief introduction to continuous variable systems,
giving special attention to Gaussian states and channels.

In chapter~\ref{chap:reps} we discuss the different available representations
for quantum channels and their relation with the concept of complete
positivity. In particular we introduce the well known Kraus representation and
discuss the \Jami{} theorem which in turn defines a very useful representation
to study quantum channels and their divisibility properties. Later on we
introduce various matrix representations of quantum channels, paying special
attention to hermitian and traceless bases types (without taking into account
the component proportional to identity). Furthermore we introduce useful
decompositions of qubit channels into unitary conjugations and one-way
stochastic local operations, and classical communication, both being analogous
to the well known singular value decomposition. Finally we give an introduction
to representations of Gaussian channels and a detailed derivation of the
position-state representations for Gaussian channels without Gaussian
functional form.

In chapter~\ref{chap:div} we give the definition of divisible quantum channel,
as well as the definition of various subclasses of divisible channels concerning
additional properties. In particular we discuss the concepts of infinitesimal
and infinitely divisible channels and some relations and inclusions between
them. Among infinitesimal divisible channels we identify two subclasses, being
the set of infinitesimal divisible channels in complete positive and positive
(but not complete positive) maps. Later on we introduce the concept of
L-divisible channels, defining the set of channels which are members of
one-parameter semigroups. We show that the set of infinitely divisible channels
is the same of the L-divisible Pauli channels.

In chapter~\ref{chap:singular} we study one-mode Gaussian quantum channels in
continuous-variable systems by performing a \textit{black-box} characterization
using complete positivity and trace preserving conditions, and report the
existence of two subsets that do not have a functional Gaussian form. Our study
covers as particular limit the case of singular channels, thus connecting our
results with their known classification scheme based on canonical forms. Our
full characterization of Gaussian channels without Gaussian functional form is
completed by showing how Gaussian states are transformed under these
operations, and by deriving the conditions for the existence of master
equations for the non-singular cases. 

In chapter~\ref{chap:summary} we give a summary of the two projects introduced
in this work and conclusions.

Finally, in the appendix~\ref{sec:unbounded} we prove that the exact reduced
dynamics of an open quantum system never follow a Lindblad master equation
unless they are unitary, given a bounded global Hamiltonian. In
appendix~\ref{sec:normal_form} we give an example that shows that the set of
Lorentz normal forms introduced in the literature, is incomplete.

\chapter{Open quantum systems and quantum channels}
\begin{flushright}
\textit{When we talk mathematics, we may be discussing a secondary language built on the primary language of the nervous system.}\\ 
John Von Neumann
\end{flushright}
\label{chap:open_quantum_systems}
In this chapter we introduce the usual scheme to study open quantum systems,
the widely known Born-Markov approximation and the concept of CP-divisibility.
Later on and based on the idea of \textit{(classical) stochastic map}, we
discuss the axiomatic formulation of quantum channels and its connection with
the usual construction of open quantum systems. Finally, for continuous
variable systems, we discuss the paradigmatic example of Gaussian channels.

\section{Introduction to the scheme of open quantum systems} 
The most widely used scheme to study open quantum systems is based on the idea of
study a closed system composed by the \textit{central system} and its
\textit{environment}, see \fref{fig:open_scheme}
 for an schematic
explanation. Thus, concepts as bipartite Hilbert spaces, density matrix and
partial trace are useful tools to study open systems. In what follows we give a
brief review of them. 
\paragraph{Bipartite Hilbert space.} Consider a bipartite closed quantum system
described by a Hilbert space with the structure $\mcH=\mcH_\text{S}\otimes
\mcH_\text{E}$, where $\mcH_\text{S}$ is the Hilbert space of the open system
and $\mcH_\text{E}$ is the Hilbert space of the \textit{environment}. If
$\lbrace \ket{\phi^\text{S}_i} \rbrace_{i=1}^{\dim\left(\mcH_S \right)}$ and
$\lbrace \ket{\phi^\text{E}_i} \rbrace_{i=1}^{\dim\left(\mcH_E \right)}$ are
basis for the spaces $\mcH_\text{S}$ and $\mcH_\text{E}$, respectively, a
basis for $\mcH$ is simply $\lbrace \ket{\phi^\text{S}_i}\otimes
\ket{\phi^\text{E}_j}\rbrace_{i=1,j=1}^{\dim\left(\mcH_\text{S}\right),
\dim\left(\mcH_\text{E}\right)}$. It is typical that for finite dimensional
systems one has that $\dim(\mcH_\text{E})\gg \dim(\mcH_\text{S})$ as the
environment is usually ``bigger'' than the central system.

To describe the states of open quantum systems it is necessary to model the
\textit{ignorance} that the observer has with respect to  the open system. 
Since the experimentalist cannot access the degrees of freedom of the
environment, they are simply ignored. To do this we need the two following
concepts.

\paragraph{Density matrix.} Let a quantum system that has probability $p_i$ to
be in the state $\ket{\phi_i}$, and let the operator $A$ an observable over
such system. Using the average formula $\langle A \rangle =\sum_i p_i \bra
{\phi_i} A \ket{\phi_i}$ it is straightforward to show that $\langle A \rangle
= \tr \left( A \rho \right)$ with 
\begin{equation}
\rho=\sum_i p_i \proj{\phi_i}{\phi_i},
\label{eq:density_matrix_def}
\end{equation}
and $\sum_i p_i=1$. $\rho$ is called \textit{density operator}
or \textit{density matrix}. Note that $\rho$ is a positive-semidefinite operator
given that $p_i\geq 0$, and the states $\ket{\phi_i}$ do not need to be
orthogonal. Also note that since $\rho$ is hermitian, together with the
positive-semidefiniteness, implies that we can always write any density matrix
as a convex combination of orthogonal pure states.
Thus, every operator $\rho$ acting on a Hilbert space $\mcH$, fulfilling
$\varrho \geq 0$, $\rho=\rho^\dagger$ and $\tr(\varrho)=1$ is a density matrix.
The set of density matrices will be denoted along this work as $\mcS(\mcH)$. 

Comparing the notion of density
matrices with the notion of state vectors in the Hilbert space $\ket{\psi} \in \mcH$, density matrices describe physical systems where the observer has
an \textit{incomplete} knowledge of the system's state. Thus, while state vectors are
naturally equipped with \textit{intrinsic} or \textit{quantum} probabilities,
density operators are additionally equipped with \textit{classical}
probabilities. 
The density matrices enjoying the form $\rho=\proj{\psi}{\psi}$, or equivalently $\rho^2=\rho$, \ie{} projectors, are \textit{pure states}. It is clear that in this case the system is prepared in the state $\ket{\psi}$ with probability one.

A useful quantity to characterize quantum states is the \textit{purity}, defined as 
\begin{equation}
P(\rho)=\tr\left( \rho^2\right).
\end{equation}
It ranges from $\dim(\mcH)^{-1}$ to $1$; $1$ is obtained for pure
states and $\dim(\mcH)^{-1}$ for the complete mixture $\one /\dim(\mcH)$.

Additionally the set $\mcS$ is convex, \ie{} any convex combination of density
matrices is another density matrix, in the same way as classical distributions
do. In fact, mixed states ($P(\rho)<1$) can be written always as convex
combinations of pure states, see~\eref{eq:density_matrix_def}. Furthermore the
set $\mcS(\mcH)$ is a subset of the bigger set of \textit{trace-class}
operators, $\mcT(\mcH)$, defined as the ones containing operators with finite
trace norm. The latter is defined as $|\Delta|_\text{tr}=\tr \sqrt{A^\dagger
A}$. This set is in turn a subset of the set of bounded operators $\mcB(\mcH)$,
containing operators with finite \textit{operator norm}, defined as
$|A|_\text{op}=\sup_{\ket{\psi}} |A\ket{\psi}| $, where
$|A\ket{\psi}|=\sqrt(\langle A\psi | A\psi \rangle)$, \ie{} the standard
Hilbert space norm, with normalized vectors $\ket{\psi}$.

It is worth to note that for the finite
dimensional case, bounded operators always have finite trace norm and vice versa, thus
$\mcT(\mcH)=\mcB(\mcH)$. But the identification of such sets is relevant for
infinite dimensional systems, where counter-examples of the non-equivalence of
such sets exist~\cite{zimansbook}. Additionally $\mcB(\mcH)$ is the dual space
of $\mcT(\mcH)$ under the Hilbert-Schmidt product, defined as $\langle
A,B\rangle=\tr(A^\dagger B)$~\cite{Holevobook}.

Now, to ignore the degrees of freedom of the unaccessible part of the system,
we have to perform an operation in a very analogous way as computing marginal
distributions in classical probability theory. For density operators this
introduces the concept of partial trace.

\paragraph{Partial trace.} Let $\rho\in\mcS(\mcH_\text{A}\otimes\mcH_\text{B})$
and $\mcH_\text{A,B}$ the Hilbert spaces of systems $A$ and $B$. Thus, $\rho$
describes a state of a bipartite system composed by $A$ and $B$. If we want to
know the state of the system $A$ alone, one performs a partial trace over $B$
defined as 
\begin{equation*}
\rho_A=\tr_B (\rho_{AB}) = 
\sum_{i=1}^{d_\text{B}}
   \left(\one\otimes\bra{\phi^\text{B}_i}\right) \rho_{AB} \left(\one\otimes\ket{\phi^\text{B}_i}\right),
\end{equation*}
where $\lbrace \ket{\phi^\text{B}_i} \rbrace_{i=1}^{d_\text{B}}$ is  a complete orthonormal
basis on $\mcH_\text{B}$. The resulting operator $\rho_\text{A}$ is a density
matrix describing the state of the system $A$ alone. It is trivial to show that
it is a density operator. A similar formula holds for $\rho_\text{B}$. An
alternative definition is $\tr_B \left(A\otimes B \right)=A\tr\left(B\right)$
plus linearity.

In general for composite systems, in a pure state, knowing the reduced states
(for instance for bipartite systems, $\rho_A$ and $\rho_B$) is in general not enough to
know the whole state of a system. This captures the
\textit{non-local} nature of quantum correlations, demanding simultaneous
measurements on both parts of the system. 
In such case we say that the
subsystems $A$ and $B$ are entangled. To see this, consider the example of the
Bell state $\ket{\Omega}=1/\sqrt{2} \left( \ket{00}+\ket{11}\right)$, where
$\lbrace \ket{0},\ket{1} \rbrace$ is an orthogonal basis of a qubit system. It
is trivial to show that $\ket{\Omega}$ cannot be written as $\ket{\phi}\otimes
\ket{\psi}$, a factorizable state, prohibiting the observer to know the state
of the whole system only by non-simultaneous measurements on $A$ and $B$
(described by reduced density matrices). In fact it is easy to show that
$\rho_{A,B}=\one/2$ are the reduced density matrices, appearing also when the
total state is $\rho_{AB}=\one/4$. For composite systems in mixed states the
situation is quite different. In this case simultaneous measurements are needed
to access classical correlations. To see this consider the state
\begin{equation}
\rho_\text{AB}=\sum_i p_i \rho^i_\text{A} \otimes \rho^i_\text{B},
\label{eq:mixed_convex_comp}
\end{equation} 
being a convex combination of factorizable mixed states. This state is a mixed
separable state~\cite{Horodecki}, \ie{} subsystems A and B are not entangled.
Notice now that performing only local non-simultaneous measurements, the
accessible reduced states are $\rho'_{A,B}=\sum p_i \rho^i_{A,B}$. This state
also arises when the total system is in the factorizable state
$\rho'_\text{A}\otimes \rho'_\text{B}$. Therefore local simultaneous
measurements are needed.


\subsection{System plus reservoir dynamics} 
The most widely used scheme to study
open quantum systems is to consider a bipartite system, where the central
system S, is interacting with its environment, E. The full system
$\text{S}+\text{E}$ undergoes a closed system evolution, \ie{} Hamiltonian
dynamics, see \fref{fig:open_scheme}. The \textit{total Hamiltonian} $H$,
describing the whole system, has the following general structure
\begin{equation}
H=H_\text{S} +H_\text{E}+V,
\end{equation}
where $H_\text{S,E}$ are the free Hamiltonians of the central system and the
environment, respectively, and $V$ is the interaction Hamiltonian among them.
Now let $\rho_\text{SE}(0)$ be the state of the total system at the time $t=0$.
Thus, the state of the system S at the time $t$ is simply:
\begin{equation}
\rho_\text{S}(t)=\tr_\text{E} \left(U(t) \rho_\text{SE}(0)U^\dagger (t) \right),
\label{eq:global_unitary_no_factorized_state}
\end{equation}
where $U(t)=e^{-\rmi H t}$ (taking $\hbar=1$) and $\tr_\text{E}$ is the partial
trace over the environmental degrees of freedom. Note that for a general
initial state $\rho_\text{SE}(0)$, where one allows classical and quantum
correlations, $\rho_\text{SE}(t)$ depends in general on initial information
about the environment and its correlations with the central system $S$. Thus,
to compute the dynamics of the central system such that we end up to universal
reduced dynamics, \ie{} the same for every initial state and independent of the
initial information in the environment, we  take a factorized initial state
$\rho_\text{SE}(0)=\rho_\text{S}(0)\otimes\rho_\text{E}$~\cite{breuerbook,rivasbook}.
We do not write explicitly the time-dependence of the environmental state since
one is not usually interested on its evolution. With the choice of a
factorizable total initial state and using
equation~\eref{eq:global_unitary_no_factorized_state}, we have the following
expression for the evolution of the central system,
\begin{equation}
\rho_\text{S}(t)=\tr_\text{E} \left[ U(t) \left(\rho_\text{S}(0) \otimes \rho_\text{E} \right) U^\dagger (t)\right].
\label{eq:rho_t_open_systems}
\end{equation}
Therefore we have that the dynamics over S only depends on the total Hamiltonian $H$ and the environmental initial state $\rho_E$, whereas $\rho_\text{S}(t)$ depends only on its initial condition. 
\begin{figure}
\centering
\begin{tikzpicture}
\node at (0,0) {\includegraphics[draft=false,scale=0.13]{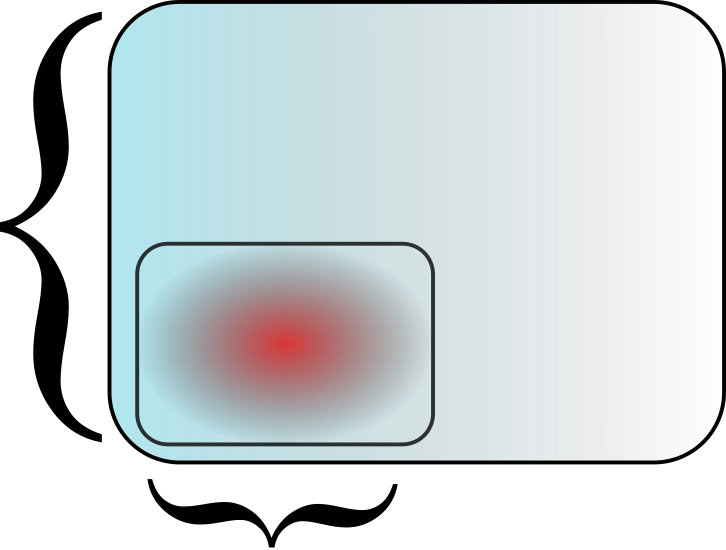}};
\node at (1.5,1.5) {\Huge $\text{S}+\text{E}$};
\node at (-5,0.55) {\Huge $\ket{\psi}$};
\node at (-1,-4) {\huge $\rho_\text{S}=\tr_\text{E} \proj{\psi}{\psi}$};
\end{tikzpicture}
\caption{Diagram of the scheme to study open quantum systems. The letters S and
E state for the \textit{open (or central) system} and \textit{environment}
parts of the total closed system, $\text{S}+\text{E}$. The latter  is described
(typically) by a pure state $\ket{\psi}\in\mcH_\text{S}\otimes \mcH_\text{E}$
and the central system is described by the reduced state computed using the
partial trace over the environmental degrees of freedom, see main text.
\label{fig:open_scheme}}
\end{figure}

Hence the equation~\eref{eq:rho_t_open_systems} defines a dynamical map,
$\mcE_t$, parametrized by $t$. \fxnote{Francois:no parece tener
sentido: aclarar
}\fxwarning{David: frase sin sentido removida}. Thus, we have
\begin{equation}
\mcE_t[\rho(0)]=\tr_\text{E} \left[ U(t) \left(\rho_\text{S}(0) \otimes \rho_\text{E} \right) U^\dagger (t)\right].
\label{eq:open_systems_map}
\end{equation} 
Such map possesses all the information concerning the dynamics of the system S,
thus knowing $\mcE_t$ one can know entirely the evolution of the system $S$.
The map $\mcE_t$ can be obtained numerically or experimentally (depending on
the context) by measuring only the system S by \textit{quantum process
tomography}~\cite{chuangbook}. In~\fref{fig:dilatation_diagram} we present a
schematic description of the two equivalent schemes under which the system S
evolves, and their connection throughout $\tr_\text{E}$.
\begin{figure} 
\centering
\begin{tikzcd}
\rho_\text{S}(0) \otimes \rho_\text{E} \arrow{d}{\tr_\text{E} (\cdot)} \arrow{r}{U(t) \cdot U^{\dagger}(t)}
& \rho_\text{SE}(t) \arrow{d}{\tr_\text{E}(\cdot)} \\
\rho_\text{S}(0) \arrow{r}{\mcE_t}
& \rho_\text{S}(t)
\end{tikzcd}
\caption{Scheme of the equivalences between the concept of dynamical map and
the theory of open quantum systems.
\label{fig:dilatation_diagram}}
\end{figure} 

\Eref{eq:open_systems_map} can be reduced, by writing $\rho_\text{E}=\sum_j p^\text{E}_j \ket{\phi^\text{E}_j}\bra{\phi^\text{E}_j}$, in the following way,
\begin{equation}
\mcE_t[\rho_\text{S}(0)]=\sum_{i,j} K(t)_{i,j} \rho_\text{S}(0) K(t)^\dagger_{i,j},
\label{eq:kraus_sum_time_dependent}
\end{equation}
where the operators $K(t)_{ij}=\sqrt{p^\text{E}_j} \bra{\phi^\text{E}_i}
U(t)\ket{\phi_j^\text{E}}$ are called Kraus operators and act upon the system S
alone~\cite{rivasbook}. The expression of \eref{eq:kraus_sum_time_dependent} is
called \textit{sum represention}, also called Kraus representation of the map $\mcE_t$, this will be
retaken on chapter~\ref{chap:reps}.

Now let us discuss the differential equation for the density matrix of an open quantum system. The total state of the system evolves according to the \textit{Von Neumann equation}~\cite{breuerbook},
\begin{equation}
\frac{d\rho_\text{SE}}{dt}=-\rmi [H,\rho_\text{SE}],
\label{eq:von_neumann}
\end{equation}
which is the analog of the Liouville equation describing the evolution
of a classical distribution in the phase space. 

Taking the partial trace on both sides of \eref{eq:von_neumann} one arrives to
the following:
\begin{align}
\frac{d\rho_\text{S}}{dt}&=-\rmi \tr_\text{E}[H,\rho_\text{SE}]\nonumber\\
&=L_t [\rho_\text{S}],
\label{eq:von_neumann_reduced}
\end{align}
where $L_t$ is the generator of the master equation of the system S. Integrating time in both sides from $\tau=0$ to $\tau=t$, we arrive to the equivalent integral equation:
\begin{equation}
\rho_\text{S}(t)=\rho_\text{S}(0)+\int_0^t d\tau L_\tau [\rho_\text{S}(t)].
\label{eq:integral_equation}
\end{equation}
To compute the formal solution of this equation, we use the method of
successive approximations. This consists on substituting the whole expression
for $\rho_\text{S}(t)$ defined by the right hand side of 
\eref{eq:integral_equation}. A first iteration leads to 
\begin{equation}
\rho_\text{S}(t)
  =\rho_\text{S}(0)+\int_0^t d\tau_1 L_{\tau_1}[\rho_\text{S}(0)]
                   +\int_0^t d\tau_1\int_0^t d\tau_2 L_{\tau_1}[L_{\tau_2}[\rho_\text{S}(t)]].
\label{eq:second_iteration_Neumann}
\end{equation}
Repeating this procedure infinite times, \ie{} substituting $\rho_\text{S}(t)$
defined by the right hand side of the last equation in its second integrand several times,
we arrive to a power series solution for
$\rho_\text{S}(t)$ (powers of $L_t$). This leads to the well known Dyson series for $L_t$. Compactly,
\begin{equation}
\rho(t)=\vec{\text{T}} \exp\left(\int_0^t ds L_s\right)\rho(0)
\label{eq:formal_solution_master}
\end{equation}
with $\vec{\text{T}}$ the time-ordering operator, defined as
$$\vec{\text{T}}[H(\tau_1)H(\tau_2)]=\theta(\tau_1-\tau_2)H(\tau_1)H(\tau_2)+
\theta(\tau_2-\tau_1)H(\tau_2)H(\tau_1),$$ 
with $\theta(x)$ the Heaviside step
function. \Eref{eq:formal_solution_master} constitutes the formal solution
to the Von Neumann equation with generator $L_t$, and we can easily identify
$\mcE_t=\vec{\text{T}} \exp\left(\int_0^t ds L_s\right)$.

\subsection{Born-Markov approach: microscopic derivation} 
In general the form of the generator $L_t$, given a global Hamiltonian, can be
quite involved~\cite{breuerbook}, but in the limit of \textit{weak coupling}
and \textit{short memory} we can perform the very well known
\textit{Born-Markov approximation}. A brief discussion is presented in this
subsection.

The Born-Markov approximation leads to the \textit{Lindblad master equation}.
We will briefly overview its usual textbook derivation. The first step is to use the
interaction picture, hence the total Hamiltonian becomes
$H_I(t)=e^{\rmi H_0 t} H e^{-\rmi H_0 t}$, where $H_0=H_\text{S}+H_\text{E}$ is
the \textit{free} Hamiltonian.
Assuming that the dimension of $\mcH_\text{E}$ is big compared with the
dimension of $\mcH_\text{S}$, the weak coupling limit leads to negligible
changes in the environmental state. Thus, at time $t$ we can approximate
$$
\rho_\text{SE}(t)\approx \rho_\text{S}(t)\otimes \rho_E.
$$
In other words, the state of the total system is left always approximately uncorrelated,
while the state of the environment is never updated. Therefore the environment 
\textit{forgets} any information about the central system, while the state of
the latter undergoes a non-trivial evolution. Additionally to simplify
the derivation we choose $\rho_\text{E}$ a stationary state of $H_\text{E}$,
\ie{} $[H_\text{E},\rho_\text{E}]=0$~\cite{rivasbook}. $\rho_\text{E}$  is typically
chosen as a thermal state of the environmental Hamiltonian, $\rho_\text{E}
\propto \exp\left(-\beta H_\text{E}\right)$, with $\beta=1/\left(k_\text{B}
T\right)$, $k_\text{B}$ the Boltzmann constant and $T$ the environment
temperature.

Now, in the interaction picture the Von Neumann equation becomes
\begin{equation}
\frac{d\rho_\text{S}}{dt}=-\rmi \tr_\text{E} [V_\text{I}(t),\rho_\text{S}],
\label{eq:von_neumann_interaction_picture}
\end{equation}
where $V_\text{I}(t)=e^{\rmi H_0 t} V e^{-\rmi H_0 t}$ and the state
$\rho_\text{S}(t)$ are now written in the interaction picture. Inserting
$\rho_\text{S}(t)$ from its integral equation \eref{eq:integral_equation}
in 
the differential equation~(\ref{eq:von_neumann_interaction_picture}) and
assuming $\tr_\text{E}[V_\text{I}(t),\rho_\text{S}\otimes
\rho_\text{E}]=0$~\cite{breuerbook}, we obtain
\begin{equation}
\frac{d\rho_\text{S}}{dt}=-\int_0^t d\tau \tr_\text{E} [V_\text{I}(t),[V_\text{I}(\tau),\rho_\text{S}(\tau)\otimes\rho_\text{E}]].
\end{equation}
If we assume that the dynamics of the state of the central system does not
depend on its past, we can change $\rho_\text{S}(\tau)$ to $\rho_\text{S}(t)$,
this is called the \textit{Markovian approximation}. 
Additionally doing the
variable change $\tau'=t-\tau$, we arrive to
\begin{equation}
\frac{d\rho_\text{S}}{dt}=-\int_0^t d\tau' \tr_\text{E} [V_\text{I}(t),[V_\text{I}(t-\tau'),\rho_\text{S}(t)\otimes\rho_\text{E}]],
\end{equation}
this equation is known as Redfield equation~\cite{Redfield1965} and it is local
in time~\cite{breuerbook}. Assuming that the time scale on which the central
system varies appreciably is much larger than the time on which the
correlations of the environment decay (say $\tau_\text{E}$), the integrand
decays to zero rapidly for $\tau'\gg \tau_\text{E}$. Then we can safely replace
$t$ by $\infty$ in the integrand limits, obtaining
\begin{equation}
\frac{d\rho_\text{S}}{dt}=-\int_0^\infty d\tau' \tr_\text{E} [V_\text{I}(t),[V_\text{I}(t-\tau'),\rho_\text{S}(t)\otimes\rho_\text{E}]].
\label{eq:born_markov}
\end{equation}

Up to this point, \eref{eq:born_markov} has in general fast oscillating
terms coming from the explicit dependence on $V_\text{I}(t)$, this in turn can
bring a generator that leads to a quantum process that violates complete
positivity~\cite{rivasreview,rivasbook}. In order to get rid of such fast
oscillations, one uses the aforementioned assumption that the environment is
initialized in a stationary state, and perform the so
called \textit{secular approximation}~\cite{rivasreview}. A detailed derivation is outside of the scope of
this thesis, but it can be consulted on references~\cite{breuerbook,rivasbook}.
After performing the Markov, Born and secular approximations and changing back
to the \scho{} picture, the resulting master equation can be written in the
following forms
\begin{align}
\frac{d\rho_\text{S}}{dt}&= \rmi [\rho_\text{S},\tilde H_\text{S}]
 +\sum_{i,j=1}^{d_\text{S}^2-1} G_{ij} 
     \left( 
         F_{i}\rho_\text{S} F^{\dagger}_{j}
             -\frac{1}{2} \lbrace F^{\dagger}_{j} F_{i},\rho_\text{S} \rbrace 
     \right),\\
     &= \rmi [\rho_\text{S},\tilde H_\text{S}]
 +\sum_{j=1}^{d_\text{S}^2-1} \gamma_j
     \left( 
         A_{j}\rho_\text{S} A^{\dagger}_{j}
             -\frac{1}{2} \lbrace A^{\dagger}_{j} A_{j},\rho_\text{S} \rbrace 
     \right),\\
     &=L[\rho_S].
\label{eq:lindblad_from_deri}
\end{align}
$F_j$ ($j=0,\cdots,d_\text{S}^2-1$) are operators acting on the central system
that additionally form an orthonormal basis under Hilbert-Schmidt inner
product, such that $F_{0}=\one/\sqrt{d_\text{S}}$ and $\tr F_j=0$ $\forall j >0$ (this will be revised in subsection~\ref{sec:herm_and_trace_less});
the matrix $G$ is called dissipator matrix. In the second inequality we have
used the singular value decomposition of matrix $G$, thus operators $A_i$ are
linear combinations of $F_i$. The scalars $\gamma_j>0$ are called
\textit{relaxation rates} and the operator $\tilde H_\text{S}$ is the
shifted free Hamiltonian of the central system. The first term on both
equations, the commutator, is called \textit{Hamiltonian part}, while the
second, the superoperator defined with the summations, is called
\textit{dissipator}. Note that if $\gamma_j=0$ $\forall j$ (uncoupled limit),
one recovers the Hamiltonian dynamics over the system S. The operator $L$ is 
called \textit{Lindblad generator}
or \textit{Lindbladian} and \eref{eq:lindblad_from_deri} is called
\textit{Lindblad master equation}. We will use along the work the notation
$L$ for Lindblad operators. 

Note that $L$ is independent of time, hence the formal solution of the master
equation~\eref{eq:lindblad_from_deri} equation is simply the exponentiation of
$L$ [see~\eref{eq:formal_solution_master}], \ie{} 
\begin{equation}
\rho_\text{S}(t)=e^{Lt}\rho_\text{S}(0).
\end{equation}
Therefore the dynamics is homogeneous in time and, together with the fact that
$\mcE_t=\exp(L t)$, we have $\mcE_{t+s}=\mcE_t\mcE_s$, \ie{} the quantum
process $\mcE_t$ resulting from a Lindblad master equation forms a
one-parameter semigroup. In fact, Lindblad has proven the converse for norm continuous semigroups~\cite{lindblad}.  Here we write the theorem for the finite dimensional case that is trivially norm continuous,
\begin{theorem}[One-parameter quantum semigroups]
Let $\mcE_t$ with $\mcE_0=\id$ and $t\geq 0$ a finite dimensional quantum process, it is a one-parameter quantum semigroup if and only if it has a generator with the form presented in~\eref{eq:lindblad_from_deri}.
\end{theorem}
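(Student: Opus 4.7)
The plan is to prove both directions separately, with the sufficiency being largely computational and the necessity being the substantive content. Throughout I will use the Choi--Jamio\l{}kowski isomorphism $\mcE \mapsto (\mcE\otimes\id)[\proj{\Omega}{\Omega}]$ introduced in chapter~\ref{chap:reps} as the main technical device, since complete positivity translates into positive-semidefiniteness of the Choi matrix, which is a condition easier to differentiate in $t$ than the CP condition itself. I will also use that, in finite dimension, every semigroup with $\mcE_0=\id$ satisfying $\mcE_{t+s}=\mcE_t\mcE_s$ is automatically norm-continuous, so a bounded generator $L:=\lim_{t\to 0^+}(\mcE_t-\id)/t$ exists and $\mcE_t=\exp(Lt)$.

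For the sufficiency direction, given $L$ of the stated Lindblad form, the semigroup property is immediate from the exponential. Trace preservation follows from $\tr L[\rho]=0$, which one checks using cyclicity of the trace on the anticommutator term and the fact that the commutator has zero trace. The delicate point is complete positivity of $e^{Lt}$. My approach is to split $L=\mcA+\mcB$ with $\mcA[\rho]=\sum_j\gamma_j A_j\rho A_j^\dagger$ and $\mcB[\rho]=K\rho+\rho K^\dagger$, where $K=-\rmi\tilde H_\text{S}-\tfrac{1}{2}\sum_j\gamma_j A_j^\dagger A_j$. The map $e^{\mcA s}$ is a norm-convergent series of sums of maps of the form $\rho\mapsto B\rho B^\dagger$ (each manifestly CP because $\gamma_j\geq 0$), hence CP. The map $e^{\mcB s}[\rho]=e^{Ks}\rho e^{K^\dagger s}$ is a single-Kraus-operator map, hence CP. The Lie--Trotter product formula $e^{Lt}=\lim_{n\to\infty}(e^{\mcA t/n}e^{\mcB t/n})^n$ then exhibits $e^{Lt}$ as a limit of compositions of CP maps, so it is CP.

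For the necessity direction, I start from the hypothesis that $C_t:=(\mcE_t\otimes\id)[\proj{\Omega}{\Omega}]\geq 0$ for every $t\geq 0$, with $C_0=\proj{\Omega}{\Omega}$. Differentiating at $t=0$ yields $\dot C_0=(L\otimes\id)[\proj{\Omega}{\Omega}]$; since $C_0$ is a rank-one projector, positivity of $C_t$ for small $t>0$ forces $P_\perp\dot C_0 P_\perp\geq 0$, where $P_\perp=\one-\proj{\Omega}{\Omega}$. This is the conditional-positivity statement I will spectrally decompose as $P_\perp\dot C_0 P_\perp=\sum_j\gamma_j\proj{v_j}{v_j}$ with $\gamma_j\geq 0$ and $\<\Omega|v_j\>=0$. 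Under the Choi--Jamio\l{}kowski isomorphism each $\ket{v_j}$ corresponds to an operator $A_j$ on $\mcH_\text{S}$ with $\tr A_j=0$ (this orthogonality to $\ket{\Omega}$ is why the Lindblad operators can be chosen traceless, matching the basis $F_{j\geq 1}$ in~\eref{eq:lindblad_from_deri}). Reassembling $L$ from its Choi matrix yields a dissipator part $\sum_j\gamma_j(A_j\rho A_j^\dagger-\tfrac12\{A_j^\dagger A_j,\rho\})$; the remaining off-diagonal-in-$\ket{\Omega}$ contributions to $\dot C_0$ must, by hermiticity preservation of $L$ (forced by $\mcE_t$ preserving hermiticity) and by trace preservation ($\tr L[\rho]=0$ for all $\rho$), combine into a single commutator $-\rmi[\tilde H_\text{S},\rho]$ with $\tilde H_\text{S}$ self-adjoint.

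The main obstacle is the bookkeeping in this last step: showing that, after removing the positive dissipator piece, what remains of $L$ is exactly a commutator with some self-adjoint operator, and that the two trace-preservation and hermiticity constraints fix the anticommutator term with the correct coefficient $\tfrac12$. I would carry this out by expanding $L$ in a hermitian traceless basis $\{F_j\}_{j\geq 1}\cup\{F_0=\one/\sqrt{d_\text{S}}\}$, so that $L[\rho]=\sum_{ij}G_{ij}F_i\rho F_j^\dagger+\text{terms involving }F_0$; hermiticity preservation forces $G$ hermitian, trace preservation fixes the $F_0$-proportional terms in terms of $G$ and the Hamiltonian part, and the conditional positivity obtained above is exactly $G\geq 0$ on the traceless block. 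Diagonalizing $G$ as in the paper's equation produces the relaxation rates $\gamma_j$ and the Lindblad operators $A_j$, completing the proof.
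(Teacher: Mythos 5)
Your proposal is correct, and it is worth separating how it relates to what the thesis actually does: the thesis states this theorem and defers its proof to Ref.~\cite{Alicki2007}, but the machinery you deploy for the necessity direction is precisely the machinery the thesis builds for the neighbouring results. Your step of differentiating the Choi matrix at $t=0$ and using that $C_0=\proj{\Omega}{\Omega}$ is rank one to force $P_\perp\dot C_0P_\perp\geq 0$ is, almost word for word, the argument in the proof of theorem~\ref{thm:KKSG}; and your final bookkeeping step (expand $L$ in a traceless hermitian basis, let hermiticity preservation force $G=G^\dagger$, let trace preservation fix the remaining terms, identify the conditional positivity with $G\geq 0$) is exactly theorem~\ref{thm:linblad_simple_form} combined with proposition~\ref{prop:ccp}. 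What you add that the thesis does not write out anywhere is the sufficiency direction: the splitting $L=\mcA+\mcB$ with $e^{\mcB s}[\rho]=e^{Ks}\rho e^{K^\dagger s}$ and the Lie--Trotter product formula is the standard route to complete positivity of $e^{Lt}$, and it is a genuine addition, since the thesis only observes that $\mcE_t=e^{Lt}$ satisfies the semigroup law without proving that the exponential of a Lindbladian is CP.

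Two small points to tighten. First, a purely algebraic one-parameter semigroup on a finite-dimensional space need not be norm-continuous without some regularity hypothesis (measurability, or continuity at a single point); you should either invoke the continuity built into the definition of a quantum dynamical map, as the thesis implicitly does, or state that assumption explicitly. Second, the sentence claiming that the off-diagonal-in-$\ket{\Omega}$ contributions ``combine into a single commutator'' is loose as written: those contributions give $\rho\mapsto-\kappa\rho-\rho\kappa^\dagger$ for some $\kappa$, and it is only after trace preservation forces $\kappa+\kappa^\dagger=\sum_j\gamma_jA_j^\dagger A_j$ --- which is what supplies the anticommutator with its factor $\tfrac{1}{2}$ --- that the leftover antihermitian part $\rmi\tilde H_\text{S}$ yields a commutator. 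Your following sentence and your basis-expansion plan recover this correctly, so this is a wording issue rather than a gap.
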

A proof is given in Ref.~\cite{Alicki2007}.
It is worth to point out that starting from global dynamics governed by a finite dimensional Hamiltonian, the reduced dynamics are never of Lindblad form. This can be stated as the following,
\begin{theorem}[Exact dynamics with Lindblad master equation]
Let $\mcE_t=e^{tL}$ a quantum process generated by a Lindblad operator $L$. The equation 
$$\mcE_t[\rho]=\tr_\text{E} \left[ e^{-\rmi H t} \left(\rho \otimes \rho_\text{E}\right) e^{\rmi H t}\right],$$ 
where $H$ has finite dimension, holds if and only if $\mcE_t$ is an unitary conjugation for every $t$.
\end{theorem}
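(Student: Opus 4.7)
The backward implication is immediate: if $\mcE_t(\rho)=V(t)\rho V(t)^\dagger$ with $V(t)=e^{-\rmi H_\text{S}t}$, one verifies the integral form at once by taking $H=H_\text{S}\otimes\one_\text{E}$ together with an arbitrary environmental state $\rho_\text{E}$. The content of the theorem is the forward direction, and my strategy is to exploit the fact that $U(t)=e^{-\rmi Ht}$ lives on a finite-dimensional space to force the Lindblad semigroup $\mcE_t$ to actually be a group of CPTP maps, which then pins it down to a unitary conjugation.

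The first step is to extract a recurrence. Because $\dim(\mcH_\text{S}\otimes\mcH_\text{E})<\infty$, the Hamiltonian $H$ has a discrete real spectrum, and a standard simultaneous-Diophantine-approximation (Kronecker) argument produces a sequence $t_n\to\infty$ with $U(t_n)\to\one$ in operator norm. Substituting into $\mcE_{t_n}(\rho)=\tr_\text{E}[U(t_n)(\rho\otimes\rho_\text{E})U(t_n)^\dagger]$ yields $\mcE_{t_n}\to\id$ as linear maps on $\mcT(\mcH_\text{S})$.

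Second, I would turn this recurrence into a spectral constraint on $L$. Since $\mcE_t=e^{tL}$ is CPTP (hence a trace-norm contraction) for every $t\ge 0$, every eigenvalue of $L$ has non-positive real part. If some eigenvalue $\alpha+\rmi\beta$ had $\alpha<0$, the corresponding eigenvalue of $\mcE_{t_n}$ would have modulus $e^{t_n\alpha}\to 0$, contradicting $\mcE_{t_n}\to\id$. Therefore every eigenvalue of $L$ is purely imaginary, and in particular $\mcE_t$ is invertible as a linear map for every $t$.

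The decisive step is to promote the semigroup to a group. From the semigroup identity $\mcE_{t_n-t}\circ\mcE_t=\mcE_{t_n}$ and the invertibility just established, one gets $\mcE_{t_n-t}\to\mcE_t^{-1}$; since each $\mcE_{t_n-t}$ is CPTP and the set of CPTP maps is closed in norm, $\mcE_t^{-1}$ is itself CPTP for every $t>0$. I would then invoke the well-known Wigner-type rigidity statement that a CPTP map whose inverse is also CPTP must be a unitary conjugation $\rho\mapsto V\rho V^\dagger$, which gives the claim. The main technical obstacle I anticipate is the first step: extracting $U(t_n)\to\one$ in operator norm with good enough uniformity to justify passing the limit through the partial trace and through the CPTP closedness argument, but this is handled cleanly by diagonalising $H$ and applying Kronecker's theorem to its eigenphases.
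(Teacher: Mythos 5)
Your argument is correct, but it takes a genuinely different route from the paper's. The paper's proof is local in time: it Taylor-expands the exact reduced dynamics to first order in $t$ and observes that, for a product initial state, the derivative at $t=0$ is always the pure commutator $\rmi[\rho_\text{S},\tilde H]$ with the mean-field Hamiltonian $\tilde H=\sum_{k,l}h_{kl}\tr[F_l^{(\text{E})}\rho_\text{E}]\,F_k^{(\text{S})}$ (dissipation only enters at order $t^2$); since the time-independent Lindblad generator $L$ must equal this derivative, it has no dissipator and $\mcE_t=e^{tL}$ is a unitary conjugation. Your proof is global: finite dimensionality gives quasi-periodic recurrence $U(t_n)\to\one$, hence $\mcE_{t_n}\to\id$, hence $\mcE_t^{-1}=\lim_n\mcE_{t_n-t}$ is a norm limit of CPTP maps and therefore CPTP, and an invertible channel whose inverse is a channel is a unitary conjugation (the paper itself uses the determinant form of this rigidity in chapter 4: $|\det\mcE|\le 1$ with equality only for unitaries). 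Both are sound; two remarks. First, your spectral step is dispensable: $e^{tL}$ is automatically invertible because $\det e^{tL}=e^{t\,\tr L}\neq 0$, which is all the group-promotion argument needs. Second, your route makes transparent why the continuous-spectrum counterexample in chapter 2 (the qubit coupled to $\hat x$ on the line) evades the theorem — there is no recurrence — whereas the paper's local computation pinpoints exactly where the dissipator vanishes. The paper's proof is shorter and more explicit about the effective Hamiltonian; yours is more conceptual and reuses structural facts about the channel semigroup already established in the thesis.
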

A proof made jointly with Sergey Filippov is given in the~appendix~\ref{sec:unbounded}. It was made using an specific matrix representation for operators that will be introduced in the next chapter. But a more general proof can be found in~\cite{exner}.

Let us point out that this is not the case for Hamiltonians with continuum spectrum, they can lead to Lindblad master equations for the reduced dynamics. This is shown below together other illustrative examples.

\paragraph{Examples.}
To illustrate Lindblad dynamics we present several examples. The first one,
depolarizing dynamics, is constructed via a continuous and
monotonic contraction of the Bloch sphere. The second one corresponds to a system for which the exact reduced
dynamics have Lindblad generator.
\begin{example}[Dephasing dynamics]
Let $\rho(0)=\left( \begin{array}{cc}
\rho_{00} & \rho_{01} \\ 
\rho_{01}^* & \rho_{11}
\end{array}  \right)$ be the initial state, written in a basis called
\textit{decoherence basis}, of a system that undergoes depolarizing dynamics. This is, only coherence terms (in this basis) are modified in the following way:
$$\mcE_t:\rho(0)\mapsto \left( \begin{array}{cc}
\rho_{00} & \rho_{01} e^{-\gamma t} \\ 
\rho_{01}^* e^{-\gamma t} & \rho_{11}
\end{array}  \right)=:\rho(t),$$
with $\gamma>0$. It is trivial to check that $\mcE_t$ is a one-parameter
semigroup with $\mcE_0=\id$. For $t \to \infty$, we get
$\rho(0)\to\text{diag}\left( \rho_{00},\rho_{11} \right)$. For this process it
is easy to prove, by taking $0< t \ll 1$, that its generator is $L[\rho]=\gamma/2
\left( \sigma_z \rho \sigma_z -\rho \right)$, which has Lindblad form. It has
null Hamiltonian part and only one operator $A_0=\sigma_z$ and one relaxation
ration, $\gamma/2$.
\label{example:dephasing}
\end{example}

\begin{example}[Dynamics from global Hamiltonian with continuous spectrum]
Consider a bipartite system composed by a qubit interacting with a particle in a
line, with global Hamiltonian $H= \sigma_z \otimes \hat x$, where
$\hat x$ is position operator. Notice that $H$ is unbounded since the configuration space of the particle
is the entire real line.
Initializing the environment in the state $\ket{\psi}$ with 
\begin{equation*}
\langle x |\psi \rangle=\sqrt{\frac{\gamma}{\pi}}\frac{1}{x+\rmi \gamma},
\end{equation*}
it can be shown that the exact reduced dynamics for the qubit, without any
approximation, is $L[\rho]=\gamma/2 \left( \sigma_z \rho \sigma_z -\rho
\right)$ ~\cite{exactlindblad}. The same generator as in the first example.
\end{example}

\section{Quantum channels} 
In this section we give a brief introduction to classical stochastic processes,
this motivates the definition of quantum channel. We first give an overview of stochastic processes; based
on this we review the construction steps of quantum channels and discuss
several of their properties. Additionally we introduce the simplest example of
local operations and classical communication. Later on one we discuss the
definition of CP-divisible processes based on the definition of classical
Markovianity. Finally we give a brief revision of Gaussian quantum states and
channels.
\subsection{A classical analog} 
\label{sec:classical_analog}
The classical analog of quantum channels are the widely known \textit{stochastic
matrices} or \textit{stochastic maps} which propagate classical probability
distributions.  
To introduce them consider, for sake of simplicity, a finite
dimensional stochastic system whose state $x_t$ (at time $t$) is described by
the probability distribution (or probability vector) $\vec p(t)$, \ie{}
$x_t\sim\vec p(t)$ [with $\sum_i p_i (t)=1$ and $p_i(t)\geq 0$].
Note that probability vectors form a convex
space in the very same way that density matrices do.
The distribution $\vec p(t)$ is the classical analogous object to density matrices. They serve as the tool to model the accessible information of the observer about 
the state of the classical stochastic system.

Consider now the most general linear transformation on probability vectors that
takes, for instance $\vec p (0)$ to $\vec p(t)$ and let us write it explicitly
as a matrix multiplication, $\vec p(t)=\Lambda_{(t,0)}\vec p(0)$. We have to
impose further constrictions over $\Lambda_{(t,0)}$ in order to preserve the
normalization of $\vec p(t)$ and the non-negativity of its elements. Since $p_i
(t)=\sum_j \left(\Lambda_{(t,0)}\right)_{ij} \vec p_j (0)$, simple algebra
leads us to note that $\sum_{i}\left(\Lambda_{(t,0)}\right)_{ij}=1\ \ \forall
j$ and $\left(\Lambda_{(t,0)}\right)_{ij}\geq 0$. Matrices that fulfill these
conditions are widely known as \textit{stochastic matrices}, and form a
convex set following the convexity of the space of probability distributions.

A remarkable property of stochastic maps is that they are contractive with respect to the Kolmogorov distance.
\begin{theorem}[Contractivity of stochastic maps]
The matrix $\Lambda$ is a stochastic matrix if and only if 
\begin{equation}
\mcD_\text{K}\left(\Lambda \vec p, \Lambda \vec q \right)\leq \mcD_\text{K}\left(\vec p,\vec q\right),
\end{equation}
where $\mcD_\text{K}\left(\vec p, \vec q \right)=\sum_k |p_k-q_k|$ is the
Kolmogorov distance.
\label{thm:classical_contractivity}
\end{theorem}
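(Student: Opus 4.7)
The forward direction ($\Lambda$ stochastic $\Rightarrow$ contractivity) is a direct triangle-inequality argument. For probability vectors $\vec p, \vec q$, I would expand
$$\mcD_\text{K}(\Lambda \vec p,\Lambda \vec q)=\sum_i \left|\sum_j \Lambda_{ij}(p_j-q_j)\right|,$$
and use the non-negativity of $\Lambda_{ij}$ to move each entry inside the absolute value via the triangle inequality, obtaining $\sum_{i,j}\Lambda_{ij}\,|p_j-q_j|$. Swapping the order of summation and invoking $\sum_i\Lambda_{ij}=1$ then collapses this to $\sum_j |p_j-q_j|=\mcD_\text{K}(\vec p,\vec q)$. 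Notice that the two defining properties of a stochastic matrix are each used exactly once.

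For the converse direction, the plan is to probe the contractivity inequality with carefully chosen probability vectors. Natural probes are the standard basis vectors $\vec e_j$, whose images $\Lambda \vec e_j$ are precisely the $j$-th columns of $\Lambda$. Because the inequality is only meaningful when $\Lambda \vec p, \Lambda \vec q$ themselves lie in the probability simplex, one obtains at once that every column $\Lambda \vec e_j$ is a probability vector, \ie{} $\Lambda_{ij}\ge 0$ and $\sum_i\Lambda_{ij}=1$. As a secondary sign check, testing small simplex-preserving perturbations $\vec e_j\to \vec e_j+\epsilon(\vec e_k-\vec e_j)$ verifies that a negative entry $\Lambda_{ij}$ would inflate the image distance beyond the source distance and thus violate contractivity.

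The main difficulty will be the converse, where two separate entry-wise conditions on $\Lambda$ must be extracted from a single scalar inequality. I expect the cleanest route to combine contractivity with the implicit requirement that $\Lambda$ be a well-defined operation on probability vectors --- without which even the zero matrix would trivially satisfy the contractivity inequality. Once $\Lambda$ sends probability vectors to probability vectors, evaluation at the basis vectors $\vec e_j$ suffices, and the contractivity hypothesis then functions as the natural compatibility condition that enforces stochasticity.
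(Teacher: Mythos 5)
Your forward direction is correct and is the standard argument: expand $\mcD_\text{K}(\Lambda\vec p,\Lambda\vec q)$, apply the triangle inequality using $\Lambda_{ij}\ge 0$, swap the sums, and use $\sum_i\Lambda_{ij}=1$. Note that the thesis itself does not write out a proof of this theorem --- it defers to the cited review --- so your argument is being judged on its own merits.

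The converse contains a genuine gap, which you have half-noticed yourself. As literally stated (contractivity of $\mcD_\text{K}$ on pairs of probability vectors), the converse is false: the zero matrix is contractive but not stochastic, and so is
$\Lambda=\left(\begin{smallmatrix}2&1\\-1&0\end{smallmatrix}\right)$,
which has unit column sums and satisfies $\mcD_\text{K}(\Lambda\vec p,\Lambda\vec q)=\mcD_\text{K}(\vec p,\vec q)$ for all probability vectors $\vec p,\vec q$ (every difference $\vec p-\vec q$ is a multiple of $(1,-1)^\text{T}$, and $\Lambda(1,-1)^\text{T}=(1,-1)^\text{T}$), yet it has a negative entry. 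The same example defeats your ``secondary sign check'': perturbations $\vec e_j\to\vec e_j+\epsilon(\vec e_k-\vec e_j)$ only probe $\Lambda$ on the hyperplane $\sum_i v_i=0$, which is exactly where a negative entry can hide. Your proposed repair --- declaring that $\Lambda$ must map the simplex into itself because the inequality is ``only meaningful'' there --- is circular: a linear map that sends every probability vector to a probability vector already has nonnegative columns summing to one (evaluate at $\vec e_j$), so it is already stochastic and the contractivity hypothesis does no work. The version of the statement that is actually provable (and is what the cited reference establishes, in classical and quantum form) keeps the normalization preservation $\sum_i\Lambda_{ij}=1$ as a standing hypothesis and demands the $\ell_1$ contraction $\sum_i|(\Lambda\vec v)_i|\le\sum_i|v_i|$ for \emph{all} real vectors $\vec v$, not only for differences of probability vectors. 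Then the converse is both nontrivial and immediate: taking $\vec v=\vec e_j$ gives $\sum_i|\Lambda_{ij}|\le 1=\sum_i\Lambda_{ij}\le\sum_i|\Lambda_{ij}|$, which forces $\Lambda_{ij}=|\Lambda_{ij}|\ge 0$. You should either restate the hypothesis in that form or flag that the biconditional, read literally over probability vectors alone, does not hold.
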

It is worth to note Kolmogorov distance is a measure of distinguishability between classical distributions. A detailed proof of this theorem can be found in Ref.~\cite{rivasreview}. 

A particular and interesting class of stochastic matrices are
\textit{bistochastic matrices}. They are defined as the transformations that
leave invariant the probability distribution with maximum entropy, given by
$\vec m=\left(1/N,\dots,1/N\right)^\text{T}$, where $N$ is the number that the system can have. Therefore a bistochastic matrix fulfills $\vec m=\Lambda_{(t,0)}\vec
m$. Doing simple algebra leads us to note that bistochastic matrices
additionally fulfill $\sum_{j}\left(\Lambda_{(t,0)}\right)_{ij}=1\ \ \forall
i$. This implies that they are also stochastic matrices \textit{acting from the
right}, \ie{} mapping row probability vectors. This is also the origin of the
name bistochastic.

In the previous section we have introduced the concept of Markovianity in the
context of open quantum systems, the so called Markovian approximation. It
consisted on assuming that the system 'forgets the information about its
previous states'. This concept comes from the theory of classical stochastic
processes. Let us introduce the following
definition~\cite{breuerbook,rivasreview},

\begin{definition}[Classical Markovian process]
Let $x_t$ be the state of a stochastic system where $t\in [0,\tau]$, and
$\chi=\lbrace t_0, \dots t_n \rbrace$ any ordered set of times such that
$0<t_0< t_1 <\dots <t_n < \tau$, the process is Markovian if
\begin{equation}
P(x_{t_n},t_n|x_{t_{n-1}},t_{n-1};\dots;x_{t_0},t_0)=P(x_{t_n},t_n|x_{t_{n-1}},t_{n-1}) \ \ \forall n>0,
\label{eq:classical_markov}
\end{equation}
where $P(\cdot | \cdot)$ denotes conditional probability.
\end{definition}
According to this definition, the conditional probability of the system to be
at the state $x_{t_{n}}$ at the time $t_n$, given the history of events
$\lbrace x_{t_{n-1}},t_{n-1};\dots;x_{t_0},t_0\rbrace$, depends only on the
previous state. This definition captures the memoryless character of Markovian
processes.

Consider now a stochastic process and $\lbrace\Lambda_{(t,0)}\rbrace_{t\in
\chi}$ a set of stochastic matrices given some ordered set of times $\chi$.
If the process is Markovian then the matrices $\Lambda_{(t_m,t_n)}$ are
stochastic matrices for any $\chi$, where $t_m>t_n \in \chi$. 
The converse is
not true~\cite{rivasreview,breuerbook}. This condition 
implies that the map
$\Lambda_{(t,0)}$ is divisible in the sense that it can always be written as 
\begin{equation}
\Lambda_{(t_1,t_0)}=\Lambda_{(t_1,s)}\Lambda_{(s,t_0)} \ \ \forall \ \ t_1>s>t_0,
\label{eq:classical_divisibility}
\end{equation}
with $\Lambda_{(t_1,s)}$, $\Lambda_{(s,t_0)}$ and $\Lambda_{(t_1,t_0)}$
stochastic matrices, the latter two by definition. Intermediate maps can be
constructed as $\Lambda_{(t_1,s)}=\Lambda_{(t_1,t_0)}\Lambda^{-1}_{(s,t_0)}$ if
$\Lambda^{-1}_{(s,t_0)}$ exists. Note that
theorem~\ref{thm:classical_contractivity} implies that Markovian stochastic
processes do not increase the Kolmogorov distance.
\subsection{Construction of quantum channels} 
The concept of quantum channel, also known as quantum operation, captures the
idea of stochastic map in the quantum setting. Thus, being the density matrices
the analogous objects to probability vectors, we seek for linear operations
that transform density matrices into density matrices. The operations that do
such job are defined as follows:
\begin{definition}[Positive and trace preserving linear operations (PTP)]
A linear operation $\mcE:\mcT(\mcH)\to \mcT(\mcH)$ is positive and trace
preserving if, for all $\Delta \in \mcH$, we have the following
\begin{itemize}
\item $\mcE[\Delta]\geq 0 \ \ \forall \Delta \geq 0,$
\item $\tr \left(\mcE[\Delta]\right)=\tr \left( \Delta\right)$.
\end{itemize}
\end{definition}
A remarkable property of linear positive maps is that they are contractive
respect to the trace norm~\cite{rivasreview}. This leads to a decrease of
the distinguishability of quantum states, similar to the classical case.
\begin{theorem}[Contractivity of positive maps]
A linear map $\mcE$ is PTP if and only if
$\left|\mcE[\Delta]\right|_\text{tr}\leq \left| \Delta \right|_\text{tr} \ \ \forall \Delta^\dagger=\Delta \in \mcB(\mcH)$.
\label{thm:contractivity}
\end{theorem}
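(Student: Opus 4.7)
The plan is to use the Jordan (spectral) decomposition of self-adjoint operators throughout: any Hermitian $\Delta\in\mcB(\mcH)$ splits uniquely as $\Delta=\Delta_+-\Delta_-$ with $\Delta_\pm\geq 0$ and $\Delta_+\Delta_-=0$, so that $|\Delta|_\text{tr}=\tr(\Delta_+)+\tr(\Delta_-)$. This identity is the single structural fact that drives both directions of the equivalence.

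For the forward implication (PTP $\Rightarrow$ contractivity), I would apply the triangle inequality to $\mcE[\Delta]=\mcE[\Delta_+]-\mcE[\Delta_-]$, obtaining $|\mcE[\Delta]|_\text{tr}\leq|\mcE[\Delta_+]|_\text{tr}+|\mcE[\Delta_-]|_\text{tr}$. Positivity of $\mcE$ makes each $\mcE[\Delta_\pm]$ a positive operator whose trace norm equals its trace, and trace preservation then collapses $\tr(\mcE[\Delta_\pm])$ back to $\tr(\Delta_\pm)$. Summing reproduces $|\Delta|_\text{tr}$ on the right. Geometrically the inequality becomes strict precisely when the supports of $\mcE[\Delta_+]$ and $\mcE[\Delta_-]$ overlap, which captures how a positive map blurs the positive and negative parts of $\Delta$ into one another and thus decreases distinguishability.

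For the converse, I would argue by contradiction: if $\mcE$ failed positivity on some $\rho\geq 0$, the Jordan decomposition $\mcE[\rho]=A_+-A_-$ would have $A_-\neq 0$, and, using trace preservation, $|\mcE[\rho]|_\text{tr}=\tr(A_+)+\tr(A_-)=\tr(\rho)+2\tr(A_-)>|\rho|_\text{tr}$ would contradict the contractivity hypothesis.

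The principal obstacle lies precisely in the trace preservation invoked in the converse: the hypothesis $|\mcE[\Delta]|_\text{tr}\leq|\Delta|_\text{tr}$ controls the size of $\mcE[\Delta]$ but not its trace, so trace preservation does not drop out for free. I would address this either by exploiting the Hilbert-Schmidt duality between $\mcT(\mcH)$ and $\mcB(\mcH)$ recalled earlier in the chapter—trace preservation of $\mcE$ being equivalent to unitality of its adjoint $\mcE^{*}$ on $\one$—or by reading the statement in the standard convention of the cited review of Rivas and collaborators, in which trace preservation is tacitly fixed and only positivity is characterized by contractivity. With this subtlety dealt with, the rest of the proof is a direct application of $|\cdot|_\text{tr}=\tr(\cdot)$ on positive operators.
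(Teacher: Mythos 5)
Your argument is the standard one and is essentially what the cited reference does; the thesis itself does not prove this theorem but defers to Ref.~\cite{rivasreview}, where exactly this Jordan-decomposition proof appears. The forward direction is correct as written: splitting $\Delta=\Delta_+-\Delta_-$ with $\Delta_+\Delta_-=0$, using the triangle inequality, the identity $|A|_\text{tr}=\tr A$ for $A\geq 0$, and trace preservation gives $\left|\mcE[\Delta]\right|_\text{tr}\leq \tr\Delta_+ +\tr\Delta_- = \left|\Delta\right|_\text{tr}$. The converse argument (contractivity forces $A_-=0$ because otherwise $\tr A_+ +\tr A_- = \tr\rho + 2\tr A_- > |\rho|_\text{tr}$) is also correct, \emph{given} trace preservation.

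You are right to single out trace preservation as the real issue, and it is worth being blunt about it: the ``if'' direction of the theorem as literally stated is false. The map $\mcE=\tfrac{1}{2}\,\id$ is contractive but not trace preserving, and $\mcE=-\id$ is contractive but neither positive nor trace preserving, so contractivity of the trace norm on Hermitian operators cannot by itself yield either property. Of your two proposed remedies, only the second one works. The first --- appealing to the duality $\mcT(\mcH)^*\cong\mcB(\mcH)$ and the equivalence between trace preservation of $\mcE$ and $\mcE^*[\one]=\one$ --- merely rephrases what trace preservation means; it gives you no way to extract $\mcE^*[\one]=\one$ from the contractivity hypothesis, for the same reason the counterexamples above exist. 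The correct reading, which is the one in the cited review (the statement there is attributed to Kossakowski and Ruskai), is that trace preservation is a standing hypothesis on both sides and contractivity characterizes positivity \emph{among} trace-preserving Hermiticity-preserving maps. With that understanding your proof is complete and matches the reference's argument step for step.
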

A simple proof for the finite dimensional case can be found in Ref.~\cite{rivasreview}.

Now, given that any hermitian operator can be written as 
\begin{align*}
\Delta &=\left(\tr\Delta\right) H_p\text{, for $\tr \Delta \neq 0$,}\\
\Delta &=\tr\Delta^+\left(\rho_1-\rho_2\right)\text{, for $\tr \Delta =0$,}
\end{align*}
where $H_p=p\rho_1-(1-p)\rho_2$ a
Helstrom matrix and $p\in [0,1]$, by theorem~\ref{thm:contractivity} the
generalized trace distance defined as $\mcD_p(\rho_1,\rho_2)=|p \rho_1
-(1-p)\rho_2|_\text{tr}$ decreases after the application of a positive map
$\mcE$, \ie{} 
$$\mcD_p\left(\mcE[\rho_1],\mcE[\rho_2]\right)\leq
\mcD_p\left(\rho_1,\rho_2 \right).$$
It is worth to point out that this is
directly related to the two-state discrimination problem
where we have, for
instance, probability $p$ of erroneously identify $\rho_1$ with
$\rho_2$~\cite{chuangbook,rivasreview}.
In this setting the probability of
failing with such identification is
\begin{equation*}
P_\text{err}=\frac{1-\mcD_p(\rho_1,\rho_2)}{2}.
\end{equation*}
Therefore if the distance is zero, the probability of correctly identify $\rho_1$ is the same as choosing randomly between $\rho_1$ and $\rho_2$, but if it is $1$, we identify $\rho_1$ from $\rho_2$ with certainty.
For $p=1/2$ we recover the standard unbiased trace distance.

It is well known that any quantum system can be entangled with another, for instance a central system can be entangled with its environment. Thus, in the context of quantum operations we must handle this fact carefully. Let us define the following:
\begin{definition}[$k$-positive operations]
A linear map $\mcE$ is $k$-positive if
$$\id_k\otimes \mcE [\tilde \Delta]\geq 0  \ \ \forall \tilde \Delta \geq 0 \in \mcB(\mcH_k\otimes\mcH),$$
with $k$ a positive integer, being the dimension of $\mcH_k$ and $\id_k$ the identity map in that space.
\end{definition}
Therefore a positive map is $k$-positive if the expended map $\id_k\otimes \mcE$ is positive, the trace preserving of $k$-positive maps follows immediately from the trace preserving of $\mcE$. Such maps transform properly density matrices of the extended system (with ancilla of dimension $k$) into density matrices, apart from the fact that they transform properly the density matrices of the system, hence handling quantum entanglement correctly for this ancilla. 

Since the dimension of any other quantum system is arbitrary, being for example the rest of the universe, one must have that quantum maps must transform quantum states for every positive integer $k$. Therefore one defines \textit{complete positive} and trace preserving linear maps as the following,
\begin{definition}[Complete positive and trace preserving operations (CPTP)]
A trace preserving linear operation $\mcE:\mcT(\mcH)\to\mcT(\mcH)$ is complete positive if
$$\id_k\otimes \mcE [\tilde \Delta]\geq 0  \ \ \forall \tilde \Delta \geq 0 \in \mcB(\mcH_k\otimes\mcH), \forall k \in \mathbb{Z}_0^+,$$
where $\mathbb{Z}^+$ is the set of the positive integers.
\end{definition}
It will be shown
later
in chapter~\ref{chap:reps}, that deciding complete positivity is
straightforward using the so called Choi matrix.

It is trivial to check that unitary operations, $\mcU[\rho]=U \rho U^\dagger$,
are CPTP maps as expected. Additionally they leave invariant the maximally
mixed state, $\one/\dim(\mcH)$. In fact, unitary operations belong to a wider
class of CPTP maps called \textit{unital quantum maps}, similar to its
classical counterpart. The set of unital channels is defined simply as the one
containing CPTP maps $\mcE$ that additionally fulfill $\mcE[\one]=\one$.

Additionally notice that due to the trace preserving property the adjoint operator of $\mcE$ is always unital. The adjoint is defined in the usual way, 
\begin{equation}
\langle A, \mcE[B] \rangle=\langle \mcE^*[A],B\rangle,
\label{eq:adjoint}
\end{equation} 
where the inner product is the Hilbert-Schmidt product and $A\in \mcT(\mcH)$
and $B\in \mcB(\mcH)$~\cite{Holevobook,zimansbook}. Now, $\forall \Delta\in
\mcT(\mcH)$   we write the trace preserving condition as $\tr \Delta=\tr
\mcE[\Delta]=\langle \one, \mcE[\Delta] \rangle=\langle
\mcE^*[\one],\Delta\rangle$, therefore $\mcE^*[\one]=\one$.

Let us now illustrate the connection of the concept of quantum channel with the scheme of open quantum systems introduced above. Consider the following theorem~\cite{stinespring}:
\begin{theorem}[Stinespring dilation theorem]
Let $\mcE$ a CPTP map, there exist an environmental Hilbert space $\mcH_\text{E}$ and $\rho_E\in \mcS(\mcH_E)$ such that
$$\mcE[\rho]=\tr_\text{E}\left[U \left(\rho \otimes \rho_\text{E}\right) U^\dagger \right],$$
with the unitary matrix $U:\mcH\otimes \mcH_\text{E}\to \mcH\otimes \mcH_\text{E}$.
\label{thm:stinespring}
\end{theorem}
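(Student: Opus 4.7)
The plan is to reduce Stinespring's theorem to the existence of a Kraus (or operator-sum) representation for the channel $\mcE$, which was already announced in \eref{eq:kraus_sum_time_dependent} and is the natural bridge between complete positivity and the physical purification picture. Assuming that $\mcE$ admits a decomposition $\mcE[\rho]=\sum_{i=1}^{N}K_i\rho K_i^\dagger$ with $\sum_i K_i^\dagger K_i=\one$ (trace preservation), the idea is to package the Kraus operators into a single isometry acting on an enlarged Hilbert space, promote that isometry to a unitary, and show that tracing out the ancilla reproduces $\mcE$.

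Concretely, I would choose $\mcH_\text{E}$ of dimension $N$ with a fixed orthonormal basis $\lbrace \ket{i}_\text{E}\rbrace_{i=1}^{N}$, pick a reference state $\ket{0}_\text{E}$, and set $\rho_\text{E}=\proj{0}{0}_\text{E}$ so that the environment is initialized in a pure state. Then I would define the linear operator $V:\mcH\to \mcH\otimes\mcH_\text{E}$ by $V\ket{\psi}=\sum_{i=1}^{N} K_i\ket{\psi}\otimes \ket{i}_\text{E}$ and verify directly that $V^\dagger V=\sum_{i,j} K_j^\dagger K_i\,\langle j|i\rangle_\text{E}=\sum_i K_i^\dagger K_i=\one$, so that $V$ is an isometry.

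Next, I would extend $V$ to a unitary $U$ on $\mcH\otimes \mcH_\text{E}$. The extension is the one non-mechanical step: $V$ determines $U$ only on the subspace $\mcH\otimes \ket{0}_\text{E}$, which has dimension $\dim(\mcH)$, while the codomain has dimension $\dim(\mcH)\cdot N\geq \dim(\mcH)$. Because $V$ is an isometry, its image is a closed subspace of $\mcH\otimes\mcH_\text{E}$, and any choice of unitary mapping the orthogonal complement of $\mcH\otimes \ket{0}_\text{E}$ onto the orthogonal complement of $V(\mcH)$ yields a well defined unitary $U$ with $U(\ket{\psi}\otimes\ket{0}_\text{E})=V\ket{\psi}$. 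Finally I would compute $\tr_\text{E}[U(\rho\otimes\rho_\text{E})U^\dagger]=\sum_{k} \left(\one\otimes \bra{k}_\text{E}\right) U(\rho\otimes\proj{0}{0}_\text{E})U^\dagger \left(\one\otimes \ket{k}_\text{E}\right)=\sum_{k} K_k\rho K_k^\dagger=\mcE[\rho]$, which is the claimed identity.

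The only real obstacle is the preliminary step: justifying the existence of the Kraus form from the CPTP hypothesis alone. Since the excerpt postpones the Choi-Jamio\l{}kowski machinery to chapter~\ref{chap:reps}, I would either cite that representation (using complete positivity to guarantee that the Choi matrix is positive semidefinite, hence diagonalizable with nonnegative eigenvalues, which directly produces the Kraus operators $K_i$), or indicate that the proof is logically separated into (i) CP $\Leftrightarrow$ Kraus decomposition and (ii) the dilation construction outlined above; the rest of the argument is then essentially a bookkeeping verification.
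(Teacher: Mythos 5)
Your proposal is correct, and it is worth noting that the thesis does not actually prove this theorem: it states it with a citation to the literature and only remarks afterwards that $U$ and $\rho_\text{E}$ are not unique. What you give is the standard dilation construction --- initialize the environment in a pure state $\proj{0}{0}_\text{E}$, bundle the Kraus operators into the isometry $V\ket{\psi}=\sum_i K_i\ket{\psi}\otimes\ket{i}_\text{E}$, check $V^\dagger V=\one$ from $\sum_i K_i^\dagger K_i=\one$, extend $V$ to a unitary by matching orthogonal complements (which works in finite dimension since both complements have equal dimension), and verify $\tr_\text{E}\left[U(\rho\otimes\proj{0}{0}_\text{E})U^\dagger\right]=\sum_k K_k\rho K_k^\dagger$ --- and every step is sound. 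The one point you were right to flag explicitly is the logical order: in this thesis the ``only if'' direction of the Kraus theorem (Theorem~\ref{thm:kraus}) is itself derived \emph{from} Stinespring, so to avoid circularity you must obtain the operator-sum form from Choi's theorem instead (the implication from positive semidefiniteness of $C_\mcE$ to the Kraus decomposition, which the thesis proves independently in chapter~\ref{chap:reps}). With that route made explicit, your argument is a complete, self-contained proof that the thesis omits, and it buys something the bare citation does not: it exhibits concretely why the dilation is non-unique (the freedom in extending the isometry $V$ off the subspace $\mcH\otimes\ket{0}_\text{E}$ and in the choice of Kraus representation), which is exactly the remark the thesis makes without justification.
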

The unitary $U$ and the state $\rho_E$ are not unique~\cite{zimansbook}.
Stinespring theorem is an important result given that one can always understand
a CPTP operation as a Hamiltonian evolution in a bigger space, such that we
recover the given operation at some fixed time and by performing a partial
trace over the environmental degrees of freedom. Later in this chapter we
will discuss an important implication of this theorem for Markovian processes.

Along the work we will also denote the set of CPTP linear maps simply as \cptp{}.

A remarkable property of \cptp{} is its convexity. To show this consider the
following convex combination of CPTP maps: $\mcE=p\mcE_1+(1-p)\mcE_2$, acting
upon the density matrix $\rho_0$. By linearity we have
$\mcE[\rho_0]=p\mcE_1[\rho_0]+(1-p)\mcE_2[\rho_0]$. Defining the density
matrices $\rho_i=\mcE_i[\rho_0]\in \mcS(\mcH)$, it follows from the convexity
of $\mcS(\mcH)$ that $\mcE$ is another CPTP map. Therefore the set \cptp{} is
convex.

Unitary maps
are extremal channels of \cptp{} i.e.
they cannot be written as
convex combinations of other channels, but they can be used to construct other
maps, see \fig{fig:scheme_cptp_slice}. For instance
consider a simple convex combination of unitary maps
$\mcE[\rho]=\sum_i p_i U\rho U^\dagger$, with $\sum_i p_i=1$ and $p_i\geq 0$.
This channel is a more general example of a unital channel, in fact it turns out that
every unital qubit channel has such form. This can be shown easily using the
Ruskai's decomposition that will be introduced in the next chapter. Convex
combinations of unitary channels can be implemented in the laboratory, for
instance choosing unitaries randomly by tossing a die. 
\begin{figure}
\centering
\begin{tikzpicture}
\node at (0,0) {\includegraphics[draft=false,scale=0.2]{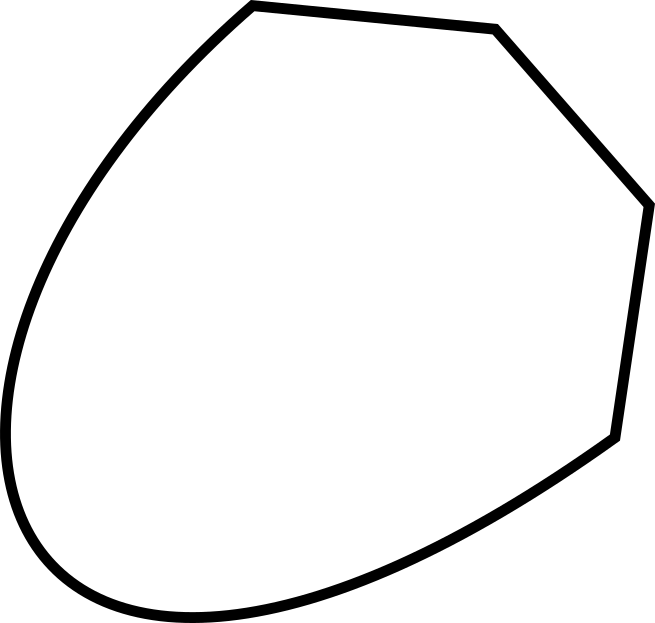}};
\node at (1,1.8) {$id$};
\node at (-0.3,2) {$U_1$};
\node at (2.1,0.5) {$U_2$};
\node at (1.9,-0.7) {$U_3$};
\end{tikzpicture}
\caption{The figure shows an schematic slice of CPTP maps, one can see the identity map and other extremal channels. The straight lines are convex combinations of those channels, the curve contains channels in the boundary that cannot be written as convex combinations of unitary channels.\label{fig:scheme_cptp_slice}}
\end{figure}

Regarding the algebraic properties of the set \cptp{}, it enjoys the structure
of a semigroup. It is closed under the composition operation, \ie{} $\mcE_1
\mcE_2\in \cptp{}$, $\forall \mcE_1,\mcE_2 \in \cptp{}$, and is associative,
$\left(\mcE_1 \mcE_2\right)\mcE_3=\mcE_1 \left(\mcE_2\mcE_3\right)$.
Additionally it contains an identity element. \cptp{} does not contain the
inverse elements, this captures the irreversible character of general quantum
operations, being only the unitaries the ones their inverse elements in
\cptp{}. 
Furthermore, \cptp{} contains another remarkable convex structure,
\fxnote{Francois:No es una frase completa. Debe decir: An entanglement breaking channel is...}\fxwarning{Editado, no quedo exactamente asi pero ya debe de tener sentido.}
\begin{definition}[Entanglement-breaking channels]
A map $\mcE\in\cptp{}$ is entanglement-breaking if it breaks the entanglement of the system with any ancilla, \ie{} $ \forall k\in \mathbb{Z}^+$ and $\forall \sigma\in \mcS(\mcH_k\otimes \mcH)$, the state $\left(\id_k \otimes \mcE\right) [\sigma]$ is separable.
\label{def:eb}
\end{definition} 
This set is convex given that convex combinations of separable states is separable~\cite{Horodecki}.

Quantum channels can be seen as the basic building of time-dependent quantum processes, also called quantum dynamical maps.
\begin{definition}[Quantum dynamical maps]
A continuous family of channels $\lbrace\mcE_t \in \cptp{}:t\geq 0 , \mcE_0=\id \rbrace$
is called quantum dynamical map.
\end{definition}
Given some interval $\mcI$, if the family is smooth respect to $t\in \mcI$ and invertible, it admits a master equation 
$$\dot \rho(t)=A_t[\rho(t)] \ \ \text{with} \ \ A_t=\dot \mcE_t \mcE_t^{-1}.$$
An schematic description is shown in~\fref{fig:dynamical_map_scheme}.
\begin{figure}
\centering
\begin{tikzpicture}
\node at (0,0) {\includegraphics[draft=false,scale=0.2]{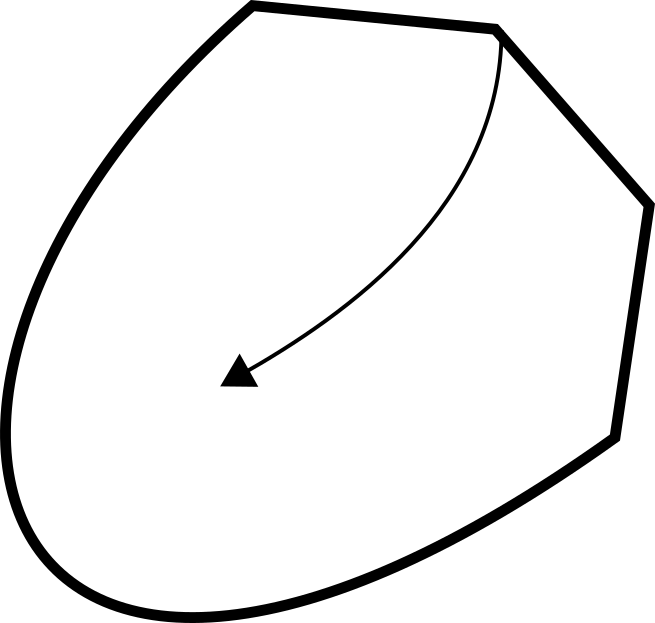}};
\node at (1,1.8) {$id$};
\node at (-0.3,2) {$U_1$};
\node at (2.1,0.5) {$U_2$};
\node at (1.9,-0.7) {$U_3$};
\node at (-0.4,-0.7) {$\mcE_t$};
\end{tikzpicture}
\caption{Scheme of an smooth dynamical map inside a slice of the set \cptp{}.\label{fig:dynamical_map_scheme}}
\end{figure}
Note that the standard scheme open quantum systems, introduced at the beginning
of this chapter, leads to quantum dynamical maps.
\subsection{Non-Linear CPTP operations} 
Notice that the set \cptp{} does not contain everything that can be performed
on a quantum system; it contains only linear operations. Therefore \cptp{} does
not contain  \textit{postseletion} procedures, \ie{} updating the state once a
measurement is done and the result is known. For instance, let $\rho$ the state
of some system and $\lbrace M_i \rbrace$ a collection of \textit{measurement
operators} over it, where the index $i$ refers to the measurement outcome. The
probability of measuring $i$ is $p(i)=\tr \left(M_i \rho
M^{\dagger}_i\right)$, while the operation performed over the state is
\begin{equation*}
\rho \mapsto \frac{M_i \rho  M^{\dagger}_i}{\tr \left(M_i \rho  M^{\dagger}_i\right)}.
\end{equation*}
This operation is explicitly non-linear but it is trivially complete positive
and trace preserving. Note that if the action of the measurement apparatus is
performed but the experimentalist does not read the outcome, or it is simply
forgotten, the resulting map belongs to \cptp{}~\cite{chuangbook}. This is
shown by noting that the operation $M_i\rho M^\dagger_i$ is applied with
probability $p(i)$, then the performed operation is $\sum_i p(i) M_i\rho
M_i^\dagger$ and it is linear and CPTP by construction. Complete positivity
follows immediately from the complete positivity of $\rho \mapsto M_i \rho M_i^\dagger$ and
the trace preserving property from the weighted summation.


A more general set of operations including measurements, postselection and
exchange of classical information will be introduced in the next subsection.


\subsection{Local operations and classical communication} 

Several types of quantum operations can be found and studied, in particular in
Ref.~\cite{zimansbook} there is a classification mainly based on its
\textit{locality}. A paradigmatic and widely studied type are the so called
\textit{local operations and classical communication}~\cite{Horodecki}.  A
surprising feature of these operations is that they can increase the
entanglement of entangled states of a system (at the cost of throwing away some
members of the ensemble), but cannot create them from non-entangled
ones~\cite{Verstraete2001,Horodecki}.

In this work we are particularly interested in \textit{one-way stochastic local
operations and classical communication} channels (1wSLOCC). 
Consider a
bipartite system where one part is controlled by Alice and the other by Bob.
Alice performs an operation which includes measurements with postselection,
and then she communicates its outcome to Bob. Then Bob performs a local operation
that can be again a measurement with postselection, finishing the protocol. The
stochasticity comes from the fact that this operation, for each particular set
of measurement outcomes, has a certain probability generally less than $1$ of
occurrence. And the one-way comes from the fact that no feedback is given to
Alice and no more operations and classical communications are performed. These
operations can be written in the following way:
\begin{equation}
\rho\mapsto \rho'=
  \frac{\left(X\otimes Y\right) \rho \left(X\otimes Y\right)^\dagger
    }{
    \tr \left[ 
           \left(X\otimes Y\right) \rho \left(X\otimes Y\right)^\dagger \right]}.
\label{eq:slocc}
\end{equation}
Additionally we will consider $\det X\neq0$ and $\det Y\neq 0$, this is the
usual choice as projective measurements destroy
entanglement~\cite{Verstraete2001}.  These operations
are completely positive and trace preserving, but non-linear
unless $X$ and $Y$ are unitaries. Additional notice that given $\rho$ and
$\rho'$, the matrices $X$ and $Y$ can always be chosen such that $\det X=\det
Y=1$ (for the invertible case). Therefore for two-level systems it is enough to consider $X,Y \in \text{SL}(2,\mathbb{C})$~\cite{wuki},
where the latter is the \textit{special linear group} of $2\times 2$ matrices
with complex entries. Furthermore notice that the operation 
\begin{equation}
\rho \mapsto\left(X\otimes Y\right) \rho \left(X\otimes Y\right)^\dagger
\label{eq:det_press}
\end{equation}
preserves the determinant, \ie{} $\det \rho=\det \rho'$. In the next chapter we
will exploit this to show that there is a correspondence between 1wSLOCC and
Lorentz transformations. We use this to introduce a decomposition analogous to
the singular value decomposition, but using the Lorentz metric instead of the
Euclidean, enjoying an useful physical meaning.
\section{Quantum channels of continuous variable systems} 
Many of the definitions and tools introduced in the previous sections are also
relevant for the infinite dimensional case. Although we can always choose countable
basis for the Hilbert space as long it is separable~\cite{zimansbook}, it is
often of interest to consider non-countable bases, typically phase-space
variables. This introduces the theory of continuous variable systems.
It is a central topic of study given that
they appear naturally in the description of many physical systems. A 
few examples are the electromagnetic
field~\cite{doi:10.1142/p489}, solids and nano-mechanical
systems~\cite{RevModPhys.86.1391} and atomic
ensembles~\cite{RevModPhys.82.1041}.  
In particular, in this section we introduce and discuss a set of continuous variable channels called \textit{Gaussian quantum channels}. 
\subsection{Gaussian quantum states} 
To introduce the definition of Gaussian quantum channel, consider first the
simplest state type of quantum states in continuous variable, both from a
theoretical and experimental point of view, the so-called \textit{Gaussian
states}.  The operations that transform such family of states into itself are
called 
Gaussian quantum channels (\gqc{}).
Even though Gaussian states and
channels are small subsets of all possible states/channels, they 
have proven to be useful in a very wide variate of tasks such as 
quantum communication~\cite{Grosshans2003}, quantum computation~\cite{PhysRevLett.82.1784}
and the study 
of quantum entanglement in simple~\cite{RevModPhys.77.513} and complicated
scenarios~\cite{PhysRevA.98.022335}.

\fxnote{Francois:no está muy claro. Podría ser útil una fórmula. Cuál es la cara del Gaussian state que da lugar a la matriz de correlación que describes?
}\fxwarning{David:Editado}
{\blue Gaussian states are defined as those having Gaussian Wigner function. In
particular, for one-mode the Wigner function is 
\begin{equation}
W\left( \vec u \right)=\frac{1}{2 \pi \sqrt{\det \sigma}}e^{-\frac{1}{2}\left( \vec u -\vec d \right)^\text{T}\sigma^{-1}\left( \vec u -\vec d \right)},
\end{equation}
where $\vec u=\left( q,p \right)^\text{T}$~\cite{Cerf}. The \textit{mean vector} $\vec d$ and the \textit{covariance matrix} $\sigma$ are the first and second moments, respectively. They are given by
\begin{align*}
\sigma&=\left( \begin{array}{cc}
\langle \hat q^2 \rangle-\langle \hat q \rangle^2 & \frac{1}{2}\langle \hat q \hat p+\hat p \hat q\rangle -\langle \hat q \rangle \langle \hat p \rangle \\ 
\frac{1}{2}\langle \hat q \hat p+\hat p \hat q \rangle -\langle \hat q \rangle \langle \hat p \rangle  & \langle \hat p^2 \rangle -\langle \hat p \rangle^2
\end{array} \right),\\
\vec d &=\left(\langle \hat q \rangle, \langle \hat p \rangle
\right)^\text{T}.
\end{align*}
The observables $\hat q$ and $\hat p$ are the standard
canonical conjugate position and momentum variables. As for any other Gaussian variable, Gaussian quantum states are characterized completely
by first and second probabilistic moments. Therefore a Gaussian state $S$
can be denoted as $S=S\left(\sigma,\vec d\right)$.
}
\subsection{Gaussian quantum channels} 
To start with, we recall the following definition~\cite{Reviewquantuminfo}:
\begin{definition}[Gaussian quantum channels]
A quantum channel is Gaussian (\gqc{}) if it transforms Gaussian
states into Gaussian states.
\label{def:gqc}
\end{definition}
This definition is strictly equivalent to the statement that any \gqc{}, say $\mcG$, can be written as
\begin{equation}
\mcG[\rho]=\tr_\text{E} \left[ U \left(\rho \otimes \rho_\text{E} \right) U^{\dagger} \right]
\label{def:def_2}
\end{equation}
where $U$ is a unitary transformation, acting on a combined global state
obtained from enlarging the system with an  environment $\text{E}$, that is
generated by a quadratic bosonic Hamiltonian (\ie{} $U$ is a \textit{Gaussian
unitary})~\cite{Reviewquantuminfo}. The environmental initial state $\rho_E$ is
a Gaussian state and the trace is taken over the environmental
degrees of freedom.

Following definition~\ref{def:gqc}, a GQC is fully characterized by its action
over Gaussian states, and this action is in turn defined by \textit{affine
transformations}~\cite{Reviewquantuminfo}. Specifically, $\mcG=\mcG\left(
\mathbf{T}, \mathbf{N},
\vec \tau \right)$ is
given by a tuple $\left( \mathbf{T}, \mathbf{N},
\vec \tau \right)$ where $\mathbf{T}$ and $\mathbf{N}$ are $2\times 2$ real
matrices with $\mathbf{N}=\mathbf{N}^\text{T}$~\cite{Reviewquantuminfo} acting on Gaussian states according to  
$$\mcG\left( \mathbf{T},
\mathbf{N}, \vec \tau
\right)\left[S\left(\sigma,\vec d\right)\right]=S\left( \mathbf{T}\sigma
\mathbf{T}^{\text{T}} +\mathbf{N}, \mathbf{T} \vec d +\vec \tau \right).$$
In
the particular case of closed systems we have $\mathbf{N}=\mathbf{0}$ and $\mathbf{T}$ is a
symplectic matrix.
The particular form and properties of Gaussian quantum channels in the continuous variable representations, as well as their connection with the mentioned affine transformations, will be given in chapter~\ref{chap:reps}.

In this work we explore \gqc{}s {\blue without Gaussian functional form in the position state representation.}\fxnote{Francois: Gaussian beyond Gaussian?!??
}\fxwarning{David: Nos referimos a formas funcionales Gaussianas, edito para que este mas claro} In particular we
study channels that can arise when singularities on the coefficients of
Gaussian forms \gf{} occur (they will be denoted by \dgqc{}).  Such channels
can lead immediately to singular Gaussian operations.  Thus, we characterize
which forms in \dgqc{} lead to valid quantum channels, and under which
conditions singular operations lead to valid \textit{singular Gaussian quantum channels}
(\sgqc{}).

Let us note that although channels with Gaussian form trivially transform
Gaussian states into Gaussian states, the definition
goes beyond \gf{}. We will use
the typical \textit{difference} and \textit{sum} coordinates, $x=q_2-q_1$ and
$r=(q_1+q_2)/2$, respectively. Defining $\rho(x,r)=\left.\left\langle
r-\frac{x}{2} \right. \right| \left. \hat \rho \left|  r+\frac{x}{2} \right.
\right\rangle$, a quantum channel in this representation is defined such that
\begin{equation}
\rho_f\left(x_f,r_f\right)
   =\int_{\mathbb{R}^2} dx_i dr_i J(x_f,x_i;r_f,r_i)\rho_i\left(x_i,r_i\right),
\label{eq:propagacion}
\end{equation}
where $\hat \rho_i$ and $\hat \rho_f$ are the initial and final states,
respectively, and $J(x_f,x_i;r_f,r_i)$ is the representation of the quantum
channel in the aforementioned variables.  An example of a channel without \gf{}
can be constructed from the general form of Gaussian quantum channel with
\gf{}~\cite{PazSupplementary}:
\begin{multline}
J_\text{G}(x_f,x_i;r_f,r_i)
   =\frac{b_3}{2 \pi} \exp\Big[\Imi\Big( 
     b_1 x_f r_f  +b_2x_f r_i  +b_3x_ir_f \\  +b_4x_ir_i +c_1x_f+c_2x_i \Big) 
          -a_1 x_f^2-a_2x_fx_i-a_3x_i^2 \Big],
\label{eq:gf}
\end{multline}
where all coefficients are real and no quadratic terms in $r_{i,f}$ are
allowed. 
Choosing 
$$a_n=\alpha_n \epsilon^{-1} + \tilde a_n $$
and 
$$b_n=\beta_n\epsilon^{-1/2}+\tilde b_n,$$
 with $\epsilon>0$, $\alpha_n,\beta_n,\tilde a_n, \tilde b_n \in \mathbb{R} \ \ \forall n$ and $\tilde b_3=0$. Taking the limit $\epsilon\to 0$ and using the formula
\begin{equation}
\delta(x)=\lim_{\epsilon\to 0} \frac{1}{2 \sqrt{\pi \epsilon}}{e^{\frac{-x^2}{4 \epsilon}}},
\label{eq:limit_delta}
\end{equation}
we arrive to 
\begin{equation}
\lim_{\epsilon \to 0}
J_\text{G}(x_f,x_i;r_f,r_i) =
\mcN \delta(\alpha x_f-\beta x_i) e^{\Sigma'(x_f,x_i;r_f,r_i)},
\label{eq:typeII}
\end{equation}
where $\alpha$, $\beta \in \mathbb{R}$ and $\Sigma'(x_f,x_i;r_f,r_i)$ is  a
quadratic form that now admits quadratic terms in $r_{i,f}$, arising from the
completion of the square of the exponent of~\eref{eq:gf} to take the limit
of~\eref{eq:limit_delta}.  This is the first example of a \dgqc{}.  This
channel is still a \gqc{} according to the definition. A physical, but
complicated realization occurs in the system of one Brownian quantum particle
with harmonic potential and linearly coupled to the bath. In such system,
channels with the functional form of \eref{eq:typeII} are realized at isolated
points in time, see equations 6.71-75 of Ref.~\cite{Gert}.

Since the form of \eref{eq:typeII} admits quadratic terms in $r_{i,f}$ in the
exponent, it suggest that a form with two deltas exist and can be defined using
the same limit, see~\eref{eq:limit_delta}. In fact, the identity map is a
particular case; it is realized setting
$J(x_f,x_i;r_f,r_i)=\delta(x_f-x_i)\delta(r_f-r_i)$. In any case, to
avoid working with such limits, it is convenient to perform a
\textit{black-box} characterization of general forms involving Dirac's deltas,
which will be done in the next chapter. This will lead to explicit relations
between position state representation and affine representations of Gaussian
quantum channels without Gaussian functional form.


\chapter{Representations of quantum channels}
\begin{flushright}
\textit{Simplicity is the ultimate sophistication.
}\\
Leonardo da Vinci
\end{flushright}

In this chapter we introduce several and useful representations of quantum
channels for the finite dimensional case. We start with the Kraus
representation, already mentioned in the previous chapter, but additionally we will show that
quantum channels always have this form. Later on we introduce Choi's theorem
(and the so called \Jami{} representation) which is cornerstone tool to study
many properties of quantum channels. We also discuss operational
representations by introducing two types of basis. These representations are useful to prove
several results in this work. Next, we apply the introduced tools to the
qubit case. Additionally we discuss two decomposition of qubit channels,
leading to two \textit{normal forms} that are essential to study divisibility
properties of quantum channels.
\label{chap:reps} 
\section{Kraus representation} 
In the previous chapter we have shown that starting from the usual scheme of
open quantum systems, we arrive to the \textit{Kraus representation}, see
\eref{eq:kraus_sum_time_dependent}. Later on, using the Stinespring dilation
theorem, see Theorem~\ref{thm:stinespring}, we showed that CPTP maps can always fit in
the scheme of open quantum systems for some global unitary evolution. Since the
latter scheme always has a Kraus representation, one concludes that CPTP maps
always have a Kraus representation. It turns out that the converse also holds~\cite{kraus}.
\begin{theorem}[Kraus]
A linear operation $\mcE:\mcT(\mcH) \to \mcT(\mcH)$ belongs to \cptp{}
if and only if there exist a set of bounded operators $\lbrace K_i \rbrace$
such that 
$$\mcE[\Delta]=\sum_i K_i \Delta K^\dagger{}_i \ \  \forall \Delta \in \mcT(\mcH),$$
with $\sum_i K_i^\dagger{} K_i=\one$.
\label{thm:kraus}
\end{theorem}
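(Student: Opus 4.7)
The plan is to prove the two implications separately, leveraging results already established in the excerpt.

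For the ``if'' direction, I would assume $\mcE[\Delta]=\sum_i K_i \Delta K_i^\dagger$ with $\sum_i K_i^\dagger K_i=\one$ and check the CPTP conditions directly. Linearity is manifest. Trace preservation follows from the cyclicity of the trace:
$$\tr\left(\mcE[\Delta]\right)=\sum_i \tr\left(K_i \Delta K_i^\dagger\right)=\tr\left(\left(\textstyle\sum_i K_i^\dagger K_i\right)\Delta\right)=\tr\Delta.$$
For complete positivity, given any ancilla dimension $k$ and any $\tilde\Delta\geq 0$ on $\mcH_k\otimes\mcH$, one writes $\left(\id_k\otimes\mcE\right)[\tilde\Delta]=\sum_i \left(\one_k\otimes K_i\right)\tilde\Delta\left(\one_k\otimes K_i\right)^\dagger$, which is a sum of operators of the form $M\tilde\Delta M^\dagger$. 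Each such term is positive since $\langle \psi|M\tilde\Delta M^\dagger|\psi\rangle=\langle M^\dagger\psi|\tilde\Delta|M^\dagger\psi\rangle\geq 0$, and positivity is preserved under summation.

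For the ``only if'' direction, I would invoke the Stinespring dilation theorem (Theorem~\ref{thm:stinespring}), already proved in the excerpt, which provides an environmental Hilbert space $\mcH_\text{E}$, a state $\rho_\text{E}\in\mcS(\mcH_\text{E})$, and a unitary $U$ on $\mcH\otimes\mcH_\text{E}$ such that $\mcE[\rho]=\tr_\text{E}\left[U(\rho\otimes\rho_\text{E})U^\dagger\right]$. Diagonalising $\rho_\text{E}=\sum_j p_j^\text{E}\proj{\phi_j^\text{E}}{\phi_j^\text{E}}$ and unfolding the partial trace in the same basis reproduces exactly the construction carried out in the previous chapter, yielding Kraus operators $K_{ij}=\sqrt{p_j^\text{E}}\bra{\phi_i^\text{E}} U\ket{\phi_j^\text{E}}$. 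The identity $\sum_{ij}K_{ij}^\dagger K_{ij}=\one$ then follows from $\sum_j p_j^\text{E}=1$ together with $U^\dagger U=\one$ and the completeness of $\{\ket{\phi_i^\text{E}}\}$; alternatively, it is forced by trace preservation of $\mcE$ via the computation in the previous paragraph.

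Since Stinespring's theorem has already been established, there is no serious obstacle here; the argument is essentially bookkeeping of indices. The only small subtlety is that the Kraus expansion is originally obtained for density matrices (pure or mixed), but both sides of the identity are linear in $\Delta$, so it extends to arbitrary $\Delta\in\mcT(\mcH)$ by using the Cartesian decomposition $\Delta=A+\rmi B$ with $A,B$ hermitian and then the spectral decomposition of $A$ and $B$ as real linear combinations of rank-one projectors. The re-indexing of the double-index set $\{(i,j)\}$ into a single index yields the form stated in the theorem.
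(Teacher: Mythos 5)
Your proposal is correct and follows essentially the same route as the paper: the ``if'' direction by direct verification of positivity of $\left(\id_k\otimes\mcE\right)[\tilde\Delta]$ term by term and trace preservation via cyclicity, and the ``only if'' direction by invoking the Stinespring dilation and diagonalising $\rho_\text{E}$ to extract the Kraus operators. Your added remark on extending the identity from density matrices to all of $\mcT(\mcH)$ by linearity is a harmless refinement of a step the paper leaves implicit.
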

\begin{proof}
The 'only if' part is already commented in the main text and follows
the logic: every $\mcE\in \cptp{}$ has a dilation such that it has
the familiar form of the open quantum systems dynamics, \ie{} there exists
$U$ and $\rho_\text{E}$ such that 
$\mcE[\rho]=\tr_\text{E}\left[U \left(\rho \otimes \rho_\text{E}\right)
U^\dagger \right]$. We already showed that writing $\rho_\text{E}$ in terms of
its eigenbasis, the latter expression leads to the Kraus representation,
see~\eref{eq:kraus_sum_time_dependent}. To prove the 'if' part, we only have
to construct the extended map to test its complete positivity. Let $k>0\in \mathbb{Z}$ and $\tau_k=\left(
\id_k \otimes \mcE \right)[ \tilde \Delta_k]$, where $\tilde \Delta_k \in
\mcB(\mcH_k\otimes \mcH)$ and $\tilde \Delta_k \geq 0$, using Kraus decomposition and
evaluating $\bra{\phi} \tau_k \ket {\phi}$ with $\ket{\phi}\in \mcH_k \otimes
\mcH$, one arrives to
\begin{align*}
\bra{\phi}\tau_k \ket{\phi}
   &=\sum_i \bra{\phi}
      \left(\one_k \otimes K_i\right) \tilde \Delta_k \left(\one \otimes K_i^\dagger{} \right)\ket{\phi}\\
   &=\sum_i \bra{\phi_i} \tilde \Delta_k \ket{\phi_i}\\
   &\geq 0.
\end{align*}
The latter follows immediately from the positive-semidefinitiveness of $\tilde
\Delta_k$, \ie{} $\bra{\phi_i} \tilde \Delta_k \ket{\phi_i} \geq 0$. The
condition $\sum_i K_i^\dagger{} K_i=\one$ comes from the trace-preserving
of $\mcE$ and the cyclic property of the trace,
\begin{align*}
\tr \mcE[\Delta]&=\sum_i \tr \left[ K_i \Delta K_i^\dagger{}  \right]\\
&=\sum_i \tr \left[ K_i^\dagger{} K_i \Delta \right]\\
&=\tr\left[ \left( \sum_i K_i^\dagger{} K_i \right) \Delta \right]\\
&= \tr \Delta,
\end{align*}
\end{proof}
Therefore $\sum_i K_i^\dagger{} K_i=\one$. It is worth to note that Kraus operators are not unique. Defining a new set of
operators, $A_k=\sum_l u_{k l} K_l$, it is
easy to show that $\sum_i K_i \Delta K_i^\dagger{}=\sum_k A_k \Delta
A_k^\dagger{}$ if and only if $u_{kl}$ are the components of an unitary matrix.
Therefore different Kraus representations are related by unitary conjugations.
\section{\Jami{} representation} 

The \Jami{} representation arises as part of a very useful theorem in quantum
information theory, the so called Choi's theorem~\cite{choi,zimansbook}.
\begin{theorem}[Choi]
Let $\mcE:\mathbb{C}^{n\times n}\to \mathbb{C}^{m\times m}$ be a linear map. The following statements are 
equivalent: \\
i) $\mcE$ is $n-$positive.\\
ii) The matrix
$$C_\mcE=\sum_{i,j=1}^{n} \ket{\varphi_i}\bra{\varphi_j}\otimes \mcE[\ket{\varphi_i}\bra{\varphi_j}] \in \mathbb{C}^{n\times m} \otimes \mathbb{C}^{n\times m}$$
\fxnote{Francois: no es acaso $\mathbb{C}^{n\times n} \otimes \mathbb{C}^{m\times m}$  ? Claro que es isomorfo.}\fxwarning{Si es isomorfo, sin embargo dado que despues descomponemos justamente $\mathbb{C}^{n \times m}$, decidi dejar el producto de espacios como está. Adicionalmente, así como está, sugiere un espacio de matrices cuadradas, $C_\mcE$ es justamente cuadrada.}
is positive-semidefinite with $\lbrace \ket{\varphi_i}\rbrace_{i=1}^n $ an orthonormal basis in $\mathbb{C}^{n}$.\\
iii) $\mcE$ is completely positive.\\
\end{theorem}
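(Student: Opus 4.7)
My plan is to establish the three equivalences via the cycle (i) $\Rightarrow$ (ii) $\Rightarrow$ (iii) $\Rightarrow$ (i). The implication (iii) $\Rightarrow$ (i) is immediate, since complete positivity entails $k$-positivity for every positive integer $k$, in particular for $k=n$, so no work is required there.

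For (i) $\Rightarrow$ (ii), the key is to recognise the Choi matrix as the image of a specific positive operator under the extended map $\id_n \otimes \mcE$. Explicitly, take the (unnormalised) maximally entangled vector $\ket{\Omega} = \sum_{i=1}^n \ket{\varphi_i} \otimes \ket{\varphi_i}$. Expanding $\ket{\Omega}\bra{\Omega} = \sum_{i,j} \ket{\varphi_i}\bra{\varphi_j} \otimes \ket{\varphi_i}\bra{\varphi_j}$ and applying $\id_n \otimes \mcE$ term by term, one obtains exactly $C_\mcE$. Since $\ket{\Omega}\bra{\Omega} \geq 0$ is supported on a space whose ancillary factor has dimension $n$, and $\mcE$ is assumed $n$-positive, $C_\mcE \geq 0$ follows at once.

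The main step is (ii) $\Rightarrow$ (iii), which I would attack by extracting a Kraus decomposition of $\mcE$ directly from the spectrum of $C_\mcE$. Writing $C_\mcE = \sum_k \ket{v_k}\bra{v_k}$ via its spectral decomposition (with the positive eigenvalues absorbed into the vectors), I exploit the canonical identification between vectors in $\mathbb{C}^n \otimes \mathbb{C}^m$ and linear maps $\mathbb{C}^n \to \mathbb{C}^m$: define $K_k$ by $\ket{v_k} = \sum_l \ket{\varphi_l} \otimes K_k \ket{\varphi_l}$. The claim to verify is $\mcE[\rho] = \sum_k K_k \rho K_k^\dagger$ for every $\rho$. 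By linearity it is enough to check this on the matrix units $\ket{\varphi_i}\bra{\varphi_j}$. On one hand, the very definition of $C_\mcE$ gives $\mcE[\ket{\varphi_i}\bra{\varphi_j}] = (\bra{\varphi_i} \otimes \one) C_\mcE (\ket{\varphi_j} \otimes \one)$; on the other hand, substituting the spectral decomposition and the definition of $K_k$ on the right-hand side yields $\sum_k K_k \ket{\varphi_i}\bra{\varphi_j} K_k^\dagger$, so both expressions agree. Once $\mcE$ is in Kraus form, complete positivity follows directly from the easy direction of Theorem~\ref{thm:kraus}.

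The principal obstacle is more notational than conceptual: keeping the vector--operator identification in $\mathbb{C}^n \otimes \mathbb{C}^m$ bookkept correctly, and making transparent why the sandwich $(\bra{\varphi_i} \otimes \one) C_\mcE (\ket{\varphi_j} \otimes \one)$ picks out exactly the $(i,j)$ block of $C_\mcE$, which by construction equals $\mcE[\ket{\varphi_i}\bra{\varphi_j}]$. Once that identification is set up cleanly, the remainder of the argument reduces to straightforward algebra and an appeal to the Kraus theorem already proved.
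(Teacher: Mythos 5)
Your proposal is correct and follows essentially the same route as the paper: the same Bell-state construction for (i) $\Rightarrow$ (ii), and for (ii) $\Rightarrow$ (iii) the same extraction of Kraus operators from a sum-of-rank-one decomposition of $C_\mcE$ via the vector--operator identification (your sandwich $(\bra{\varphi_i}\otimes\one)C_\mcE(\ket{\varphi_j}\otimes\one)$ is exactly the paper's projectors $P_k C_\mcE P_l$), followed by an appeal to the Kraus theorem. No gaps.
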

\begin{proof}
The proof of $iii)\to i)$ is trivial, if $\mcE$ is completely positive then it
is $n-$positive. The implication $i)\to ii)$ can be proved easily by noticing
that normalizing $C_\mcE\to C_\mcE/n=:\tau_\mcE$, where $\tau_\mcE$ can be
obtained as the application $\tau_\mcE=\left(\id_n \otimes
\mcE\right)[\omega]$, where $\omega=\proj{\Omega}{\Omega}$ with
$\ket{\Omega}=1/\sqrt{n}\sum_i^{n}\ket{\varphi_i}\otimes \ket{\varphi_i}$ a
Bell state between two copies of $\mathbb{C}^n$. Therefore, by the
$n-$positivity of $\mcE$ it follows that $\tau_\mcE$ is positive-semidefinite.

What remains to prove is $ii)\to iii)$. To do this observe that the space
$\mathbb{C}^{n\times m}$ is isomorphic to the direct sum of $n$ copies of
$\mathbb{C}^m$, \ie{} $\mathbb{C}^{n\times m}\cong \mathbb{C}^m_1\oplus
\mathbb{C}^m_2\oplus \dots \oplus \mathbb{C}^m_n$\fxwarning{Gracias por notar que tenia $\otimes$ en lugar de $\oplus$ !}, and define the projector
into the $k$th copy as $P_k=\bra{\varphi_k}\otimes \one$, such that $P_k C_\mcE
P_l=\mcE[\proj{\varphi_k}{\varphi_l}]$. Now, given that $C_\mcE$ is
positive-semidefinite, it can be written as $C_\mcE=\sum_i^{nm}
\proj{\Psi_i}{\Psi_i}$, where $\ket{\Psi_i}\in \mathbb{C}^{n\times m}$ are
generally unnormalized vectors. Thus, we have that
$\mcE[\proj{\varphi_k}{\varphi_l}]=\sum_i P_k\proj{\Psi_i}{\Psi_i}P_l$, where
$P_k\ket{\Psi_i}\in \mathbb{C}^m_k$. Defining the operators $\lbrace
K_i:\mathbb{C}^n\to \mathbb{C}^m\rbrace_i$ via the equation $P_k \ket{\Psi_i}=K_i
\ket{\varphi_k}$, where choosing for example $\ket{\varphi_k}$ as the canonical
basis, the columns of $K_i$ contain the $n$ projections of $\ket{\Psi_i}$ into
the copies of $\mathbb{C}^m$. Finally we arrive to
$\mcE[\proj{\varphi_k}{\varphi_l}]=\sum_i K_i \proj{\varphi_k}{\varphi_l} K_i
^\dagger{} \ \ \forall k,l=1,\dots,n$. In conclusion, since $\lbrace \proj{\varphi_k}{\varphi_l} \rbrace_{k,l}$ is a complete basis of $\mathbb{C}^{n\times n}$, by linearity and by
theorem~\ref{thm:kraus}, the map $\mcE$ is completely positive.
\end{proof}
The matrix $C_\mcE$ is commonly known as Choi matrix and $\tau_\mcE$ as \Jami{}
state. Both define the \Jami{} representation, in this work labeled as
$\tau_\mcE$ since it is normalized.

Choi's theorem provides a simple test of complete positivity, which I find beautiful. 
For instance, if we want to know if a given PTP map $\mcE$ is a valid quantum map, we just have to
consider two copies of our system in only one state, the Bell state, then apply
$\mcE$ to one of the copies and check if the result, $\tau_\mcE=\left(\id_n
\otimes \mcE\right)[\omega]$, is a density matrix.

The \Jami{} representation enjoys other useful properties, if $\mcE$ preserves the trace, the matrix 
\begin{equation}
\tau_\mcE=\frac{1}{n}\left(
\begin{array}{ccc}
\mcE[\proj{\varphi_1}{\varphi_1}] & \dots & \mcE[\proj{\varphi_1}{\varphi_n}] \\ 
\vdots & \ddots & \vdots \\ 
\mcE[\proj{\varphi_n}{\varphi_1}] & \dots & \mcE[\proj{\varphi_n}{\varphi_n}]
\end{array} 
\right)
\label{eqn:jami_block}
\end{equation}
has blocks of trace $1/n$ and $0$, since $\tr \mcE[\proj{\varphi_i}{\varphi_j}]=\delta_{ij}$. 
This property additionally means that not every density matrix in $\mathbb{C}^{n\times m}\otimes \mathbb{C}^{n\times m}$ has a corresponding CPTP map.

The matrix rank of $\tau_\mcE$ coincides with the so called \textit{Kraus rank}, \ie{} the number of linearly independent Kraus operators required to write the channel. This can be shown easily noticing that computing $\tau_\mcE$ from the Kraus sum, one arrives to the equality $\ket{\Psi_i}=1/\sqrt{n} \one \otimes K_i \ket{\Omega}$, therefore the linear independence of $\lbrace \ket{\Psi_i} \rbrace_i$ follows immediately from the linear independence of $\lbrace K_i\rbrace_i$.
Therefore the maximum Kraus rank is $mn$ and the minimum $1$. Channels with Kraus rank equal to 1 are trivially unitary channels given that $\mcE[\Delta]=K \Delta K^\dagger{}$ with $K^\dagger{} K=\one$. Channels with the maximum rank are called \textit{full Kraus rank channels}.

Another interesting property is that if $\tau_\mcE$ is {\blue separable (i.e. not entangled)} \fxwarning{Francois: qué quiere decir separable?} \fxwarning{David: no entrelazado, lo edite para que quede mas claro}, then $\mcE$ is entanglement-breaking, see definition~\ref{def:eb}. For qubit channels it is enough to test that the concurrence is zero~\cite{Rybar2012}.

\section{Operational representations} 
It has been shown that the \Jami{} representation is useful to test several
properties of quantum channels. In this section we will introduce other
representations, this time with operational meanings. They are basically
operator basis that give matrix and vector forms to channels and density
matrices, respectively.

The vectorization of density matrices can be achieved
simply ``making them flat'', this is,
$$\left( \begin{array}{ccc}
\rho_{11} & \dots & \rho_{1d} \\ 
\vdots & \ddots & \vdots \\ 
\rho_{d1} & \dots & \rho_{dd}
\end{array}  \right)\mapsto \left( \begin{array}{c}
\rho_{11} \\ 
\rho_{12} \\ 
\vdots \\ 
\rho_{dd}
\end{array}  \right)=:\vec \rho.$$
Using this mapping, the matrix form of operators acting on $\mcT(\mcH)$ is build using the simple rule~\cite{vector} 
\begin{equation}
A\rho B \mapsto \left(A\otimes B^\text{T}\right) \vec \rho.
\label{eq:flatten}
\end{equation}
For instance applying this rule to a commutator, $[H,\rho]\mapsto
\left(H\otimes \one-\one\otimes H^\text{T}\right)\vec \rho$. This
representation is useful to prove various results involving operators acting on
the space of density matrices, see for instance the
appendix~\ref{sec:unbounded}.
Additionally it is simple to prove that the Hilbert-Schmidt inner product is
mapped to $\langle \gamma, \rho \rangle \mapsto \vec \gamma^\dagger{} \vec
\rho$.


One can use other operator basis accordingly to our purposes. In general we
have the following, consider $\lbrace A_i \rbrace_i$ an orthonormal operator
basis in the space $\mcT(\mcH)$, the components of the density matrix are 
$$\alpha_i = \langle A_i,\rho\rangle=\tr \left[A_i^\dagger{} \rho\right],$$ 
so 
$$\rho=\sum_i \alpha_i A_i.$$ 
Correspondingly, the components of operators acting on $\mcB(\mcH)$, for
instance $\mcE$, are simply $$\hat \mcE_{ij}=\langle A_i,\mcE[A_j]
\rangle=\tr\left[ A_i^\dagger{} \mcE[A_j]\right].$$ Using this equation it is
easy to prove that the representation of the adjoint operator of $\mcE$,
see~\eref{eq:adjoint}, is simply $\hat \mcE^*=\hat \mcE^\dagger{}$.

\subsection{Hermitian and traceless basis} 
\label{sec:herm_and_trace_less}
Two types of basis are specially useful in this work, the first one are the
hermitian basis. This is, every orthonormal basis $\lbrace A_i \rbrace_i$ that
fulfills $A_i=A_i^\dagger{}, \ \forall i$. To show the utility of this kind of
basis, let us introduce the following definition,
\begin{definition}[Hermiticity preserving operators]
A linear operator $\mcE:\mcB(\mcH)\to \mcB(\mcH)$ preserves hermiticity if
$$\mcE[\Delta]^\dagger{}=\mcE[\Delta^\dagger{}], \ \forall \Delta\in \mcB(\mcH).$$
\end{definition}
Using the Kraus representation is trivial to prove that linear CPTP maps
preserve hermiticity, using complete positivity.
Furthermore, hermiticity preserving maps enjoy an hermitian \Jami{}
representation, \ie{} $\tau_\mcE=\tau_\mcE^\dagger{}$~\cite{wolfnotes}.

Using an hermitian basis it is straightforward to prove the following,
\begin{proposition}[Representation with real entries]
Let $\mcE$ be a linear and hermiticity preserving map. Its matrix
representation using an hermitian basis $\lbrace A_i \rbrace$ has real entries.
\end{proposition}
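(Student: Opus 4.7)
The plan is to unpack the matrix entry $\hat{\mcE}_{ij}=\tr[A_i^\dagger \mcE[A_j]]$ and show directly that it equals its own complex conjugate. Since the basis is Hermitian, $A_i^\dagger = A_i$, so the entry simplifies to $\tr[A_i\,\mcE[A_j]]$. The key observation is then that both factors under the trace are Hermitian: $A_i$ by hypothesis, and $\mcE[A_j]$ because $A_j$ is Hermitian and $\mcE$ preserves hermiticity, giving $\mcE[A_j]^\dagger = \mcE[A_j^\dagger] = \mcE[A_j]$.

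Once both factors are established as Hermitian, it suffices to invoke the standard fact that the trace of a product of two Hermitian operators is real. This follows in one line from the cyclic property and from $(XY)^\dagger = Y^\dagger X^\dagger$:
\begin{equation*}
\overline{\tr[A_i\,\mcE[A_j]]}
 = \tr\bigl[(A_i\,\mcE[A_j])^\dagger\bigr]
 = \tr[\mcE[A_j]\,A_i]
 = \tr[A_i\,\mcE[A_j]],
\end{equation*}
so $\hat{\mcE}_{ij}\in\mathbb{R}$ for all $i,j$. There is no real obstacle here; the only thing to be careful about is to invoke hermiticity preservation explicitly to guarantee that $\mcE[A_j]$ is Hermitian, since without that assumption the conjugate of the trace would instead relate $\hat{\mcE}_{ij}$ to a different matrix element rather than to itself.
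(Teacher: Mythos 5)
Your proof is correct and follows essentially the same route as the paper's: both reduce to the identity $\overline{\tr M}=\tr M^\dagger$, then use the hermiticity of $A_i$ together with hermiticity preservation to identify the conjugated entry with itself. The only cosmetic difference is that you first establish that both factors are Hermitian and then cite the reality of the trace of a product of Hermitian operators, whereas the paper pushes the conjugation through the trace directly; the computation is identical.
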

\begin{proof}
Let $\overline{\hat \mcE_{ij}}=\overline{\tr\left[ A_i \mcE[A_j] \right]}$, where the line over denotes complex conjugation. Distributing the latter inside the argument of the trace and using the hermiticity of $A_i$, we get $\overline{\hat \mcE_{ij}}=\tr\left[ A_i \mcE[A_j]^\dagger{}\right]$, finally stressing that $\mcE[A_j]^\dagger{}=\mcE[A_j^\dagger{}]=\mcE[A_j]$, we arrive to $\overline{\hat \mcE_{ij}}=\hat \mcE_{ij}$. 
\end{proof}
This simple property will be used later to prove the equivalence of the problem of finding channels that can be written as $\mcE=\exp(L)$, with $L$ a Lindblad generator.

The second useful type of basis are the so called traceless bases. They are
defined as follows. Let $\lbrace F_i\rbrace_{i=0}^{d^2-1}$ be an orthogonal basis, where we have
indicated the dimension of the space $\mcT(\mcH)$ as $d^2$ with
$d=\dim(\mcH)$, it is traceless if $F_{0}=\one/\sqrt{n}$ and $\tr F_i=0 \ \
\forall i> 0$. The traceless property comes from the fact that only one
element has non-zero trace, it is easy to prove that it must be proportional to
the identity, given that one can write the identity matrix using such basis.

This basis is useful to prove that generators of quantum dynamical maps, $L_t$, defined with $\mcE_{(t+\epsilon,t)}[\rho]=\rho+\epsilon L_t[\rho]+\mcO(\epsilon^2)$, have the following specific structure.
\begin{theorem}[Specific form of generators of dynamical maps]
Let $L:\mcT(\mcH)\to\mcT(\mcH)$ be a linear operator fulfilling $L[\Delta]^\dagger{}=L[\Delta^\dagger{}]$ and $\tr\left[L[\Delta]\right]=0$ (or equivalently $L^*[\one]=0$), then it has the following form, 
\begin{equation}
L[\rho]= \rmi [\rho,H]
 +\sum_{i,j=1}^{d^2-1} G_{ij} 
     \left( 
         F_{i}\rho F^{\dagger}_{j}
             -\frac{1}{2} \lbrace F^{\dagger}_{j} F_{i},\rho \rbrace 
     \right),
\label{eq:lindblad_from_theorem_with_ccp}
\end{equation}
where $d=\dim(\mcH)$, $H\in \mathbb{C}^{d\times d}$ and $G\in \mathbb{C}^{\left(d^2-1\right)\times \left(d^2-1\right)}$ are hermitian, and $\lbrace F_i\rbrace_{i=0}^{d^2-1}$ is an orthonormal traceless basis of $\mcB(\mcH)$.
\label{thm:linblad_simple_form}
\end{theorem}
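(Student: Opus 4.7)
The plan is to expand $L$ in the operator product basis $\{F_i \otimes F_j^\dagger\}_{i,j=0}^{d^2-1}$ and then exploit the two hypotheses algebraically. Since, via vectorization as in~\eqref{eq:flatten}, the set $\{F_i\otimes \overline{F_j}\}_{i,j}$ is an orthonormal basis of the superoperator space, any linear map admits a unique expansion
\begin{equation*}
L[\rho]=\sum_{i,j=0}^{d^2-1} c_{ij}\, F_i\,\rho\, F_j^\dagger.
\end{equation*}
First I would show that hermiticity preservation forces $c$ to be a Hermitian matrix. Computing $L[\rho]^\dagger=\sum_{ij}\bar c_{ij}F_j\rho^\dagger F_i^\dagger$, relabeling $i\leftrightarrow j$, and matching with $L[\rho^\dagger]=\sum_{ij}c_{ij}F_i\rho^\dagger F_j^\dagger$, the linear independence of the superoperators $\rho\mapsto F_i\rho F_j^\dagger$ yields $\bar c_{ji}=c_{ij}$.

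Second, I would split the double sum into four blocks according to whether $i$ or $j$ equals zero, using $F_0=\one/\sqrt d$. The $(0,0)$ term is $(c_{00}/d)\rho$. The mixed blocks combine into $F\rho+\rho F^\dagger$ with $F:=\tfrac{1}{\sqrt d}\sum_{i\geq 1}c_{i0}F_i$ (the $(0,j)$ piece is the adjoint of the $(i,0)$ piece because $c$ is Hermitian). Decomposing $F=G'-\rmi H$ into its Hermitian and anti-Hermitian parts gives
\begin{equation*}
F\rho+\rho F^\dagger = \rmi[\rho,H]+\{G',\rho\},
\end{equation*}
which already produces the commutator term of~\eqref{eq:lindblad_from_theorem_with_ccp} with a Hermitian Hamiltonian $H$, at the cost of a residual anticommutator.

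Third, I would massage the $(i,j)$ block with $i,j\geq 1$ by adding and subtracting $\tfrac12\{F_j^\dagger F_i,\rho\}$, turning $\sum_{i,j\geq 1}c_{ij}F_i\rho F_j^\dagger$ into the GKLS dissipator of~\eqref{eq:lindblad_from_theorem_with_ccp} with $G_{ij}:=c_{ij}$ (a Hermitian matrix of size $d^2-1$), plus the extra term $\tfrac12\{M,\rho\}$ where $M:=\sum_{i,j\geq 1}c_{ij}F_j^\dagger F_i$ is Hermitian. Collecting,
\begin{equation*}
L[\rho]=\rmi[\rho,H]+\sum_{i,j=1}^{d^2-1}G_{ij}\Bigl(F_i\rho F_j^\dagger-\tfrac12\{F_j^\dagger F_i,\rho\}\Bigr)+\tfrac{c_{00}}{d}\rho+\Bigl\{G'+\tfrac{M}{2},\,\rho\Bigr\}.
\end{equation*}
The trace-annihilation hypothesis is then used to kill the two leftover terms on the right. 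Imposing $\tr L[\rho]=0$ for all $\rho$ is equivalent to $\sum_{ij}c_{ij}F_j^\dagger F_i=0$, which after the same block decomposition reads $(c_{00}/d)\one+2G'+M=0$, i.e.\ $G'+M/2=-\bigl(c_{00}/(2d)\bigr)\one$. Substituting back, the anticommutator collapses to $-(c_{00}/d)\rho$ and cancels the identity term, leaving precisely~\eqref{eq:lindblad_from_theorem_with_ccp}.

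The main obstacle is bookkeeping: the $F_0$-dependent pieces are scattered across three of the four blocks, and one must regroup them without losing track of which coefficients are real, which are complex, and how Hermiticity of $c$ translates into the adjoint relations between $F$, $F^\dagger$, $G'$, $H$, and $M$. The trace condition is then exactly strong enough to annihilate the non-commutator, non-dissipator residue, and checking this single cancellation is the only step that requires real work; the rest is algebraic rearrangement.
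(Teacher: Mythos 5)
Your proof is correct, and it takes a genuinely different (and more elementary) route than the paper. The paper works at the level of the \Jami{} matrix: it writes the Hermitian $\tau_L$ as $\tau_\phi-\proj{\Psi}{\Omega}-\proj{\Omega}{\Psi}$ with $\tau_\phi=\omega_\perp\tau_L\omega_\perp$, identifies $\tau_\phi$ with the Choi matrix of $\phi[\rho]=\sum_{i,j\geq 1}G_{ij}F_i\rho F_j^\dagger$, reads off the cross terms as $\rho\mapsto-\kappa\rho-\rho\kappa^\dagger$, and then uses $L^*[\one]=0$ to fix $\kappa=\rmi H+\tfrac12\phi^*[\one]$. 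You instead expand $L$ directly in the superoperator basis $\rho\mapsto F_i\rho F_j^\dagger$, derive Hermiticity of the coefficient matrix from the HP hypothesis, split by whether an index hits $0$, and let the trace condition annihilate the residual anticommutator and multiple of the identity; every step you sketch checks out (in particular $M^\dagger=M$ follows from $\bar c_{ij}=c_{ji}$ after relabeling, and the block decomposition of $\sum_{ij}c_{ij}F_j^\dagger F_i=0$ does give $G'+M/2=-\bigl(c_{00}/(2d)\bigr)\one$, which kills both leftovers at once). The two arguments are structurally parallel --- your $(i,j\geq 1)$ block is the paper's $\phi$, and your $F$, $G'$, $M$ together play the role of $\kappa$ --- but the paper's detour through the Choi matrix is not gratuitous: it makes the identification $\tau_\phi=\omega_\perp\tau_L\omega_\perp$ explicit, which is exactly the object needed for the conditional-complete-positivity test (Proposition~\ref{prop:ccp}) used throughout the rest of the thesis. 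Your version is self-contained and avoids Choi matrices entirely, at the cost of having to establish that correspondence separately if one later wants the ccp criterion in the form $\omega_\perp\tau_L\omega_\perp\geq 0$ rather than $G\geq 0$.
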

Notice that Lindblad generators enjoy such form with the
additional condition that $G\geq 0$, see~\eref{eq:lindblad_from_deri}. A proof
of this is given in Ref.~\cite{Evans1977} for the infinite dimensional case
using technicalities beyond this work. Here we will prove it for the finite
dimensional case, using the notation of an incomplete proof given in
Ref.~\cite{Wolf2008}.
\begin{proof}
Since $L$ preserves hermiticity, it has an hermitian \Jami{} matrix, $\tau_L
\in \mathbb{C}^{d^2\times d^2}$. We can write such matrix always as
\begin{equation}
\tau_L=\tau_\phi-\proj{\Psi}{\Omega}-\proj{\Omega}{\Psi},
\label{eq:general_hermitian_form}
\end{equation}
where $\ket{\Psi}=-\omega_{\perp}\tau_L\ket{\Omega}-\left(\lambda/2\right) 
\ket{\Omega}$, $\lambda=\bra{\Omega}\tau_L\ket{\Omega}$, 
$\omega_\perp \tau_{L} \omega_\perp=\omega_{\perp}\tau_\phi\omega_{\perp}=\tau_\phi$ and $\omega_\perp = \one - \omega$.  Observe that choosing the traceless operator basis $\lbrace F_i\rbrace_{i=0}^{d^2-1}$, it is simple to prove that the matrix $\tau_\phi$ can be understood also as the \Jami{} matrix of the following operator:
\begin{equation}
\phi\left[\rho\right]=\sum_{i,j=1}^{d^2-1} G_{i j} F_{i}\rho F^{\dagger}_{j},
\label{eq:phi_super}
\end{equation}
with $G$ hermitian, \ie{} $\tau_\phi=\left(\id_{d^2}\otimes \phi\right)[\omega]$. This can be shown noticing that the summation $\sum_{i,j=1}^{d^2-1}$ goes over only traceless operators, therefore the projections into the one-dimensional space of $\ket{\Omega}$ of the Choi matrix of $\phi$ are null,
\[ \omega \tau_\phi
   = \sum_{i,j=1}^{d^2-1}G_{ij}\frac{1}{d} \text{tr} 
     \left(F_{i}\right)\omega \left(\one\otimes F^{\dagger}_{j}\right)
   = 0 \; , \]
and similarly for
\[ \tau_\phi\omega=0
   \; .  \] 
For the second and third terms of Eq.~\ref{eq:general_hermitian_form}, it is easy to show that the corresponding operator is simply 
$\rho \mapsto-\kappa \rho -\rho \kappa^\dagger$, where we identify $\ket{\Psi}=\left(\one\otimes \kappa \right)\ket{\Omega}$.

Up to now we have shown that hermiticity preserving generators have the form 
\begin{equation}
L\left[\rho\right]=\phi\left[\rho\right]-\kappa\rho -\rho \kappa^{\dagger}.
\label{eq:hermiticity_preserving_simple}
\end{equation}
Using the condition $L^*[\one]=0$, we have that 
$$\kappa+\kappa^\dagger=\phi^*[\one],$$
\ie{} the hermitian part of $\kappa$ is given by $\frac{1}{2} \sum_{i,j=1}^{d^2-1} G_{i j} F_j^\dagger F_{i}$. Simply writing the antihermitian part as $i H$ we end up with
$$\kappa=\rmi H+ 
\frac{1}{2} \sum_{i,j=1}^{d^2-1} G_{i j} F_j^\dagger F_{i}.$$
Substituting this expression and~\eref{eq:phi_super}, in~\eref{eq:hermiticity_preserving_simple}, we arrive to the desired form, see~\ref{eq:lindblad_from_theorem}.
\end{proof}

Notice that the operator $\phi$ is completely positive if and only if $G\geq 0$~\cite{zimansbook}, thus, $G\geq 0 \Leftrightarrow \tau_\phi\geq 0$. In such case $L$ has exactly a Lindblad form. This condition will be introduced later as \textit{conditional complete positivity}~\cite{Evans1977,Wolf2008}. 

The following is a central and useful result for our work.
\begin{proposition}[Conditional complete positivity]
An hermiticity preserving linear operator $L:\mcT(\mcH)\to\mcT(\mcH)$ fulfilling $\tr \left[L^*[\one]\right]=0$, has Lindblad form if and only if
$$\omega_\perp \tau_L \omega_\perp\geq 0.$$
\label{prop:ccp}
\end{proposition}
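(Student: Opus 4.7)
The plan is to reduce the proposition to the equivalence $G\geq 0 \iff \omega_\perp \tau_L \omega_\perp \geq 0$, where $G$ is the hermitian coefficient matrix appearing in the canonical form established in Theorem~\ref{thm:linblad_simple_form}. Recall that the Lindblad form is precisely Eq.~\eqref{eq:lindblad_from_theorem_with_ccp} with the additional requirement that the relaxation rates $\gamma_j$ from the singular value decomposition of $G$ be non-negative; since $G$ is hermitian, this is equivalent to $G\geq 0$. Hence proving $G\geq 0 \iff \omega_\perp \tau_L \omega_\perp \geq 0$ is the whole content.

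First I would invoke the decomposition obtained inside the proof of Theorem~\ref{thm:linblad_simple_form}, namely $\tau_L = \tau_\phi - \ket{\Psi}\bra{\Omega} - \ket{\Omega}\bra{\Psi}$, together with the two identities $\omega\tau_\phi = \tau_\phi\omega = 0$ that were verified there using the traceless character of $F_1,\dots,F_{d^2-1}$. Those identities immediately give $\omega_\perp \tau_\phi \omega_\perp = \tau_\phi$.

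Next I would exploit the fact that $\omega = \proj{\Omega}{\Omega}$ with $\ket{\Omega}$ normalized, so that $\omega\ket{\Omega} = \ket{\Omega}$ and consequently $\omega_\perp \ket{\Omega} = 0$. Conjugating $\tau_L$ by $\omega_\perp$ therefore kills both rank-one terms in the decomposition, yielding
\begin{equation*}
\omega_\perp \tau_L \omega_\perp
 = \omega_\perp \tau_\phi \omega_\perp
 - \omega_\perp\ket{\Psi}\bra{\Omega}\omega_\perp
 - \omega_\perp\ket{\Omega}\bra{\Psi}\omega_\perp
 = \tau_\phi \, .
\end{equation*}

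Finally, since $\tau_\phi$ is by construction the \Jami{} matrix of the map $\phi[\rho]=\sum_{i,j=1}^{d^2-1}G_{ij}F_i\rho F_j^\dagger$, Choi's theorem gives $\tau_\phi \geq 0 \iff \phi$ is completely positive, and the standard characterization of CP maps with a fixed operator basis gives $\phi$ CP $\iff G\geq 0$. Chaining these equivalences yields $L$ is of Lindblad form $\iff G\geq 0 \iff \tau_\phi \geq 0 \iff \omega_\perp\tau_L\omega_\perp \geq 0$, as claimed. I do not anticipate a genuine obstacle here: the whole argument is bookkeeping on top of Theorem~\ref{thm:linblad_simple_form} and Choi's theorem; the only subtle point is remembering why the cross terms $\ket{\Psi}\bra{\Omega}$ and $\ket{\Omega}\bra{\Psi}$ are annihilated by the $\omega_\perp$-conjugation and why $\tau_\phi$ is invariant under it.
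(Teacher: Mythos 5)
Your proof is correct and is essentially the paper's own argument: the identity $\omega_\perp\tau_L\omega_\perp=\tau_\phi$ is built into the decomposition $\tau_L=\tau_\phi-\proj{\Psi}{\Omega}-\proj{\Omega}{\Psi}$ used in the proof of Theorem~\ref{thm:linblad_simple_form}, and the remark immediately preceding the proposition supplies the same chain $G\geq 0 \Leftrightarrow \phi \text{ CP} \Leftrightarrow \tau_\phi\geq 0$ via Choi's theorem. The only point left tacit (in both your write-up and the paper) is that an arbitrary Lindblad representation need not use traceless $F_i$, so for the ``only if'' direction one should note that shifting the dissipation operators by multiples of $\one$ and absorbing the difference into the Hamiltonian part reduces any Lindblad form to the canonical traceless one without spoiling $G\geq 0$.
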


Additionally choosing an arbitrary basis of the Hilbert space to write operators $\lbrace F_i\rbrace_{i=1}^{d^2}$, it is easy to prove that $G$ and $\omega_\perp \tau_L \omega_\perp$ are related by an unitary conjugation~\cite{montesgorin}.

\section{Qubit channels} 
We shall devote some time to the most simple but non-trivial
quantum system, the qubit. This case turns out to be rich enough to
use and test the tools provided by the literature and the ones developed here, in the context of
divisibility. We recall a particular representation
and a couple of decomposition theorems for qubit channels.

\subsection{Pauli representation and Ruskai's decomposition} 
In the case of qubit channels we can have at the same time an
hermitian, traceless and unitary basis, it is the simple Pauli basis
$\frac{1}{\sqrt{2}}\lbrace \mathbf{1},\sigma_x,\sigma_y,\sigma_z\rbrace$. This induces
a simple $4\times 4$ representation with real entries given by
\begin{equation}
\hat \mcE=\left(\begin{array}{cc}
1 & \vec 0^{T} \\ 
\vec t & \Delta
\end{array} \right),
\label{eq:qubit_channel_pauli_basis}
\end{equation}
where $\Delta$ is a $3\times 3$ matrix with real entries and $\vec t$ a column
vector.  This describes the action of the channel in the Bloch sphere picture
in which the points $\vec{r}$ are identified with density matrices
$\varrho_{\vec{r}}=\frac{1}{2}(\one+\vec{r}\cdot\vec{\sigma})$~\cite{ruskai}.
Therefore the action of the channel is  described by
$\mcE(\rho_{\vec{r}})=\rho_{\Delta\vec{r}+\vec{t}}$.

In order to study qubit channels with simpler expressions, we will consider a
decomposition in unitaries such that
\begin{equation}
\mcE=\mcU_1 \mcD \mcU_2.
\label{eq:EUDU}
\end{equation}
\fxnote{Aclarar: si entiendo bien, en la SVD, se requieren propiedades de positividad de los lambda's. Esto a veces requiere usar para los R1 y R2 matrices con determinante negativo. Aquí, se insiste en usar únicamente rotaciones propias, y por ende no se insiste en la positividad de los lambda's? Si esto es correcto, debiera formularse más explícitamente}\fxwarning{David: Es correcto, lo edit\'e.}
This can be achieved using Ruskai's decomposition~\cite{ruskai}, which can be
performed by decomposing $\Delta$ in rotation matrices,
\ie{} $\Delta=R_1 D R_2$, where $D={\rm diag}(\lambda_1,\lambda_2,\lambda_3)$
is diagonal and the rotations $R_{1,2}\in\text{SO}(3) $ (of
the Bloch sphere) correspond to the unitary channels $\mcU_{1,2}$. {\blue Notice that as $D$ is not required to be positive-semidefinite, Ruskai's decomposition must not be confused with the singular value
decomposition.} The latter allows decompositions that include total
reflections. Such operations do not correspond to unitaries over a qubit, in
fact they are not CPTP. An example of this is the
universal NOT gate defined by $\rho \mapsto \one -\rho$, it is PTP but not
CPTP. The resulting form from Ruskai's decomposition is stated in the following
theorem,
\begin{theorem}[Special orthogonal normal form]
For any qubit channel $\mcE$, there exist two unitary conjugations , $\mcU_1$ and $\mcU_2$, such that $\mcE=\mcU_1\mcD\mcU_2$, where $\mcD$ has the following form in the Pauli basis,
\begin{equation}
\hat \mcD=\left( \begin{array}{cc}
1 & \vec 0^{T} \\ 
\vec{\gamma} & D
\end{array} \right)\,,
\label{eq:orthogonalform}
\end{equation}
and is called special orthogonal normal form of $\mcE$.
\label{thm:orthogonal_normal_form}
\end{theorem}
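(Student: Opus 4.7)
The plan is to reduce the statement to a real-matrix decomposition of the $3\times 3$ block $\Delta$ appearing in the Pauli representation~\eref{eq:qubit_channel_pauli_basis}, and then lift that decomposition to qubit unitaries via the double cover $\mathrm{SU}(2)\to\mathrm{SO}(3)$. Concretely, I would start by writing $\mcE$ in the Pauli basis, so that
\[
\hat\mcE=\begin{pmatrix}1 & \vec 0^{\,T}\\ \vec t & \Delta\end{pmatrix},
\]
and focus exclusively on $\Delta\in\mathbb{R}^{3\times 3}$, since $\mcE$ is trace preserving and hermiticity preserving, and any unitary channel $\mcU$ is represented by $\hat\mcU=\mathrm{diag}(1,R_\mcU)$ with $R_\mcU\in\mathrm{SO}(3)$. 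Thus the claim is purely a claim about $\Delta$ together with the translation vector $\vec t$.

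Next I would perform a real singular value decomposition of $\Delta$ in the form $\Delta=O_1\,\Sigma\,O_2$ with $O_1,O_2\in\mathrm{O}(3)$ and $\Sigma=\mathrm{diag}(\sigma_1,\sigma_2,\sigma_3)$ with $\sigma_i\ge 0$. To get proper rotations instead of orthogonal matrices, I absorb any sign of $\det O_i$ into $\Sigma$: if $\det O_1=-1$, replace $O_1\mapsto O_1\,\mathrm{diag}(-1,1,1)$ and $\Sigma\mapsto\mathrm{diag}(-1,1,1)\Sigma$, and analogously for $O_2$. This yields $\Delta=R_1\,D\,R_2$ with $R_1,R_2\in\mathrm{SO}(3)$ and a diagonal $D=\mathrm{diag}(\lambda_1,\lambda_2,\lambda_3)$, whose entries are real but no longer constrained to be non-negative; this is precisely the observation emphasised in the statement (Ruskai's decomposition is not the SVD, since it keeps only proper rotations and allows negative $\lambda_i$).

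Then I would lift $R_1$ and $R_2$ to qubit unitaries. Using the standard double cover, for every $R\in\mathrm{SO}(3)$ there exists $U\in\mathrm{SU}(2)$ such that $U\sigma_kU^\dagger=\sum_\ell R_{\ell k}\sigma_\ell$; the associated unitary channel $\mcU[\rho]=U\rho U^\dagger$ has Pauli matrix representation $\hat\mcU=\mathrm{diag}(1,R)$. Applying this construction to $R_1$ and $R_2$ yields channels $\mcU_1,\mcU_2$. Define $\mcD:=\mcU_1^{-1}\mcE\,\mcU_2^{-1}$; composing the block matrices gives
\[
\hat\mcD=\begin{pmatrix}1 & \vec 0^{\,T}\\ R_1^{-1}\vec t & D\end{pmatrix},
\]
so setting $\vec\gamma:=R_1^{-1}\vec t$ and rearranging we obtain $\mcE=\mcU_1\mcD\,\mcU_2$ with $\hat\mcD$ in the form~\eref{eq:orthogonalform}.

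The main obstacle is the sign bookkeeping: a naive use of SVD would insist on $\sigma_i\ge 0$ but might force $\det O_i=-1$, which would require improper orthogonal transformations on the Bloch sphere and hence non-unitary (in fact non-CP) maps such as the universal NOT. The crucial point is to verify that one can always shift the minus signs from $O_1,O_2$ into $D$ while keeping $R_1,R_2\in\mathrm{SO}(3)$, so that the lifting to $\mathrm{SU}(2)$ goes through. Apart from this, the proof is essentially a combination of linear-algebra bookkeeping and the standard $\mathrm{SU}(2)/\mathrm{SO}(3)$ correspondence.
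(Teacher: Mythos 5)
Your proof is correct and follows essentially the same route as the paper: the paper obtains the normal form from Ruskai's decomposition, i.e.\ precisely the sign-adjusted SVD $\Delta=R_1DR_2$ with $R_{1,2}\in\mathrm{SO}(3)$ and a diagonal $D$ that is allowed to have negative entries, followed by the lift of $R_{1,2}$ to qubit unitaries through the $\mathrm{SU}(2)\to\mathrm{SO}(3)$ double cover. Your version simply spells out the determinant bookkeeping and the block-matrix computation $\vec\gamma=R_1^{-1}\vec t$, $R_1^{-1}\Delta R_2^{-1}=D$ that the paper leaves implicit.
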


Here, $R_1^\text{T}\Delta R_2^\text{T}=D$ and $\vec{\gamma}=R_1^T\vec{t}$. The latter describes the
shift of the center of the Bloch sphere under the action of $\mcD$.
The parameters $\vec{\lambda}$ determine the length of semi-axes
of the Bloch ellipsoid, being the deformation of Bloch sphere under
the action of $\mcE$. In particular $\det \hat \mcD=\det \hat \mcE=\lambda_1 \lambda_2 \lambda_3$.

To develop geometric intuition in the space determined by the possible
values of the three parameters of $\vec \lambda$, consider the \Jami{}
representation of the special orthogonal normal form of an arbitrary channel in 
the basis that diagonalises $D$,
\begin{equation}
\tau_\mcD=\frac{1}{4}\left(
\begin{array}{cccc}
 \gamma _3+\lambda _3+1 & \gamma _1-i \gamma _2 & 0 & \lambda _1+\lambda _2 \\
 \gamma _1+i \gamma _2 & -\gamma _3-\lambda _3+1 & \lambda _1-\lambda _2 & 0 \\
 0 & \lambda _1-\lambda _2 & \gamma _3-\lambda _3+1 & \gamma _1-i \gamma _2 \\
 \lambda _1+\lambda _2 & 0 & \gamma _1+i \gamma _2 & -\gamma _3+\lambda _3+1 \\
\end{array}
\right).
\label{eq:choi_qubit}
\end{equation}
Complete positivity is determined by the non-negativity of its eigenvalues,
given that it is hermitian, but it turns that for the general case they have
complicated expressions. To overcome this problem we use the fact that if
$\mcD$ is a channel, then its \textit{unital part}, defined by taking $\vec
\gamma=\vec 0$, is a channel too~\cite{wolfnotes}.
Therefore the set of the possible
values of $\vec \lambda$ for the general case is contained in the set arising
from the unital case. The complete positivity conditions for the latter are
\begin{equation}
1+\lambda_i\pm(\lambda_j+\lambda_k) \geq 0,
\label{eq:complete_positivity_qubit}
\end{equation}
with $i$, $j$ and $k$ all different,
this implies that the possible set of lambdas lives inside the tetrahedron with
corners $(1,1,1)$, $(1,-1,-1)$, $(-1,1,-1)$ and $(-1,-1,1)$, see
\fref{fig:simple_tetra}. For unital channels, all points in the tetrahedron are
allowed. The corner $\vec \lambda = (1,1,1)$
corresponds to the identity channel, $\vec \lambda = (1,-1,-1)$ to $\sigma_x$
$\vec \lambda = (-1,1,-1)$ to $\sigma_y$  and $\vec \lambda = (-1,-1,1)$ to
$\sigma_z$ (Kraus rank 1 operations). Points in the edges correspond to Kraus
rank 2 operations, points in the faces to Kraus rank 3 operations and in the
interior of the tetrahedron to Kraus rank 4 operations.
In particular, this tetrahedron defines the set of \textit{Pauli channels}, which are defined to have diagonal special orthogonal normal form.
\begin{definition}[Pauli channels]
A qubit channel $\mcE$ is a Pauli channel if
\begin{equation}
\mcE[\rho]=\sum_{i=0}^3 p_i \sigma_i \varrho \sigma_i,
\end{equation}
with $\sigma_0:= \one$, $p_i\geq 0$ and $\sum_{i=0}^3 p_i=1$.
\end{definition}

 For non-unital channels more restrictive conditions arise, an example will be given later.
\begin{figure}
\centering
\begin{tikzpicture}
\node at (0,0) {\includegraphics[draft=false,scale=1.2]{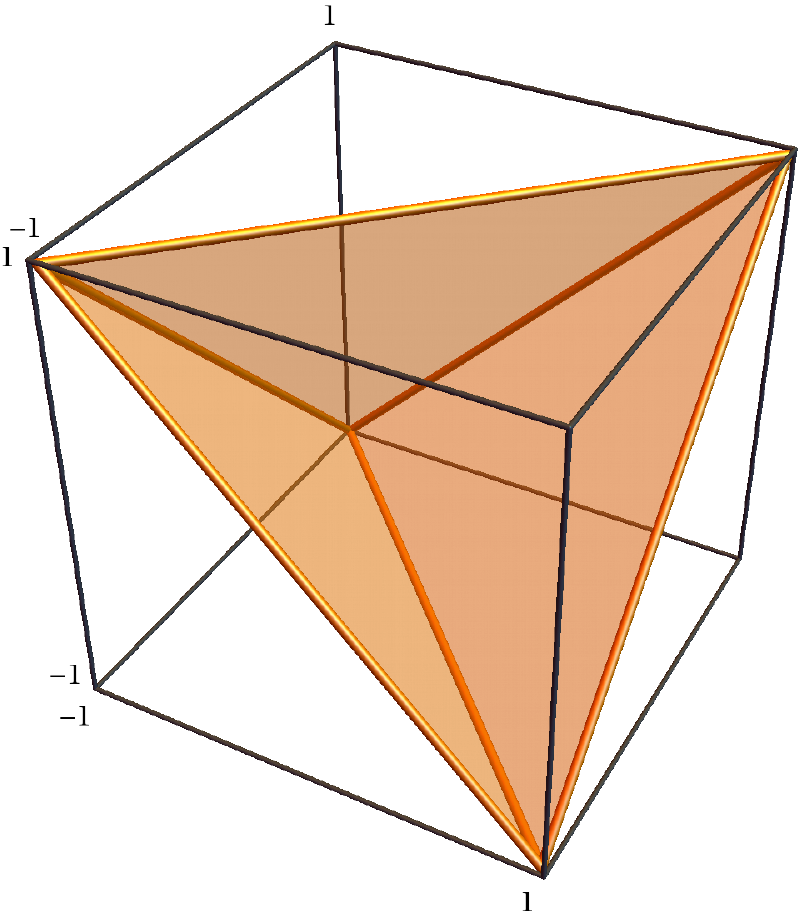}};
\node at (-1,-3.5) {\Large $\lambda_1$};
\node at (-4.5,0) {\Large $\lambda_3$};
\node at (4,-2.5) {\Large $\lambda_2$};
\node at (5.2,4) {\Large id};
\node at (2,-6) {\Large $\varrho\mapsto \sigma_x \varrho \sigma_x$};
\node at (-4.7,3.5) {\Large $\varrho\mapsto \sigma_z \varrho \sigma_z$};
\end{tikzpicture}
\caption{
Set of the possible values of $\vec \lambda$. This set has the shape of a
tetrahedron where the corners are the Pauli unitaries ($\one$, $\sigma_x$ and
$\sigma_z$ are indicated in the figure, while $\sigma_y$ lies behind). The rest
of the body contains convex combinations of Pauli unitaries.
Unital qubit channels can be obtained by concatenating Pauli channels with
unitary conjugations, see theorem~\ref{thm:orthogonal_normal_form}.
\label{fig:simple_tetra}}
\end{figure}

\label{sec:qubitchan}

\subsection{1wSLOCC and singular value decomposition using the Lorentz metric} 
\label{sec:lorentz}

There is another parametrization for qubit channels called \textit{Lorentz
normal decomposition}~\cite{Verstraete2001} which is specially useful to
characterize infinitesimal divisibility \InfDiv{}.
To introduce it, let us
resort to chapter~\ref{chap:open_quantum_systems} where we discussed local
operations and classical communication. For the two-qubit case, the operations
that Alice and Bob apply for their reduced states are
\begin{align}
\rho_{A} &\mapsto X \rho_A X^\dagger\nonumber \\
\rho_{B} &\mapsto Y \rho_B Y^\dagger,
\label{eq:locc_alice_bob}
\end{align}
where we have shown that it is enough to consider $X,Y\in
\text{SL}(2,\mathbb{C})$ for $X$ and $Y$ invertible, see
chapter~\ref{chap:open_quantum_systems} 
Now we are going to show
that such operations can be understood as proper orthochronous Lorentz
transformations in the Pauli representation.

Consider an arbitrary hermitian operator $\Delta$ and its representation in the
Pauli basis,
$$\Delta=\left(\one \tr \Delta + \vec r_\Delta \cdot \vec \sigma\right)/2$$
with $ \det \Delta=(\tr \Delta)^2-|\vec r_\Delta|^2$. 
Now observe that $\det \Delta$ can be understood as the squared Lorentz norm of the four-vector $r_\Delta=\left(\tr \Delta, \vec r_\Delta \right)^\text{T}$, lying in the Minkowski vector space, denoted as $(\mathbb{R}^4,\eta)$, where $\eta=\diag\left(1,-1,-1,-1 \right)$ is the \textit{Lorentz metric}. Therefore we have
\begin{equation}
\det \Delta=|r_\Delta|^2_\text{Lorentz}=\langle r_\Delta,\eta r_\Delta \rangle,
\label{eq:norm_Lorentz}
\end{equation}
 with $\langle \cdot \rangle$ the standard inner product.
Then $\tr \Delta$ is a time-like component and $\vec r_\Delta$ space-like components. 

Given that operations shown in~\eref{eq:locc_alice_bob} preserve the
determinant, they are isometries in the Minkowski space. That is, they preserve
the norm shown in~\eref{eq:norm_Lorentz} for any vector $r_\Delta$ (with
$\Delta$ hermitian).
Additionally, due to linearity of ~\eref{eq:locc_alice_bob}, these operations
belong to $\text{SO}(3,1)$ (the Lorentz group). In fact, due to the positivity
of quantum operations, they do not change the sign of the trace (the time-like
component); therefore the transformations are orthochronous. Also notice that
$\text{SL}(2,\mathbb{C})$ contains the identity transformation, therefore the
set of one-way stochastic local operations and classical communication is
identified with the proper orthochronous Lorentz group,
$\text{SO}^{+}(1,3)$~\cite{wolfnotes,wuki}.
However, since $-X$ and $X$ give the same result, see~\eref{eq:locc_alice_bob},
and both belong to $\text{SL}(2,\mathbb{C})$, one says that the latter is a
double cover of $\text{SO}^{+}(1,3)$. This map is also called \textit{spinor
map}.

Given this map, it is expected that the operations mentioned in
\eref{eq:locc_alice_bob} are explicitly Lorentz matrices, when writing them in
the Pauli basis. Also notice that unitary conjugations are particular cases of
them~\cite{ruskai}, therefore one can think of a different 
decomposition using the Lorentz metric instead of the three dimensional
Euclidean metric, used in Ruskai's decomposition.

The Lorentz normal form was introduced first for two-qubit states by writing
them as 
\begin{equation}
\tau=\frac{1}{4} \sum_{ij} R_{ij} \sigma_i \otimes \sigma_j,
\label{eq:choi_R}
\end{equation}
 where we
have used the notation of \Jami{} states for convenience. 
This decomposition is derived from the theorem 3 of Ref.~\cite{Verstraete2001},
which essentially states that the matrix $R$ can be decomposed as 
\begin{equation}
R=L_1\Sigma L^\text{T}_2.
\label{eq:lorentz_decomp_states}
\end{equation}
Here $L_{1,2}$ are proper orthochronous Lorentz transformations
and $\Sigma$ is either $\Sigma=\diag \left( s_0, s_1, s_2, s_3 \right)$ with
$s_0 \ge s_1 \ge s_2 \ge |s_3|$, or 
\begin{equation}
\Sigma=\left(
\begin{array}{cccc}
 a & 0 & 0 & b \\
 0 & d & 0 & 0 \\
 0 & 0 & -d & 0 \\
 c & 0 & 0 & -b+c+a 
\end{array}
\right).
\label{eq:state_normal_form_singular}
\end{equation}
Note that $\Sigma$ corresponds to an unnormalized state, with trace $\tr
\Sigma=a$. Thus, the normalization constant  is $\alpha=a^{-1}$.

To introduce the Lorentz normal decomposition of qubit channels, let us first
introduce the following. Let $\mcE$ a qubit channel and $\hat \mcE$ its matrix
representation using the Pauli basis. The latter is related with the matrix
$R$, which defines its \Jami{} state, see~\eref{eq:choi_R}, 
\begin{equation}
\hat \mcE\Phi_\text{T}=R,
\label{eq:choi_pauli}
\end{equation}
where $\Phi_\text{T}=\diag\left(1,1,-1,1 \right)$. This can be shown by
defining a generic Pauli channel, computing its Choi matrix and extracting $R$
using the Hilbert-Schmidt inner product with the basis $\lbrace \sigma_i\otimes
\sigma_j \rbrace_{i,j}$. Now, defining the decomposition for channels
throughout decomposing the \Jami{} state, we can easily compute the
corresponding Lorentz transformations using equations~(\ref{eq:choi_pauli}) and
(\ref{eq:lorentz_decomp_states}):
\begin{align}
R \Phi_\text{T}&=\alpha L_1 \Sigma L_2^\text{T} \Phi_\text{T}\nonumber\\
\hat \mcE &= \alpha L_1 \Sigma L_2^T \Phi_\text{T}\\
&= L_1 \left(\alpha \Sigma \Phi_\text{T} \right) \Phi_\text{T} L_2^T \Phi_\text{T}\\
&= L_1 \hat{\tilde{\mcE}} \tilde L_2^T,
\label{eq:lorentz_form_for_channels}
\end{align}
where $\hat{\tilde{\mcE}} =\alpha \Sigma \Phi_\text{T}$ is the Lorentz normal form of $\hat \mcE$. Also notice that $\tilde L_2^\text{T}=\Phi_\text{T} L_2^\text{T}\Phi_\text{T}$ is proper and orthochronous, given that its determinant is positive and $\Phi_\text{T}$ is proper. Therefore the possible Lorentz normal forms for channels are
$\hat{\tilde{\mcE}}=\diag \left( s_0, s_1, -s_2, s_3 \right)$ with
$s_0 \ge s_1 \ge s_2 \ge |s_3|$, or 
\begin{equation}
\hat{\tilde{\mcE}}=\left(
\begin{array}{cccc}
 a & 0 & 0 & b \\
 0 & d & 0 & 0 \\
 0 & 0 & d & 0 \\
 c & 0 & 0 & -b+c+a 
\end{array}
\right).
\label{eq:channel_normal_form_singular}
\end{equation}
Using this, the authors of~Ref.~\cite{Verstraete2002} introduced a theorem (theorem 8 of the reference) defining the Lorentz normal form for channels by forcing
$b=0$, in order to have normal forms proportional to trace-preserving operations. The latter is equivalent to say that the
decomposition of \Jami{} states leads to states that are also
\Jami{}. We didn't find a good argument to justify such assumption, and
found a counterexample that shows that Lorentz normal forms with $b\neq 0$ exist (see appendix~\ref{sec:normal_form}). Therefore in
general we can find a $\Sigma$ with form
of~\eref{eq:state_normal_form_singular} with $b\neq 0$. The consequence of this
is that the theorem 8 of Ref.~\cite{Verstraete2002} is incomplete, but given
that form of~\eref{eq:state_normal_form_singular} is Kraus rank deficient (it
has rank three for $b\neq c$ and two for $b=c$), the full Kraus rank case is
still useful. Thus, we propose a restricted version of their theorem:
\begin{theorem}[\textbf{Restricted Lorentz normal form for qubit quantum channels}]
For any full Kraus rank qubit channel $\mcE$ there exist rank-one completely
positive maps $\mcT_1,\mcT_2$ such that $\mcT=\mcT_1\mcE\mcT_2$ is proportional
to 
\begin{equation}
\left( \begin{array}{cc}
1 & \vec 0^\text{T} \\ 
\vec 0 & \Lambda
\end{array}  \right),
\label{eq:form_fullKraus_Lorentz}
\end{equation}
where
$\Lambda={\rm
        diag}(s_1,s_2,s_3)$ with $ 1\geq s_1 \geq s_2 \geq |s_3|$.
\label{thm:Lorentz}
\end{theorem}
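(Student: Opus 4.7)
The plan is to reduce the theorem to the state version of the Lorentz normal decomposition (already quoted from Ref.~\cite{Verstraete2001} in the text) by transferring through the relation $\hat{\mcE}\Phi_{\text{T}}=R$ between channel and \Jami{} state, and then to reinterpret the Lorentz pieces $L_1,L_2$ as rank-one CP maps using the spinor map of Section~\ref{sec:lorentz}. Concretely, I would first take $\mcE$ to be of full Kraus rank and form its \Jami{} state $\tau_{\mcE}$, which has rank~$4$; then apply the state decomposition $R=L_{1}\Sigma L_{2}^{\text{T}}$. The key observation is that the alternative singular form~\eref{eq:state_normal_form_singular} has matrix rank at most three, so under the full Kraus rank hypothesis only the diagonal case $\Sigma=\diag(s_{0},s_{1},s_{2},s_{3})$ with $s_{0}\geq s_{1}\geq s_{2}\geq|s_{3}|$ can occur. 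This is the step that lets us avoid the counterexample flagged in the appendix.

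Next, I would push this through to a channel decomposition by right-multiplying by $\Phi_{\text{T}}$ and using $\hat{\mcE}=R\Phi_{\text{T}}$, exactly as in~\eref{eq:lorentz_form_for_channels}: writing $\hat{\mcE}=L_{1}(\alpha\Sigma\Phi_{\text{T}})\Phi_{\text{T}}L_{2}^{\text{T}}\Phi_{\text{T}}=L_{1}\hat{\tilde{\mcE}}\tilde{L}_{2}^{\text{T}}$ with $\alpha=a^{-1}$ the normalization of the \Jami{} state. The factor $\Phi_{\text{T}}$ being proper guarantees $\tilde{L}_{2}^{\text{T}}=\Phi_{\text{T}}L_{2}^{\text{T}}\Phi_{\text{T}}\in\mathrm{SO}^{+}(1,3)$, and $\hat{\tilde{\mcE}}=\alpha\,\diag(s_{0},s_{1},-s_{2},s_{3})$ is diagonal up to a global rescaling. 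The signs and ordering on the $s_{i}$ are then a harmless relabelling: set $\Lambda=\alpha\,\diag(s_{1},-s_{2},s_{3})$ (possibly flipping one $s_{i}$ by absorbing a diagonal $\pm 1$ rotation, which is itself a proper orthochronous Lorentz transformation) to match the ordering $1\geq s_{1}\geq s_{2}\geq|s_{3}|$ announced in the statement.

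The last step is to promote $L_{1}$ and $\tilde{L}_{2}^{\text{T}}$ to the maps $\mcT_{1},\mcT_{2}$. By the spinor map $\mathrm{SL}(2,\mathbb{C})\to\mathrm{SO}^{+}(1,3)$ discussed in Section~\ref{sec:lorentz}, every proper orthochronous Lorentz transformation in the Pauli representation arises from a 1wSLOCC operation $\rho\mapsto X\rho X^{\dagger}$ with $X\in\mathrm{SL}(2,\mathbb{C})$, which is a CPTP rank-one CP map after the implicit normalization that the spinor map absorbs into $\alpha$. Since $L_{1}$ and $\tilde{L}_{2}^{\text{T}}$ are invertible (they lie in a group), their inverses also correspond to rank-one CP maps $\mcT_{1},\mcT_{2}$, and applying them on the left and right yields
\begin{equation*}
\mcT_{1}\mcE\mcT_{2}\;=\;\alpha\left(\begin{array}{cc} 1 & \vec{0}^{\text{T}} \\ \vec{0} & \Lambda \end{array}\right),
\end{equation*}
which is the claimed form (the trivial first row/column comes from the fact that the Lorentz diagonal form preserves the ``time direction'' up to the overall scale $\alpha$, and that the off-diagonal $\vec{t}$-block of $\hat{\mcE}$ vanishes in the diagonal case).

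The main obstacle I expect is the careful justification that the full Kraus rank hypothesis genuinely rules out the pathological form of~\eref{eq:state_normal_form_singular} and, more delicately, that the ordering/sign prescription $1\geq s_{1}\geq s_{2}\geq|s_{3}|$ can always be achieved without destroying the rank-one CP character of $\mcT_{1},\mcT_{2}$: this amounts to checking that the required permutations and sign flips of the $s_{i}$ are realised by diagonal elements of $\mathrm{SO}^{+}(1,3)$ (and hence by unitary conjugations on each side), which can be absorbed into $L_{1}$ and $\tilde{L}_{2}^{\text{T}}$ without leaving the group. Everything else is straightforward bookkeeping with the identification $\hat{\mcE}\Phi_{\text{T}}=R$ and the observation that $\Phi_{\text{T}}$ is itself proper orthochronous.
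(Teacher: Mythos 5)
Your proposal is correct and follows essentially the same route as the paper: apply the state-level Lorentz decomposition $R=L_1\Sigma L_2^{\text{T}}$ of Ref.~\cite{Verstraete2001} to the \Jami{} state, note that the non-diagonal normal form~\eref{eq:state_normal_form_singular} forces a Kraus-rank-deficient Choi matrix (rank $\leq 3$, preserved under the invertible 1wSLOCC conjugation) and is therefore excluded by the full-Kraus-rank hypothesis, then transfer back to channels via $\hat\mcE=R\Phi_{\text{T}}$ and realize $L_1$, $\tilde L_2^{\text{T}}$ and their inverses as rank-one CP maps $\rho\mapsto X\rho X^\dagger$ through the spinor map. The only loose ends (the "matrix rank" versus Choi/Kraus rank phrasing, and the sign/ordering bookkeeping for the $s_i$) are handled at the same level of informality as in the paper itself and do not constitute gaps.
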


The channel $\mcT$ is called the Lorentz normal form of the channel $\mcE$. For unital qubit channels $D$ coincides with $\Lambda$.


\section{Representation of Gaussian quantum channels} 
\label{sec:gqc}
In this section we start from two ans\"atze, that put together with the
Gaussian functional form considered in Ref.~\cite{PazSupplementary}, lead to
the complete set of functional forms in position state representation of
one-mode Gaussian channels.

We will show that only two possible forms of \dgqc{} hold according to
\textit{trace preserving} (\tp{}) and \textit{hermiticity preserving} (\hp{})
conditions. The one corresponding to~\eref{eq:typeII} is one of these, as
expected.  Later on we will impose \textit{complete positivity} in order to
have valid \gqc{}, \ie{} \textit{complete positive and trace preserving}
(\cptp{}) Gaussian operations.

Following definition~\ref{def:gqc}, those channels can be characterized by how
they act over Gaussian states.  It is well known that the action of GQCs on
Gaussian states is described by \textit{affine
transformations}~\cite{Reviewquantuminfo}. Let $\mcG$ be a GQC defined by a
tuple such that $\mcG=\mcG\left( \mathbf{T}, \mathbf{N}, \vec \tau \right)$,
where $\mathbf{T}$ and $\mathbf{N}$ are $2\times 2$ real matrices with
$\mathbf{N}=\mathbf{N}^\text{T}$~\cite{Reviewquantuminfo}. The transformation
acts on Gaussian states according to 
$$
\mcG\left( \mathbf{T},
\mathbf{N}, \vec \tau
\right)\left[S\left(\sigma,\vec d\right)\right]=S\left( \mathbf{T}\sigma
\mathbf{T}^{\text{T}} +\mathbf{N}, \mathbf{T} \vec d +\vec \tau \right).
$$
In
the particular case of closed systems, where the system is governed by a
Gaussian unitary, we have that $\mathbf{N}=\mathbf{0}$ and $\mathbf{T}$ is a
symplectic matrix.
\subsection{Possible functional forms of \dgqc{} operations} 
\label{sec:ptp}     
Let us introduce the ans\"atze for the possible forms of
\gqc{} in the
position representation, to perform the black-box characterization. Following
\eref{def:def_2} and taking the continuous variable representation of
difference and sum coordinates, the trace becomes an integral over position
variables of the environment. Then we end up with a Fourier transform of a
multivariate Gaussian. Since the Fourier transform of a Gaussian is
again a Gaussian (unless there are singularities in the coefficients, as in the
example of \eref{eq:typeII}), the result of the Fourier transform for one mode
can have the following structures: a Gaussian form [\eref{eq:gf}], a Gaussian
form multiplied with one-dimensional delta or a  Gaussian form multiplied by a
two-dimensional delta. No more deltas are allowed given that there are only two
integration variables when applying the channel, see \eref{eq:propagacion}.
Thus, in order to start with the black-box
characterization, we shall propose the following general Gaussian operations
with one and two deltas, respectively
\begin{align}
J_\text{I}(x_f,r_f;x_i,r_i)&=
  \mcN_\text{I} \delta(\vec \alpha^{\text{T}}
     \vec v_f+\vec \beta^{\text{T}} \vec v_i)e^{\Sigma(x_f,x_i;r_f,r_i)}, 
 \label{eq:deltaop1} \\
J_\text{II}(x_f,r_f;x_i,r_i)&=\mcN_\text{II} \delta (\mathbf{A}\vec{v}_f-\mathbf{B} \vec v_i)e^{\Sigma(x_f,x_i;r_f,r_i)}, 
\label{eq:deltaop2}
\end{align}
with $\vec{v}_{i,j}=(r_{i,j},x_{i,j})$, and $\mcN_\text{I,II}$ are
normalization constants. Coefficient arrays $\mathbf{A}$, $\mathbf{B}$,
$\vec \alpha$, and $\vec \beta$ have real entries since initial and final
coordinates must be real. Finally, the exponent reads:
\begin{multline*}
\Sigma(x_f,x_i;r_f,r_i)=\Imi \Big( b_1 x_f r_f+b_2x_f r_i+b_3x_ir_f
 +b_4x_ir_i+c_1x_f
+c_2x_i
  \Big)\\
-a_1 x_f^2-a_2 x_fx_i-a_3x_i^2 -e_1r_f^2-e_2r_fr_i-e_3r_i^2 -d_1r_f-d_2r_i.
\end{multline*}
They provide, together with \eref{eq:gf} all possible ans\"atze for \gqc{}.
\subsection{Hermiticity and trace preserving conditions}  
Before studying CPTP conditions it is useful to simplify expressions of
equations (\ref{eq:deltaop1}) and (\ref{eq:deltaop2}). To do this we use the fact that linear CPTP operations preserve hermiticity and trace. 
For channels of continuous variable systems in the position state representation, $J(q_f,q_f';q_i,q_i')$, \hp{} condition is derived as follows,
\begin{align}
\rho_f(q'_f,q_f)^*&=\int_{\mathbb{R}^2}dq_i dq'_i J(q_f',q_f;q_i,q_i')^*\rho_i(q_i,q_i')^*\nonumber\\
&=\int_{\mathbb{R}^2} dq_i dq_i' J(q_f',q_f;q_i',q_i)^*\rho_i(q_i,q_i')\nonumber\\
&=\rho_f(q_f,q'_f),
\end{align}
where the last equality holds if
$$J(q_f,q'_f;q_i,q'_i)=J(q'_f,q_f;q'_i,q_i)^*.$$
Using sum and difference coordinates, \hp{} becomes
\begin{equation}
J(-x_f,r_f;-x_i,r_i)=J(x_f,r_f;x_i,r_i)^{*}.
\label{eq:hp_gaussian}
\end{equation}
Following this equation and comparing exponents of the both sides of the last equations, it is easy to note that the coefficients $a_n$, $b_n$,
$c_n$, $e_n$ and $d_n$ must be real.
Concerning the delta factors, in~\eref{eq:hp_gaussian} we end up with
expressions like 
$$\delta\left(\alpha_1 x_f +\alpha_2 x_i+\beta_1 r_f +\beta_2 r_i\right)=\delta\left(-\alpha_1 x_f -\alpha_2 x_i+\beta_1 r_f +\beta_2 r_i\right)$$
for both cases. Therefore the equality holds for
\eref{eq:deltaop1} only for two possible combinations of variables: i)
$\delta(\alpha x_f-\beta x_i) $ and
ii) $\delta(\alpha r_f-\beta r_i) $. For the case of \eref{eq:deltaop2}, equality holds only for iii) $
\delta(\gamma r_f-\eta r_i)\delta(\alpha x_f-\beta x_i)$.
Let us now analyze the trace preserving condition (\tp{}), since the trace of $\rho_f$ in sum and difference coordinates is 
\begin{align*}
\tr \rho_f &=\int_\mathbb{R} dr_f' \rho_f(x_f=0,r_f')\\
&=\int_\mathbb{R} dr_f' dr_i dx_i J(x_f=0,r_f';x_i,r_i)\rho_i(x_i,r_i)\\
&=\int_\mathbb{R} dr_i \rho_i(x_i=0,r_i).
\end{align*}
To fulfill the last equality, the following must be accomplished
\begin{equation}
\int_{\mathbb{R}} dr_f' J(x_f=0,r_f';x_i,r_i)=\delta (x_i).
\label{eq:tp}
\end{equation} 
This condition immediately discards ii) from the above combinations of deltas,
thus we end up with cases i) and iii). For case i) \tp{} reads:
\begin{equation}
\mcN_\text{I}\int dr_f\delta(-\beta x_i)e^{\Sigma}
    =\frac{\mcN_\text{I}}{|\beta|}\sqrt{\frac{\pi}{e_1}}\delta(x_i)
	e^{\left(\frac{e_2^2}{4e_1}-e_3\right)r_i^2},
\end{equation}
thus, the relation between the coefficients assumes the form
\begin{equation}
\frac{e_2^2}{4e_1}-e_3=0, d_1=0,d_2=0,
\label{eq:trace_preserving_form1}
\end{equation} 
and the normalization constant $\mcN_\text{I}=|\beta |\sqrt{\frac{e_1}{\pi}}$ with
$\beta\neq 0$ and $e_1>0$.  For case iii) the trace-preserving condition reads
\begin{equation*}
\mcN_\text{II}
\int dr_f \delta(\gamma r_f-\eta r_i)\delta(-\beta x_i)e^{\Sigma} \\
	=\frac{\mcN_\text{II}}{|\beta \gamma|}\delta(x_i)
	e^{-\left(e_1(\frac{\eta}{\gamma})^2+e_2\frac{\eta}{\gamma}+e_3\right)r_i^2-\left(d_1\frac{\eta}{\gamma}+d_2\right) r_i}.
\end{equation*}
Thus, the following relation between $e_n$ and $d_n$ coefficients must be fulfilled:
\begin{equation}
e_1\Bigg(\frac{\eta}{\gamma}\Bigg)^2+e_2\frac{\eta}{\gamma}+e_3=0, \ \ d_1\frac{\eta}{\gamma}+d_2=0,
\label{eq:restricII}
\end{equation}
with $\gamma,\beta \neq 0$ and $\mcN_\text{II}=|\beta \gamma|$. In the particular case of $\eta=0$, \eref{eq:restricII} is reduced to $e_3=d_2=0$. 
As expected from the analysis of limits above, we showed that \dgqc{}'s admit quadratic terms in $r_{i,j}$. 

\subsection{Complete positivity conditions} 
Up to this point we have \textit{hermitian and trace preserving Gaussian
operations}; to derive the remaining CPTP conditions, it is useful to write its
Wigner's function and Wigner's characteristic function.
The representation of the Wigner's characteristic function reads
\begin{equation}
\chi(\vec k)=\exp\left[ -\frac{1}{2}\vec k^{\text{T}}\left( \Omega \sigma \Omega^{\text{T}}\right)\vec k-\Imi\left( \Omega \langle \hat x \rangle \right)^{\text{T}} \vec k \right]
\label{eq:cha}
\end{equation}
and its relation with Wigner's function:
\begin{align}
W(\mathbf{x})&=\int_{\mathbb{R}^{2}} d\vec x e^{-\Imi \vec{x}^{\text{T}}\Omega \vec k}\chi\left(\vec k\right)\\
&=\int_{\mathbb{R}}e^{\Imi p x} dx \left.\left\langle  r-\frac{x}{2} \right. \right| \left. \hat \rho \left|  r+\frac{x}{2} \right. \right\rangle,
\label{eq:wigners}
\end{align}
where $\vec k=\left(k_1,k_2 \right)^{\text{T}}$, $\vec
x=\left(r,p\right)^{\text{T}}$ and $\hbar=1$ (we are using natural units).
Using the previous equations to construct Wigner and Wigner's
characteristic functions of the initial and final states, and substituting them
in the equation~\ref{eq:propagacion}, it is straightforward to get the propagator in the Wigner's
characteristic function representation:
\begin{align}
\tilde J \left( \vec k_f,\vec k_i \right)=\int_{\mathbb{R}^6} d\Gamma K(\vec l) J(\vec v_f,\vec v_i),
\label{eq:changeofrepresentation}
\end{align}
where the transformation kernel reads
$$K(\vec{l})= \frac{1}{(2\pi)^3}e^{  \left[ \Imi \left( k_2^f r_f-k_1^f p_f -k_2^i r_i+k_1^i p_i -p_i x_i +p_f x_f \right)  \right]  },$$ 
with 
$d\Gamma=d p_f d p_i dx_f dx_i dr_f dr_i$ and 
$\vec l=\left(p_f, p_i, x_f, x_i, r_f,r_i \right)^{\text{T}}$.
By elementary integration of ~\eref{eq:changeofrepresentation} one can
show that for both cases
\begin{equation}
\tilde J_{\text{I,III}}\left( \vec k_f, \vec k_i \right)= \delta \left ( k_1^i -\frac{\alpha}{\beta} k_1^f  \right) \delta \left( k_2^i -\vec\phi^{\text{T}}_{\text{I,III}} \vec k_f  \right) e^{P_\text{I,III}(\vec k_f)},
\label{eq:formcharI}
\end{equation}
where $P_{\text{I,III}}(\vec k_f)=\sum_{i,j=1}^2 P^{(\text{I,III})}_{ij} k^f_i
k^f_j+\sum_{i=1}^2 P^{(\text{I,III})}_{0i}k^f_i$ with
$P^{(\text{I,III})}_{ij}=P^{(\text{I,III})}_{ji}$.
For case i) we obtain
\begin{align}
P^\text{(I)}_{11}&=-\left(\left(\frac{\alpha}{\beta}\right)^2\left(a_3+\frac{b_3^2}{4 e_1}\right)+\frac{\alpha}{\beta} \left(a_2+\frac{1}{2} \frac{b_1 b_3}{e_1}\right)+a_1+\frac{b_1^2}{4 e_1}\right),\nonumber\\
P^\text{(I)}_{12}&=-\left(\frac{\alpha}{\beta}\frac{b_3}{2 e_1}+\frac{b_1}{2 e_1}\right),\nonumber\\
P^\text{(I)}_{22}&=-\frac{1}{4 e_1}.
\label{eq:deltaI_Ps}
\end{align}
For case iii) we have
\begin{align}
P^\text{(III)}_{11}&= -\left(\left(\frac{\alpha}{\beta}\right)^2a_3 +\frac{\alpha}{\beta} a_2 +a_1\right),\nonumber\\
P^\text{(III)}_{12}&=P^\text{(III)}_{22}=0.
\label{eq:deltaII_Ps}
\end{align}
And for both cases we have $P^{(\text{I,III})}_{01}=\Imi \left( \frac{\alpha}{\beta}c_2 +c_1\right)$ and $P^{(\text{I,III})}_{02}=0$.
Vectors $\vec \phi$ are given by
\begin{align}
\vec\phi_{\text{I}}&=\left(\frac{\alpha}{\beta}\left( b_4-\frac{b_3 e_2}{2 e_1}\right)-\frac{b_1 e_2}{2 e_1}+b_2,-\frac{e_2}{2 e_1}\right)^\text{T},\nonumber\\
\vec\phi_{\text{III}}&=\left( \frac{\alpha}{\beta}\frac{\eta}{\gamma}b_3+\frac{\alpha}{\beta}b_4 +\frac{\eta}{\gamma}b_1 +b_2,\frac{\eta}{\gamma} \right)^\text{T}.
\label{eq:phis}
\end{align}

We are now in position to write explicitly the conditions for complete positivity. 
Having a Gaussian operation
characterized by
$\left( \mathbf{T}, \mathbf{N}, \vec \tau \right)$, 
the CP condition can be expressed in terms of the matrix
\begin{equation}
\mathbf{C}=\mathbf{N}+\Imi\Omega -\Imi\mathbf{T}\Omega \mathbf{T}^{\text{T}},
\label{eq:ccp}
\end{equation}
where $\Omega=\left( \begin{array}{cc}
0 & 1 \\ 
-1 & 0
\end{array}  \right)$ is the symplectic matrix. An operation $\mcG\left(
\mathbf{T}, \mathbf{N}, \vec \tau \right)$ is CP if and only if $\mathbf{C}\geq
0$~\cite{cptp,Reviewquantuminfo}. 
Applying the propagator on a test characteristic function, \eref{eq:cha}, it is
easy compute the corresponding tuples. For both cases we get:
\begin{align}
\mathbf{N}_\text{I,III}&=2\left( \begin{array}{cc}
-P_{22} & P_{12} \\ 
P_{12} & -P_{11}
\end{array}  \right),\nonumber\\
\vec \tau_{\text{I,III}}&=\left(0,\Imi{} P^{(\text{I,III})}_{01} \right)^\text{T},
\label{eq:nsandtaus}
\end{align}
while for case i) matrix $\mathbf{T}$ is given by
\begin{equation}
\mathbf{T}_\text{I}=\left(
\begin{array}{cc}
 \frac{e_2}{2 e_1} & 0 \\
 \vec \phi_\text{I,1} & -\frac{\alpha }{\beta } \\
\end{array}
\right),
\label{eq:tupleI}
\end{equation}
where $\vec \phi_\text{I,1}$ denotes the first component of vector $\vec \phi_\text{I}$, see \eref{eq:phis}.
The complete positive condition is given by the inequalities raised from the eigenvalues of matrix~\eref{eq:ccp}:
\begin{align}
\pm\frac{\sqrt{\alpha ^2 e_2^2+4 \alpha  \beta  e_2 e_1+4 \beta ^2 e_1^2
\left(4 {P_{12}^{(\text{I})}}^2+\left(P_{11}^{(\text{I})}-P_{22}^{(\text{I})}\right)^2+1\right)}}{2
\beta  e_1}
\geq P_{11}^{(\text{I})}+P_{22}^{(\text{I})}.
\label{eq:CPI}
\end{align}
For case iii) matrix $\mathbf{T}$ is
\begin{equation}
\mathbf{T}_\text{III} = \left(
\begin{array}{cc}
 -\frac{\eta }{\gamma } & 0 \\
 \vec \phi_\text{III,1}  & -\frac{\alpha }{\beta } \\
\end{array}
\right),
\label{eq:tupleII}
\end{equation}
and complete positivity conditions read:
\begin{equation}
\pm \frac{\sqrt{(\beta  \gamma -\alpha  \eta )^2+\beta ^2 \gamma ^2 {P_{11}^{(\text{III})}}^2}}{\beta  \gamma }-P_{11}^{(\text{III})}\geq 0.
\label{eq:cpII}
\end{equation}
Note that in both cases the complete positivity conditions do not depend on
$\vec \phi$.


\chapter{Divisibility of quantum channels and dynamical maps}
\label{chap:div}
\begin{flushright} 
\textit{Wine is sunlight, held together by water.  }\\ Galileo Galilei
\end{flushright}
In this chapter we introduce the formal definition of \textit{divisibility of
quantum channels}, inspired by questioning  how can we implement a given
quantum channel via the concatenation of simpler channels.  Later on we define
further types of divisibility by adding extra conditions, such as channels
being infinitesimal divisible and channels belonging to one-parameter
semigroups. These types are physically relevant since both lead to Markovian
dynamical maps~\cite{rivasreview}.  We additionally prove three theorems, which
are the central contributions of this part of the work. Finally, a complete
characterization of channels belonging to one-parameter semigroups that is
given.
\section{Divisibility of quantum maps} 
A quantum channel $\mcE$ is said to be divisible if it can be expressed as
the concatenation of two non-trivial channels,
\begin{definition}[Divisibility]
A linear map $\mcE \in \cptp{}$ is divisible if there exists a decomposition,
\begin{equation}
\mcE=\mcE_2 \mcE_1,
\end{equation}
such that $\mcE_1$ and $\mcE_2$ are both unitary or non-unitary channels.
\label{def:divisibility}
\end{definition}
Notice that this definition ensures that unitary channels are divisible, and
that non-unitary channels must be divisible in non-unitary channels. This
prevents one to consider simple changes of basis as a ``division'' of a given
quantum operation. This type of divisibility, which is the most general and
less restrictive one, defines a set that will be denoted by \Div{}. The set of
indivisible channels is the complement of \Div{} in \cptp{}, therefore it will
be denoted as \Ind{}. Notice that this definition is different to the one given
in Ref.~\cite{cirac} where unitary channels are excluded to be divisible. 

The concept of indivisible channels resembles the concept of prime
numbers, unitary channels play the role of unity (which are not
indivisible/prime), i.e.  a composition of indivisible and a unitary channel
results in an indivisible channel. 

We now introduce three results from Ref.~\cite{cirac} that shall be used
later. We only give the proof for the second for the sake of 
brevity. 
\begin{theorem}[Full Kraus rank channels]
Let $\mcE:\mcT(\mcH)\to \mcT(\mcH)$ be a quantum channel. If it has full Kraus rank, \ie{} $d^2$ with $d=\dim\left( \mcH\right)$, then it is divisible.
\label{thm:divisible}
\end{theorem}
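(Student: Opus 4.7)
The plan is to exploit that the full Kraus rank condition places $\mcE$ in the topological interior of $\cptp{}$ in a very concrete sense: by the discussion following Theorem~\ref{thm:kraus}, having Kraus rank equal to $d^2$ is equivalent to $\tau_\mcE$ being strictly positive-definite. First I would fix any non-trivial Lindblad generator $L$ on $\mcT(\mcH)$ (for instance the generator of the depolarizing semigroup, which has a nonzero dissipator), so that $\mcE_1 := e^{tL}$ is a non-unitary element of $\cptp{}$ for every $t>0$, while its linear inverse $e^{-tL}$, although not CP, is both trace-preserving and hermiticity-preserving (both properties are encoded in $L$ and survive under $t\mapsto -t$).

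Next I would define the candidate second factor $\mcE_2 := \mcE \circ e^{-tL}$ and show it is CPTP for all sufficiently small $t>0$, so that $\mcE = \mcE_2 \mcE_1$ is a genuine splitting. Trace-preservation of $\mcE_2$ is immediate from composing two trace-preserving linear maps. Complete positivity reduces via Choi's theorem to showing $\tau_{\mcE_2}\ge 0$. At $t=0$ the Choi matrix equals $\tau_\mcE$, whose $d^2$ eigenvalues are all strictly positive by hypothesis. Because $\tau_{\mcE_2}$ depends continuously (in fact analytically) on $t$ through the family $e^{-tL}$, the smallest eigenvalue of $\tau_{\mcE_2}$ remains strictly positive on some interval $t\in[0,t_0)$, so $\mcE_2\in\cptp{}$ there.

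Finally I would verify the non-triviality clause of Definition~\ref{def:divisibility}: both factors must be non-unitary (they cannot both be unitary, since $\mcE$ itself is not unitary in the nontrivial case $d\geq 2$; the case $d=1$ is vacuous). The factor $\mcE_1=e^{tL}$ is non-unitary by the choice of $L$. For $\mcE_2$, note that Kraus rank equals the rank of the Choi matrix, which is a lower semicontinuous function of the map; hence for $t$ small enough $\mcE_2$ inherits Kraus rank $\geq 2$ from $\mcE$, and is in particular not unitary.

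The only technical obstacle is the continuity step, which I would make quantitative by an estimate of the form $\|\tau_{\mcE_2(t)}-\tau_\mcE\|\le C\,t$ for small $t$, with $C$ depending on $\|L\|$ and $\|\mcE\|$; combining this with Weyl's perturbation inequality for hermitian matrices then identifies an explicit $t_0$ proportional to $\lambda_{\min}(\tau_\mcE)/C$, which is strictly positive exactly because $\mcE$ has full Kraus rank. All other steps are algebraic book-keeping about trace-preservation and hermiticity-preservation of $e^{\pm tL}$.
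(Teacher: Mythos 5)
Your proof is correct. The thesis itself does not prove this theorem --- it states it as one of three results imported from Ref.~\cite{cirac} and only proves the second one --- and your argument is essentially the standard one from that reference: full Kraus rank means $\tau_\mcE>0$, so $\mcE$ sits in the interior of the cone of completely positive maps and can absorb a small non-CP perturbation $e^{-tL}$ while remaining CPTP, yielding the splitting $\mcE=\bigl(\mcE\circ e^{-tL}\bigr)\circ e^{tL}$ into two non-unitary channels. One small simplification: the lower-semicontinuity-of-rank step at the end is redundant, since you have already shown $\tau_{\mcE_2}>0$ for small $t$, which gives $\mcE_2$ full Kraus rank $d^2>1$ and hence non-unitarity directly.
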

An example of full Kraus rank channel is the total depolarizing channel $\rho\mapsto \one/\dim \left( \mcH \right)$, which maps every state into the maximal mixed one.
\begin{theorem}[Indivisible channels]
Consider the set $\cptp{}_d$ of channels acting on the space of density
matrices of $d\times d$, \ie{} $\mcE:\mcT(\mcH)\to \mcT(\mcH)$ with $d=\dim
\left( \mcH \right)$. The channel with minimal determinant, $\mcE_0\in
\cptp{}_d$, is indivisible.
\end{theorem}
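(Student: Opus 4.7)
The strategy is to argue by contradiction, exploiting the multiplicativity of the determinant under composition together with two key properties of $\det \hat\mcE$ for CPTP maps.

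First, I would invoke (or quickly verify) two preliminary facts. (i) For every $\mcE \in \cptp{}_d$ one has $|\det \hat\mcE| \leq 1$, with equality if and only if $\mcE$ is a unitary conjugation. For $d=2$ this is immediate from Ruskai's decomposition (Theorem~\ref{thm:orthogonal_normal_form}), since $\det \hat\mcE=\lambda_1\lambda_2\lambda_3$ and the CP-tetrahedron~\eref{eq:complete_positivity_qubit} forces $|\lambda_i|\leq 1$, with saturation only at the four Pauli corners; in general dimension it follows from contractivity under the trace norm (Theorem~\ref{thm:contractivity}) applied on the traceless subspace. (ii) The minimum $m:=\min_{\mcE\in\cptp{}_d}\det\hat\mcE$ is \emph{strictly negative}. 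This is most easily seen by exhibiting a single explicit map with negative determinant: for $d=2$, the channel $\mcE[\rho]=\tfrac{1}{3}(\sigma_x\rho\sigma_x+\sigma_y\rho\sigma_y+\sigma_z\rho\sigma_z)$ has $\vec\lambda=(-\tfrac{1}{3},-\tfrac{1}{3},-\tfrac{1}{3})$ and hence $\det=-1/27$; for general $d$ an analogous construction mixing Pauli-like generators yields a negative value.

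Now assume for contradiction that $\mcE_0 = \mcE_2\mcE_1$ with neither $\mcE_1$ nor $\mcE_2$ unitary, i.e.\ $\mcE_0$ is divisible. By multiplicativity
\begin{equation*}
m \;=\; \det\hat\mcE_0 \;=\; \det\hat\mcE_1\cdot\det\hat\mcE_2,
\end{equation*}
and by minimality $\det\hat\mcE_i\geq m$ for $i=1,2$. Since the product equals $m<0$, the two factors must have opposite signs; without loss of generality $\det\hat\mcE_1>0>\det\hat\mcE_2$. From $m\leq\det\hat\mcE_2<0$ we deduce $|\det\hat\mcE_2|\leq |m|$, whence
\begin{equation*}
\det\hat\mcE_1 \;=\; \frac{m}{\det\hat\mcE_2} \;=\; \frac{|m|}{|\det\hat\mcE_2|} \;\geq\; 1.
\end{equation*}
Combined with fact (i) this forces $\det\hat\mcE_1=1$ and hence $\mcE_1$ to be a unitary conjugation, contradicting the assumption that $\mcE_1$ is non-unitary. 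Therefore no such decomposition exists and $\mcE_0 \in \Ind{}$.

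The main obstacle in turning this sketch into a complete proof is the equality clause of fact (i), namely the implication $|\det\hat\mcE|=1\Rightarrow\mcE$ unitary, in arbitrary dimension. For $d=2$ it is an elementary consequence of the Ruskai normal form, but for general $d$ one has to argue via the singular-value decomposition of the superoperator restricted to the traceless sector and rule out any non-unitary channel that saturates contractivity. Everything else in the argument is just arithmetic with the sign and modulus of real numbers in $[-1,1]$.
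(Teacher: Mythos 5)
Your proof is correct and follows essentially the same route as the paper's: multiplicativity of the determinant, the bound $|\det\hat\mcE|\leq 1$ with equality only for unitary conjugations, and the minimality assumption together force one factor of any decomposition to be a unitary, contradicting divisibility. The two auxiliary facts you lean on --- the saturation clause of the determinant bound and the strict negativity of the minimal determinant --- are precisely the ones the paper also imports from the literature rather than proving in situ, so your acknowledged obstacle is not a gap relative to the paper's own argument.
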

\begin{proof}
To prove this we use the fact that channels with negative determinant exist~\cite{cirac} (two examples are given below), and the property of monotonicity of the determinant. 

Let $\mcE\in \cptp{}$ with $\det \mcE<0$ and $\mcE=\mcE_2 \mcE_1$ an arbitrary
division of $\mcE$ with $\mcE_1,\mcE_2\in \cptp{}$. The monotonicity of the
determinant implies the following,
$$ |\det \left( \mcE_2 \mcE_1 \right)| =|\det \mcE_2 ||\det \mcE_1|  \leq |\det
\mcE_1|. $$
Assuming, without loss of generality that $\det \mcE_1<0$ and $\det \mcE_2>0$,
we have that 
$$\det \mcE_1 \det \mcE_2\leq \det \mcE_1.$$
Multiplying both sides by $-1$ we arrive to
$$|\det \mcE_2 ||\det \mcE_1|  \geq |\det \mcE_1|. $$
Therefore, by monotonicity of the determinant, we have 
$$|\det \mcE_2 ||\det \mcE_1| = |\det \mcE_1|, $$
which implies that $\det \mcE_2=1$, \ie{} $\mcE_2$ is an unitary conjugation~\cite{cirac} and $\mcE$ has minimum determinant.
By definition~\ref{def:divisibility} $\mcE$ is indivisible.
\end{proof}
Two examples for the qubit case are the approximate NOT and the approximate
transposition maps:
\begin{align}
\rho &\mapsto \frac{\tr(\rho) \one+\rho^\text{T}}{3}\text{ (approximate transposition),}\nonumber\\
\rho &\mapsto \frac{\tr(\rho)\one -\rho}{3}\text{ (approximate NOT gate),}
\label{eq:NOT_gate}
\end{align}
both have minimal determinant corresponding to $-1/27$, which can be computed 
from their matrix representation.
\begin{theorem}[Unital Kraus rank three channels]
A unital qubit channel is indivisible if and only if it has Kraus rank equal to
three.
\label{thm:unital_indivisible}
\end{theorem}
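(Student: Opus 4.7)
The plan is to prove the two implications separately. For the forward direction I will prove the contrapositive by showing that unital qubit channels of Kraus rank $1$, $2$, or $4$ are divisible. Kraus rank $4$ is handled immediately by Theorem~\ref{thm:divisible}. For Kraus rank $1$, $\mcE$ is a unitary conjugation $\mcU_V$, and one can write $\mcE = \mcU_{V_1}\mcU_{V_2}$ for any factorization $V = V_1 V_2$ of the underlying unitary; both factors are unitary, satisfying the divisibility criterion. For Kraus rank $2$, Ruskai's decomposition (Theorem~\ref{thm:orthogonal_normal_form}) brings $\mcE$ to the form $\mcU_1 \mcD \mcU_2$ with $\mcD$ lying on an edge of the Pauli tetrahedron; up to relabeling of Pauli operators, $\mcD[\rho] = p\rho + (1-p)\sigma_z \rho \sigma_z$ with $p \in (0,1)$, which factors as $\mcD = \mcD_{q_1}\mcD_{q_2}$ whenever $p = q_1 q_2 + (1-q_1)(1-q_2)$. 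Such $q_1, q_2 \in (0,1)$ exist for every $p$, both factors have Kraus rank $2$ (hence are non-unitary), and reabsorbing the outer unitaries yields a decomposition of $\mcE$ into two non-unitary channels.

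For the reverse direction I will argue by contradiction. Suppose $\mcE$ is unital with Kraus rank $3$ and $\mcE = \mcE_2 \mcE_1$ with both factors CPTP and non-unitary. Applying Ruskai's decomposition to $\mcE$ and absorbing the outer unitaries into $\mcE_1, \mcE_2$, I may assume $\mcE$ is in diagonal normal form sitting in the interior of a face of the tetrahedron; by permuting Pauli labels, assume the face $1 + \lambda_1 + \lambda_2 + \lambda_3 = 0$, so that $\mcE$ is a strictly positive convex combination of $\mcX, \mcY, \mcZ$. Inspecting~\eref{eq:choi_qubit} with $\vec{\gamma} = \vec{0}$ shows that $\tau_\mcE$ has a one-dimensional kernel spanned by the maximally entangled state $\ket{\Omega} = (\ket{00} + \ket{11})/\sqrt{2}$, which via the \Jami{} isomorphism corresponds to the identity matrix; consequently every Kraus operator of $\mcE$ is traceless. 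Writing the induced Kraus operators as products $L_j K_i$, tracelessness of each product is equivalent to $\tr(L_j K_i) = \langle L_j^\dagger, K_i \rangle_\mathrm{HS} = 0$ for all $i,j$, so the Kraus-spans $V_\ell := \mathrm{span}\{K_i\}$ and $V_r := \mathrm{span}\{L_j\}$ satisfy $V_r^\dagger \perp V_\ell$. The non-unitarity assumption gives $\dim V_\ell, \dim V_r \geq 2$, and the inclusion $V_r^\dagger \subseteq V_\ell^\perp$ inside $\mathbb{C}^{2\times 2}$ forces $\dim V_\ell + \dim V_r \leq 4$; hence $\dim V_\ell = \dim V_r = 2$.

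The argument then hinges on the following linear-algebraic key lemma: for any two $2$-dimensional subspaces $V_\ell, V_r \subseteq \mathbb{C}^{2\times 2}$ with $V_r^\dagger \perp V_\ell$, one has $\dim V_r V_\ell \leq 2$, where $V_r V_\ell := \mathrm{span}\{LK : L \in V_r, K \in V_\ell\}$. I plan to establish it through the $\mathrm{GL}(2) \times \mathrm{GL}(2)$ action $V_\ell \mapsto A V_\ell B^{-1}$, $V_r \mapsto B V_r A^{-1}$, which preserves both the orthogonality condition and $\dim V_r V_\ell$. Under this action every $2$-dimensional subspace of $\mathbb{C}^{2\times 2}$ is equivalent to one of four canonical forms: $\mathrm{span}\{\one, \sigma_z\}$, $\mathrm{span}\{\one, \sigma_+\}$, or one of two families of subspaces of singular matrices sharing a common right or left kernel (the classification follows from whether the subspace contains an invertible matrix and, when it does, from the Jordan form of a non-scalar representative). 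In each case $V_r$ is forced to equal $(V_\ell^\perp)^\dagger$ by the dimension count, and a direct computation yields $\dim V_r V_\ell \leq 2$; in the singular-matrix cases one in fact obtains $\dim V_r V_\ell \leq 1$, in which case the composed channel would be unitary or zero, both impossible here. Consequently the Kraus rank of $\mcE = \mcE_2 \mcE_1$ is bounded by $\dim V_r V_\ell \leq 2$, contradicting the assumed Kraus rank $3$. The main obstacle is the case-by-case verification of the key lemma together with the bookkeeping needed to ensure that the various normalizations (Ruskai's form, permutation of tetrahedron faces, Kraus-operator rotations, and the $\mathrm{GL}\times\mathrm{GL}$ action used as a purely linear-algebra device) are consistent, so that the contradiction is derived for an arbitrary putative decomposition rather than a specific one.
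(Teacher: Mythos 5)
Your proof is correct, but it is a genuinely different route from the paper's: the thesis does not actually prove Theorem~\ref{thm:unital_indivisible} at all — it states it as a restriction of Theorem~23 of Wolf et al., whose proof goes through the Lorentz normal decomposition, and the restriction to unital channels is made precisely because the thesis shows that decomposition is incomplete in general. Your argument is self-contained and avoids that machinery entirely. The forward direction (rank $1$, $2$, $4$ are divisible) matches what one would extract from the tetrahedron picture, and your explicit factorization of an edge channel $\mcD=\mcD_{q_1}\mcD_{q_2}$ via $(2q_1-1)(2q_2-1)=2p-1$ is clean. The reverse direction is where your approach adds real content: reducing to the face opposite the identity, reading off that $\ker\tau_\mcE=\mathrm{span}\{\ket\Omega\}$ forces $\tr(L_jK_i)=0$ for every pair of induced Kraus operators, and then bounding $\dim(V_rV_\ell)\le 2$ for trace-orthogonal two-dimensional Kraus spans via the $\mathrm{GL}(2)\times\mathrm{GL}(2)$ classification is an elementary linear-algebra substitute for the Lorentz-normal-form argument, and I verified the four canonical cases ($\mathrm{span}\{\one,\sigma_z\}$, $\mathrm{span}\{\one,\sigma_+\}$, and the two singular families) all give $\dim(V_rV_\ell)\le 2$. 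What your approach buys is a proof of the unital case that does not inherit the gap identified in appendix~\ref{sec:normal_form}; what it gives up is the generality of the cited theorem, which (modulo that gap) covers non-unital channels. One phrasing to tighten: moving an arbitrary face (or edge) of the tetrahedron to the one opposite the identity is not achieved by ``permuting Pauli labels'' alone but by composing with a Pauli unitary conjugation; since you have already established that unitaries may be absorbed into the factors without affecting non-unitarity, this is a wording issue rather than a gap.
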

This is a restricted version of theorem 23 of Ref.~\cite{Wolf2008}, where
authors proved the theorem for any qubit channel instead of only unital ones.
Since their proof rely on the validity of the Lorentz normal decomposition for
channels, we have written here a restricted version, where Lorentz normal form
is equivalent to the special orthogonal normal form (see theorem~\ref{thm:Lorentz} and
its discussion).

These results can be used immediately to identify the divisibility character of
unital qubit channels, see~\fref{fig:tetra}. The faces of the tetrahedron
(without edges) correspond to indivisible channels, in particular the center of
every face corresponds to channels with minimal determinant. The body (full
Kraus rank channels) contain divisible channels. 
\subsection{Subclasses of divisible maps} 
\subsubsection{Divisibility of quantum dynamical maps} 
We motivate the extra conditions to define new types of divisibility on
the concept of Markovian process.
In subsection~\ref{sec:classical_analog} we have introduced the definition of
Markovian process and its consequences at the level of propagators of one-point
probabilities, see \eref{eq:classical_divisibility}. Based on this, we
introduce the concept of CP-divisibility of quantum dynamical maps, which is often used as definition of Markovianity in the quantum realm~\cite{rivasreview}.
\begin{definition}[CP-divisible quantum dynamical maps]
Consider a quantum dynamical map $\mcE_{\left( t,0 \right)}:\mcT(\mcH)\to \mcT(\mcH)$ with $t\in \mathbb{R}^+$. It is CP-divisible in the interval $[0,t] \subset \mathbb{R}^+$ if for every decomposition of the form
$$\mcE_{\left(t,0\right)}=\mcE_{\left(t,s\right)}\mcE_{\left(s,0\right)},$$
$\mcE_{\left(t,s\right)}$ is a quantum channel for every $s\in (0,t).$
\label{def:cp_div_dyn}
\end{definition}

A remarkable theorem on CP-divisible maps is the following~\cite{kossa,kossa2,Gorini1976,lindblad,rivasreview},
\begin{theorem}[Gorini-Kossakowski-Susarshan-Lindblad]
An operator $L_t$ is the generator of a CP-divisible process if and only if it can be written in the following form:
\begin{equation}
L_t[\rho]= -\rmi [H(t),\rho]
 +\sum_{i,j} G_{i j} (t)
     \left( 
         F_{i}(t)\rho F^{\dagger}_{j}(t)
             -\frac{1}{2} \lbrace F^{\dagger}_{j}(t) F_{i}(t),\rho \rbrace 
     \right),
\label{eq:time_dependent_lindblad}
\end{equation}
where $G$ is hermitian and positive semidefinite, $H(t), F_k(t)\in \mathbb{C}^{d \times d}$ are time-dependent operators acting on $\mcH$, with $H(t)$ hermitian for every $t\in \mathbb{R}^+$, and $d=\dim\left(\mcH\right)$.
\label{thm:KKSG}
\end{theorem}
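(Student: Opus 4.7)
The plan is to treat both directions using the machinery already developed: the general structural theorem for hermiticity- and trace-preserving generators (Theorem~\ref{thm:linblad_simple_form}), the Conditional Complete Positivity criterion (Proposition~\ref{prop:ccp}), and the time-independent Lindblad theorem (Theorem~1). The whole argument reduces to showing that, pointwise in $t$, the positivity condition $G(t)\geq 0$ is equivalent to the corank-1 projection $\omega_\perp \tau_{L_t}\omega_\perp$ being positive semidefinite, and then transferring this local condition to divisibility of the dynamical map through a standard time-ordering/discretisation argument.

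For the ``only if'' direction, I would start from a CP-divisible dynamical map $\mcE_{(t,s)}$ and define the generator by $L_t=\lim_{\epsilon\to 0^+}\epsilon^{-1}(\mcE_{(t+\epsilon,t)}-\id)$, which exists under the smoothness assumption implicit in ``generator''. Because $\mcE_{(t+\epsilon,t)}\in\cptp{}$ is hermiticity- and trace-preserving, so is $L_t$, so Theorem~\ref{thm:linblad_simple_form} already delivers the desired algebraic form with some hermitian matrix $G(t)$. The remaining content is the sign condition $G(t)\geq 0$. Here I would invoke Proposition~\ref{prop:ccp}: CP-divisibility gives $\tau_{\mcE_{(t+\epsilon,t)}}\geq 0$, hence $\omega_\perp\tau_{\mcE_{(t+\epsilon,t)}}\omega_\perp\geq 0$, and because $\omega_\perp\omega\,\omega_\perp=0$ the limit
\begin{equation*}
\omega_\perp\tau_{L_t}\omega_\perp=\lim_{\epsilon\to 0^+}\frac{1}{\epsilon}\,\omega_\perp\tau_{\mcE_{(t+\epsilon,t)}}\omega_\perp
\end{equation*}
is a limit of positive semidefinite matrices, hence positive semidefinite. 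Proposition~\ref{prop:ccp} then gives Lindblad form, i.e.\ $G(t)\geq 0$.

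For the ``if'' direction, assume $L_t$ has the stated form with $G(t)\geq 0$ for every $t$. I would build $\mcE_{(t,s)}$ as a time-ordered exponential by the standard discretisation: partition $[s,t]$ into $n$ subintervals of width $\Delta=(t-s)/n$ and define
\begin{equation*}
\mcE^{(n)}_{(t,s)}=\prod_{k=n-1}^{0}\exp\!\bigl(\Delta\,L_{s+k\Delta}\bigr),
\end{equation*}
with the product time-ordered from right to left. Each factor $\exp(\Delta L_{s+k\Delta})$ is a CPTP map by the time-independent GKSL theorem (Theorem~1), because $L_{s+k\Delta}$ has Lindblad form with $G(s+k\Delta)\geq 0$. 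Since \cptp{} is closed under composition, $\mcE^{(n)}_{(t,s)}\in\cptp{}$; the cone of CP maps is closed, so the limit $\mcE_{(t,s)}=\lim_{n\to\infty}\mcE^{(n)}_{(t,s)}$, which solves the master equation $\partial_t\mcE_{(t,s)}=L_t\mcE_{(t,s)}$ with $\mcE_{(s,s)}=\id$, is CPTP. Applying the same construction on any subinterval $[s',t]\subset[s,t]$ yields that $\mcE_{(t,s')}\in\cptp{}$, which is precisely CP-divisibility per Definition~\ref{def:cp_div_dyn}.

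The only substantive obstacle is the ``only if'' direction's use of the pointwise limit: one must justify that $L_t$ exists and is continuous enough for the derivative of $\omega_\perp\tau_{\mcE_{(t+\epsilon,t)}}\omega_\perp$ at $\epsilon=0^+$ to equal $\omega_\perp\tau_{L_t}\omega_\perp$. In the finite-dimensional, smooth setting this is automatic, so no technicalities beyond those already implicit in the definition of a dynamical map are needed; the infinite-dimensional or merely measurable case would require the full apparatus of Ref.~\cite{Evans1977}, which I would cite rather than reproduce.
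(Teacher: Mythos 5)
Your ``only if'' argument is essentially the paper's own proof: the paper also defines the instant map $\mcE_{(t+\epsilon,t)}=\id+\epsilon L_t+\mcO(\epsilon^2)$, invokes Theorem~\ref{thm:linblad_simple_form} for the algebraic form, and extracts $G(t)\geq 0$ from positivity of the Choi matrix $\tau_{t,\epsilon}=\omega+\epsilon\,\tau_{L_t}+\mcO(\epsilon^2)$ restricted to the orthogonal complement of $\ket{\Omega}$, concluding via Proposition~\ref{prop:ccp}; your phrasing in terms of $\omega_\perp\tau_{\mcE_{(t+\epsilon,t)}}\omega_\perp$ being a family of positive semidefinite matrices whose rescaled limit is $\omega_\perp\tau_{L_t}\omega_\perp$ is just a cleaner packaging of the paper's case split on $\langle\varphi|\Omega\rangle$. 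The one genuine difference is that you also prove the ``if'' direction by the time-ordered product of maps $\exp(\Delta L_{s+k\Delta})$, each CPTP by the time-independent semigroup theorem, with closedness of \cptp{} under composition and limits; the paper's proof stops after establishing $G(t)\geq 0$ and leaves that converse implicit, so your version is the more complete of the two, at the modest cost of needing the regularity of $t\mapsto L_t$ that you correctly flag.
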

In Ref.~\cite{rivasbook} a proof is given starting from the Kraus representation of quantum dynamical maps and the definition of CP-divisibility. Here we will give a simpler proof resorting to theorem~\ref{thm:linblad_simple_form}. 
\begin{proof}
Notice that for each time $t$ we can define the ``instant'' map
$\mcE_{\left(t+\epsilon,t \right)}[\rho]=\rho +\epsilon
L_t[\rho]+\mcO(\epsilon^2)$, with $\epsilon>0$, therefore the hermiticity
preserving of $L_t$ follows from the hermiticity preserving of $\mcE_{(t,0)}$.
Also note that we can always choose the same traceless basis, $\lbrace
F_i\rbrace_{i=0}^{d^2-1}$, to write~\eref{eq:time_dependent_lindblad}, such
that the time dependence is dropped only in $G(t)\in \mathbb{C}^{d^2\times
d^2}$ and $H(t)$. By theorem~\ref{thm:linblad_simple_form}, $L_t$ has the form
stated in~\eref{eq:time_dependent_lindblad}, the only thing that remains to
prove is that $G(t)\geq 0$ for every $t$. To do this we construct the \Jami{}
matrix of the instant map, $\tau_{t,\epsilon}=\omega +\epsilon
\left(\id_{d^2}\otimes L_t\right)[\omega]+\mcO(\epsilon^2)$. We remind the reader that
$\omega=\proj{\Omega}{\Omega}$, where $\ket{\Omega}$ is the Bell state between
two copies of $\mathbb{C}^d$. Now we test positive-semidefinitiveness of
$\tau_{t,\epsilon}$,
\begin{align*}
\bra{\varphi}\tau_{t,\epsilon}\ket{\varphi}&=\langle \varphi | \Omega \rangle \langle \Omega |  \varphi \rangle+\epsilon \langle \varphi |\left( \id_{d^2}\otimes L_t\right)[\proj{\Omega}{\Omega}]  |  \varphi \rangle +\mcO(\epsilon^2)
\geq 0,
\end{align*}
$\forall \ket{\varphi} \in \mathbb{C}^{d^2}$.
The inequality always holds for any $\langle \varphi | \Omega \rangle\neq 0$
and $\epsilon>0$. For $\langle \varphi | \Omega \rangle=0$ we have that for
$\epsilon>0$ the inequality $\langle \varphi |\left( \id_{d^2}\otimes
L_t\right)[\proj{\Omega}{\Omega}]  |  \varphi \rangle\geq 0$ must be
accomplished, \ie{} $\omega_\perp \tau_L \omega_\perp \geq 0$ (conditional
complete positivity). Therefore by proposition~\ref{prop:ccp}, one has that
$G(t)\geq 0$.
\end{proof}
Analogously to CP-divisible processes, if we relax the condition of the
intermediate maps to be PTP (and not necessarily CPTP), we arrive to the
following definition:
\begin{definition}[P-divisible quantum dynamical maps]
Consider a quantum dynamical map $\mcE_{\left( t,0 \right)}:\mcT(\mcH)\to \mcT(\mcH)$ with $t\in \mathbb{R}^+$. It is P-divisible in the interval $[0,t] \subset \mathbb{R}^+$ if for every decomposition of the form
$$\mcE_{\left(t,0\right)}=\mcE_{\left(t,s\right)}\mcE_{\left(s,0\right)},$$
$\mcE_{\left(t,s\right)}$ belongs to PTP for every $s\in (0,t).$
\label{def:p_div_dyn}
\end{definition}
Unfortunately, to the best of our knowledge,  there doesn't exist a statement
similar to theorem~\ref{thm:KKSG}, nor a simple test of P-divisibility. But for
certain types of generators of dynamical maps, conditions for P-divisibility
were derived in Ref.~\cite{montesgorin}.
\subsubsection{Divisibility of quantum channels} 

Let us discuss these two types of divisibility but now from a statical point of
view. First notice that instant operations $\mcE_{(t+\epsilon,t)}$ are
arbitrarily close to the identity map as $\epsilon\to 0^+$, for both
P-divisible and CP-divisible processes. In other words, they are infinitesimal.
Consider now the idea of quantum channels divisible in infinitesimal parts,
\ie{} what is given this time is a quantum channel instead of a dynamical map.
This idea motivates the following definition~\cite{cirac},

\begin{definition}[Infinitesimal divisible channels in CPTP]
Let $\mcL_\text{CP}$ be the set containing operations $\mcE\in \cptp{}$ with
the property that $\forall\epsilon>0$ there exist a finite number of channels
$\mcE_i \in \cptp{}$ such that $\vert\mcE_i-\id\vert<\epsilon$ and
$\mcE=\prod_i \mcE_i$, see~\fref{fig:diagramatic_CP_DIV}. It is said that a
channel is infinitesimal divisible if it belongs to the closure of
$\mcL_\text{CP}$. This set is denoted as \cpDiv{}.
\label{def:infinitesimal_CP}
\end{definition}
\begin{figure} 
\centering
\begin{tikzpicture}
\node at (-2.5,2.5) {\Large $\mcE\in \cpDiv{}$};
\node at (-0.8,2) {\includegraphics[scale=0.15]{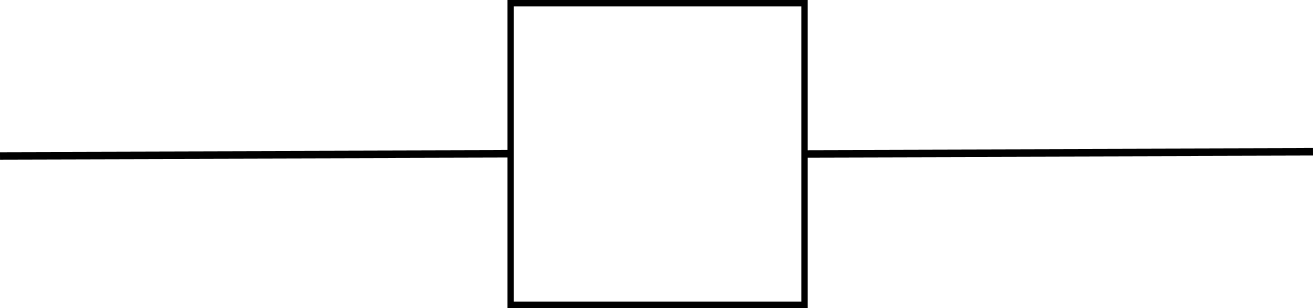}}; 
\node at (-0.8,2) { \Large $\mcE$};
\node at (2.8,2) {\Huge $=$};
\node at (0,0) {\includegraphics[scale=0.15]{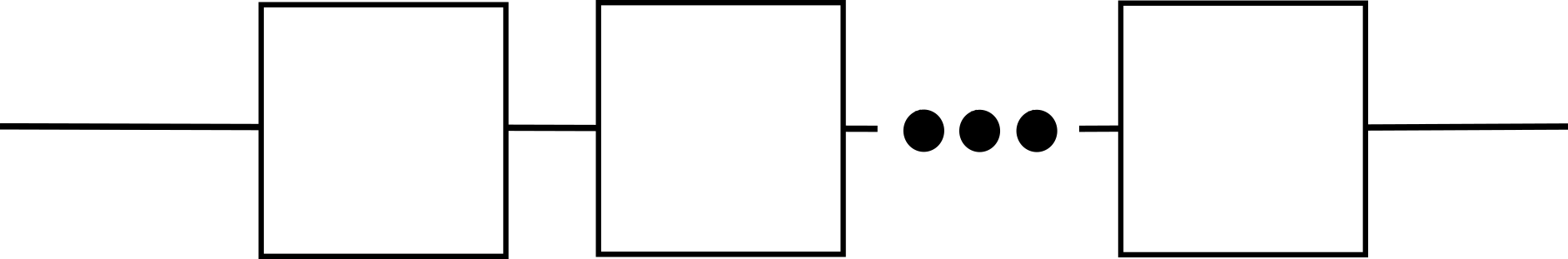}}; 
\node at (-2,0) { \Large $\mcE^1$};
\node at (-0.3,0) {\Large $\mcE^2$};
\node at (2.4,0) {\Large $\mcE^{N_\epsilon}$};
\draw (4.5,-1.8) -- (4.5,2.7);
\node at (7,0) {\includegraphics[scale=0.2]{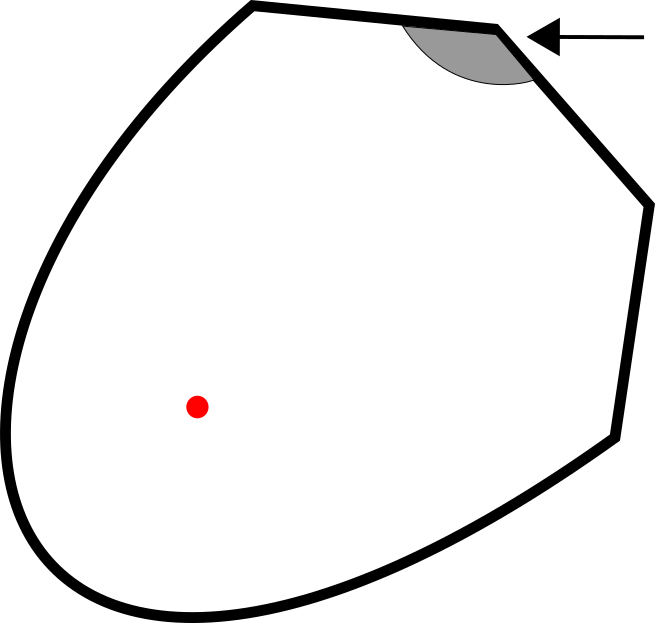}};
\node at (8,1.8) {$id$};
\node at (6.5,-0.7) {\Large $\mcE$};
\node at (9,1.5) {\Large $\mcE_i$};
\end{tikzpicture}
\caption{Diagrammatic decomposition of channels belonging to $\mcL_\text{CP}$
whose closure is \cpDiv{}, see definition~\ref{def:infinitesimal_CP}. 
We show the circuit representing the decomposition of $\mcE$ into channels (left)
arbitrarily close to the identity map (right).
\label{fig:diagramatic_CP_DIV}}
\end{figure} 
The necessity of the closure can be motivated using the following example. Consider the qubit channel defined as follows:
\begin{equation}
\mcE_\infty: \left( \begin{array}{cc}
\rho_{00} & \rho_{01} \\ 
\rho_{01}^* & \rho_{11}
\end{array}  \right)\mapsto \left( \begin{array}{cc}
\rho_{00} & 0 \\ 
0 & \rho_{11}
\end{array}  \right).
\end{equation}
This channel is singular, \ie{} does not belong to $\mcL_\text{CP}$. Now
observe that using the dynamical process, $\mcE_t$, given in
example~\ref{example:dephasing}, one can get arbitrarily close to $\mcE_\infty$
when $t\to \infty$, \ie{} $\mcE_\infty=\lim_{t\to \infty}\mcE_t$.
Note that $\mcE_t\in \mcL_\text{CP}$ for every $t\in \mathbb{R}^+$, see
theorem~\ref{thm:KKSG}, therefore $\mcE_\infty$ is an accumulation point of
$\mcL_\text{CP}$. Thus, the closure is taken to define infinitesimal divisible
channels, to include channels such as $\mcE_\infty$.

Up to this point we have shown that CP-divisible processes are
infinitesimal divisible, \ie{} CP-divisible processes parametrize families of
channels belonging to \cpDiv{}. In Ref.~\cite{Wolf2008}, authors have shown
that channels in \cpDiv{} can always be implemented with CP-divisible
processes. This can be roughly shown as follows. 

Since \cptp{} is connected, we can understand infinitesimal channels as the
ending point of an arbitrarily small curve parametrized by $t$, \ie{} channels
$\mcE_i$ in definition~\ref{def:infinitesimal_CP} can be written approximately
as $\mcE_i\approx\id+ L_i\approx \exp\left(L_i \right)$. We have shown that
$L_i$ has Lindblad form, see theorem~\ref{thm:KKSG}. Therefore we have that if
$\mcE\in \cpDiv{}$, it can be written as 
$$\mcE=\prod_i e^{L_i}.$$
Therefore $\mcE$ can be implemented using a CP-divisible dynamical processes.
Bounds of  the convergence ratio using channels of the form $\exp(L_i)$ instead
of general infinitesimal channels, are computed in
Ref.~\cite{Wolf2008}.

Analogous to infinitesimal divisible channels in \cptp{} and its relation with
CP-divisible processes, one can also define the following set involving PTP
maps.
\begin{definition}[Infinitesimal divisible channels in PTP]
Let $\mcL_\text{P}$ be the set containing operations $\mcE\in \cptp{}$ with the
property that $\forall\epsilon>0$ there exist a finite number of channels
$\mcE_i \in \text{PTP}$ such that $\vert\mcE_i-\id\vert<\epsilon$ and
$\mcE=\prod_i \mcE_i$, see~\fref{fig:diagramatic_P_DIV}. It is said that a
channel is infinitesimal divisible in PTP if it belongs to the closure of
$\mcL_\text{P}$. This set is denoted as \pDiv{}.
\label{def:infinitesimal_P}
\end{definition}
\begin{figure} 
\centering
\begin{tikzpicture}
\node at (-2.5,2.5) {\Large $\mcE\in \pDiv{}$};
\node at (-0.8,2) {\includegraphics[scale=0.15]{circuit.png}}; 
\node at (-0.8,2) {\Large $\mcE$};
\node at (2.8,2) {\Huge $=$};
\node at (0,0) {\includegraphics[scale=0.15]{circuit_many.png}}; 
\node at (-2,0) { \Large $\mcE^1$};
\node at (-0.3,0) { \Large $\mcE^2$};
\node at (2.4,0) {\Large $\mcE^{N_\epsilon}$};
\draw (4.5,-1.8) -- (4.5,2.7);
\node at (7,0) {\includegraphics[scale=0.2]{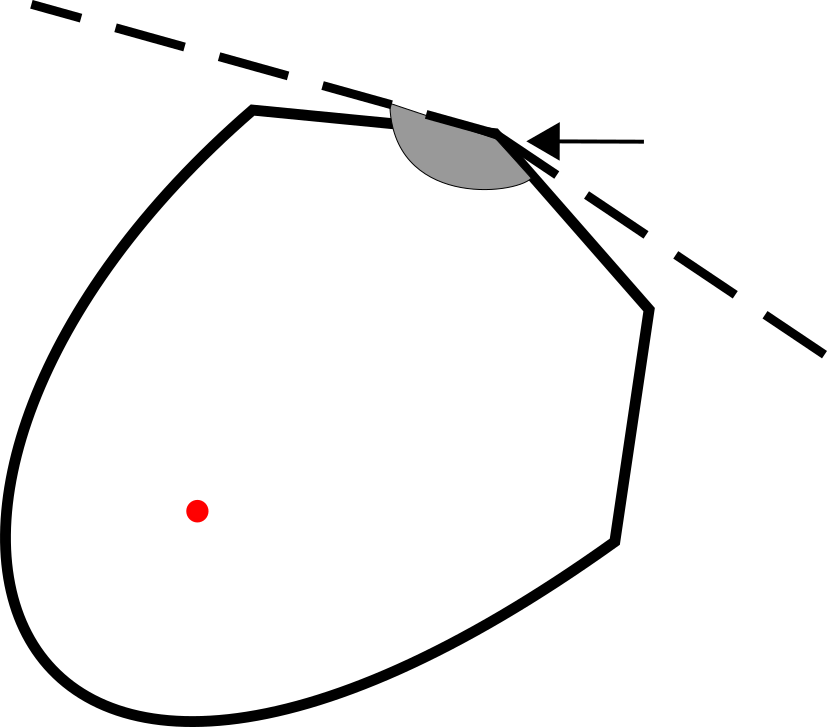}};
\node at (7.8,1.5) {$id$};
\node at (6.2,-0.8) {\Large $\mcE$};
\node at (8.55,1.3) {\Large $\mcE_i$};
\node at (6.9,0.3) {CPTP};
\node at (8.7,-0.5) {PTP};
\end{tikzpicture}
\caption{Diagramatic decomposition of channels belonging to $\mcL_\text{P}$ which closure is \pDiv{}, see definition~\ref{def:infinitesimal_P}. At the left we show the circuit representing the decomposition of $\mcE$ into channels arbitrarily close to the identity map, see figure at the right. In contrast to figure~\ref{fig:diagramatic_CP_DIV}, note that infinitesimal channels can be outside the set of CPTP maps, but inside PTP. \label{fig:diagramatic_P_DIV}}
\end{figure} 
Infinitesimal divisibility in PTP maps is interesting since this kind of maps
can arise in settings where the system is initially correlated with its
surroundings, or if the operation is correlated with the initial
state~\cite{Carteret2008}.

Infinitesimal divisible (either in CPTP and PTP) channels have
non-negative determinant due
to its continuity~\cite{Wolf2008}. To see this note that 
channels arbitrarily close to the identity map have
positive determinant; and by its multiplicative property, the
channel resulting from
the concatenation of infinitesimal channels has non-negative determinant. 
\begin{proposition}[Determinant of infinitesimal divisible channels]
If a quantum map $\mcE$ belongs either to \pDiv{} or \cpDiv{}, then $\det \mcE \geq 0$.
\end{proposition}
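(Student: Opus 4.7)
The plan is to combine three simple facts: the multiplicativity of the determinant, its continuity with respect to the operator norm, and the fact that channels sufficiently close to the identity have strictly positive determinant. Since the definitions of \cpDiv{} and \pDiv{} are closures of sets $\mcL_\text{CP}$ and $\mcL_\text{P}$ whose elements factor as finite products of channels arbitrarily close to $\id$, the argument is essentially the same for both cases, so I would treat them in parallel.

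First I would argue that if $\mcE' \in \cptp{}$ (or more generally in PTP) satisfies $\vert \mcE' - \id \vert < \epsilon$ for $\epsilon$ sufficiently small, then $\det \mcE' > 0$. In any fixed matrix representation (for instance, the one introduced in Section~\ref{sec:herm_and_trace_less}), the determinant is a continuous function of the matrix entries, and $\det \id = 1$. Hence there exists $\epsilon_0 > 0$ such that $\vert \mcE' - \id \vert < \epsilon_0$ implies $\det \mcE' > 0$.

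Next, given an arbitrary $\mcE \in \mcL_\text{CP}$ (respectively $\mcL_\text{P}$), by Definition~\ref{def:infinitesimal_CP} (respectively~\ref{def:infinitesimal_P}) we can pick the tolerance $\epsilon$ in the factorisation $\mcE = \prod_i \mcE_i$ to be smaller than the $\epsilon_0$ above. Then each factor has $\det \mcE_i > 0$, and by the multiplicative property of the determinant,
\begin{equation*}
\det \mcE = \prod_i \det \mcE_i > 0 .
\end{equation*}
This establishes the claim on $\mcL_\text{CP}$ and $\mcL_\text{P}$ themselves.

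Finally, if $\mcE$ lies in the closure of either set, there exists a sequence $\mcE^{(n)}$ with $\det \mcE^{(n)} > 0$ converging to $\mcE$ in norm. The determinant is continuous, so $\det \mcE = \lim_{n\to\infty} \det \mcE^{(n)} \geq 0$. The only subtlety, and the reason we obtain $\det \mcE \geq 0$ rather than $\det \mcE > 0$, is that the limit can collapse to zero (as already illustrated by $\mcE_\infty$ in Example~\ref{example:dephasing}, which is singular); this prevents us from strengthening the conclusion to strict positivity, but does not affect the stated inequality. No deeper structural obstacle arises, since the whole argument uses only the multiplicativity and continuity of the determinant together with the definition of the sets \pDiv{} and \cpDiv{}.
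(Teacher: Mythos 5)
Your proof is correct and follows essentially the same route as the paper: positivity of the determinant for channels near the identity, multiplicativity over the finite product, and continuity of the determinant to pass to the closure. You simply make explicit the closure step that the paper states more tersely; no substantive difference.
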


It turns out that a non-negative determinant is a sufficient condition for a
channel to be infinitesimal divisible in PTP, see theorem 25 of
Ref.~\cite{Wolf2008}.

Other interesting type of divisibility that in turn forms a subset of \cpDiv{} is
the following~\cite{Wolf2008,Denisov1989}.
\begin{definition}[Infinitely divisible channels]
A quantum channel $\mcE$ is infinitely divisible if
$\forall n\in \mathbb{Z}^+$ $\exists \mcE_n \in \cptp{}$ such that
$\mcE=\left(\mcE_n\right)^n$. This set is denoted as \InftyDiv{}, see~
\fref{fig:diagramatic_Infty_DIV}.
\label{def:infinitely_divisible}
\end{definition}
\begin{figure} 
\centering
\begin{tikzpicture}
\node at (-2.5,2.5) {\Large $\mcE\in \InftyDiv{}$};
\node at (-0.8,2) {\includegraphics[scale=0.15]{circuit.png}}; 
\node at (-0.8,2) { \Large $\mcE$};
\node at (2.8,2) {\Huge $=$};
\node at (0,0) {\includegraphics[scale=0.15]{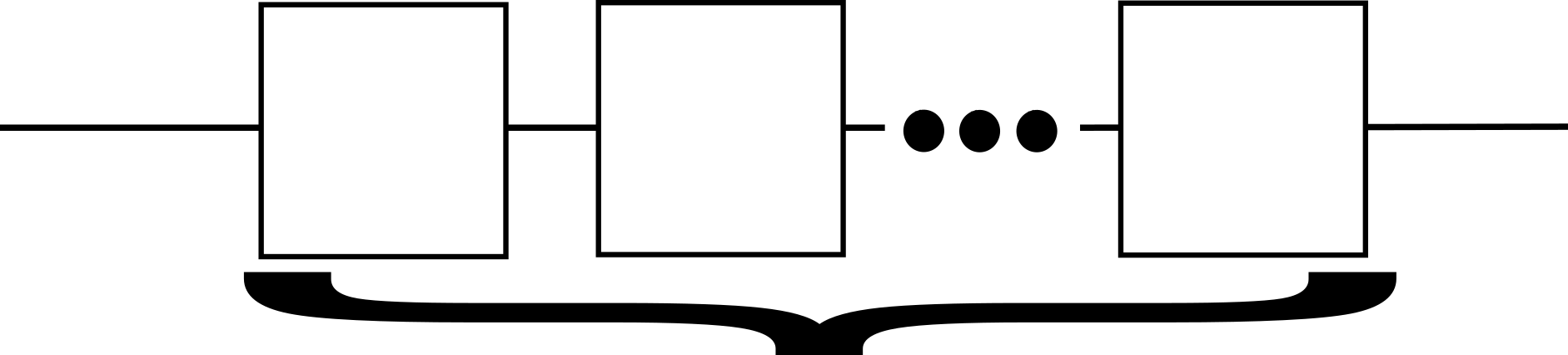}}; 
\node at (-2,0) { \Large $\mcE_n$};
\node at (-0.3,0) {\Large $\mcE_n$};
\node at (2.4,0) {\Large $\mcE_n$};
\node at (0,-1.5) {\Large $n$ times};
\end{tikzpicture}
\caption{Diagrammatic decomposition of channels belonging to \InftyDiv{}, see
definition~\ref{def:infinitely_divisible}. This set contains channels for which
every $n$-root exist and is a valid quantum channel, denoted in the
circuit as $\mcE_n$. \label{fig:diagramatic_Infty_DIV}}
\end{figure} 
This set contains channels for which every $n$-root exists and is a valid
quantum channel. Denisov has shown in
\cite{Denisov1989} that infinitely divisible channels can be
written as $\mcE=\mcE_0 \exp(L)$, 
with $L$ a Lindblad  generator, and an $\mcE_0$ idempotent operator that
fulfills $\mcE_0 L \mcE_0=\mcE_0 L$. In this work we will prove that every
infinitely divisible Pauli channel has the simple form $\exp(L)$.

Let us now introduce the most restricted type of divisibility studied in this
work,
\begin{definition}[Channels belonging to one-parameter semigroups
(L-divisibility)]
Let $\mcL_\text{L}$ be the set containing non-singular operations $\mcE\in
\cptp{}$, such that there exist at least one logarithm, denoted as $L=\log
\mcE$, such that
\begin{equation}
L[\rho]= \rmi [\rho,H]
 +\sum_{i,j} G_{ij} 
     \left( 
         F_{i}\rho F^{\dagger}_{j}
             -\frac{1}{2} \lbrace F^{\dagger}_{j} F_{i},\rho \rbrace 
     \right),
\label{eq:lindblad_from_theorem}
\end{equation}
where $H$ and $G$ are hermitian with $G\geq 0$, and $\lbrace F_i\rbrace_i$ are
bounded operators acting on $\mcT(\mcH)$.
\label{eq:lindblad_definitive}
It is said that a channel is L-divisible if it belongs to the closure of $\mcL_\text{L}$. This set is denoted as \LDiv{}.
\label{def:L_divisibility}
\end{definition}
Analogous to the relation of CP-divisible dynamical maps and its relations with
\cpDiv{}, time-independent Markovian
processes form families of L-divisible channels. The converse is true by
definition.
One of the principal objectives of this work is to construct a test to check
whether a given channel belongs to \LDiv{} or not.
\subsection{Relation between channel divisibility classes} 
\label{subsec:relations}
Let us summarize the introduced divisibility sets and the relations between
them. Since channels belonging to \cpDiv{} can be implemented with
time-dependent Lindblad master equations, and time-independent ones are a
particular case of time dependent ones, we have $\LDiv{} \subset \cpDiv{}$.
Now, since
infinitely divisible channels have the form $\mcE_0 \exp(L)$, channels with
form $\exp(L)$ are a particular case of \InftyDiv{}, therefore $\LDiv{}
\subseteq
\InftyDiv{}$. Also, given that CPTP maps are also PTP, then $\cpDiv{} \subset
\pDiv{}$. Finally, every set except \pDiv{} is subset of \Div{}, given that an
infinitesimal divisible channels in PTP is not necessarily divisible in CPTP
channels. In summary we have~\cite{cirac},
\begin{equation}
  \label{eq:set_relations}
\begin{array}{cccccc}
\InftyDiv{} & \subset & \cpDiv{} & \subset &\Div{} & \\ 
\rotatebox{90}{$\subseteq$} &  &  &  &   \\ 
\LDiv{} & \subset & \cpDiv{} & \subset & \pDiv{} 
\end{array} .
\end{equation}
The intersection of \pDiv{} and \Div{} is not empty since
$\cpDiv{}\subset \Div{}$ and $\cpDiv{} \subset \pDiv{}$, later on we will
investigate if $\pDiv{}\subseteq \Div{}$ or not. A scheme of the inclusions is given in fig.~\ref{fig:setscheme}.
\begin{figure} 
\centering
\includegraphics{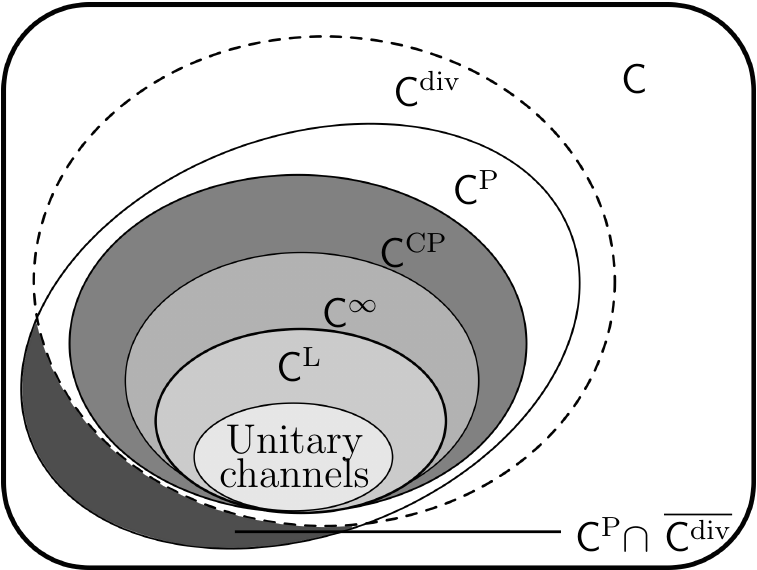}
\caption{
Scheme illustrating the different sets of quantum channels for a given dimension. In particular, the inclusion relations
presented in \eref{eq:set_relations} are depicted. \cptp{} is the set of completely positive trace preserving operations. The divisibility sets depicted are the ones containing channels infinitesimal divisible in CPTP (\cpDiv{}), infinitesimal divisible in PTP (\pDiv{}), infinite divisible (\InftyDiv{}), implementable with Lindblad equations (\LDiv{}), and unitary channels.
\label{fig:setscheme}}
\end{figure} 

%
%
%

\section{Characterization of L-divisibility} 

Deciding L-divisibility is equivalent to proving the existence of a hermiticity
preserving generator, which additionally fulfills the ccp condition, see
proposition~\ref{prop:ccp}.
To prove hermiticity preserving we recall that every \hp{} operator has
a real matrix representation when choosing an hermitian basis, see subsection~\ref{sec:herm_and_trace_less}. Since quantum
channels preserve hermiticity, the problem is reduced to find a real
logarithm $\log\hat\mcE$ given a real matrix $\hat \mcE$, where the hat means that $\mcE$ is written using an hermitian basis. This problem was
already solved by Culver~\cite{Culver1966} who characterized
completely the existence of real logarithms of real matrices. In this work we
restrict the analysis to  diagonalizable channels. The results can be
summarized as follows.
\begin{theorem}[\textbf{Existence of hermiticity preserving generator}]
A non-singular matrix with real entries $\hat \mcE$ has a real generator
(\ie{} a $\log \hat \mcE$ with real entries) if and only if
the spectrum fulfills the following conditions:
i) negative eigenvalues are even-fold degenerate;
ii) complex eigenvalues come in complex conjugate pairs.
\label{thm:culver}
\end{theorem}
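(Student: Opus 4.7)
My plan is to prove both implications by passing to the real Jordan form of $\hat\mcE$. Since we restrict to diagonalizable matrices, $\hat\mcE$ is conjugate over $\mathbb{R}$ to a block-diagonal matrix whose blocks are either $1\times 1$ real scalars (for real eigenvalues) or $2\times 2$ blocks of the form $\bigl(\begin{smallmatrix} a & -b \\ b & a \end{smallmatrix}\bigr)$ (one for each conjugate pair $a\pm \rmi b$ with $b>0$). Existence of a real logarithm of $\hat\mcE$ is equivalent to existence of a real logarithm of each real Jordan block, so the proof reduces to analysing these blocks separately.

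For the ``only if'' direction, suppose $L$ is a real matrix with $\exp(L)=\hat\mcE$. Because $L$ is real, its spectrum is invariant under complex conjugation, and the spectrum of $\hat\mcE$ is obtained by exponentiating that of $L$. A negative real eigenvalue $\lambda<0$ of $\hat\mcE$ must arise as $\lambda=\rme^{\mu}$ with $\Im(\mu)\in\pi(2\mathbb{Z}+1)$, so $\mu$ is non-real and $\bar\mu\neq\mu$ is also an eigenvalue of $L$ with $\rme^{\bar\mu}=\lambda$. This forces each negative eigenvalue of $\hat\mcE$ to appear with even multiplicity, giving condition (i); condition (ii) is immediate since $\hat\mcE$ is real.

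For the ``if'' direction, I build a real logarithm block-by-block. A positive $1\times 1$ block $\lambda$ has the real logarithm $\log\lambda$. A $2\times 2$ real Jordan block for a conjugate pair $a\pm\rmi b$ with $b>0$ can be written as $r(\cos\theta\, I + \sin\theta\, J)$ with $J=\bigl(\begin{smallmatrix} 0 & -1 \\ 1 & 0 \end{smallmatrix}\bigr)$, $r=\sqrt{a^2+b^2}$, $\theta=\arg(a+\rmi b)$, and admits the real logarithm $(\log r)I+\theta J$. The delicate case, where condition (i) enters, is the one with negative real eigenvalues: a single $1\times 1$ block with a negative entry admits no real logarithm, but by (i) such blocks can be paired into $2\times 2$ scalar blocks $-|\lambda|\,I_{2}$, which admit the real logarithm $\bigl(\begin{smallmatrix} \log|\lambda| & -\pi \\ \pi & \log|\lambda| \end{smallmatrix}\bigr)$ (whose exponential equals $-|\lambda|\,I_{2}$ by the rotation formula). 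Assembling these block logarithms and conjugating back by the real Jordan change of basis yields a real matrix $L$ with $\exp(L)=\hat\mcE$.

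The main obstacle is the negative-eigenvalue case: one must verify that condition (i) is precisely what permits the required pairing into $2\times 2$ scalar blocks (using diagonalizability to ensure no non-trivial Jordan structure mixes the paired eigenvalues), and that the resulting block logarithm exponentiates back to $-|\lambda|\,I_{2}$. Once this is carried out, the argument extends without difficulty to the full block-diagonal matrix, and the equivalence is established. For the full (non-diagonalizable) statement one would need to treat nilpotent Jordan blocks through a logarithmic power series, which is precisely the technical content of Culver's original analysis~\cite{Culver1966} cited in the statement.
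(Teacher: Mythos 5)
Your argument is essentially correct, but it is worth noting that the paper does not prove this theorem at all: it simply cites Culver~\cite{Culver1966} and uses the result as a black box. Your self-contained route through the real Jordan form is the standard proof of Culver's criterion in the diagonalizable case, and it is the right one: the ``only if'' direction via the spectral mapping argument (a negative eigenvalue of $\exp(L)$ must come from a non-real eigenvalue $\mu$ of $L$ with $\Im\mu\in\pi(2\mathbb{Z}+1)$, which is accompanied by $\bar\mu$ with equal multiplicity, forcing even total multiplicity) is sound, and the ``if'' direction by exhibiting explicit real logarithms of the $1\times1$, rotation--scaling, and paired negative blocks is complete once one observes that on the (real, even-dimensional) eigenspace of a negative eigenvalue the matrix acts as $-|\lambda|\,\one$, so the pairing into $2\times2$ scalar blocks is always available for a diagonalizable matrix. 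What your approach buys over the paper's is an actual proof; what it costs is that it only covers the diagonalizable case, which is however exactly the scope the paper announces immediately before the theorem (and indeed the statement ``even-fold degenerate'' is only the correct criterion in that case --- for defective matrices one needs Jordan blocks of negative eigenvalues to occur in pairs of equal size, as you correctly defer to Culver).

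One sentence you should repair: the claim that ``existence of a real logarithm of $\hat\mcE$ is equivalent to existence of a real logarithm of each real Jordan block'' is false as literally written --- a single $1\times1$ block $(-1)$ has no real logarithm, yet $-\one_2$ does, which is precisely the phenomenon your own construction exploits. Your proof does not actually use this equivalence in the dangerous direction (the ``only if'' part is argued spectrally, not block-by-block, and the ``if'' part regroups the blocks before taking logarithms), so this is a misstatement of strategy rather than a gap, but as phrased it contradicts the rest of your argument and should be deleted or weakened to ``it suffices to exhibit a real logarithm of a suitable real block-diagonal form conjugate to $\hat\mcE$.''
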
 

We now discuss the multiplicity of the solutions of $\log \hat \mcE$ and its
parametrization, as finding an appropriate one   is essential to test for the
ccp condition.  If $\hat \mcE$ has positive degenerate, negative, or complex
eigenvalues, its real logarithms are not unique, and are spanned by
\textit{real logarithm branches}~\cite{Culver1966}. The latter are defined
using the real quaternion, which coincides with $\rmi \sigma_y$, using the fact
that $\one=\exp\left( \rmi \sigma_y 2 \pi k\right)$, with $k\in \mathbb{Z}$. In
case of having negative eigenvalues, it turns out that real logarithms always
have a continuous parametrization, in addition to real branches due to the
freedom of the Jordan normal form transformation matrices~\cite{Culver1966}.

To compute the logarithm given a real representation of $\mcE$, \ie{} $\hat
\mcE$, we calculate its Jordan normal form, $J$, such that $\hat \mcE=w J
w^{-1}=\tilde w J \tilde w^{-1}$, where $w = \tilde w K$ and $K$ belongs to a
continuum of matrices that commute with $J$~\cite{Culver1966}. In the case of
diagonalizable matrices, if there are no degeneracies, $K$ commutes with
$\log(J)$. In the case of having degeneracies, matrix $K$ is responsible of the
continuous parametrization of the logarithm.  We compute explicitly the
logarithms for the case of Pauli channels in section~\ref{sec:L_div_Pauli}.

\section{Divisibility of unital qubit channels} 
We will apply various of the results from the literature~\cite{Wolf2008} to
decide if a given unital qubit channel belongs to \LDiv{}, \cpDiv{} and/or
\pDiv{}. The non-unital case will be discussed later.

Before starting with the characterization let us point out the following. From
the definition of divisibility, the concatenation of a given channel with
unitary conjugations (which are infinitesimal divisible) do not change its
divisibility character, except for L-divisibility.
In addition to
this, since unitary conjugations are infinitesimal divisible, they do not
change the infinitesimal divisible character either. We can summarize this in
the following,
\begin{proposition}[\textbf{Divisibility of
special orthogonal normal forms}]
Let $\mcE$ a qubit quantum channel and $\mcD$ its special orthogonal normal
form, $\mcE$ belongs to ${\sf C}^X$ if and only if $\mcD$ does, where $X=\lbrace
\text{``Div'', ``P'', ``CP''} \rbrace$. 
\label{prop:divisibility_using_orthogonal_form}
\end{proposition}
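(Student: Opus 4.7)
The plan is to exploit the relation $\mcE = \mcU_1 \mcD \mcU_2$ together with its inverse $\mcD = \mcU_1^{-1}\mcE\,\mcU_2^{-1}$, using the facts that $\mcU_1^{\pm 1}$ and $\mcU_2^{\pm 1}$ are themselves unitary conjugations, hence invertible CPTP maps. The argument for each of the three values of $X$ hinges on a different structural property of the corresponding divisibility set, but in each case the reasoning is symmetric in $\mcE$ and $\mcD$, so it suffices to prove one direction.

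For $X=\text{``Div''}$, I would start from a non-trivial decomposition $\mcE = \mcE_2\mcE_1$ in the sense of Definition~\ref{def:divisibility} and rewrite $\mcD = (\mcU_1^{-1}\mcE_2)(\mcE_1\mcU_2^{-1})$, which displays $\mcD$ as a composition of two CPTP maps. A short case analysis shows this decomposition is admissible: if both $\mcE_1,\mcE_2$ are unitary then so are the two new factors, while if both are non-unitary then so are the new factors, since the composition of a unitary with a non-unitary channel is non-unitary (otherwise inverting the unitary would make the other factor unitary, a contradiction).

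For $X\in \lbrace \text{``P''}, \text{``CP''} \rbrace$, the key observation is that any unitary conjugation $\mcU$ is L-divisible and in particular lies in \cpDiv{} and in \pDiv{}: writing $\mcU = e^{L_U}$ with $L_U[\rho] = -\rmi[H,\rho]$, this is exactly the Lindblad form \eref{eq:lindblad_from_theorem} with vanishing dissipator. Hence $\mcU_1^{-1}$ and $\mcU_2^{-1}$ themselves admit, for every $\varepsilon>0$, finite factorisations into CPTP (and a fortiori PTP) maps $\varepsilon$-close to $\id$, coming from the semigroup $e^{tL_{U_j^{-1}}}$ subdivided finely. Given $\mcE \in {\sf C}^X$, I would concatenate such an infinitesimal factorisation of $\mcU_1^{-1}$, one of $\mcE$ supplied by the hypothesis, and one of $\mcU_2^{-1}$, to obtain a finite product of infinitesimal CPTP (resp.\ PTP) factors whose composition is $\mcD$; this shows $\mcD\in {\sf C}^X$. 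Closure under limits is handled by noting that the map $\mcE \mapsto \mcU_1^{-1}\mcE\,\mcU_2^{-1}$ is a homeomorphism of \cptp{} (its inverse being composition with the fixed $\mcU_1,\mcU_2$), and hence sends the closure of $\mcL_{\rm CP}$ (resp.\ $\mcL_{\rm P}$) to itself.

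The only subtle step is checking that the $\varepsilon$-estimates $|\mcE_i - \id| < \varepsilon$ entering Definitions~\ref{def:infinitesimal_CP} and~\ref{def:infinitesimal_P} survive insertion between the fixed unitary pieces. This is automatic from $\mcU_1\mcE_i\mcU_2 - \mcU_1\mcU_2 = \mcU_1(\mcE_i-\id)\mcU_2$ together with the fact that unitary conjugations are norm-preserving in the operator norm, so concatenating infinitesimally close factors still yields infinitesimally close factors. No other obstacles are expected.
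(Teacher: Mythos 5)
Your proof is correct and follows essentially the same route the paper takes: the paper justifies this proposition by noting that unitary conjugations are themselves infinitesimal divisible (indeed L-divisible, being $e^{-\rmi[H,\cdot]}$) and that concatenation with them therefore cannot change membership in \Div{}, \pDiv{} or \cpDiv{}, deferring the rest to Theorem 17 of Ref.~\cite{Wolf2008}. Your writeup simply fills in the details of that argument — the two-sided factorisation for \Div{}, the fine subdivision of the unitary semigroup for \pDiv{}/\cpDiv{}, and the homeomorphism argument for the closures — all of which are sound.
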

This proposition is in fact a consequence of theorem 17 of Ref.~\cite{Wolf2008}. Notice that this result does not apply for \LDiv{} since conjugating with unitaries breaks the implementability by means of time-independent Lindblad master equations. Thus, if a channel belongs to \LDiv{}, unitary conjugations can bring it
to $\InfDiv{}\setminus\LDiv{}$ and vice versa.

Therefore, by proposition~\ref{prop:divisibility_using_orthogonal_form} and the theorem~\ref{thm:orthogonal_normal_form}, to study \pDiv{} and \cpDiv{} of unital qubit channels, it is enough to study Pauli channels.
\subsection{Channels belonging to \Div{}} 
Divisibility in CPTP of unital qubit channels is completely characterized by
means of theorem~\ref{thm:unital_indivisible}. Therefore the only indivisible
channels lie in the faces of the tetrahedron (without the edges),
see~\fref{fig:tetra}.
\subsection{Channels belonging to \pDiv{}} 
Recalling that all unital qubit channels belonging to \pDiv{} have non-negative
determinant~\cite{cirac}, and using special orthogonal normal forms,  see
theorem~\ref{thm:orthogonal_normal_form}, the condition in terms of its
parameters is given by
\begin{equation}
\lambda_1 \lambda_2 \lambda_3\geq 0.
\label{eq:pdiv_qubits}
\end{equation}
This set is the intersection
of the tetrahedron with the octants 
where the product  of all $\lambda$s is positive. In fact, it consists of 
four triangular bipyramids starting in each vertex of the tetrahedron and
meeting in its center, see~\fref{fig:tetra}.
\begin{figure} 
\centering
\includegraphics{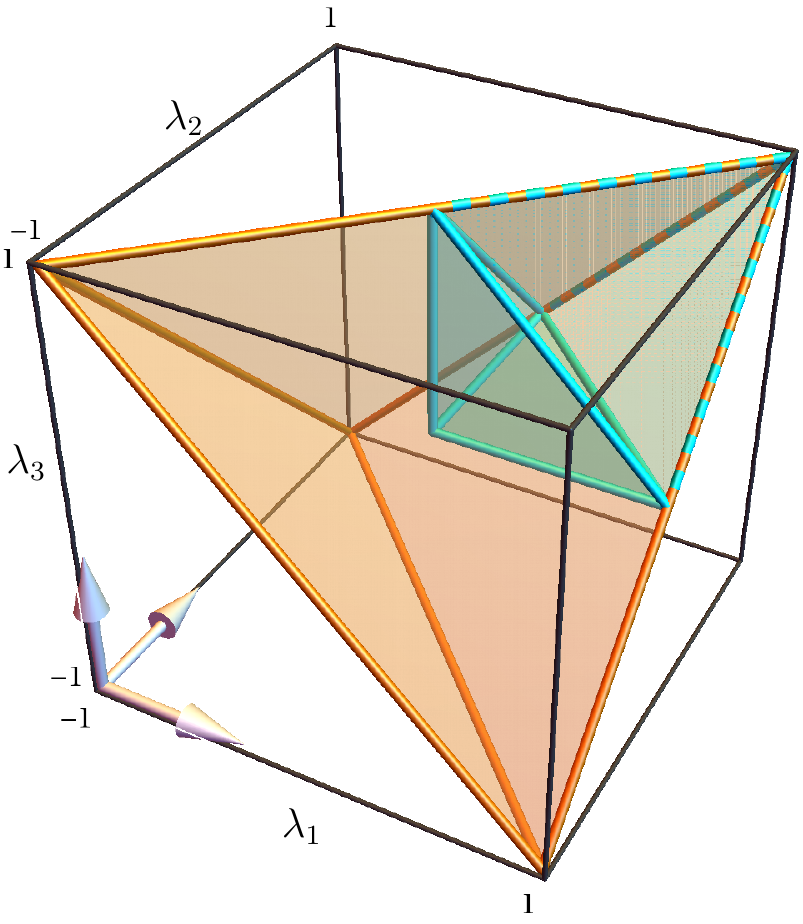}
\caption{
Tetrahedron of Pauli channels, see \fref{fig:simple_tetra}. 
The bipyramid in blue correspond to channels with $\lambda_i>0$ $\forall i$,
\ie{} channels of the positive octant belonging to \pDiv{}. The whole set
\pDiv{} includes other three bipyramids corresponding to the other vertexes of
tetrahedron. This implies that \pDiv{} enjoys the symmetries of the
tetrahedron, see \eref{eq:pdiv_qubits}.  The faces of the
bipyramids matching the corners of the tetrahedron are subsets of the faces
of the tetrahedron, \ie{} contain Kraus rank three channels. Such channels are
both \pDiv{} and \Ind{}, showing that the intersection shown in
\fref{fig:setscheme} is not empty.
\label{fig:tetra}
}
\end{figure} 
Let us study the intersection of this set with the set of unital
entanglement-breaking (EB) channels~\cite{Ziman2005}, see
definition~\ref{def:eb}. In the case of unital qubit channels, the set
entanglement-breaking channels is an octahedron that lie inside the tetrahedron
of unital qubit channels, see~\fref{fig:bypy}. The
inequalities that define such octahedron are the following,
\begin{equation}
\lambda_i \pm\left( \lambda_j + \lambda_k \right) \leq 1,
\label{eq:eb_inequalities}
\end{equation} 
with $i$, $j$ and $k$ all different~\cite{Ziman2005}, together~\eref{eq:complete_positivity_qubit}.
It follows that unital qubit channels that are not achieved by P-divisible
dynamical maps are necessarily entanglement-breaking (see~\fref{fig:bypy} and
\fref{fig:cut1}). In fact this holds for general qubit channels, see
section~\ref{subsec:generalqubitchannels}. 
\begin{figure} 
\centering
\includegraphics{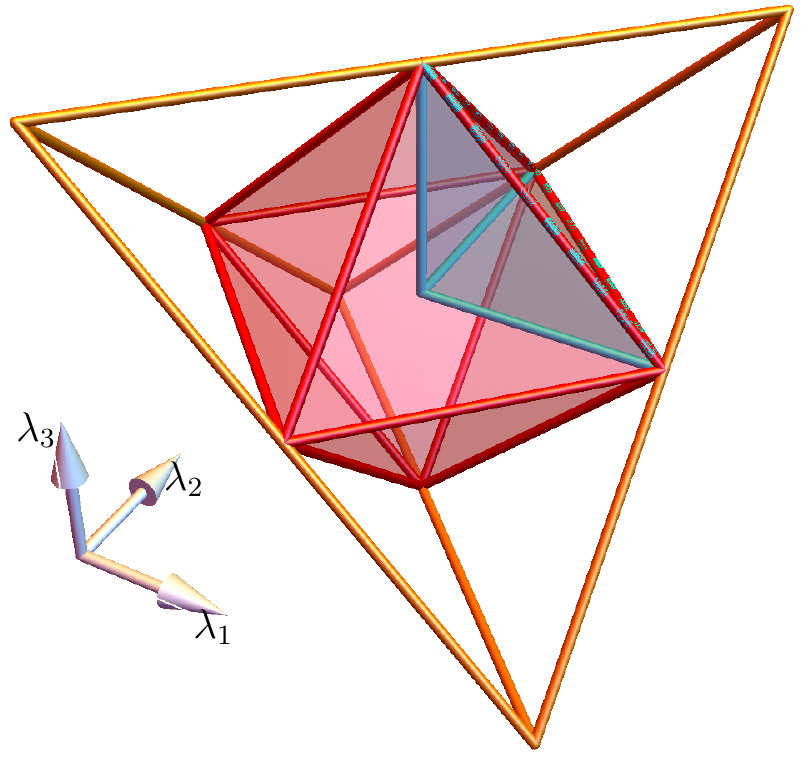}
\caption{Tetrahedron of Pauli channels with the octahedron
of entanglement
breaking channels shown in red, see \eref{eq:eb_inequalities}. The blue pyramid inside the octahedron is the
intersection of the bipyramid shown in \ref{fig:tetra}, with the octahedron.
The complement of the intersections of the four bipyramids forms the set of
divisible but not infinitesimal divisible channels in PTP. Thus, a central feature of
the figure is that the set \Div{}$\setminus$\pDiv{} is always
entanglement-breaking, but the converse is not true. \label{fig:bypy}}
\end{figure} 
\subsection{Channels belonging to \cpDiv{}} 

To characterize CP-divisible channels it is useful to consider the Lorentz
normal form for channels, see theorem~\ref{thm:Lorentz}. 
A remarkable property of the Lorentz normal
decomposition is that it preserves the infinitesimal divisible character of
$\mcE$, see Corollary 13 of~\cite{cirac}. To use it, we resort to theorem 24
of Ref.~\cite{cirac}. Due to the mentioned drawback of Lorentz normal forms,
see appendix~\ref{sec:normal_form}, we must modify such to
theorem to a restricted class of channels. 
\begin{theorem}[Restricted characterization of channels belonging to \cpDiv{}]
A qubit channel $\mcE$ with diagonal Lorentz normal form belongs to \cpDiv{} 
if and only if\\
i) the rank of the form is smaller than three or\\
ii) $s_{\min}^2\geq s_1 s_2 s_3> 0$,
where $s_{\min}$ is the smallest of $s_1$, $s_2$ and $s_3$, see
theorem~\ref{thm:Lorentz}. 
\label{thm:restricted_cp_div}
\end{theorem}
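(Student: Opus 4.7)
The plan is to leverage Corollary~13 of Ref.~\cite{cirac}, which guarantees that the Lorentz normal decomposition preserves the infinitesimal divisible character. This reduces the task to analyzing the normal form itself which, in our restricted setting, is proportional to the diagonal matrix $\text{diag}(1, s_1, s_2, s_3)$ in the Pauli basis, i.e.\ a Pauli-type channel. Consequently, the question becomes: under what conditions on the $s_i$ does this diagonal channel lie in \cpDiv{}?

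For the sufficient direction, I would treat the two cases separately. When the rank of the form is less than three, at least one of the $s_i$ vanishes and the channel is singular. Here I would show that such channels arise as limits of rank-three channels already in $\mcL_\text{CP}$ (e.g.\ by perturbing the vanishing eigenvalue by $\varepsilon>0$ and taking $\varepsilon \to 0$), which places them in the closure $\overline{\mcL_\text{CP}}=\cpDiv{}$. When all $s_i>0$, I would invoke theorem~\ref{thm:culver} to extract the real diagonal logarithm $L=\text{diag}(0,\log s_1,\log s_2,\log s_3)$, automatically hermiticity preserving. Since $L$ is diagonal in the Pauli basis, the associated dissipator matrix $G$ can be read off directly, and proposition~\ref{prop:ccp} reduces the check to $G\geq 0$. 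A short calculation shows that this positivity condition is equivalent to the three inequalities $\log s_i + \log s_j \geq 2\log s_{\min}$, which after exponentiation collapse to the single constraint $s_{\min}^2 \geq s_1 s_2 s_3$, yielding (ii).

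For the necessary direction I would argue by contrapositive: if $s_1 s_2 s_3>0$ but $s_{\min}^2 < s_1 s_2 s_3$, then the unique real logarithm violates conditional complete positivity, so no time-independent Lindblad generator exists. To exclude the possibility of reaching such a channel by some \emph{time-dependent} CP-divisible process, I would use that any such process would produce, at the endpoint, a channel continuously connected to the identity through CPTP intermediate maps, so its eigenvalues $s_i(t)$ must satisfy the same instantaneous ccp condition at every $t$; integrating these constraints yields the bound $s_{\min}^2 \geq s_1 s_2 s_3$ on the final values. This half essentially recapitulates the logic of theorem~24 of Ref.~\cite{cirac}, restricted to the diagonal Lorentz class.

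The principal obstacle will be the careful handling of the rank-deficient case, specifically verifying that limits of CP-divisible channels indeed land inside \cpDiv{} rather than merely in some larger set, and that the perturbation used to approximate a singular channel can be chosen to satisfy (ii) throughout. A secondary technical point is that the original proof of theorem~24 of Ref.~\cite{cirac} relies on the full Lorentz normal form theorem, which is incomplete as discussed in appendix~\ref{sec:normal_form}; restricting the statement to diagonal normal forms sidesteps this issue, but the argument must be audited to ensure that no step implicitly invokes the non-diagonal case.
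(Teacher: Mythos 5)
You should first note that the paper does not supply a proof of this theorem at all: it is Theorem~24 of Ref.~\cite{cirac} imported verbatim, with the hypothesis restricted to diagonal Lorentz normal forms only so as to sidestep the defect in the non-diagonal normal-form theorem documented in appendix~\ref{sec:normal_form}. Your proposal therefore attempts more than the text does. The sufficiency half is sound in outline: reduce to the diagonal form via Corollary~13 of Ref.~\cite{cirac}; for rank $<3$ exhibit an explicit approximating sequence inside $\mcL_\text{CP}$ (e.g.\ $\diag(1,\lambda,\epsilon,\epsilon\lambda)$, for which $s_{\min}^2=\epsilon^2\lambda^2$ equals the product, exactly as in the proof of Theorem~\ref{thm:pauli_infinity}); and for all $s_i>0$ show the diagonal logarithm is ccp, so the form lies in $\LDiv{}\subset\cpDiv{}$. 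One slip: the ccp conditions for $L=\diag(0,\log s_1,\log s_2,\log s_3)$ are $\log s_i-\log s_j-\log s_k\geq 0$, i.e.\ $s_i^2\geq s_1s_2s_3$ for each $i$; the inequality you wrote, $\log s_i+\log s_j\geq 2\log s_{\min}$, is satisfied trivially by every triple and does not collapse to condition (ii), although your final constraint $s_{\min}^2\geq s_1s_2s_3$ is the correct one.

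The genuine gap is in the necessity direction. Failure of ccp for the real logarithm only shows $\mcE\notin\LDiv{}$, and $\LDiv{}$ is a proper subset of $\cpDiv{}$, so this excludes nothing. Your proposed repair---that the intermediate maps of the process have ``eigenvalues $s_i(t)$'' obeying the instantaneous ccp condition, which can then be integrated---does not go through. Membership in $\cpDiv{}$ means $\mcE$ is a limit of finite products $\prod_i\mcE_i$ of near-identity CPTP maps; the factors need not commute, need not be diagonal in the Pauli basis, and need not share an eigenbasis with the accumulated product. The $s_i$ are Lorentz singular values of that product, not eigenvalues that multiply along a trajectory, so there is no pointwise constraint on them to integrate. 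What the Wolf--Cirac proof actually supplies, and what your argument is missing, is a continuous monotone: a functional, essentially $s_1s_2s_3/s_{\min}^2$, that is bounded by $1+\order{\epsilon^2}$ on an $\epsilon$-neighbourhood of the identity and does not increase under composition with arbitrary channels; only such a quantity survives both the product and the closure in Definition~\ref{def:infinitesimal_CP}. Your argument is also silent on the remaining necessity case, $s_1s_2s_3=0$ with the form of rank exactly three (e.g.\ $s_3=0$, $s_1,s_2>0$), where no logarithm exists and the determinant obstruction is vacuous, yet the theorem asserts the channel is outside $\cpDiv{}$.
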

For non-unital Kraus deficient channels, the pertinent theorems are based on
non-diagonal Lorentz normal forms~\cite{Verstraete2002,cirac}. According to our
appendix~\ref{sec:normal_form} such results should
be reviewed and are out of the scope of this work.

Notice that the Lorentz normal form coincides with the special orthogonal
normal form for unital qubit channels. Therefore, by
theorem~\ref{thm:restricted_cp_div}, unital channels belonging to \cpDiv{} are
non-singular with
\begin{equation}
  0<\lambda_1 \lambda_2 \lambda_3\leq \lambda_{\min}^2\,,
\label{eq:cpdivqubit}
\end{equation}
or singular with a matrix rank less than three.
They determine a body
that is symmetric with respect to permutation of Pauli unitary channels (i.e.
in $\lambda_j$), hence, the set of \cpDiv{} of Pauli channels possesses the
symmetries of the tetrahedron. The set \cpDiv{}$\setminus$\LDiv{} is plotted
in~\fref{fig:cp}.
\begin{figure} 
\centering
\includegraphics{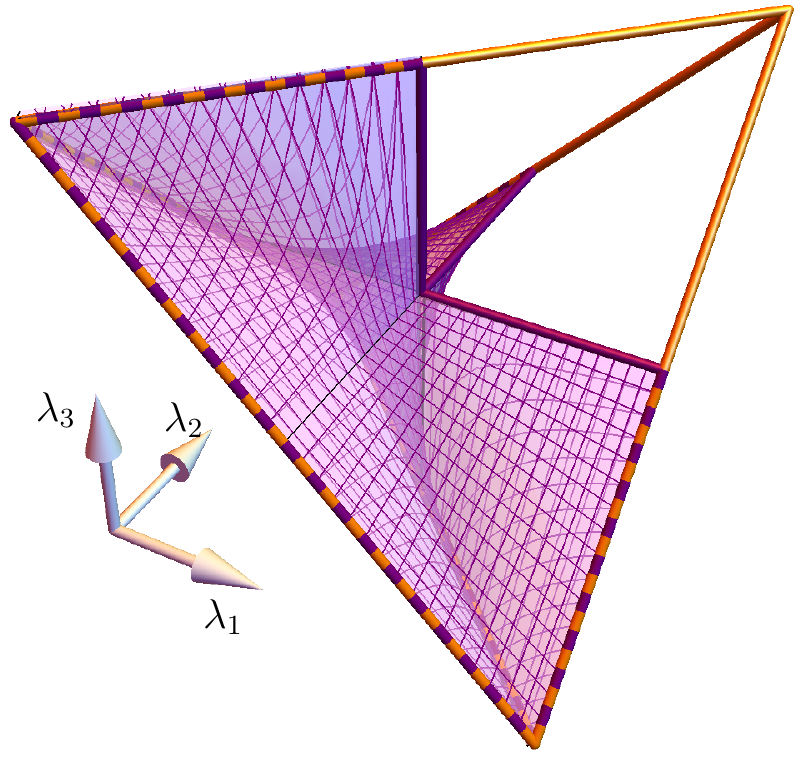}
\caption{Tetrahedron of Pauli channels with part of the set of CP-divisible, 
see \eref{eq:cpdivqubit}, but
not L-divisible channels (\cpDiv{}$\setminus$\LDiv{}) shown in purple. The
whole set \cpDiv{} 
is obtained applying the symmetry transformations of the tetrahedron to the 
purple volume. 
\label{fig:cp}}
\end{figure} 

\subsection{L-divisible unital qubit channels} 
\label{sec:L_div_Pauli}
We restrict our analysis of L-divisibility for two particular sets of unital channels, Pauli channels and a family with complex eigenvalues that will be introduced later. 

\subsubsection{Pauli channels with  non-degenerate positive eigenvalues} 
First let us now derive the conditions for L-divisibility of Pauli channels with
positive eigenvalues $\lambda_1,\lambda_2,\lambda_3$ ($\lambda_0=1$). 
The logarithm of $\mcD$, induced by the principal logarithm of its
eigenvalues is
\begin{equation}
L=K{\rm diag}(0,\log\lambda_1,\log\lambda_2,\log\lambda_3)K^{-1}\,,
\label{eq:logL_positive}
\end{equation}
which is real (hermiticity preserving). In case of no-degeneration
the dependency on $K$ vanishes and $L$ is unique. In such
case the ccp conditions, see theorem~\ref{prop:ccp}, read

\begin{align}
  \log \lambda_i-\log \lambda_j-\log \lambda_k\geq 0 \Rightarrow \frac{\lambda_i}{\lambda_j \lambda_k} \geq 1
\label{eq:hermpres}
\end{align}
for all combinations of mutually different $i,j,k$. This set (channels
belonging to \LDiv{} with positive eigenvalues)  forms a three dimensional
manifold, see \fref{fig:markov}. 
\subsubsection{Pauli channels with degenerate positive  eigenvalues} 
In case of degeneration, let us label the eigenvalues $\eta$, $\lambda$ and $\lambda$. In this case, the real solution for $L$ is not unique
and is parametrized by real branches in the degenerate 
subspace and by the continuous parameters of $K$~\cite{Culver1966}. 
Let us study the principal branch with $K=\one$. \Eref{eq:hermpres} is
then reduced to 
\begin{equation}
\lambda^2\leq \eta \leq 1\;.
\label{eq:ccp_degenerated}
\end{equation}
Therefore, if this inequalities are fulfilled, the generator has Lindblad
form. If not, then \textit{a priori} other branches can fulfill ccp condition and
consequently have a Lindblad form. Thus, Eq.~\eqref{eq:ccp_degenerated} provides
a sufficient condition for the channel to be in \LDiv{}. We will
prove it is also necessary. 

The complete positivity condition requires $\eta,\lambda\leq 1$, thus,
it remains to verify only the condition $\lambda^2\leq \eta$. It holds trivially
for the case $\lambda\leq\eta$. If $\eta\leq\lambda$, then this condition
coincides with the CP-divisibility condition from~\eref{eq:cpdivqubit}.
Since \LDiv{} implies \cpDiv{} the proof is completed. In conclusion,
the condition in~\eref{eq:hermpres} is a necessary and sufficient for a given
Pauli channel with positive eigenvalues to belong to \LDiv{}.

Let us stress that the obtained subset of L-divisible channels
does not possess the tetrahedron symmetries. In fact, composing
$\mcD$ with a $\sigma_z$ rotation $$\mcU_z={\rm diag}(1,-1,-1,1)$$
results in the Pauli channel
$\mcD^\prime={\rm diag}(1,-\lambda_1,-\lambda_2,\lambda_3)$.
Clearly, if $\lambda_j$ are positive ($\mcD$ is L-divisible),
then $\mcD^\prime$ has non-positive eigenvalues. Moreover, if all $\lambda_j$
are different, then $\mcD^\prime$ does not have any
real logarithm, therefore, it cannot be L-divisible.
In conclusion, the set of L-divisible unital qubit channel is
not symmetric with respect to tetrahedron symmetries.
\subsubsection{Pauli channels with negative eigenvalues} 
In what follows we will investigate the case of negative eigenvalues.
Theorem~\ref{thm:culver} implies that that eigenvalues have the form (modulo permutations)
$\eta,-\lambda,-\lambda$, where $\eta,\lambda> 0$. The corresponding
Pauli channels are 
$$\mcD_x={\rm diag}(1,\eta,-\lambda,-\lambda), \ \
\mcD_y={\rm diag}(1,-\lambda,\eta,-\lambda), \ \
\mcD_z={\rm diag}(1,-\lambda,-\lambda,\eta),$$ thus
forming three two-dimensional regions inside the tetrahedron.
Take, for instance, $\mcD_z$ that specifies a plane (inside the tetrahedron)
containing $I$, $\sigma_z$ and completely depolarizing channel
$\mcN={\rm diag}(1,0,0,0)$. The real logarithms for this case are given by
\begin{equation}
L=K\left(
\begin{array}{cccc}
0 & 0 & 0 & 0 \\ 
0 & \log( \lambda ) & (2 k +1) \pi & 0 \\ 
0 & -(2k +1)\pi & \log( \lambda ) & 0 \\ 
0 & 0 & 0 & \log( \eta )
\end{array} 
\right) K^{-1},
\label{eq:Lfornegative}
\end{equation}
where $k\in \mathbb{Z}$ and $K$, as mentioned above, belongs to a continuum of
matrices that commute with $\mcD_z$. Note that $L$ is always non-diagonal. For
this case (similarly for $\mcD_x$ and $\mcD_y$) the ccp condition reduces
again to conditions specified in Eq.~\eqref{eq:ccp_degenerated}. Using the same
arguments one arrives to more general conclusion: 
\begin{theorem}[L-divisibility of Pauli channels]
Let $\mcE$ be a non-singular Pauli channel. It belongs to \LDiv{} if and only if its non-trivial eigenvalues fulfill
\begin{equation}
\frac{\lambda_i}{\lambda_j \lambda_k}\geq 0
\label{eq:L_div_definitive}
\end{equation}
for $i,j$ and $k$ mutually different.
\end{theorem}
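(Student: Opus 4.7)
The plan is to reduce the claim to a case analysis dictated by theorem~\ref{thm:culver}, which already restricts the eigenvalue patterns for which any real logarithm (and hence any hermiticity-preserving generator) exists. A non-singular Pauli channel has spectrum $(1,\lambda_1,\lambda_2,\lambda_3)$ with $\lambda_j\neq 0$, and Culver's criterion tells us a real logarithm exists if and only if either all $\lambda_j$ are positive, or exactly two of them are negative (and then, by the even-multiplicity requirement, necessarily equal). In each allowed configuration I would write down the real logarithms explicitly, impose the ccp condition of proposition~\ref{prop:ccp}, and show that the resulting inequalities coincide with the announced criterion (the sign part encoding Culver's constraint, the magnitude part encoding ccp).

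First I would treat the generic positive case with pairwise distinct $\lambda_j$. Since $\mcD$ is diagonal in the Pauli basis, the real logarithm is unique and also diagonal, $L=\mathrm{diag}(0,\log\lambda_1,\log\lambda_2,\log\lambda_3)$, so its \Jami{} matrix and the projector $\omega_\perp$ are block-computable. Imposing $\omega_\perp\tau_L\omega_\perp\ge 0$ reduces to the three scalar inequalities $\log\lambda_i-\log\lambda_j-\log\lambda_k\ge 0$, equivalent to $\lambda_i\ge\lambda_j\lambda_k$, as already derived in~\eref{eq:hermpres}.

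Next I would handle degenerate positive eigenvalues $(\eta,\lambda,\lambda)$. Here the logarithm is no longer unique: it is parametrized by the real branches and by the continuous commutant $K$ of the Jordan form. On the principal branch ($K=\one$) the ccp condition becomes $\lambda^2\le\eta$. The non-trivial step is necessity: I would argue that no other choice of branch or $K$ enlarges the L-divisible region, by invoking the inclusion $\LDiv{}\subseteq\cpDiv{}$ together with the CP-divisibility criterion~\eref{eq:cpdivqubit}, so that $\lambda^2>\eta$ already rules out CP-divisibility and hence L-divisibility. Finally, for the negative-eigenvalue case of the form $\mcD_z=\mathrm{diag}(1,-\lambda,-\lambda,\eta)$ (and permutations), I would use the explicit family of real logarithms of~\eref{eq:Lfornegative}, parametrized by $k\in\mathbb Z$ and by $K$, and show that inserting them into proposition~\ref{prop:ccp} again collapses the condition to $\lambda^2\le\eta$, independently of $k$. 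Combining the three cases and rewriting in index-symmetric form yields the unified criterion, in which the sign condition $\lambda_i/(\lambda_j\lambda_k)\ge 0$ captures Culver's even-multiplicity constraint and the magnitude condition $\lambda_i\ge\lambda_j\lambda_k$ captures ccp.

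The main obstacle I expect is the negative-eigenvalue case: one must show that neither the discrete branch parameter $k$ nor the continuous commutant $K$ can enlarge the set of admissible channels beyond $\lambda^2\le\eta$. This will require analyzing how the off-diagonal $(2k+1)\pi$ entries in $L$ transform under the change of basis induced by $K$, and then invoking the fact that the ccp matrix $\omega_\perp\tau_L\omega_\perp$ is related to the dissipator matrix $G$ by unitary conjugation (so that its spectrum is invariant under the $K$-freedom). Once this invariance is established, the problem reduces to the same scalar comparison as in the degenerate positive case, and the proof is complete.
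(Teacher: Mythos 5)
Your proposal is correct and follows essentially the same route as the paper: a case split dictated by theorem~\ref{thm:culver} (distinct positive, degenerate positive, and paired negative eigenvalues), explicit principal-branch logarithms, the ccp condition of proposition~\ref{prop:ccp}, and necessity in the degenerate cases via $\LDiv{}\subseteq\cpDiv{}$ together with~\eref{eq:cpdivqubit}. The only difference is cosmetic: you propose to additionally verify invariance of the ccp spectrum under the branch parameter $k$ and the commutant $K$ in the negative-eigenvalue case, whereas the CP-divisibility bound you already invoke makes that check unnecessary, since it rules out $\lambda^2>\eta$ for \emph{every} choice of real logarithm at once.
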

This is one of the central results of this work, and it implies that for
testing L-divisibility of Pauli channels, it is enough to consider the
principal real logarithm branch and $K=\one$. The singular cases are included
in the closure of channels fulfilling~\eref{eq:L_div_definitive}.  The set of
L-divisible Pauli channels is illustrated in~\fref{fig:markov}.
\begin{figure} 
\centering
\includegraphics{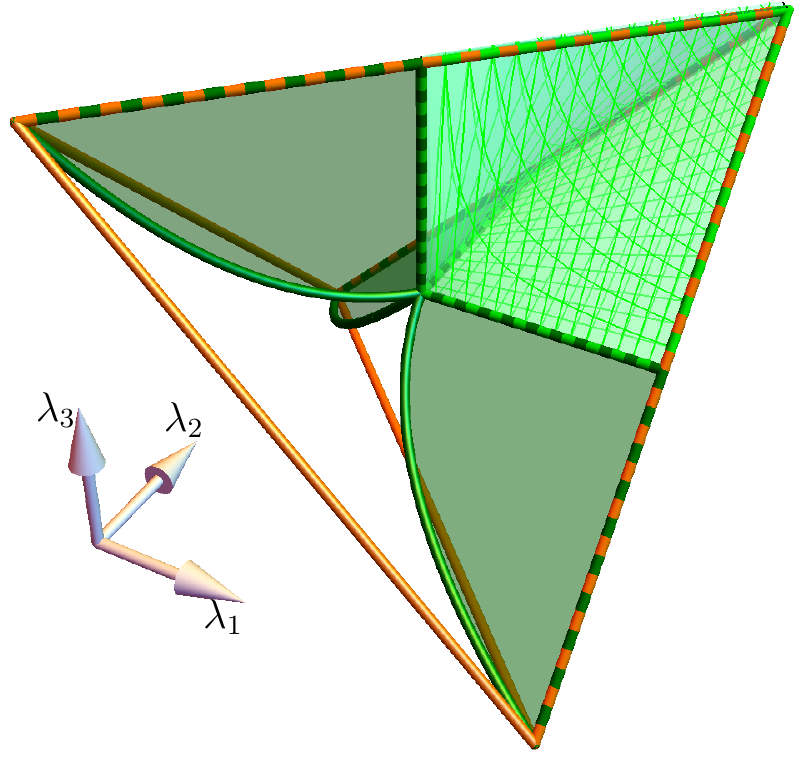}
\caption{Tetrahedron of Pauli channels with the set of L-divisible channels
(or equivalently infinitely divisible, see Theorem~\ref{thm:pauli_infinity})
shown in green, see  equations (\ref{eq:hermpres}) and
(\ref{eq:ccp_degenerated}). The solid set corresponds to channels with
positive eigenvalues, and the 2D sets correspond to the negative eigenvalue
case. The point where the four sets meet correspond to the \textit{total
depolarizing} channel. Notice that this set {\it does not} have the symmetries
of the tetrahedron. \label{fig:markov}}
\end{figure} 
To get a detailed picture of the position and inclusions of the divisibility
sets, we illustrate in \fref{fig:cut1} two slices of the tetrahedron where
different types of divisibility are visualized. Notice the non-convexity of the
considered divisibility sets.
\definecolor{colora}{rgb}{1., 0.5, 0.}
\definecolor{colorb}{rgb}{0., 1., 1.}
\definecolor{colorc}{rgb}{1., 0., 0.}
\definecolor{colord}{rgb}{0., 1., 0.}
\definecolor{colore}{rgb}{0., 0.5, 0.}
\definecolor{colorf}{rgb}{0.5, 0., 0.5}
\definecolor{colorg}{rgb}{0.5, 0.5, 0.5}
\definecolor{colorh}{rgb}{0., 0., 1.}

\begin{figure*} 
\centering
\begin{tikzpicture}
\coordinate (beg_1) at (7,8);
\coordinate (delta) at (0,-.5);
\node at (4,6) {\includegraphics{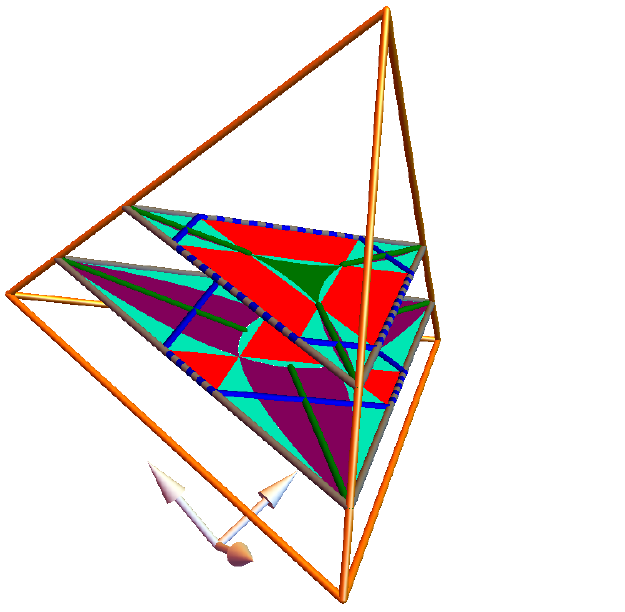}};
\node at (4,0) {\includegraphics[scale=0.9]{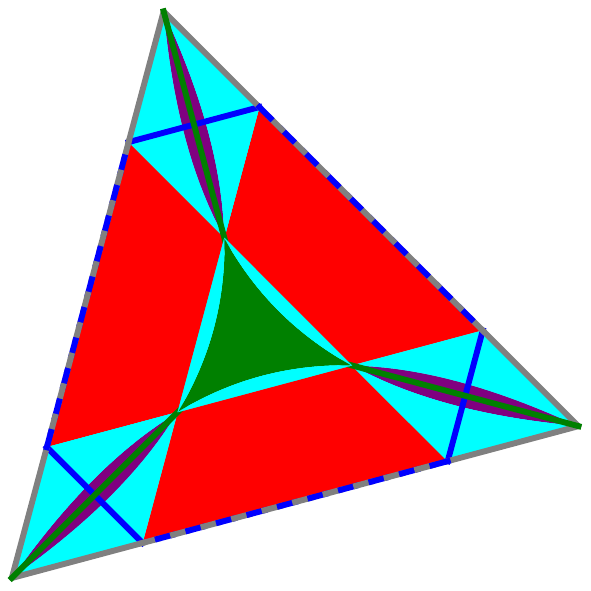}};
\node at (9,0) {\includegraphics[scale=0.9]{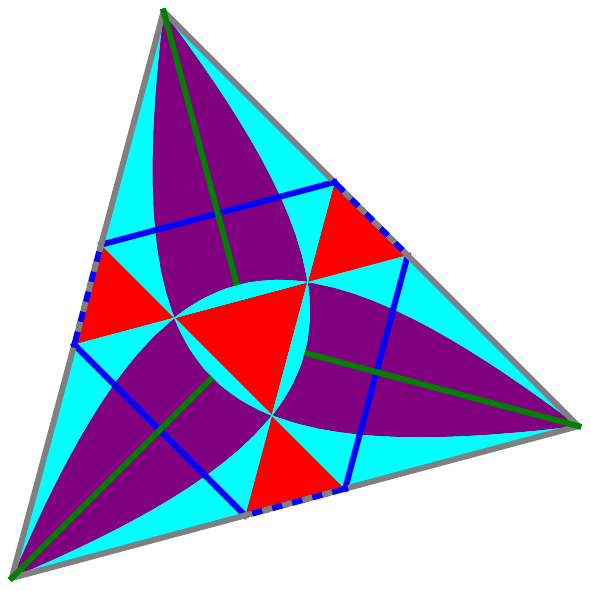}};
\node at (-0.5+4,-2.8+6) {$\lambda_1$};
\node at (-0.5+4,-1.6+6) {$\lambda_2$};
\node at (-1.8+4,-1.6+6) {$\lambda_3$};
\node[anchor=west]  at (beg_1) {{\color{colore} $\blacksquare$} \LDiv};
\node[anchor=west]  at ($(beg_1)+1*(delta)$) {{\color{colorf} $\blacksquare$} \cpDiv$\setminus$\LDiv};
\node[anchor=west]  at ($(beg_1)+2*(delta)$) {{\color{colorb} $\blacksquare$} \pDiv$\setminus$\cpDiv};
\node[anchor=west]  at ($(beg_1)+3*(delta)$) {{\color{colorc} $\blacksquare$} \Div$\setminus$\pDiv};
\node[anchor=west]  at ($(beg_1)+4*(delta)$) {{\color{colorg} $\blacksquare$} \Ind};
\node[anchor=west]  at ($(beg_1)+5*(delta)$) {{\color{colorh} $\blacksquare$} \footnotesize EB boundary};
\end{tikzpicture}
\caption{We show two slices of the unitary tetrahedron (figure in the left) determined by 
$\sum_i \lambda_i=0.4$ (shown in the center) and $\sum_i \lambda_i=-0.4$ (shown in the right). The non-convexity of the
divisibility sets  can be seen, including the set of indivisible channels. The convexity
of sets ${\sf C}$ and entanglement breaking channels can also be noticed in the slices. A central
feature is that the set $\Div{}\setminus\pDiv{}$ is always inside the
octahedron of entanglement breaking channels.  \label{fig:cut1}}
\end{figure*} 
\subsubsection{Family of unital channels with complex eigenvalues} 
To give an insight to the case of complex eigenvalues, consider the following
family of channels with real logarithm, written in the Pauli basis,
\begin{equation}
\mcE_\text{complex}=\left( \begin{array}{cccc}
1 & 0 & 0 & 0\\
0 & c & 0 & 0 \\
0 & 0 & a & -b \\  
0 & 0 & b & a
\end{array}  \right).
\label{eq:real_jordan_form}
\end{equation}
The latter has complex eigenvalues $a\pm i b$ and a real one $c>0$, together
with the trivial eigenvalue equal to $1$. Its real logarithm is given by,
\begin{align*}
 L &=K\log \left(\mcE_\text{complex} \right)_kK^{-1} \\
&=K\left( \begin{array}{cccc}
0 & 0 & 0 & 0\\
0 & \log(c) & 0 & 0 \\
0 & 0 & \log(\vert z \vert) & \arg(z) +2 \pi k \\  
0 & 0 & -\arg(z) -2 \pi k & \log(\vert z \vert)
\end{array}  \right)K^{-1}
\end{align*}
with $z=a+ib$.
The non-diagonal block of the logarithm has the same
structure of $\mcE_\text{complex}$, so $K$ also commutes with
$\log(\mcE_\text{complex})_k$, leading to a countable parametric space of hermitian preserving
generators. The ccp condition, see proposition~\ref{prop:ccp}, is reduced to
\begin{equation}
a^2+b^2\leq c \leq 1.
\label{eq:ccp_complex}
\end{equation}
Note that it does not depend on $k$ and the second inequality is always
fulfilled for CPTP channels.
The set containing them is shown
in~\fref{fig:tetra_complex_eigenvalues}.
\begin{figure} 
\centering
\includegraphics{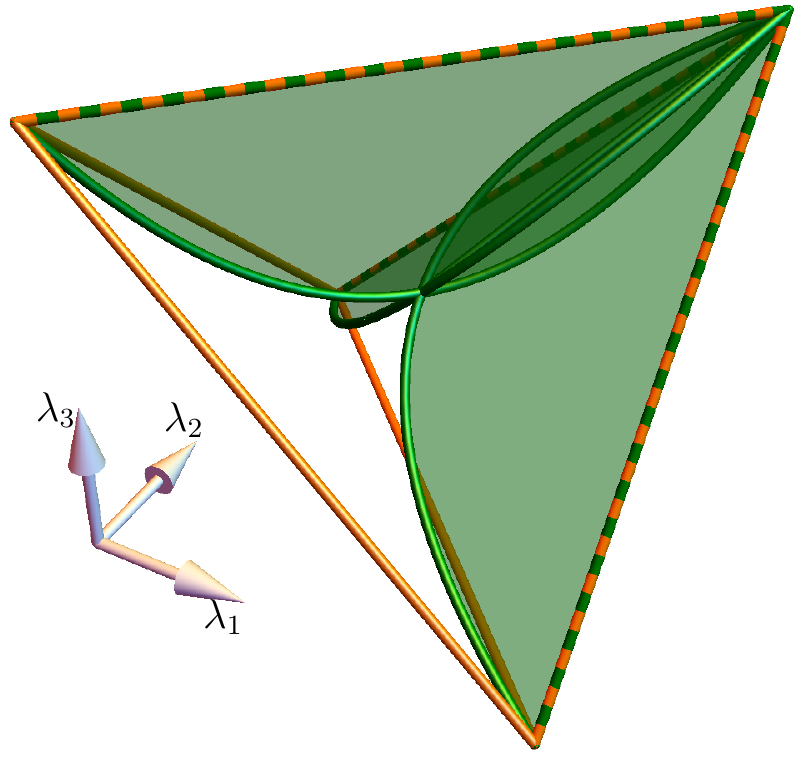}
\caption{
Tetrahedron of Pauli channels, with qubit unital L-divisible
channels of the form $\hat\mcE_\text{complex}$ (see main text). Note that the set
does not have the symmetries of the tetrahedron.
\label{fig:tetra_complex_eigenvalues}}
\end{figure} 

\subsection{Relation of L-divisibility with other divisibility classes} 
Consider a Pauli channel with
$0<\lambda_{\min}=\lambda_1\leq\lambda_2\leq\lambda_3<1$, thus the condition
$\lambda_1\lambda_2\leq \lambda_3$ trivially holds. Since
$\lambda_1\lambda_2\leq \lambda_1\lambda_3\leq \lambda_2\lambda_3\leq\lambda_2$,
it follows that $\lambda_1\lambda_3\leq\lambda_2$, thus, two (out of three)
L-divisibility conditions hold always for Pauli channels with
positive eigenvalues. Moreover, one may observe that CP-divisibility
condition \eref{eq:cpdivqubit} reduces to one
of L-divisibility conditions $\lambda_2\lambda_3\leq\lambda_1$.
In conclusion, the conditions of CP-divisibility and L-divisibility
for Pauli channels with positive eigenvalues coincide, thus,
in this case \cpDiv{} implies \LDiv{}.

Concatenating (positive-eigenvalues) L-divisible Pauli channels with
$\mcD_{x,y,z}$, one can generate the whole set of \cpDiv{} Pauli channels. In
other words, $\mcD_{x,y,z}$ brings the body (with vertex in $\id$) shown
in~\fref{fig:markov} to the bodies shown in \fref{fig:cp} (with vertexes
$x,y,z$). Therefore we can formulate the following theorem:
\begin{theorem}[Infinitesimal divisible unital channels]
Let $\mcE^{\text{CP}}_{\text{unital}}$ be an arbitrary infinitesimal divisible unital qubit channel. There exists at least one L-divisible Pauli channel $\tilde \mcE$, and two unitary conjugations $\mcU_1$ and $\mcU_2$, such that
$$\mcE^\text{CP}_{\text{unital}}=\mcU_1  \tilde \mcE \mcU_2\,.$$
Notice that if $\mcE^\text{CP}_{\text{unital}}$ is invertible, $\tilde \mcE=e^L$.
\end{theorem}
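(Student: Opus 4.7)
The plan is to reduce the statement to a purely Pauli-channel question via Ruskai's decomposition and then handle the possible sign patterns of the eigenvalues separately. Concretely, since $\mcE^{\text{CP}}_{\text{unital}}$ is unital, Theorem~\ref{thm:orthogonal_normal_form} delivers unitary conjugations $\mcV_1,\mcV_2$ and a Pauli channel $\mcD$ with non-trivial eigenvalues $(\lambda_1,\lambda_2,\lambda_3)$ (the shift vector $\vec\gamma$ vanishes by unitality) such that $\mcE^{\text{CP}}_{\text{unital}}=\mcV_1\mcD\mcV_2$. By Proposition~\ref{prop:divisibility_using_orthogonal_form}, $\mcD$ inherits CP-divisibility from $\mcE^{\text{CP}}_{\text{unital}}$.

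Next, I would use the restricted CP-divisibility characterization of Theorem~\ref{thm:restricted_cp_div}, which for Pauli channels collapses to \eref{eq:cpdivqubit} in the full Kraus rank case (or to Kraus-rank deficiency, a boundary case that lies in the closure). In particular, the non-trivial eigenvalues must satisfy $\lambda_1\lambda_2\lambda_3\ge 0$, so only two sign patterns are possible: all eigenvalues non-negative, or exactly two negative. In the all-positive case I would invoke the argument given just before the theorem (``Concatenating\dots''), where the preceding subsection shows that for Pauli channels with positive eigenvalues the CP-divisibility condition $\lambda_1\lambda_2\lambda_3\le\lambda_{\min}^2$ is precisely the surviving non-trivial L-divisibility condition in \eref{eq:hermpres}; hence $\tilde\mcE:=\mcD$ already belongs to \LDiv{} and we simply set $\mcU_{1,2}=\mcV_{1,2}$.

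For the two-negative case, the idea is to absorb the signs into one of the external unitaries. Say the signs are $(+,-,-)$ in the $(\sigma_x,\sigma_y,\sigma_z)$ slots; then $\mcD=\mcU_x\mcD'$, where $\mcU_x$ is the Pauli rotation $\rho\mapsto\sigma_x\rho\sigma_x$ (with Pauli-basis representative $\mathrm{diag}(1,1,-1,-1)$) and $\mcD'$ is the Pauli channel with eigenvalues $(\lambda_1,|\lambda_2|,|\lambda_3|)$. Because the product $\lambda_1\lambda_2\lambda_3$ and the set of absolute eigenvalues are unchanged by this factorization, the condition \eref{eq:cpdivqubit} transfers verbatim to $\mcD'$, so $\mcD'$ is CP-divisible with all non-negative eigenvalues and therefore L-divisible by the case already handled. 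Relabelling $\mcU_1:=\mcV_1\mcU_x$, $\mcU_2:=\mcV_2$, and $\tilde\mcE:=\mcD'$ gives the claimed decomposition. The cases $(-,+,-)$ and $(-,-,+)$ are symmetric and use $\mcU_y,\mcU_z$.

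The last sentence of the theorem is almost immediate: if $\mcE^{\text{CP}}_{\text{unital}}$ is invertible, then so are $\mcV_1,\mcV_2$ (trivially) and $\mcD$, and hence $\tilde\mcE$; invertibility rules out the singular closure points in Definition~\ref{def:L_divisibility}, so $\tilde\mcE$ lies in the open interior of \LDiv{} and admits a genuine Lindblad generator, giving $\tilde\mcE=e^L$. The only subtle point in the whole argument is verifying that the sign-flip step preserves CP-divisibility of the Pauli channel; this is the main technical obstacle and is handled by the observation that both $\det\hat\mcD$ and $\min_i|\lambda_i|$ are invariant under composition with $\mcU_{x,y,z}$.
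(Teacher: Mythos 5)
Your proposal is correct and follows essentially the same route as the paper: reduce to a Pauli channel via the special orthogonal normal form, use the fact (established in the preceding subsection) that CP-divisibility and L-divisibility coincide for Pauli channels with positive eigenvalues, and absorb the two-negative-sign cases by factoring out a Pauli rotation $\mcU_{x,y,z}$ into one of the external unitaries. Your writeup is in fact more explicit than the paper's two-sentence argument, in particular in checking that the CP-divisibility condition \eref{eq:cpdivqubit} is invariant under the sign flip and in handling the singular and invertible cases.
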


Let us continue with another equivalence relation valid for Pauli channels.
In general,
$\text{\LDiv{}}\subset\text{\InftyDiv{}}$; however, for Pauli channels these
two 
subsets coincide.

\begin{theorem}[Infinitely divisible Pauli channels]
The set of L-divisible Pauli channels is
equivalent to the set of infinitely divisible Pauli channels.
\label{thm:pauli_infinity}
\end{theorem}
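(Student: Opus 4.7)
The easy inclusion $\LDiv\cap\{\text{Pauli}\} \subseteq \InftyDiv\cap\{\text{Pauli}\}$ is recorded already by \eqref{eq:set_relations}: if $\mcE=\exp(L)\in\mcL_{\mathrm{L}}$ then $\mcE_n:=\exp(L/n)$ is CPTP for every $n\in\mathbb{Z}^+$ (because $L/n$ is still Lindblad) and $(\mcE_n)^n=\mcE$. To extend this to the closure $\LDiv=\overline{\mcL_\mathrm{L}}$, I would note that $\InftyDiv$ is itself closed in the finite-dimensional setting: if $\mcE^{(k)}\to\mcE$ with each $\mcE^{(k)}=(\mcE^{(k)}_n)^n$, then for fixed $n$ compactness of \cptp{} provides a subsequential limit $\mcE_n$ with $(\mcE_n)^n=\mcE$.

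For the non-trivial inclusion I plan to use Denisov's structural theorem (quoted right before Definition~\ref{def:L_divisibility}): any $\mcE\in\InftyDiv$ admits $\mcE=\mcE_0\,\exp(L)$ with $\mcE_0$ idempotent CPTP, $L$ Lindblad, and $\mcE_0 L\mcE_0=\mcE_0 L$. Suppose first that $\mcE$ is a non-singular infinitely divisible Pauli channel, so $\det\mcE\neq 0$; then $\det\mcE_0\neq 0$, and since the eigenvalues of an idempotent map lie in $\{0,1\}$ they must all equal $1$, forcing $\mcE_0=\id$. Hence $\mcE=\exp(L)\in\mcL_\mathrm{L}\subseteq\LDiv$, proving the inclusion on the dense, non-singular part of $\InftyDiv\cap\{\text{Pauli}\}$.

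To close the argument, I would extend to singular Pauli channels by approximation. Any singular $\mcE\in\InftyDiv\cap\{\text{Pauli}\}$ is a pointwise limit of non-singular members of $\InftyDiv\cap\{\text{Pauli}\}$ (one can perturb the spectrum $(1,\lambda_1,\lambda_2,\lambda_3)$ within the region that still satisfies the Denisov-compatible CP conditions, e.g.\ by replacing each vanishing $\lambda_i$ by $\epsilon>0$ and letting $\epsilon\to 0$). Combining with the previous paragraph, the approximating channels lie in $\mcL_\mathrm{L}$, and since $\LDiv$ is by definition the closure of $\mcL_\mathrm{L}$, the limit $\mcE$ lies in $\LDiv$. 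Together with the first paragraph, this gives the claimed set equality $\InftyDiv\cap\{\text{Pauli}\}=\LDiv\cap\{\text{Pauli}\}$.

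The step I expect to be most delicate is the singular part. A priori, the Denisov decomposition of a singular infinitely divisible Pauli channel could involve a genuinely non-trivial idempotent $\mcE_0$ (e.g.\ the totally depolarizing channel $\mcN={\rm diag}(1,0,0,0)$, or a dephasing projection with a single $\lambda_i=1$). Rather than classifying all such $\mcE_0$ by hand, the cleanest route seems to be the approximation argument above, which bypasses this case analysis but requires a careful verification that non-singular approximants inside $\InftyDiv\cap\{\text{Pauli}\}$ actually exist for every admissible singular spectrum --- a fact that follows by inspection of the explicit L-divisibility inequalities \eqref{eq:hermpres} and \eqref{eq:ccp_degenerated} together with the CP conditions \eqref{eq:complete_positivity_qubit}.
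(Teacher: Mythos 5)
Your first two paragraphs are sound: the inclusion $\mcL_{\rm L}\subseteq\InftyDiv$ via $\mcE_n=\exp(L/n)$, the compactness argument showing $\InftyDiv$ is closed (so that the closure $\LDiv=\overline{\mcL_{\rm L}}$ stays inside $\InftyDiv$), and the non-singular converse via $\det\mcE\neq 0\Rightarrow\det\mcE_0\neq 0\Rightarrow\mcE_0=\id$ (an idempotent is diagonalizable with spectrum in $\{0,1\}$) are all correct; that determinant shortcut is actually cleaner than what the paper does for this sub-case. The genuine gap is the singular case, and it is not the routine verification you suggest — it is where the content of the theorem lives. Your approximation argument needs the statement ``every singular infinitely divisible Pauli channel is a limit of non-singular Pauli channels in $\mcL_{\rm L}$,'' and you propose to certify this by inspecting the L-divisibility inequalities \eref{eq:hermpres}. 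But those inequalities only describe (the closure of) $\mcL_{\rm L}$; they say nothing about which singular Pauli channels are infinitely divisible, which is the set you must cover. Concretely, take $\diag(1,\tfrac12,\tfrac12,0)$: it is a valid singular Pauli channel, your recipe of replacing the vanishing eigenvalue by $\epsilon>0$ gives $\diag(1,\tfrac12,\tfrac12,\epsilon)$, and the ccp condition $\lambda_3\geq\lambda_1\lambda_2$ forces $\epsilon\geq 1/4$, so the perturbed channels are \emph{not} in $\mcL_{\rm L}$ for small $\epsilon$ and the channel is not L-divisible. Your argument as written gives no reason why this channel could not nevertheless be infinitely divisible, which would refute the theorem; ruling it out requires constraining the possible idempotents $\mcE_0$ in Denisov's decomposition, which is exactly the case analysis you are trying to bypass.

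That classification is what the paper's proof supplies, and it cannot be replaced by spectral perturbation alone. The paper observes that the only idempotent qubit channels are the identity, single-point contractions, and diagonalization channels; that among single-point contractions only the complete depolarization $\mcN=\diag(1,0,0,0)$ is Pauli (with $\mcN e^L=\mcN$); and that $\hat\mcE_{\diag}e^{\hat L}$ necessarily has matrix rank two, hence equals $\diag(1,0,0,\lambda)$ up to permutation with $\lambda>0$. In particular a rank-three singular Pauli channel such as $\diag(1,\tfrac12,\tfrac12,0)$ admits no Denisov decomposition and is therefore not infinitely divisible. Only after this classification does the approximation step (which the paper also uses, with the explicit sequences $\hat L_n=\diag(0,-n,-n,-n)$ for $\mcN$ and $\diag(1,\sqrt{\lambda}/n,\sqrt{\lambda}/n,\lambda)$ for the rank-two forms, checking $\epsilon^2\leq\lambda$ and $\epsilon\leq(1+\lambda)/2$) close the argument. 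Insert that classification and your plan becomes a complete proof.
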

\begin{proof}
A channel
is infinitely divisible if and only if it can be written as $\mcE_0 e^L$, where $\mcE_0$
is an idempotent channel satisfying $\mcE_0 L \mcE_0=\mcE_0 L$ and $L$ has
Lindblad form, see definition~\ref{def:infinitely_divisible}.
The only idempotent qubit channels are contractions of the Bloch
sphere into single points, diagonalization channels $\mcE_\diag$ transforming
Bloch sphere into a line connecting a pair of basis states, and the identity
channel. 
Among the single-point contractions, the only one that is a Pauli channel is
the contraction of the Bloch sphere into the complete mixture; let us call it
$\mcN$. Notice that
$\mcE = \mcN e^L=\mcN$ for all $L$. The channel $\mcN$
belongs to the closure of \LDiv{}, because a sequence of channels $e^{L_n}$
with $\hat{L}_n=\diag \left(0,-n,-n,-n \right)$ converges to $\hat\mcN$ in the
limit $n\to\infty$. 
For the case of $\mcE_0$ being the identity channel we have $\mcE=e^L$, thus,
trivially such infinitely divisible channel $\mcE$ is in \LDiv{} too. 
It
remains to analyze the case of diagonalization channels. First, let us
note that the matrix of $e^{\hat{L}}$ is necessarily of full rank,
since ${\rm det}\hat{\mcE}\neq 0$. It follows that the matrix
$\hat{\mcE}=\hat\mcE_{\diag} e^{\hat{L}}$ has rank two as $\hat\mcE_{\diag}$
is a rank two matrix, thus, it takes one of the following forms
$\hat{\mcE}_{x}^\lambda=\diag \left(1,\lambda,0,0\right)$,
$\hat{\mcE}_{y}^\lambda=\diag \left(1,0,\lambda,0\right)$,
$\hat{\mcE}_{z}^\lambda=\diag \left(1,0,0,\lambda\right)$.
The infinitely divisibility implies $\lambda>0$ in order to keep
the roots of $\lambda$ real. In what follows we will show that
$\hat{\mcE}_z$ belongs to (the closure of) \LDiv{}. Let us
define the channels $\hat\mcE_z^{\lambda,
\epsilon}=\diag\left(1,\epsilon,\epsilon,\lambda \right)$ with $\epsilon>0$.
The complete positivity and ccp conditions
translate into the inequalities $\epsilon\leq \frac{1+\lambda}{2}$
and $\epsilon^2\leq \lambda$, respectively; therefore one can always find an
$\epsilon>0$ such that $\hat \mcE_z^{\lambda, \epsilon}$ is a L-divisible
channel. 
If we choose $\epsilon=\sqrt{\lambda}/n$ with $n\in
\mathbb{Z}^+$, the
channels $\hat
\mcE_{z,n}=\diag\left(1,\sqrt{\lambda}/n,\sqrt{\lambda}/n,\lambda \right)$ form
a sequence of L-divisible channels converging to $\hat\mcE_z^\lambda$ when
$n\to\infty$. The analogous reasoning implies that $\hat{\mcE}_x^\lambda,
\hat{\mcE}_y^\lambda\in\LDiv{}$ too. Let us note that one parameter
family $\mcE_z$ are convex combinations of the complete
diagonalization channel
$\hat{\mcE}_z^1=\diag \left(1,0,0,1\right)$ 
and the complete mixture contraction $\hat \mcN$.
This completes the proof.
\end{proof}

Finally, let us remark that using the theorem~\ref{thm:unital_indivisible} we
conclude that the intersection $\pDiv{}\cap \Ind{}$ depicted
in~\fref{fig:setscheme} is not empty. To show this, notice that there are channels with positive
determinant inside the faces (\ie{} \pDiv{} but not \Div{}), for example $\diag
\left(1,\frac{4}{5},\frac{4}{5},\frac{3}{5}\right)$. Therefore we conclude that
up to unitaries, $\pDiv{}\cap \Ind{}$ corresponds to the union of the four faces
faces of the tetrahedron minus the faces of the octahedron that intersect with
the faces of the tetrahedron, see \fref{fig:bypy}. We have to remove such
intersection since it corresponds to channels with negative determinant, and
thus not in \pDiv{}.

\section{Non-unital qubit channels} 
\label{subsec:generalqubitchannels}
Similar to unital channels, using
theorem~\ref{prop:divisibility_using_orthogonal_form} we are able to
characterize \Div{}, \pDiv{} and \cpDiv{} by studying special orthogonal normal
forms. Such channels are characterized by $\vec \lambda$ and
$\vec \tau$, see~\eref{eq:orthogonalform}. Thus, we can study if a channel is
\Div{} by computing the rank of its Choi matrix, see theorem~\ref{thm:divisible}. For this case algebraic
equations are in general fourth order polynomials. In fact, 
in Ref.~\cite{Lukasz} a condition in terms of the eigenvalues and $\vec \tau$
is given.  For special cases, however, we can
obtain compact expressions, see~\fref{fig:cutnonunital2}. The characterization
of \pDiv{} is given by again by the condition $\lambda_1 \lambda_2 \lambda_3\geq 0$. \cpDiv{} is tested, for full Kraus rank non-unital channels,
using theorem~\ref{thm:restricted_cp_div}, the calculation of $s_i$'s is done using the algorithm presented in Ref.~\cite{Verstraete2001}. For the characterization of \LDiv{} we use theorem~\ref{thm:culver} and evaluate numerically the cpp condition.

We can plot illustrative pictures even though the whole space of qubit
channels has $12$ parameters. This can be done using special orthogonal normal forms
and fixing $\vec \tau$, exactly in the same way as the unital case.
Recall that unitaries only modify \LDiv{}, leaving the shape of other sets
unchanged. CPTP channels are represented as a volume inside the
tetrahedron presented in \fref{fig:tetra}, see \fref{fig:cutnonunital2}.  In
the later figure we show a slice corresponding to $\vec \tau=\left(1/2, 0,0
\right)^\text{T}$. Indeed, it has the same structure of the slices for the
unital case, but deformed, see \fref{fig:cut1}. 
A difference with respect to the unital case is that L-divisible
channels with negative eigenvalues (up to unitaries) are not completely inside
CP-divisible channels. A part of them are inside the \pDiv{}
channels.
\begin{figure} 
\centering
\includegraphics{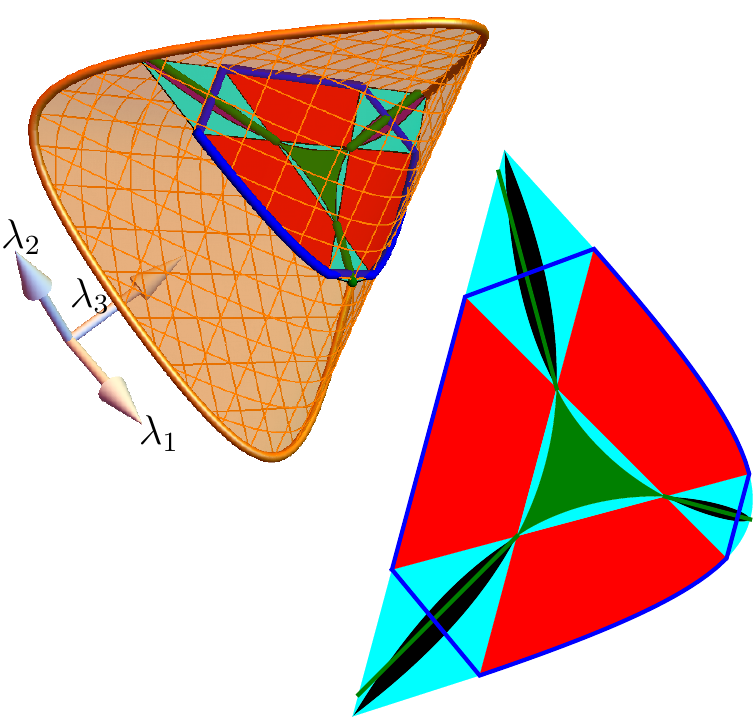}
\caption{(left) Set of non-unital unital channels up to unitaries, defined by
$\vec \tau=(1/2,0,0)$, see \eref{eq:orthogonalform}. This set lies inside the
tetrahedron. For this particular case the CP
conditions reduce to the two
inequalities $2 \pm 2\lambda_1 \ge \sqrt{1+4(\lambda_2 \pm \lambda_3)^2}$.
A cut corresponding to $\sum_i \lambda_i = 0.3$ is presented inside and in the right, see \fref{fig:cut1} for the color coding. The structure of divisibility sets presented here has basically the same structure as for the unital case except for \LDiv{}. A part of the channels with negative eigenvalues belonging to \LDiv{} lies outside $\cpDiv\setminus \LDiv$, see green lines. As for the unital case a central feature is that the channels in $\Div{}\setminus \pDiv{}$ are entanglement breaking channels. Channels in the boundary are not characterized due to the restricted character of Theorem~\ref{thm:Lorentz}.
 \label{fig:cutnonunital2}}
\end{figure} 

A central feature of Figs. \ref{fig:cut1} and \ref{fig:cutnonunital2} is that
the set $\Div{}\setminus\pDiv{}$ is inside the convex slice of the set of
entanglement breaking channels (deformed octahedron). Indeed, we can proof the
following theorem. 

\begin{theorem}[Entanglement-breaking channels and divisibility]
  Consider a qubit channel $\mcE$.
  If $\det\hat\mcE<0$, then $\mcE$ is entanglement-breaking,
  i.e. all qubit channels outside $\pDiv{}$ are entanglement breaking.
\label{thm:EB}
\end{theorem}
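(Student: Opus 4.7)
The plan is to combine Ruskai's normal-form reduction with the Peres--Horodecki criterion and then to argue that the positivity of the partial transpose of the Choi matrix follows automatically from the positivity of the Choi matrix itself once the determinant is negative. First I would exploit the invariance of the entanglement--breaking property under unitary conjugation: because $(\id\otimes \mcU_1)(\id\otimes \mcD)(\id\otimes\mcU_2)$ takes separable states to separable states, Theorem~\ref{thm:orthogonal_normal_form} allows me to assume without loss of generality that $\mcE=\mcD$ is in special orthogonal normal form with parameters $(\vec\lambda,\vec\gamma)$ and $\det\hat\mcE=\lambda_1\lambda_2\lambda_3<0$.

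Next I would use the fact that for a two-qubit state separability is equivalent to positivity of the partial transpose (the Peres--Horodecki criterion, which is sharp in the $2\otimes 2$ case). Combined with the defining characterization of entanglement-breaking maps (Definition~\ref{def:eb}) via separability of the normalized Choi state, the goal reduces to showing $\tau_\mcD^{T_B}\geq 0$. A direct inspection of the Choi matrix in~\eref{eq:choi_qubit} shows that $\tau_\mcD^{T_B}$ is obtained from $\tau_\mcD$ by the substitution $\lambda_2\mapsto-\lambda_2$, $\gamma_2\mapsto-\gamma_2$, so it is itself the Choi matrix of a candidate channel $\mcD'$ with parameters $(\lambda_1,-\lambda_2,\lambda_3,\gamma_1,-\gamma_2,\gamma_3)$ and $\det\hat\mcD'=-\det\hat\mcD>0$. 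Hence the theorem is equivalent to the implication $\mcD\in\cptp{}\Rightarrow \mcD'\in\cptp{}$ whenever $\lambda_1\lambda_2\lambda_3<0$.

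I would prove this implication first in the unital case $\vec\gamma=0$, where both Choi matrices are diagonal in the Bell basis with eigenvalues $\tfrac14(1+\epsilon_1\lambda_1+\epsilon_2\lambda_2+\epsilon_3\lambda_3)$; the four CP eigenvalues correspond to the sign patterns with $\epsilon_1\epsilon_2\epsilon_3=+1$ and the four EB eigenvalues to $\epsilon_1\epsilon_2\epsilon_3=-1$ (this matches~\eref{eq:complete_positivity_qubit} and the EB octahedron~\eref{eq:eb_inequalities}). Because $\det<0$ forces an odd number of negative $\lambda_i$, after a Pauli conjugation one can assume, say, $\lambda_1,\lambda_2\geq 0$ and $\lambda_3<0$; then each EB inequality is obtained from a CP inequality by flipping the sign of $\lambda_3$, and the assumption $\lambda_3<0$ makes the EB condition weaker than the CP one. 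For instance $1-\lambda_1-\lambda_2+\lambda_3\geq 0$ is a CP condition, and it rewrites as $\lambda_1+\lambda_2+|\lambda_3|\leq 1$, which is the corresponding EB condition. This handles the unital case.

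For the non-unital case I would extend the comparison by exploiting the block structure of~\eref{eq:choi_qubit}. The unital-part CP inequalities continue to hold (as is pointed out after Theorem~\ref{thm:orthogonal_normal_form}), which takes care of the diagonal entries of $\tau_\mcD^{T_B}$ in the Bell basis by the previous paragraph. The remaining minors involve only the $\gamma_k$'s, and the difference $\tau_\mcD-\tau_\mcD^{T_B}=\tfrac12(\gamma_2\,\one\otimes\sigma_y-\lambda_2\,\sigma_y\otimes\sigma_y)$ has a tractable rank-two structure that I would analyze via a Schur-complement calculation on the two $2\times 2$ blocks of $\tau_\mcD$ in the Bell basis, concluding the required principal-minor inequalities. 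The hard part is exactly this last step: even though the unital sub-case is a clean combinatorial argument on the tetrahedron versus the EB octahedron, proving that the $\vec\gamma$-dependent cross terms cannot spoil PPT when $\det<0$ requires careful bookkeeping of the couplings between the diagonal and off-diagonal entries, which is where most of the technical work will reside.
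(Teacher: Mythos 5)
Your route is genuinely different from the paper's and the unital half of it works, but the non-unital half is not a proof — it is a plan whose decisive step you yourself flag as unresolved. The reduction to PPT of the Choi matrix via Peres--Horodecki is legitimate in the $2\otimes2$ case, and your reformulation (``$\mcD$ CP with $\det\hat\mcD<0$ implies the sign-flipped channel $\mcD'$ is CP'') is correct. For unital channels the comparison of the tetrahedron inequalities~\eref{eq:complete_positivity_qubit} with the octahedron inequalities~\eref{eq:eb_inequalities} does close the argument, although not quite as you state it: it is not true that every EB inequality follows from a CP inequality by flipping the sign of $\lambda_3$. For the sign pattern $1-\lambda_1+\lambda_2+\lambda_3\ge 0$ (with $\lambda_1,\lambda_2\ge 0>\lambda_3$) flipping $\lambda_3$ produces a \emph{weaker} CP inequality; you must instead flip whichever index $i$ has $\epsilon_i\lambda_i\ge 0$, which always exists when exactly one $\lambda_i$ is negative. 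This is a repairable slip.

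The genuine gap is the non-unital case. Positivity of the Bell-basis diagonal of $\tau_\mcD^{T_B}$ (which is all the unital-part inequalities give you) is far from sufficient for positive semidefiniteness, and the claim that ``the remaining minors involve only the $\gamma_k$'s'' together with an unexecuted Schur-complement computation does not establish that the $\vec\gamma$-dependent off-diagonal structure of~\eref{eq:choi_qubit} cannot destroy PPT. Since this is precisely where you say ``most of the technical work will reside,'' the proposal stops short of proving the theorem for general qubit channels. Note that the paper's proof is organized exactly so as to avoid this confrontation: instead of the special orthogonal normal form it uses the Lorentz normal decomposition, whose $\mathrm{SL}(2,\mathbb{C})$ factors are 1wSLOCC operations on the Choi state and therefore cannot create entanglement; this absorbs the non-unitality into the local filters and reduces the problem to Pauli channels with negative determinant, which are known to be entanglement-breaking, with a continuity argument covering the Kraus-deficient (non-diagonal normal form) case. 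If you want to complete your PPT route you must either carry out the full $4\times 4$ positivity analysis of $\tau_\mcD^{T_B}$ including the $\vec\gamma$ couplings, or borrow the SLOCC reduction to eliminate $\vec\gamma$ first — at which point your unital computation becomes an elementary substitute for the paper's appeal to the literature on entanglement-breaking Pauli channels.
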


Before introducing the proof, let us first show that the proper orthochronous
Lorentz transformations present in the Lorentz normal decomposition for
channels, see \sref{sec:lorentz}, correspond to 1wSLOCC at the level of their
\Jami{} state. Consider a channel $\mcE$ and its Lorentz normal form $\tilde
\mcE$ given by
\begin{equation}
\tilde \mcE=\alpha \mcF_2 \mcE \mcF_1,
\end{equation}
where
$$\mcF_i:\rho \mapsto X_i\rho X_i^{\dagger}, \text{ with } X_i \in \text{SL}(2,\mathbb{C}), \ \ i=1,2,$$
and $\alpha$ is a constant that must be included for $\tilde \mcE$ to be trace preserving,
We showed already that $\text{SL}(2,\mathbb{C})$ is a double cover of $\text{SO}^+(3,1)$, \ie{} $\mcF_i$'s correspond to the proper orthochronous Lorentz transformations of the decomposition.

Now let us compute the \Jami{} state of $\tilde \mcE$, $\tilde \tau$, using the Kraus decomposition of $\mcE$~\cite{wolfnotes},
\begin{align}
\tilde \tau&=\alpha \left(\id_2\otimes \mcF_2 \mcE \mcF_1\right)[\omega] \nonumber \\
&=\alpha\sum_i \left(\one\otimes X_2\right)\left(\one\otimes K_i\right)\left(\one\otimes X_1\right)\proj{\Omega}{\Omega}  \left(\one\otimes X_1^\dagger{}\right)\left(\one\otimes K_i^\dagger{}\right)\left(\one\otimes X_2^\dagger\right) \nonumber \\
&=\alpha\sum_i \left(X_1^\text{T}\otimes X_2\right)\left(\one\otimes K_i\right)\proj{\Omega}{\Omega})\left(\one\otimes K_i^\dagger{}\right)\left(\overline{X_1}\otimes X_2^\dagger\right)\nonumber\\
&=\alpha (X_1^T \otimes X_2) \tau (X_1^T \otimes X_2)^{\dagger},
\label{eq:choi_locc}
\end{align}
where $\lbrace K_i \rbrace_i$ are a choice of Kraus operators of $\mcE$ and $$
\tau=\sum_i \left(\one\otimes K_i\right)\proj{\Omega}{\Omega}\left(\one\otimes
K_i^\dagger{}\right)$$ its \Jami{} matrix. Here, $\ket{\Omega}$ is the
Bell state between two copies of the system, in this case a qubit, for which
the identity $A\otimes \one \ket{\Omega}=\one \otimes
A^\text{T}\ket{\Omega}$ holds.
It can be observed that~\eref{eq:choi_locc} has exactly the form of the
normalized 1wSLOCC scheme, where $\alpha$ turns to be the normalization
constant, see~\eref{eq:det_press}, \ie{} $\tr \tilde \tau=1$. That's why we
have introduced it at the first place. Now let us proceed with the proof of
theorem~\ref{thm:EB},

\begin{proof}
Let $\mcE$ be a qubit channel with negative determinant and $\hat\mcE$ its
matrix representation using the Pauli basis, see
\eref{eq:qubit_channel_pauli_basis}. Recall that the matrix $R$ defining the
\Jami{} state of $\mcE$, 
$$\tau_{\mcE}=\frac{1}{4}\sum^{3}_{jk}R_{jk}\sigma_j\otimes \sigma_k,$$
and $\hat\mcE$ are related by
$$R=\hat\mcE\Phi_{\text{T}},$$
where $\Phi_{\text{T}}=\diag \left( 1,1,-1,1 \right)$.
It follows immediately that $R$ has positive determinant,
$$\det R=-\det\hat{\mcE}>0,$$
since $\det \Phi_\text{T}=-1$.
Using the aforementioned Lorentz normal decomposition for matrix $R$, we have
$$R=L^{\text{T}}_1 \tilde R L_2$$ where $\det L_{1,2}>0$ and $\det \tilde R>0$.
Stressing that transformations
$L_{1,2}$ correspond to 1wSLOCC (see~\eref{eq:choi_locc}), then
$\tilde R$ parametrizes an unnormalized two-qubit state.

Let us first discuss the case when $\tilde R$ is diagonal.
The channel corresponding to $\tilde R$ (in the Pauli basis)
is 
$$\hat {\mathcal G} = \tilde R \Phi_{\text{T}}/\tilde
R_{00},$$
where $R_{00}=\tr \tilde R=\tr \tau_{\mathcal G}$. Since $\tilde R$ is diagonal, then ${\mathcal G}$
is a Pauli channel with $\det\hat {\mathcal G}<0$. A Pauli channel
has a negative determinant, if either all $\lambda_j$ are negative,
or exactly one of them is negative. In Ref.~\cite{Ziman2005} it has been
shown that the set of channels with $\lambda_j<0 \ \ \forall j$ are
entanglement breaking channels. Now, using the symmetries of the
tetrahedron, one can generate all channels with negative determinant
by concatenating this set with the Pauli rotations. Therefore
every Pauli channel with negative determinant is entanglement breaking, thus,
$\tau_{\mathcal G}$ is separable. 
Given that LOCC operations
can not create entanglement~\cite{Horodecki}, we have that $\tau_\mcE$
is separable, therefore $\mcE$ is entanglement breaking.
%

The case when $\tilde R$ is non-diagonal corresponds to Kraus deficient channels (the matrix rank of \ref{eq:state_normal_form_singular} is at most $3$). This case can be analyzed as follows. 
Since the neighborhood of any Kraus deficient channel with negative determinant contains full Kraus rank channels, by continuity of the determinant such channels have negative determinant too. The last ones are entanglement breaking since full Kraus rank channels have diagonal Lorentz normal form. Therefore, by continuity of the concurrence~\cite{Ziman2005}, Kraus deficient channels with negative determinant are entanglement breaking. 
\end{proof}
\section{Divisibility transitions and examples with dynamical processes} 
\label{sec:jumps}
The aim of this section is to use illustrative examples of quantum dynamical
processes to show transitions between divisibility types of the instantaneous
channels. From the slices shown above (see figures \ref{fig:cut1} and
\ref{fig:cutnonunital2}) it can be noticed that every transition between the
studied divisibility types is permitted. This is due to the existence of common
borders between all combinations of divisibility sets; we can think of any
continuous line inside the tetrahedron~\cite{filippov2017} as describing some
quantum dynamical map. 

We analyze two examples. The first is an implementation of the approximate NOT
gate, $\mcA_\text{NOT}$ throughout a specific collision model~\cite{Rybar2012}. 
The second is the well known setting of a two-level atom interacting with a quantized mode of an optical cavity \cite{Haroche06}. We define a simple function that assigns a particular value to a channel $\mcE_t$ according to divisibility hierarchy, i.e.
\begin{equation}
\DivFunc[\mcE]=\left\{
\begin{array}{cl}
1&\ {\rm if}\ \mcE\in{\sf C}^{\rm L}\,,\\
2/3&\ {\rm if}\ \mcE\in{\sf C}^{\rm CP}\setminus{\sf C}^{\rm L}\,,\\
1/3&\ {\rm if}\ \mcE\in{\sf C}^{\rm P}\setminus{\sf C}^{\rm CP}\,,\\
0&\ {\rm if}\ \mcE\in{\sf C}\setminus{\sf C}^{\rm P}\,.
\end{array}
\right.
\label{eq:delta_function}
\end{equation}
A similar function can be defined
to study the transition to/from the set of entanglement-breaking channels, \ie{}
\begin{equation}
\chi[\mcE]=\left\{
\begin{array}{cl}
1&\ {\rm if}\ \mcE\text{ is entanglement breaking}\,,\\
0&\ {\rm if}\ \mcE\text{ if not}.
\end{array}
\right.
\label{eq:chi_function}
\end{equation}

The quantum NOT gate is defined as $\text{NOT}:\rho \mapsto \one-\rho$, \ie{}
it maps pure qubit states to its orthogonal state. Although this map transforms
the Bloch sphere into itself it is not a CPTP map, and the closest CPTP map
is $\mcA_{\rm NOT}:\rho \mapsto
(\one-\rho)/3$. This is a rank-three qubit unital channel, thus,
it is indivisible \cite{cirac}. Moreover, $\det \mcA_{\rm NOT}=-1/27$ implies
that this channel is not achievable by a P-divisible dynamical map.  It is worth
noting that $\mcA_\text{NOT}$ belongs to \Ind{}.

A specific collision model was designed in Ref.~\cite{Rybar2012} simulating
stroboscopically a quantum dynamical map that implements the approximate quantum NOT
gate, $\mcA_{\rm NOT}$, in finite time. It is constructed in the following way, any stroboscopically simulable channel can be written as 
$$\mcE_n=\tr_\text{E} \left[\left( U_1 \dots U_n \right) \rho \otimes \omega_n \left( U_1 \dots U_n \right)^\dagger{} \right],$$
where $U_j=U\otimes \one_{\overline{j}}$ is the unitary corresponding to the bipartite collision with the $j$th particle, the identity $\one_{\overline{j}}$ is applicated in all particles except particle $j$. The density matrix $\omega_n$ is the state of the particles that ``collide'' with the central system, they are though as the environment. It can be shown that in the limit $n\to \infty$, the change of the central system from the $j$th to the $(j+1)$th interaction can be made arbitrarily small~\cite{Rybar2012}. Thus, substituting the integer index $j$ by the continuous parameter $t$, we have,
\begin{equation}
\mcE_t[\varrho]=\cos^2( t)\varrho+\sin^2( t)\mcA_{\rm NOT}[\varrho]
+\frac{1}{2}\sin(2 t)\mcF[\varrho]\,,
\label{eq:collision:model}
\end{equation}
where $\mcF[\varrho]=i\frac{1}{3}\sum_j [\sigma_j,\varrho]$.
It achieves the desired gate
$\mcA_{\rm NOT}$ at $t=\pi/2$.

Let us stress that this dynamical map is unital, i.e. $\mcE_t[\one]=\one$ for
all $t$, thus, its special orthogonal normal form can be illustrated inside the
tetrahedron of Pauli channels, see  \fref{fig:tray}. In
\fref{fig:evolnot} we plot $\DivFunc{}[\mcE_t]$, $\chi[\mcE_t]$ and the
value of the $\det\mcE_t$.  We see the transitions ${\sf C}^{\rm L}
\rightarrow{\sf C}^{\rm P}\setminus{\sf C}^{\rm CP} \rightarrow{\sf C}^{\rm
div}\setminus{\sf C}^{\rm P} \rightarrow \Ind{}$ and back. Notice that
in both plots the trajectory never goes through the $\cpDiv{}\setminus \LDiv{}$
region. This means that when the parametrized channels, up to rotations, belong
to \LDiv{}, so do the original ones. The transition
between P-divisible and divisible channels, i.e. \pDiv{}$\setminus$\cpDiv{} and
\Div{}$\setminus$\pDiv{}, occurs at the discontinuity in the yellow curve in 
\fref{fig:tray}. Let us note that this discontinuity 
only occurs in the space of $\vec \lambda$; it is a consequence of the
special orthogonal normal decomposition, see \eref{eq:orthogonalform}. The complete
channel is continuous in the full convex space of qubit CPTP maps.
 The transition from
$\pDiv{}\setminus \Div{}$ and back occurs at times $\pi/3$ and $2 \pi/3$.  It
can also be noted that the transition to entanglement breaking channels 
occurs shortly before the channel enters in the $\Div{} \setminus \pDiv{}$
region; likewise, the channel stops being entanglement breaking shortly after
it leaves the $\Div{} \setminus \pDiv{}$ region, see theorem~\ref{thm:EB}.

\begin{figure} 
\centering
\includegraphics[]{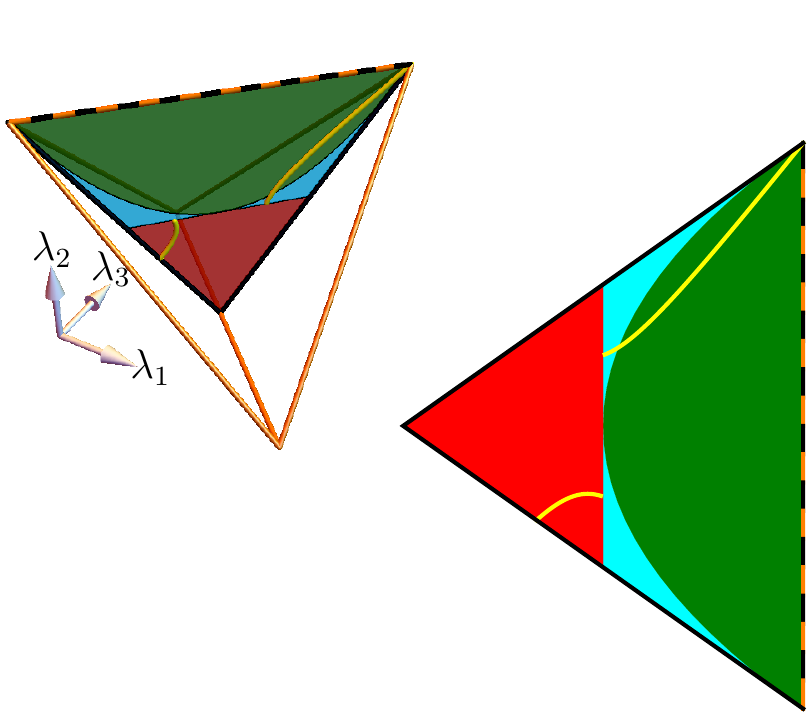}
\caption{(top left) Tetrahedron of Pauli channels with the trajectory, up to
rotations, of the quantum dynamical map~\eref{eq:collision:model} leading to
the $\mcA_\text{NOT}$ gate, as a yellow curve. (right) Cut along
the plane that contains the trajectory; there one can see the different regions
where the channel passes. For this
case, the characterization of the \LDiv{} of the channels induced gives the
same conclusions as for the corresponding Pauli channel, see
\eref{eq:orthogonalform}.
The discontinuity in the trajectory is due to the reduced representation of the
dynamical map, see \eref{eq:collision:model}; the trajectory is continuous
in the space of channels. 
See \fref{fig:cut1} for the color coding.
\label{fig:tray}}
\end{figure} 

\begin{figure} 
\centering
\includegraphics[width=\columnwidth]{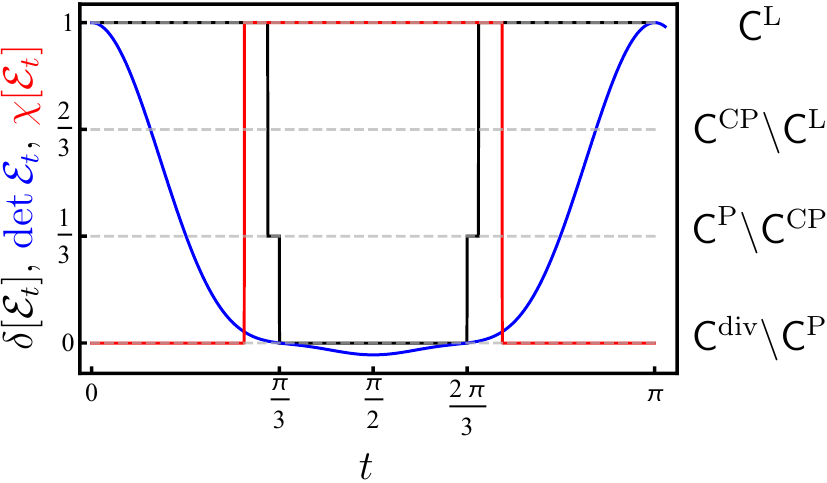}
\caption{
Evolution of divisibility, determinant, and entanglement breaking properties
of the map induced by \eref{eq:collision:model},
see \eref{eq:delta_function} and \eref{eq:chi_function}.
Notice that the channel $\mcA_\text{NOT}$, implemented at $t=\pi/2$, has minimum determinant. The horizontal gray dashed lines show the image
of the function $\delta$, with the divisibility types in the right side. It can
be seen that the dynamical map explores the divisibility sets as ${\sf C}^{\rm
L} \rightarrow{\sf C}^{\rm P}\setminus{\sf C}^{\rm CP} \rightarrow{\sf C}^{\rm
div}\setminus{\sf C}^{\rm P} \rightarrow \Ind{}$ and back. The channels are
entanglement breaking in the expected region.\label{fig:evolnot}}
\end{figure} 

Consider now the dynamical map induced by a two-level atom interacting with a mode of a
boson field. This model serves as a workhorse to explore a great variety of
phenomena in quantum optics~\cite{0953-4075-46-22-220201}. Using
the well known \textit{rotating wave approximation} one arrives to the
Jaynes-Cummings model~\cite{jaynescummings}, whose Hamiltonian is
\begin{equation}
H = \frac{\omega_a}{2}\sigma_z
    +\omega_f \left( a^{\dagger}a+\frac{1}{2} \right)
    + g\left( \sigma_- a^ {\dagger}+ \sigma_+ a \right).
\label{eq:Ham_qb}
\end{equation}
By initializing the environment in a coherent state $\ket{\alpha}$,
one gets the familiar \textit{collapse and revival} setting. Considering a particular set of parameters shown in~\fref{fig:revival1}, we
constructed the channels parametrized by time numerically, and studied their
divisibility and entanglement-breaking properties. In the same figure we plot
functions $\DivFunc{}[\mcE_t]$ and $\chi[\mcE_t]$, together with the probability of
finding the atom in its excited state $p_e(t)$, to study and compare the
divisibility properties with the features of the collapses and revivals. The
probability $p_e(t)$ is calculated choosing the ground state of the free
Hamiltonian  $\omega_a /2 \sigma_z$ of the qubit, and it is given
by~\cite{klimovbook}:
\begin{equation}
p_e(t)=\frac{\langle\sigma_z(t)\rangle+1}{2},
\end{equation}
where 
\begin{equation*}
\langle \sigma_z(t)\rangle=-\sum_{n=0}^{\infty}P_n \left( \frac{\Delta^2}{4 \Omega_n^2}+\left(1-\frac{\Delta^2}{4 \Omega_n^2}\right)\cos \left(2 \Omega_n t \right)\right),
\end{equation*}
with $P_n=e^{-|\alpha|^2}|\alpha|^{2n}/n!$, $\Omega_n=\sqrt{\Delta^2/4+g^2n}$ and $\Delta=\omega_f-\omega_a$ the detuning.
\begin{figure} 
\centering
\includegraphics[width=\columnwidth]{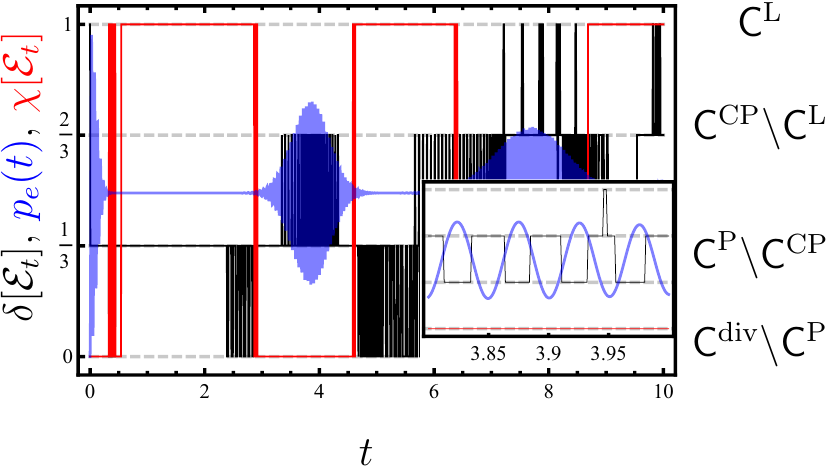}
%
\caption{Black and red curves show functions $\delta$ and $\chi$ of the
channels induced by the Jaynes-Cummings model over a two-level system,
see~\eref{eq:Ham_qb} with the environment initialized in a coherent state
$\ket{\alpha}$. The blue curve shows the probability of finding the
two-level atom in its excited state, $p_e(t)$.
The figure shows that the fast oscillations in $\delta$ occur
roughly at the same frequency as the ones of $p_e(t)$, see the inset. Notice
that there are fast transitions between $\pDiv{}\setminus\cpDiv{}$ and
$\cpDiv{}\setminus\LDiv{}$ occurring in the region of revivals, with a few
transitions between $\cpDiv{}\setminus\pDiv{}$ and $\LDiv{}$ in the second
revival. The function $\chi$ shows that during revivals channels are not
entanglement breaking, but we find that channels belonging to
$\Div{}\setminus\pDiv{}$ are always entanglement breaking, in agreement with
theorem~\ref{thm:EB}. The particular chosen set of parameters are
$\alpha=6$, $g=10$, $\omega_a=5$, and $\omega_f=20$.  \label{fig:revival1}}
\end{figure} 

The divisibility indicator function $\delta$ exhibits
an oscillating behavior, roughly at the same frequency of $p_e(t)$, see inset
in \fref{fig:revival1}.  The figure shows fast periodic transitions
between $\pDiv{}\setminus\cpDiv{}$ and $\cpDiv{}\setminus\LDiv{}$ occurring in
the region of revivals. There are also few transitions among
$\cpDiv{}\setminus\pDiv{}$ and $\LDiv{}$ in the second revival. Respect to the
entanglement breaking and the function $\chi$, there are no fast transitions in
the former, and during revivals, channels are not entanglement breaking. We
also observe that channels belonging to $\Div{}\setminus \pDiv{}$ are
entanglement breaking, which agrees with theorem~\ref{thm:EB} for the non-unital case.


\chapter{Singular Gaussian quantum channels} \label{chap:singular}
\begin{flushright}
\textit{Self-education is, I firmly believe, the only kind of education there is.}\\ 
Isaac Asimov
\end{flushright}
In this chapter we derive the conditions for \dgqc{} to be singular,
see~\sref{sec:ptp}. In particular we will show that only the functional form
involving one Dirac delta can be singular, together with the Gaussian form.
Additionally we derive, for the non-singular cases, the conditions for the
existence of master equations that parametrize channels that have always the
same functional form. We do this by letting the channels parameters to depend
on time.
\section{Allowed singular forms} 
\label{sec:singular_forms}
There are two classes of Gaussian singular
channels. Since the inverse of a Gaussian channel $\mcG\left(
\mathbf{T},\mathbf{N}, \vec \tau \right)$ is $\mcG\left( \mathbf{T}^{-1},
-\mathbf{T}^{-1}\mathbf{N} \mathbf{T}^{-T}, -\mathbf{T}^{-1} \vec \tau
\right)$, its existence rests on the invertibility of $\mathbf{T}$. Therefore,
studying the rank of the latter we are able to explore singular forms. 
We are going to use the classification of one-mode channels developed by 
Holevo~\cite{Holevo2007}.
For singular channels there are two classes
characterized by its \textit{canonical form}~\cite{Holevo2008}, \ie{} any 
channel can be obtained by applying Gaussian unitaries before and
after the canonical form. The class called ``$A_1$'' corresponds to singular
channels with $\text{Rank} \left(\mathbf{T}\right)=0$ and coincide with the
family of \textit{total depolarizing channels}. The class ``$A_2$'' is
characterized by $\text{Rank}\left(\mathbf{T}\right)=1$. Both channels are
entanglement-breaking~\cite{Holevo2008}.

Before analyzing the functional forms constructed in this work, let us study
channels with \gf{}. The tuple of the affine transformation, corresponding to
the propagator $J_\text{G}$, \eref{eq:gf}, were introduced in
Ref.~\cite{PazSupplementary} up to some typos. Our calculation for this tuple,
following~\eref{eq:changeofrepresentation}, is:
\begin{align}
\mathbf{T}_\text{G}&=\left(
\begin{array}{cc}
 -\frac{b_4}{b_3} & \frac{1}{b_3} \\
 \frac{b_1 b_4}{b_3}-b_2 & -\frac{b_1}{b_3} \\
\end{array}
\right),\nonumber \\
\mathbf{N}_\text{G}&=\left(
\begin{array}{cc}
 \frac{2 a_3}{b_3^2} & \frac{a_2}{b_3}-\frac{2 a_3 b_1}{b_3^2} \\
 \frac{a_2}{b_3}-\frac{2 a_3 b_1}{b_3^2} & -2 \left(-\frac{a_3 b_1^2}{b_3^2}+\frac{a_2 b_1}{b_3}-a_1\right) \\
\end{array}
\right),\nonumber \\
\vec \tau_\text{G}&= \left( -\frac{c_2}{b_3},\frac{b_1 c_2}{b_3}-c_1 \right)^\text{T}.
\label{eq:singular_tupleII}
\end{align}
It is straightforward to check that for $b_2=0$, $\mathbf{T}_\text{G}$ is
singular with $\text{Rank}\left(\mathbf{T}_\text{G}\right)=1$, \ie{} it belongs
to class $A_2$. Due to the full support of Gaussian functions, it was
surprising that Gaussian channels with \gf{} have singular limit. In this case
the singular behavior arises from the lack of a Fourier factor for $x_f r_i$,
see~\eref{eq:gf}.  This is the only singular case for \gf{}.

Now we analyze functional forms derived in~\sref{sec:ptp}. The complete
positivity conditions of the form $\tilde J_\text{III}$, presented in
\eref{eq:cpII}, have no solution for $\alpha\to 0$ and/or $\gamma \to 0$, thus,
this form cannot lead to singular channels. This is not the case for $\tilde
J_\text{I}$, \eref{eq:formcharI}, which leads to singular operations belonging
to class $A_2$ for 
\begin{equation}
\alpha e_2=0,
\label{eq:reduction_for_A2}
\end{equation}
 and to class $A_1$ for 
\begin{equation}
e_2 =\alpha=b_2= 0.
\label{eq:reduction_for_A1}
\end{equation}
For the latter, the complete
positivity conditions, see~\eref{eq:ccp}, read: 
\begin{equation}
e_1\leq a_1.
\label{eq:reducedCP}
\end{equation}

By using an initial state characterized by $\sigma_i$ and $\vec d_i$ we
can compute the explicit dependence of the final states on the initial
parameters.
The final states for channels of class $A_2$ with the functional form involving one delta, see~\eref{eq:deltaop2}, and with $e_2=0$, are
\begin{align}
\left(\sigma_f\right)_{11}&=\frac{1}{2 e_1},\nonumber\\
\left(\sigma_f\right)_{22} &= \left(\frac{\alpha}{\beta}\right)^2\left(\frac{b_3^2}{2 e_1}+2 a_3\right)+
\frac{\alpha}{\beta } \left(2 a_2+\frac{b_1 b_3}{e_1}\right)+ 2 a_1+\frac{b_1^2}{2 e_1}+s_1,\nonumber\\
\left(\sigma_f\right)_{12} &= -\frac{\alpha  }{\beta}\frac{b_3}{2 e_1}-\frac{b_1}{2 e_1},\nonumber\\
\vec d_f \left( s_3\right) &= \left(0, -\frac{\alpha}{\beta }c_2-c_1+s_2 \right)^\text{T},
\end{align}
where
\begin{align}
s_1&=\left(b_2^2+2 \frac{\alpha}{\beta} b_2 b_4+\left(\frac{\alpha}{\beta}\right)^2b_4^2\right)\left(\sigma_i\right)_{11}
-2\left(\frac{\alpha}{\beta}b_2+\left(\frac{\alpha}{\beta}\right)^2 b_4\right)\left(\sigma_i\right)_{12}\nonumber\\
&\qquad \qquad+ \left(\frac{\alpha}{\beta}\right)^2\left(\sigma_i\right)_{22},
\nonumber\\
s_2&=\left(\frac{\alpha}{\beta} b_4+ b_2\right)(d_i)_1 -\frac{\alpha
}{\beta }(d_i)_2.
\end{align}
For the same functional form but now with $\alpha=0$, the final states are
\begin{align}
\left(\sigma_f\right)_{11}&=\frac{e_2^2 }{4 e_1^2}\left(\sigma_i\right)_{11}+\frac{1}{2 e_1},\nonumber\\
\left(\sigma_f\right)_{12}&=\left(\frac{b_2 e_2}{2 e_1}-\frac{b_1 e_2^2 }{4 e_1^2}\right)\left(\sigma_i\right)_{11}-\frac{b_1}{2 e_1},\nonumber\\
\left(\sigma_f\right)_{22}&=2 a_1+\left(b_2-\frac{b_1 e_2}{2 e_1}\right)^2 \left(\sigma_i\right)_{11}+\frac{b_1^2}{2 e_1},
\end{align}
and
\begin{equation}
\vec d_f=\left(\frac{e_2}{2 e_1} \left(\vec d_i \right)_1, \left(b_2-\frac{b_1 e_2}{2 e_1}\right)\left(\vec d_i \right)_1-c_1\right)^\text{T}.
\end{equation}
The explicit formulas of the final states for channels of class $A_2$ with Gaussian form
are
\begin{align}
\left(\sigma_f\right)_{11}\left( s_1\right)&=\frac{2 a_3}{b_3^2}+s_1,\nonumber\\
\left(\sigma_f\right)_{12}\left(
s_1\right)&=\frac{a_2}{b_3}-\frac{2 a_3 b_1}{b_3^2}-b_1 s_1,\nonumber\\
\left(\sigma_f\right)_{22}\left( s_1
\right)&=\frac{b_1 \left(b_3 \left(b_1 b_3 s_1-2 a_2\right)+2 a_3 b_1\right)}{b_3^2}+2 a_1,\nonumber\\
\vec
d_f \left( s_2\right)&=\left(s_2-\frac{c_2}{b_3},b_1
\left(\frac{c_2}{b_3}-s_2\right)-c_1\right)^\text{T},
\end{align}
where
\begin{align}
s_1&=\frac{b_4^2}{b_3^2} \left(\sigma_i\right)_{11}-\frac{2 b_4 }{b_3^2}\left(\sigma_i\right)_{12}+\frac{1}{b_3^2}\left(\sigma_i\right)_{22},\nonumber\\
s_2&=\frac{1}{b_3}\left(d_i\right)_2-\frac{b_4
}{b_3}\left(d_i\right)_1.
\end{align}
See~\fref{fig:2} for an schematic description of the final states. From such
combinations it is obvious that we cannot solve for the initial state
parameters given a final state as expected; this is because the parametric
space dimension is reduced from $5$ to at most $3$. The channel belonging to $A_1$ [see
\eref{eq:tupleI} with $e_2 =\alpha=b_2= 0$ and \eref{eq:reducedCP}] maps every
initial state to a single one characterized by $\sigma_f=\mathbf{N}$ and $\vec d_f=
\left(0,-c_1 \right)^\text{T}$, see \fref{fig:1} for a schematic description.
%

According to our ans\"atze [see equations (\ref{eq:deltaop1}) and
(\ref{eq:deltaop2})], we conclude that one-mode \sgqc{} can only have the functional forms
given in \eref{eq:gf} and \eref{eq:deltaop1}.
This is the central result of this chapter and can be stated as:
\begin{theorem}[One-mode singular Gaussian channels]
A one-mode Gaussian quantum channel is singular if and only if it
has one of the following functional forms in the position space representation:
\begin{enumerate}
\item $\frac{b_3}{2 \pi} \exp\Big[ \Imi{}\Big( b_1 x_f r_f+b_3x_ir_f 
+b_4x_ir_i+c_1x_f+c_2x_i \Big) -a_1 x_f^2-a_2x_fx_i-a_3x_i^2 \Big],$
\item $|\beta|\sqrt{e_1/\pi}\delta(\alpha x_f-\beta x_i) 
\exp\Big[-a_2 x_f x_i-a_1 x_f^2-a_3 x_i^2$\\
$+\Imi{} \Big(b_2 x_f r_i +b_3 r_f x_i+b_1 r_f x_f+b_4 r_i x_i+c_1 x_f+c_2 x_i\Big)
-e_1 r_f^2 -e_2 r_f  
               r_i-\frac{e_2^2 r_i^2}{4 e_1}\Big]$, with $e_2 \alpha=0$.
\end{enumerate}
\end{theorem}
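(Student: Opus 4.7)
The plan is to classify singularity case-by-case across the three functional ans\"atze already established in the excerpt: the Gaussian form $J_\text{G}$ of \eref{eq:gf}, the one-delta form $J_\text{I}$ of \eref{eq:deltaop1}, and the two-delta form $J_\text{II}$ of \eref{eq:deltaop2}. These are exhaustive because in the position representation any Fourier image of a Gaussian in the environmental variables must be Gaussian, Gaussian-times-delta, or Gaussian-times-two-deltas, as argued at the beginning of \sref{sec:ptp}. The singularity criterion, per Holevo's canonical classification, reduces to $\mathrm{Rank}(\mathbf T)<2$, since the inverse of $\mcG(\mathbf T,\mathbf N,\vec\tau)$ fails to exist precisely when $\mathbf T$ is not invertible. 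So the entire proof boils down to inspecting the rank of $\mathbf T$ for each surviving functional form after imposing \hp{} and \tp{}.

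First I would handle the Gaussian form $J_\text{G}$. The matrix $\mathbf T_\text{G}$ is displayed explicitly in \eref{eq:singular_tupleII}, and one computes
\[
\det \mathbf T_\text{G} = \frac{b_1 b_4 - b_1 b_4}{b_3^2} + \frac{b_2}{b_3} = \frac{b_2}{b_3}\,,
\]
so $\mathbf T_\text{G}$ is singular if and only if $b_2=0$ (the nondegeneracy of the representation already requires $b_3\neq 0$). Rank one (class $A_2$) is the only possibility here, and no further reduction to class $A_1$ is allowed within this ansatz.

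Next I would analyze form $J_\text{III}$ of \eref{eq:deltaop2} (two deltas). Using the tuple in \eref{eq:tupleII}, the CP constraint \eref{eq:cpII} must be checked in the limits $\alpha\to 0$ and $\gamma\to 0$. A short algebraic inspection of \eref{eq:cpII} shows the bound cannot hold when either parameter vanishes, so this form admits no singular CPTP realization. Then I would turn to form $J_\text{I}$ of \eref{eq:deltaop1}, with $\mathbf T_\text{I}$ given by \eref{eq:tupleI} and
\[
\det \mathbf T_\text{I} \;=\; -\frac{\alpha\,e_2}{2\,\beta\,e_1}\,.
\]
Hence $\mathbf T_\text{I}$ is singular precisely when $\alpha e_2=0$. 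Rank one (class $A_2$) occurs under this condition alone; rank zero (class $A_1$) additionally forces the row $\vec\phi_\text{I,1}$ to vanish, which upon substitution of \eref{eq:phis} and the assumption $\alpha=0=e_2$ reduces to $b_2=0$. The residual CP condition \eref{eq:CPI} then collapses to the stated inequality $e_1\leq a_1$ of \eref{eq:reducedCP}, and the affine action maps every input to the single Gaussian state $\sigma_f=\mathbf N$, $\vec d_f=(0,-c_1)^\text{T}$.

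Finally, I would assemble the converse direction by simply observing that the two allowed canonical classes $A_1$ and $A_2$ are both realized by at least one of the two surviving functional families, so the enumeration exhausts Holevo's classification. The main obstacle I anticipate is cleanly justifying that the three ans\"atze genuinely exhaust all position-representation kernels of one-mode \gqc{}, rather than merely the obvious Fourier-of-Gaussian cases; this was argued informally in \sref{sec:ptp}, but to make the classification airtight one should invoke the general form of $\mathbf T,\mathbf N,\vec\tau$ from the Holevo affine picture and match each canonical class backward into a position-representation kernel, ruling out any exotic distributional contributions beyond Dirac deltas of real linear combinations of $\vec v_i,\vec v_f$.
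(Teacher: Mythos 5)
Your proposal follows essentially the same route as the paper: reduce singularity to $\mathrm{Rank}(\mathbf{T})<2$, then inspect $\mathbf{T}_\text{G}$, $\mathbf{T}_\text{I}$ and $\mathbf{T}_\text{III}$ form by form, finding $\det\mathbf{T}_\text{G}=b_2/b_3$ (so the Gaussian form is singular only for $b_2=0$, which is exactly form 1), $\det\mathbf{T}_\text{I}=-\alpha e_2/(2\beta e_1)$ (giving form 2 with $\alpha e_2=0$, and class $A_1$ when additionally $b_2=0$), and ruling out the two-delta form via \eref{eq:cpII}. The exhaustiveness caveat you raise at the end is the same informal Fourier-of-a-Gaussian argument the paper itself leans on, so nothing is missing relative to the paper's own proof.
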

\begin{corollary}[Singular classes]
A one-mode singular Gaussian channel belongs to class $A_1$ if and only if its position representation has the following form:
$$\sqrt{e_1/\pi}\delta(x_i) \exp\Big[-a_1 x_f^2+\Imi{} \Big(b_2 x_f r_i 
         +b_1 r_f x_f+c_1 x_f\Big)-e_1 r_f^2\Big].$$
Otherwise the channel belongs to class $A_2$.
\end{corollary}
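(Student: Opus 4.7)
The plan is to leverage the Holevo classification already invoked in Section~\ref{sec:singular_forms}, namely that a one-mode singular Gaussian channel is in class $A_1$ iff $\mathrm{Rank}(\mathbf{T})=0$, and in class $A_2$ iff $\mathrm{Rank}(\mathbf{T})=1$. Since the preceding theorem already nails down the two possible position-space functional forms for singular \gqc{}s, it suffices to read off $\mathbf{T}$ for each and see exactly when its rank drops all the way to zero.

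First I would handle the Gaussian form of item~1. Its affine datum is given in~\eref{eq:singular_tupleII}, which with $b_2=0$ reads
\begin{equation*}
\mathbf{T}_{\mathrm{G}}=\begin{pmatrix} -b_4/b_3 & 1/b_3 \\ b_1 b_4/b_3 & -b_1/b_3 \end{pmatrix}.
\end{equation*}
The entry $1/b_3$ is necessarily nonzero (otherwise the normalization $b_3/(2\pi)$ in~\eref{eq:gf} diverges), so $\mathrm{Rank}(\mathbf{T}_{\mathrm{G}})\ge 1$; combined with $\det \mathbf{T}_{\mathrm{G}}=0$ this forces rank exactly one. Hence every singular channel of Gaussian form lies in $A_2$, never in $A_1$.

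Next I would analyze item~2 using $\mathbf{T}_{\mathrm{I}}$ from~\eref{eq:tupleI} together with the explicit form of $\vec\phi_{\mathrm{I}}$ in~\eref{eq:phis}. Under the singularity condition $\alpha e_2=0$ there are three subcases: (i)~$e_2=0$, $\alpha\ne 0$; (ii)~$\alpha=0$, $e_2\ne 0$; and (iii)~$\alpha=0$, $e_2=0$. In (i) the second column $(0,-\alpha/\beta)^{\mathrm{T}}$ is nonzero, so the rank is one. In (ii) the first column $(e_2/(2e_1),\vec\phi_{\mathrm{I},1})^{\mathrm{T}}$ is nonzero, so again rank is one. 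In (iii) $\mathbf{T}_{\mathrm{I}}$ collapses to a single entry $\vec\phi_{\mathrm{I},1}$ in position $(2,1)$; evaluating $\vec\phi_{\mathrm{I}}$ at $\alpha=e_2=0$ yields $\vec\phi_{\mathrm{I},1}=b_2$. Therefore the rank drops to zero precisely when $b_2=0$ as well, recovering the set of conditions $e_2=\alpha=b_2=0$ already identified in~\eref{eq:reduction_for_A1}.

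Finally I would substitute $\alpha=e_2=b_2=0$ into the second functional form of the theorem and use the factor $\delta(\alpha x_f-\beta x_i)=\delta(-\beta x_i)=|\beta|^{-1}\delta(x_i)$ to eliminate every term containing $x_i$ from the exponent; with the normalization $|\beta|\sqrt{e_1/\pi}$ this collapses the expression exactly to the displayed $A_1$ form. Conversely, any channel of that explicit form has the tuple computed above with rank-zero $\mathbf{T}$, hence belongs to $A_1$. Combining this with the dichotomy of the previous two paragraphs gives the stated iff, and the ``otherwise'' clause follows because all remaining singular forms have rank-one $\mathbf{T}$. The only subtlety worth being careful with is verifying that $\vec\phi_{\mathrm{I}}$ evaluates to $(b_2,0)^{\mathrm{T}}$ in subcase~(iii); the rest reduces to reading off ranks and integrating against a delta, so I do not anticipate any real obstacle.
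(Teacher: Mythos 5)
Your proposal is correct and follows essentially the same route as the paper: the corollary is justified there by the rank analysis of $\mathbf{T}$ in \sref{sec:singular_forms} (equations~(\ref{eq:singular_tupleII}), (\ref{eq:tupleI}), (\ref{eq:reduction_for_A2}) and (\ref{eq:reduction_for_A1})), which is exactly the computation you reproduce, including the key check that $\vec\phi_{\text{I},1}=b_2$ when $\alpha=e_2=0$. The only point worth flagging is that, since $A_1$ forces $b_2=0$, the term $b_2 x_f r_i$ surviving in the displayed form of the corollary is vacuous (a channel of that form with $b_2\neq 0$ would have $\mathrm{Rank}(\mathbf{T})=1$ and hence lie in $A_2$), so your derivation, which drops it, is the more consistent statement.
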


Since channels on each class are connected each other by unitary
conjugations~\cite{Holevo2007}, a consequence of the theorem and the subsequent
corollary is that the set of allowed forms must remain invariant under unitary
conjugations. 
To show this we must know the possible functional forms of Gaussian unitaries. They are given by the following lemma for one mode:
\begin{lemma}[One-mode Gaussian unitaries]
Gaussian unitaries can have only \gf{} or the one given by \eref{eq:deltaop2}.
\end{lemma}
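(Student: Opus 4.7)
The plan is to use the classification of one-mode Gaussian channel functional forms already established in Section~\ref{sec:ptp}, namely that any one-mode Gaussian channel has exactly one of three forms: the Gaussian form \eref{eq:gf}, the one-delta form \eref{eq:deltaop1}, or the two-delta form \eref{eq:deltaop2}. Since a Gaussian unitary is in particular a Gaussian channel, it must belong to one of these three classes. The task therefore reduces to ruling out the one-delta form and confirming that the other two can realize the unitary case, which is characterized by $\mathbf{N}=\mathbf{0}$ together with an invertible (in fact symplectic) $\mathbf{T}$.

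First, I would dispose of the one-delta form \eref{eq:deltaop1}. Its affine representation was computed in \eref{eq:nsandtaus} and \eref{eq:tupleI}, where the noise matrix reads
\begin{equation*}
\mathbf{N}_{\text{I}}=2\begin{pmatrix} -P_{22}^{(\text{I})} & P_{12}^{(\text{I})} \\ P_{12}^{(\text{I})} & -P_{11}^{(\text{I})}\end{pmatrix},
\end{equation*}
and from \eref{eq:deltaI_Ps} one has $P_{22}^{(\text{I})}=-1/(4e_1)$. The trace-preserving analysis that led to \eref{eq:trace_preserving_form1} forces $e_1>0$, so the $(1,1)$ entry of $\mathbf{N}_{\text{I}}$ equals $1/(2e_1)>0$ and can never vanish. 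Hence $\mathbf{N}_{\text{I}}\neq\mathbf{0}$ for every admissible choice of coefficients, and the one-delta form cannot describe a Gaussian unitary.

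Next I would verify that the remaining two forms do admit unitary specializations. For the Gaussian form \eref{eq:gf}, the tuple \eref{eq:singular_tupleII} shows that the choice $a_1=a_2=a_3=0$ yields $\mathbf{N}_{\text{G}}=\mathbf{0}$, while $\mathbf{T}_{\text{G}}$ is invertible for $b_2\neq 0$ and can be rescaled to satisfy the symplectic condition $\det\mathbf{T}_{\text{G}}=1$; this recovers the familiar quadratic-Hamiltonian propagators. For the two-delta form \eref{eq:deltaop2}, the matrix elements \eref{eq:deltaII_Ps} give $P_{12}^{(\text{III})}=P_{22}^{(\text{III})}=0$ automatically, so $\mathbf{N}_{\text{III}}=\mathbf{0}$ reduces to the single algebraic constraint $(\alpha/\beta)^2 a_3+(\alpha/\beta)a_2+a_1=0$, which together with $|\alpha\eta|=|\beta\gamma|$ (i.e.~$\det\mathbf{T}_{\text{III}}=\pm 1$ from \eref{eq:tupleII}) defines a nonempty parameter family. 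Consequently both the Gaussian form and the two-delta form support Gaussian unitaries, completing the exclusion.

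The main obstacle I foresee is purely bookkeeping: one has to make sure that the complete-positivity conditions \eref{eq:CPI} and \eref{eq:cpII} are consistent with the unitary specialization (they saturate to equalities precisely when $\mathbf{N}=\mathbf{0}$ and $\mathbf{T}$ is symplectic, since $\mathbf{C}=\rmi(\Omega-\mathbf{T}\Omega\mathbf{T}^{\text{T}})$ vanishes for symplectic $\mathbf{T}$). No deeper structural step is required; the lemma follows directly from the finite case analysis already performed for arbitrary GQCs.
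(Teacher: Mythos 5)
Your proposal is correct and follows essentially the same route as the paper: exclude the one-delta form because $\mathbf{N}_{\text{I}}$ has the unavoidable entry $-2P_{22}^{(\text{I})}=1/(2e_1)>0$, then exhibit parameter choices for the Gaussian and two-delta forms with $\mathbf{N}=\mathbf{0}$ and symplectic $\mathbf{T}$ (the paper uses $a_n=0$ together with $b_2=b_3$, resp.\ $\alpha\eta=\beta\gamma$). The only slip is that the symplectic condition for a $2\times 2$ matrix is $\det\mathbf{T}=+1$, not $\pm 1$, but this does not affect the existence argument.
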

\begin{proof}
Recalling that for a unitary \gqc{},  $\mathbf{T}$  must be symplectic
($\mathbf{T}\Omega \mathbf{T}^\text{T}=\Omega$) and $\mathbf{N}=\mathbf{0}$. However,
an inspection to \eref{eq:deltaI_Ps} lead us to note that
$\mathbf{N}\neq\mathbf{0}$ unless $e_1$ diverges. Thus, Gaussian unitaries
cannot have the form $J_\text{I}$ [see~\eref{eq:deltaop1}]. An inspection of
matrices $\mathbf{T}$ and $\mathbf{N}$ of \gqc{} with \gf{} [see
\eref{eq:singular_tupleII}] and the ones for $J_\text{II}$ [see equations
(\ref{eq:deltaII_Ps}) and (\ref{eq:tupleII})] lead us to note the following 
two observations: (i) in
both cases we have
$\mathbf{N}=0$ for $a_n=0 \ \ \forall n$; (ii) the matrix $\mathbf{T}$ is
symplectic for \gf{} when $b_2=b_3$, and when $\alpha  \eta =\beta  \gamma $ for $J_\text{II}$. In particular the identity map has the last form. This completes
the proof.
\end{proof}

\newcommand{\deltaAa}{$\delta_{\text{A}_2}^\alpha$}
\newcommand{\deltaAe}{$\delta_{\text{A}_2}^{e_2}$}
\newcommand{\deltaAae}{$\delta_{\text{A}_2}^{\alpha,e_2}$}
\newcommand{\deltaA}{$\delta_{\text{A}_1}$}
\newcommand{\gaussA}{$\mcG_{\text{A}_2}$}
\newcommand{\deltaU}{$\delta_{\mcU}$}
\newcommand{\gaussU}{$\mcG_{\mcU}$}
One can now compute the concatenations of the \sgqc{}s with Gaussian unitaries.
This can be done straightforward using the well known formulas for Gaussian
integrals and the Fourier transform of the Dirac delta.
Given that the calculation is elementary, and for sake of brevity, we present
only the resulting forms of each concatenation. To show this compactly we
introduce the following abbreviations: Singular channels belonging to class
$A_2$ with form $J_\text{I}$ and with $\alpha=0$, $e_2=0$ and $\alpha=e_2=0$,
will be denoted as \deltaAa{}, \deltaAe{} and \deltaAae{}, respectively;
singular channels belonging to the same class but with \gf{} will be denoted as
\gaussA{}; channels belonging to class $A_1$ will be denoted as \deltaA{};
finally Gaussian unitaries with \gf{} will be denoted as \gaussU{} and the ones
with form $J_\text{II}$ as \deltaU{}. Writing the concatenation of two channels
in the position representation as
\begin{equation}
J^{(\text{f})}(x_f,r_f;x_i,r_i)=
\int_{\mathbb{R}^2} dx' dr' J^\text{(1)}\left(x_f, r_f; x',r' \right)J^{(2)}\left( x',r';x_i, r_i\right),
\label{eq:concat}
\end{equation}
the resulting functional forms for $J^{(\text{f})}$ are given in 
table~\ref{tab:concats}. As expected, the table shows that the integral has
only the forms stated by our theorem. Additionally it shows the cases when
unitaries change the functional form of class $A_2$, while for class $A_1$
$J^{(\text{f})}$ has always the unique form enunciated by the corollary.
\begin{table}[h]
\centering
{\large
\begin{tabular}{|c|c|c|}
\hline 
$J^{(1)}$ & $J^{(2)}$ & $J^{(\text{f})}$ \\ 
\hline 
\deltaAa{} & \gaussU{} & \gaussA{} \\ 
\hline 
\gaussU{} & \deltaAa{} & \deltaAa{} \\ 
\hline 
\deltaAa{} & \deltaU{} & \deltaAa{} \\ 
\hline 
\deltaU & \deltaAa{} & \deltaAa{} \\ 
\hline 
\deltaAe{} & \gaussU{} & \deltaAe{} \\ 
\hline 
\gaussU{} & \deltaAe{} & \gaussA{} \\ 
\hline 
\deltaAe{} & \deltaU{} & \deltaAe{} \\ 
\hline 
\deltaU{} & \deltaAe{} & \deltaAe{} \\ 
\hline 
\gaussU{},\deltaU{} & \deltaAae{} & \deltaAae{} \\ 
\hline 
\deltaAae{} & \gaussU{},\deltaU{} & \deltaAae{} \\ 
\hline
\deltaU{},\gaussU{} & \deltaA{} & \deltaA{} \\ 
\hline 
\deltaA{} & \deltaU{},\gaussU{} & \deltaA{} \\ 
\hline 
\end{tabular}
}
\caption{The first and second columns show the functional forms of $J^{(1)}$
and $J^{(2)}$, respectively. The last column shows the resulting form of the
concatenation of them, see~\eref{eq:concat}. See main text for symbol coding.
\label{tab:concats}}
\end{table} 

\begin{figure} 
\centering
\scalebox{1.1}{
\begin{tikzpicture}
\node at (0,3.1) {Class $A_2$};
\node at (0,0) {\includegraphics[scale=0.13]{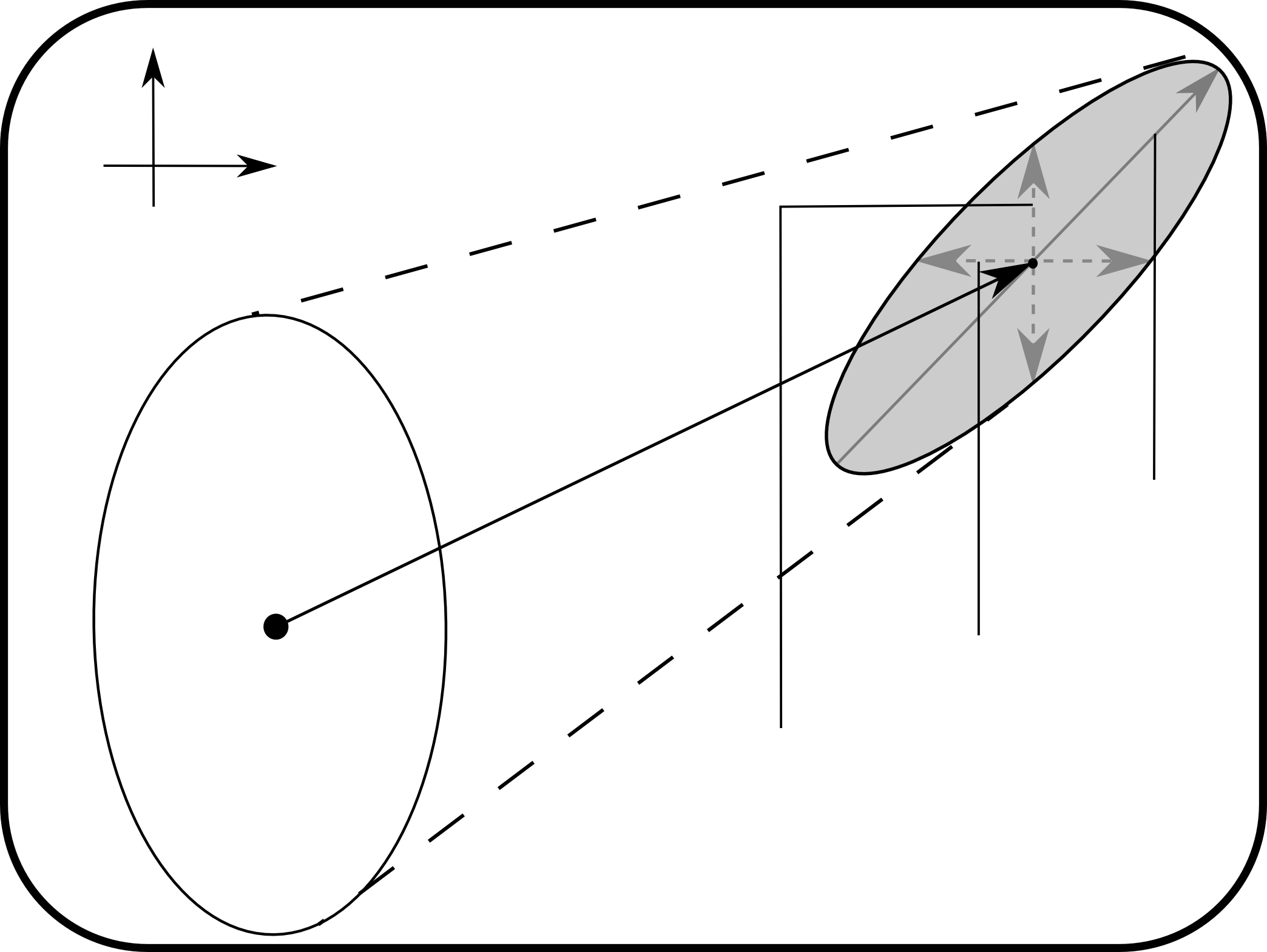}};
\node at (-2,1.9) {{\tiny $r$}};
\node at (-2.9,2.7) {{\tiny $p$}};
\node at (2.2,-1.2) {{\tiny $ (\sigma_f)_{11}\left( s_1,s_2\right)$}};
\node at (0.9,-1.8) {{\tiny $ (\sigma_f)_{22}\left( s_1,s_2\right)$}};
\node at (2.95,-0.2) {{\tiny $ (\sigma_f)_{12}\left( s_1,s_2\right)$}};
\node at (0,0.29) {\rotatebox{27.5}{{\tiny $\vec d_i \mapsto \vec d_f (s_3)$}} };
\node at (-2.15,-1.3) {{\tiny $\left(\sigma_i, \vec d_i \right)$}};
\end{tikzpicture}
}
\caption{Schematic picture of the channels belonging to class $A_2$. The explicit dependence of the final state in terms of the combinations $s_1$, $s_2$ and $s_3$ are presented in the appendix. As well the formulas for $s_i$ depending on the form of the channel.
\label{fig:2}
}
\end{figure} 
\begin{figure} 
\centering
\scalebox{1.1}{
\begin{tikzpicture}
\node at (0,3.1) {Class $A_1$};
\node at (0,0) {\includegraphics[scale=0.13]{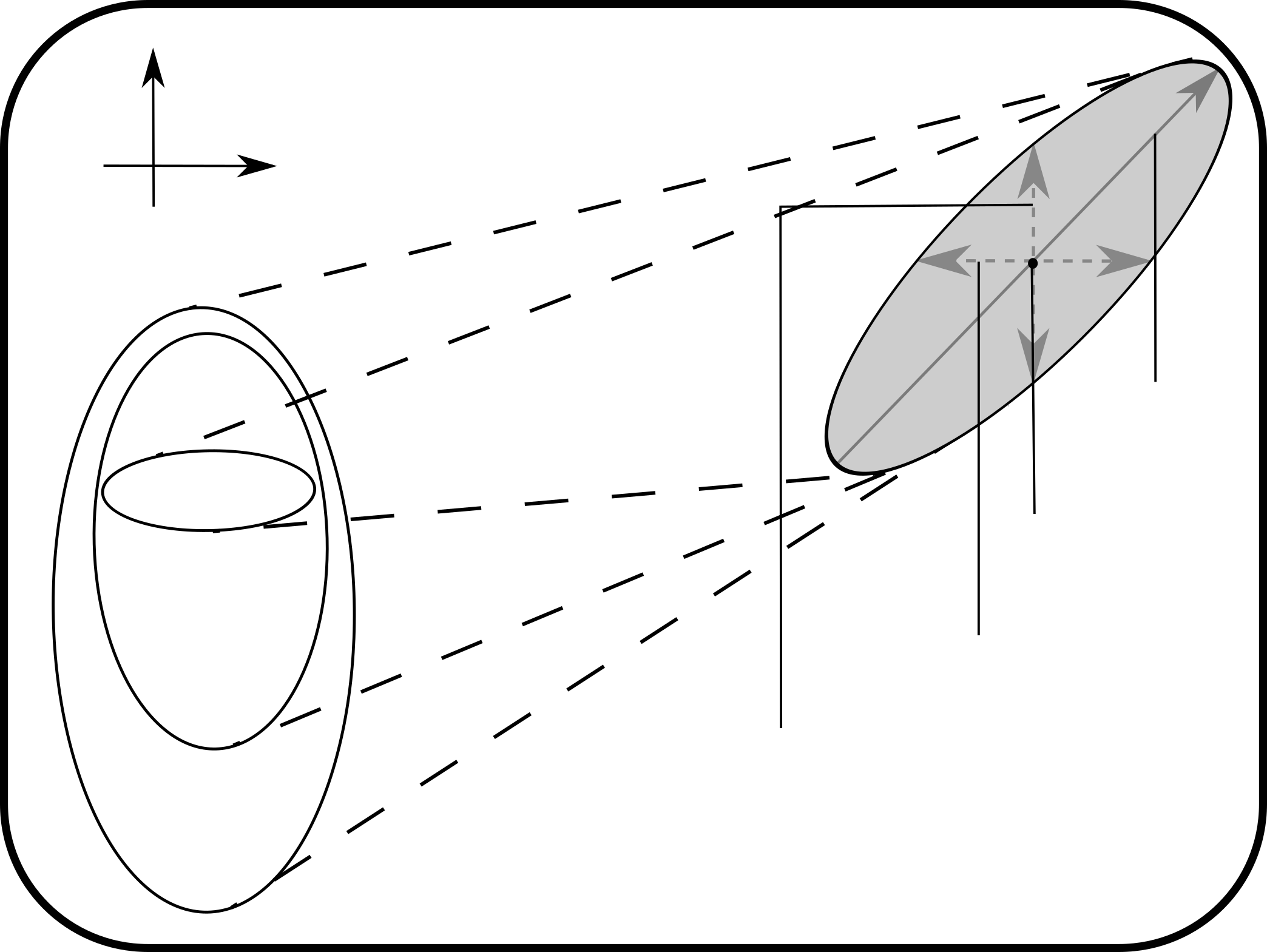}};
\node at (-2,1.9) {{\tiny $r$}};
\node at (-2.9,2.7) {{\tiny $p$}};
\node at (2.2,-1.1) {{\tiny $ \frac{1}{2 e_1}$}};
\node at (0.8,-1.8) {{\tiny $ 2 a_1+\frac{b_1^2}{2 e_1}$}};
\node at (3.15,0.3) {{\tiny $ -\frac{b_1}{2 e_1}$}};
\node at (2.8,-0.4) {{\tiny $\left(0,-c_1 \right)^\text{T}$}};
\node at (-2.5,-0.8) {{\tiny $\left(\sigma_i, \vec d_i \right)$}};
\end{tikzpicture}
}
\caption{Schematic picture of the class $A_1$. Every channel of this class maps every initial quantum state, in particular \gs{}s characterized by $\left(\sigma_i, \vec d_i \right)$, to a Gaussian state that depends only on the channel parameters. We indicate in the figure the values of the corresponding components of the first and second moments of the final Gaussian state.
\label{fig:1}
}
\end{figure} 

\section{Existence of master equations}  
\label{sec:master_equations}
In this section we show the conditions under which master equations, 
associated with the channels derived in~\sref{sec:ptp}, exist.
To be more precise, we study if
the functional forms derived above
parametrize channels belonging to one-parameter differentiable families of
\gqc{}s. As a first step, we let the coefficients of forms presented in equations
(\ref{eq:deltaop1}) and (\ref{eq:deltaop2}) to depend on time. Later we derive the conditions under which they bring
any quantum state $\rho(x,r;t)$ to $\rho(x,r;t+\epsilon)$ (with $\epsilon>0$
and $t\in [0,\infty)$) smoothly, while holding the specific functional form of
the channel, \ie{} 
\begin{equation}
\rho(x,r;t+\epsilon)=\rho(x,r;t)+\epsilon \mcL_t
\left[\rho(x,r;t)\right]+\mcO(\epsilon^2),
\end{equation}
where both $\rho(x,r;t)$ and $\rho(x,r;t+\epsilon)$ are propagated from $t=0$ with channels either with the form $J_\text{I}$ or $J_\text{II}$, and $\mcL_t$ is a bounded superoperator in the state subspace.
This is basically the problem of \textit{the existence of a master equation}
\begin{equation}
\partial_t \rho(x,r;t)=\mathcal{L}_t\left[\rho(x,r;t)\right],
\label{eq:master_equation}
\end{equation} 
for such functional forms. Thus, the problem is reduced to prove the existence
of the linear generator $\mcL_t$, also known as \textit{Liouvillian}.

To do this we use an ansatz proposed in Ref.~\cite{Karrlein1997} to investigate the
existence and derive the master equation for \gf{}s,
\begin{equation}
\mathcal{L}=\mathcal{L}_c(t)+(\partial_x,\partial_r) 
\mathbf{X}(t) 
\Bigg(\begin{matrix} \partial_x \\ \partial_r \end{matrix}\Bigg)
+(x,r) 
\mathbf{Y}(t) 
\Bigg(\begin{matrix}
\partial_x \\ \partial_r
\end{matrix}\Bigg) 
+(x,r) 
\mathbf{Z}(t) 
\Bigg(\begin{matrix}
x \\ r
\end{matrix}\Bigg)
\label{eq:ansatz}
\end{equation}
where $\mathcal{L}_c(t)$ is a complex function and 
\begin{equation}
\mathbf{X}(t) =
\Bigg(\begin{matrix}
X_{xx}(t) & X_{xr}(t) \\
X_{rx}(t) & X_{rr}(t)
\end{matrix}\Bigg) 
\label{eq:defX}
\end{equation}
is a complex matrix as well as 
$\mathbf{Y}(t)$ and $\mathbf{Z}(t)$, whose entries are defined in a similar
way as in \eref{eq:defX}. Note that
$\mathbf{X}(t)$ and $\mathbf{Z}(t)$ can always be chosen symmetric, \ie{}
$X_{xr}=X_{rs}$ and $Z_{xr}=Z_{rx}$. Thus, we must determine $11$ time-dependent
functions from~\eref{eq:ansatz}. This ansatz is
also appropriate to study the functional forms introduced in this work, given
that the left hand side of \eref{eq:master_equation} only involves quadratic
polynomials in $x$,
$r$, $\partial/\partial x$ and $\partial/\partial r$, as in the \gf{} case.

Notice that singular channels do not admit a master equation since its
existence implies that channels with the functional form involved can be found
arbitrarily close from the identity channel. This is not possible for singular
channels due to the continuity of the determinant of the matrix $\mathbf{T}$.

For the non-singular cases presented in equations
(\ref{eq:deltaop1}) and (\ref{eq:deltaop2}), the condition for the
existence of a master equation is obtained as follows. (i) Substitute
the ansatz of \eref{eq:ansatz} in the right hand side of the
\eref{eq:master_equation}. (ii) Define $\rho(x,r;t)$ using \eref{eq:propagacion}, given an initial condition
$\rho(x,r;0)$, for each functional form $J_\text{I,II}$. (iii) Take
$\rho_f(x_f,r_f)\to\rho(x,r;t)$ and $\rho_i(x_i,r_i)\to \rho(x,r;0)$. Finally, (iv) compare both sides of
\eref{eq:master_equation}. Defining $A(t)=\alpha(t)/\beta(t)$ and
$B(t)=\gamma(t)/\eta(t)$, the conclusion is that for both $J_\text{I}$ and
$J_\text{II}$, a master equations exist if
\begin{equation}
c(t)\propto A(t)
\end{equation}
holds, where $c(t)=
c_1(t)+A(t)c_2(t)$. Additionally, for the form $J_I$ the solutions for the
matrices $\mathbf{X}(t)$, $\mathbf{Y}(t)$ and $\mathbf{Z}(t)$ are given by 
\begin{equation}
\begin{gathered}
X_{xx}=X_{xr}=Y_{rx}=Z_{rr}=0, \\
Y_{xx}=\frac{\dot{A}}{A}, \\
\mathcal{L}_c=Y_{rr}=\frac{\dot{e}_1}{e_1}-\frac{\dot{e}_2}{e_2}, \\
X_{rr}=\frac{\dot{e}_1}{4e_1^2}-\frac{\dot{e}_2}{2e_1e_2}, \\
Y_{xr}=\Imi \left(\frac{\lambda_1\dot{e}_2}{e_1e_2}+\frac{\lambda_2\dot{A}}{e_2A}-\frac{\lambda_1\dot{e}_1}{2e_1^2}-\frac{\dot{\lambda}_2}{e_2}\right),\\
Z_{xx}=\frac{\lambda_1^2}{2}\Bigg(\frac{\dot{e}_2}{e_1e_2}-\frac{\dot{e}_1}{2e_1^2}\Bigg)+\frac{\lambda_1}{e_2}\Bigg(\frac{\lambda_2\dot{A}}{A}-\dot{\lambda}_2\Bigg)+2\lambda_3\frac{\dot{A}}{A}-\dot{\lambda}_3,  \\
Z_{xr}=\Imi\left(\frac{\dot{A}}{A}\Bigg(\frac{e_1\lambda_2}{e_2}-\frac{\lambda_1}{2}\Bigg)+\frac{\dot{\lambda}_1}{2}-\frac{\dot{\lambda}_2e_1}{e_2}+\frac{\lambda_2}{2}\Bigg(\frac{\dot{e}_2}{e_2}-\frac{\dot{e}_1}{e_1}\Bigg)\right),
\end{gathered}
\end{equation}
where we have defined the following coefficients: $\lambda_1=b_1+Ab_3$, $\lambda_2=b_2+Ab_4$ and $\lambda_3=a_1+Aa_2+A^2a_3$.

For the form $J_\text{II}$ the solutions are the following
\begin{equation}
\begin{gathered}	
\mathcal{L}_c=X_{xx}=X_{xr}=X_{rr}=Z_{rr}=0, \\
Y_{rx}=Y_{xr}=0, \\
Y_{xx}=\frac{\dot{A}}{A}, \, Y_{rr}=\frac{\dot{B}}{B}. \\ 
Z_{xx}=a_2(t) \dot A(t)+\frac{2 a_1(t) \dot A(t)}{A(t)}-A(t)^2\\-\dot a_3(t)-A(t) \dot{a}_2(t)-\dot a_1(t),  \\
Z_{xr}=\Imi\left(\frac{1}{2} \dot{\lambda}-\frac{\lambda}{2}\Bigg(\frac{\dot{A}}{A}+\frac{\dot{B}}{B}\Bigg)\right), \\
\end{gathered}
\end{equation}
where $\lambda=b_1+Ab_3+B(b_2+Ab_4)$.

\chapter{Summary and conclusions}
\label{chap:summary}
\begin{flushright}
\textit{Living is worthwhile if one can contribute in some small way to this
endless chain of progress.}\\
Paul A.M. Dirac
\end{flushright}
In this thesis we have introduced two works developed during my PhD. The first
one was devoted to study quantum channels from the point of view of their
divisibility properties. We made use of several results from the literature,
specially from the seminal work by M. M. Wolf and J. I. Cirac~\cite{Wolf2008},
and completed and fixed some results of Ref.~\cite{cirac}. This led to the
construction of a tool to decide whether a quantum channel can be implemented
using time-independent Markovian master equations or not, for the finite
dimensional case. We additionally proved three theorems relating some of the
studied divisibility types. Some of the tools introduced in
chapter~\ref{chap:reps} are results from other paper developed during my PhD,
where I am a secondary author, see Ref.~\cite{montesgorin}. In the second work
we have studied one-mode Gaussian channels without Gaussian functional form in
the position state representation. We performed a characterization based on the
universal properties that quantum channels must fulfill; in particular we
studied the case of singular channels. We showed that the transition from
unitarity to non-unitarity can correspond directly to a change in the
functional form of the channel, in particular it turns out that functional form
with one Dirac delta factor do not parametrize unitary channels. Additionally
in this project we derived the conditions under which master equations for
particular functional
forms exist.\par
Let us summarize the results for the first project in more detail. We
implemented the known conditions to decide the compatibility of channels with
time-independent master equations (the so called L-divisibility) for the
general
diagonalizable case, and a discussion of the parametric space of Lindblad
generators was given. We additionally clarified one of the results of the paper
\cite{Wolf2008}. There, the authors arrived to erroneous conclusions for the case of
channels with negative eigenvalues. In our work we handled this case carefully.
For unital qubit channels it was shown that every
infinitesimal divisible map can be written as a concatenation of one L-divisible
channel and two unitary conjugations. For the particular case of Pauli channels
case, we have shown that the sets of infinitely divisible and L-divisible
channels coincide. We made an interesting observation, connecting the concept
of divisibility with the quantum information concept of entanglement-breaking
channels: we found that divisible but not infinitesimal divisible qubit
channels (in positive but not necessarily completely positive maps)  are
necessarily entanglement-breaking. We also noted
that the intersection of indivisible and P-divisible channels is not empty.
This allows us to implement indivisible channels with infinitesimal positive and trance preserving
maps.
Finally, we studied the possibility of dynamical transitions between different
classes of divisibility channels. We argued that all the transitions are, in
principle, possible, given that every divisibility set appears connected in our plots. We exploited two simple models of dynamical maps to
demonstrate that these transitions exist. They clearly illustrate how the channels
evolutions change from being implementable by Markovian dynamical maps (infinitesimal divisible in complete positive maps and/or L-divisible) to
non-Markovian (divisible but not infinitesimal divisible or infinitesimal divisible in positive but not complete positive maps), and vice versa. \par
For the second project we have critically reviewed the deceptively natural idea that
Gaussian quantum channels always admit a Gaussian functional form. To this end,
we went beyond the pioneering characterization of Gaussian channels with
Gaussian form presented in Ref.~\cite{PazSupplementary} in two new directions.
First we have shown that, starting from their most general definition 
(a quantum operation that takes Gaussian states to Gaussian states), a more general parametrization of
the coordinate representation of the one-mode case exists, that admits
non-Gaussian functional forms. Second, we were able to provide a black-box
characterization of such new forms by imposing complete positivity (not
considered in Ref.~\cite{PazSupplementary}) and trace preserving
conditions. While our parametrization connects with the analysis done by
Holevo~\cite{Holevo2008} in the particular cases where besides having a
non-Gaussian form the channel is also singular, it also allows the study of
Gaussian unitaries, thus providing similar classification schemes. We completed
the classification of the studied types of channels by deriving the form of the
Liouvillian Liouvillian superoperator that generates their time evolution in the form of a
master equation. Surprisingly, Gaussian quantum channels without Gaussian form
can be experimentally addressed by means of the celebrated Caldeira-Legget
model for the quantum damped harmonic oscillator~\cite{Gert}, where the new types of channels
described here naturally appear in the sub-ohmic regime. 
\par
We are interested in several directions to continue the investigation. From the
project of divisibility of quantum channels, an extension of this analysis to
larger-dimensional systems could give a deeper sight to the structure of
quantum channels. In particular we are interested on proving if the equivalence
of infinitely divisible channels and L-divisible channels is present also in
the general qubit case. Additionally a plethora of interesting questions are
related to design of efficient verification procedures of the divisibility
classes for channels and dynamical maps. For instance, \textit{can we define an extension of the Lorentz normal decomposition to systems composed of many qubits?}, this would be useful to characterize infinitesimal divisibility of many particle systems; or \textit{Is the non-countable parametrization of channels with negative eigenvalues relevant on deciding L-divisibility?}. Finally the area of channel
divisibility contains several open structural questions, e.g. the existence of
at most $n$-divisible channels.
From the project concerning one-mode Gaussian channels, a natural direction to
follow is to extend the analysis for other types of channels (or more modes) by following the
classification introduced by Holevo, see Ref.~\cite{Holevo2007}. The latter is
based on the form of a canonical form of one-mode Gaussian channels. Therefore a
connection of this classification with ours could be useful to assess quantum information features, in particular for systems for which position state representation is advantageous. 

\chapter{Appendices}
\appendix
\chapter[Exact dynamics with Lindblad master equation]{Proof of theorem ``Exact dynamics with Lindblad master equation''}
\label{sec:unbounded}
The theorem announced in chapter~\ref{chap:open_quantum_systems} is,

\fxnote{Francois: Una observación algo boba: un Hamiltoniano de dimensión finita es a la vez acotado y también tiene espectro discreto. Aquí la prueba va para Hamiltonianos  e dimensión finito, pero el resultado se afirma para Hamiltonianos acotados por debajo. Esto incluiría Hamiltonianos como el movimiento libre, con espectro continuo pero acotado por abajo. Seguramente no se ha mostrado este último, y no estoy seguro de que sea cierto. Lo más ssencillo es reformular el resultado aquí diciendo ``para Hamiltonianos de dimensión finita''}
\fxwarning{Gracias, ya lo edite}
\textbf{Theorem 2} (Exact dynamics with Lindblad master equation)
\textit{Let $\mcE_t=e^{tL}$ a quantum process generated by a Lindblad operator
$L$. The equation 
$$
\mcE_t[\rho_\text{S}]
  =\tr_\text{E} \left[ 
   e^{-\rmi H t} \left(\rho_\text{S} \otimes \rho_\text{E}\right) e^{\rmi H t}
\right],
$$ 
{\blue where $H$ has finite dimension}, holds if and only if $\mcE_t$ is a unitary conjugation for
every $t$.}
\begin{proof}
To prove this theorem, we will compute $\rho_\text{S}(t)$ to first order in
$t$, see \eref{eq:second_iteration_Neumann}. Following the master equation
of~\eref{eq:von_neumann_reduced} and taking $t=\epsilon\ll 1$, we have
\begin{align*}
\rho_\text{S}(\epsilon)&\approx\rho_\text{S}+\tr_\text{E}\int_0^\epsilon dt \left\{\rmi \left[\rho_\text{S}\otimes \rho_\text{E},H \right] \right\} \\
&= \rho_\text{S}+ \tr_\text{E} \left\{\rmi \left[\rho_\text{S}\otimes \rho_\text{E},H \right]\right\}\epsilon\\
&=\rho_\text{S}+L_\text{Exact}[\rho_\text{S}] \epsilon.
\end{align*}
where $L_\text{Exact}=\tr_\text{E} \left\{\rmi \left[\rho_\text{S}\otimes \rho_\text{E},H \right]\right\}$.
Since $\mcE_t$ is generated by a Lindblad master equation, $L_\text{Exact}$ must coincide  with the Lindblad generator since the process is homogeneous in time, \ie{} $L_\text{Exact}$ is time-independent.
Writing the global Hamiltonian as
$$H=\sum_{k,l=0} h_{kl} F^{(S)}_k\otimes F^{(E)}_l,$$
where 
$h_{kl}\in \mathbb{R}$, and $\left\{ F^{(S)}_k\right\}_k$ and $\left\{ F^{(E)}_k\right\}_k$ are orthogonal
hermitian bases of $\mcB\left( \mcH^{(S)} \right)$ and $\mcB\left( \mcH^{(E)} \right)$, respectively, with $\mcH^\text{(S)}$
and $\mcH^\text{(E)}$ are the Hilbert spaces of the central system S and the
environment E. 
We have,
\begin{align*}
L_\text{Exact}[\rho_\text{S}]&=\rmi \tr_\text{E} \left\{\sum_{k,l} h_{kl}[\rho_\text{S}\otimes \rho_\text{E},F^{(S)}_k\otimes F^{(E)}_l]\right\}\\
&= \rmi \sum_{k,l} h_{kl} \left\{ \rho_\text{S} F^{(S)}_k\tr[\rho_\text{E} F^{(E)}_l]- F^{(S)}_k\rho_\text{S}\tr[F^{(E)}_l\rho_\text{E}]\right\}\\
&=\rmi \sum_{k,l}h_{kl}\tr[F^\text{(E)}_l\rho_\text{E}]\left\{ \rho_\text{S} F^\text{(S)}_k-F^\text{(S)}_k \rho_\text{S} \right\}\\
&=\rmi [\rho_\text{S},\tilde H],
\end{align*}
where $\tilde H= \sum_{k,l} h_{kl}\tr[F^\text{(E)}_l\rho_\text{E}] F_k^{\text{(S)}}$ is an hermitian operator. Therefore $L_\text{Exact}$ is the generator of Hamiltonian dynamics with Hamiltonian $\tilde H$, thus $\mcE_t$ is unitary for all $t$.
\end{proof}

\chapter{On Lorentz normal forms of Choi-Jamiolkowski state} 
\label{sec:normal_form}
In this appendix we compute the Lorentz normal decomposition of a channel for
which one gets $b\neq 0$, supporting our observation that Lorentz normal
decomposition does not take \Jami{} states to something proportional to a \Jami{}
state. Consider the following Kraus rank three channel and its
$R_{\mcE}$ matrix, both written in the Pauli basis:
\begin{equation}
\hat \mcE=\left(
\begin{array}{cccc}
 1 & 0 & 0 & 0 \\
 0 & -\frac{1}{3} & 0 & 0 \\
 0 & 0 & -\frac{1}{3} & 0 \\
 \frac{2}{3} & 0 & 0 & \frac{1}{3} \\
\end{array}
\right),
\end{equation}
and 
\begin{equation}
R_{\mcE}=\left(
\begin{array}{cccc}
 1 & 0 & 0 & 0 \\
 0 & -\frac{1}{3} & 0 & 0 \\
 0 & 0 & \frac{1}{3} & 0 \\
 \frac{2}{3} & 0 & 0 & \frac{1}{3} \\
\end{array}
\right).
\end{equation}
Using the algorithm introduced in Ref.~\cite{Verstraete2001} to 
calculate $R_{\mcE}$'s  Lorentz
decomposition into orthochronous proper Lorentz transformations
we obtain
\begin{align}
L_1 &= 
\frac{1}{\gamma_1}
\begin{pmatrix}
 4 & 0 & 0 & 1 \\
 0 & -\gamma_1 & 0 & 0 \\
 0 & 0 & -\gamma_1 & 0 \\
 1 & 0 & 0 & 4
\end{pmatrix}, 
\end{align}
\begin{align*}
L_2 & = 
\frac{1}{\gamma_2}
\begin{pmatrix}
 89+9\sqrt{97} & 0 & 0 & -8 \\
 0 & -\gamma_2 & 0 & 0 \\
 0 & 0 & -\gamma_2 & 0 \\
 -8 & 0 & 0 & 89+9\sqrt{97}
\end{pmatrix},
\end{align*}
and
\begin{align*}
\Sigma_\mcE&=
\frac{1}{\gamma_3}
\begin{pmatrix}
 \sqrt{11+\frac{109}{\sqrt{97}} } & 0 & 0 & -\frac{\sqrt{97}+1}{\sqrt{ 89 \sqrt{97}+873}} \\
 0 & -\frac{\gamma_3}{3} & 0 & 0 \\
 0 & 0 & \frac{\gamma_3}{3} & 0 \\
 \sqrt{1+\frac{49}{\sqrt{97} }} & 0 & 0 & \sqrt{-1+\frac{49}{\sqrt{97} }} \\
\end{pmatrix}
\end{align*}
with $\gamma_1=\sqrt{15}$, $\gamma_2=3\sqrt{178 \sqrt{97}+1746}$, and 
$\gamma_3=\sqrt{30}$.
Although the central matrix $\Sigma_\mcE$ is not exactly of the form
\eref{eq:state_normal_form_singular}, it is equivalent. To see this notice
that the derivation of the theorem 2 in~\cite{Verstraete2001} considers
only decompositions into proper orthochronous Lorentz transformations. But 
to obtain the desired form, the authors change signs until they get
\eref{eq:state_normal_form_singular}; this cannot be done without changing
Lorentz transformations. If we relax the condition over $L_{1,2}$ of being proper
and orthochronous, we can bring $\Sigma_\mcE$ to the desired form by
conjugating $\Sigma_\mcE$ with $G=\diag\left(1,1,1,-1 \right)$:
\begin{equation*}
G^{-1} \Sigma_\mcE G=
\frac{1}{\gamma_3}
\begin{pmatrix}
 \sqrt{11+\frac{109}{\sqrt{97} }} & 0 & 0 & \frac{\sqrt{97}+1}{\sqrt{ 89 \sqrt{97}+873}} \\
 0 & -\frac{\gamma_3}{3} & 0 & 0 \\
 0 & 0 & \frac{\gamma_3}{3} & 0 \\
- \sqrt{1+\frac{49}{\sqrt{97} }} & 0 & 0 & \sqrt{-1+\frac{49}{\sqrt{97} }} \\
\end{pmatrix}.
\end{equation*}
In both cases (taking $\Sigma_\mcE$ or $G^{-1} \Sigma_\mcE G$ as the normal form of $R_\mcE$), the corresponding channel is not proportional to a trace-preserving
one since $b\neq 0$, see \eref{eq:state_normal_form_singular}.
This completes the counterexample.
\addcontentsline{toc}{chapter}{Bibliography}
\bibliographystyle{alpha}
\bibliography{labibliografia}
\end{document}